\tikzset{
	>=stealth',
	punktchain/.style={
		rectangle,
		draw=black, very thick,
		text width=15em,
		minimum height=2em,
		text centered,
		on chain},
	line/.style={draw, thick, <-},
	element/.style={
		tape,
		top color=white,
		bottom color=blue!50!black!60!,
		minimum width=8em,
		draw=blue!40!black!90, very thick,
		text width=10em,
		minimum height=3.5em,
		text centered,
		on chain},
	every join/.style={->, thick,shorten >=1pt},
	decoration={brace},
	tuborg/.style={decorate},
	tubnode/.style={midway, right=7pt},
}
\newcommand{\Bf}[0]{\boldsymbol}
\newcommand{\RR}{\mathbb{R}}
\newcommand{\ZZ}{\mathbb{Z}}
\newcommand{\xmod}{{\rm \;mod\;}}
\newcommand{\bea}{\begin{eqnarray}}
\newcommand{\eea}{\end{eqnarray}}
\newcommand{\bee}{\begin{eqnarray*}}
	\newcommand{\eee}{\end{eqnarray*}}
\newcommand{\al}{\begin{align*}}
\newcommand{\eal}{\end{align*}}
\newcommand{\be}{\begin{equation}}
\newcommand{\ee}{\end{equation}}
\newcommand{\bem}{\begin{pmatrix}}
	\newcommand{\eem}{\end{pmatrix}}
\def\a{\alpha}
\def\b{\beta}
\def\l{\lambda}
\def\n{\nu}
\def\o{\omega}
\def\L{\Lambda}
\def\Tr{\tr}
\def\id{\mathds{1}}
\newcommand{\ex}{\operatorname{e}} 
\newcommand{\tr}{\operatorname{tr}}
\def\be{\begin{equation}}
\def\ee{\end{equation}}
\def\bea{\begin{eqnarray}}
\def\eea{\end{eqnarray}}
\newcommand{\lV}{log-${\cal V}_{\bar \Lambda}(m)$}
\newcommand{\lVb}{log-${\cal V}_{\bar\Lambda}(p,p')$}
\newcommand{\lVs}{log-${\cal V}_{\bar \Lambda}^0(m)$}
\newcommand{\lVbs}{log-${\cal V}_{\bar\Lambda}^0(p,p')$}
\newcommand{\cmark}{\ding{51}}%
\newtheorem{thm}{Theorem}[section]
\newtheorem{cor}[thm]{Corollary}
\newtheorem{lem}[thm]{Lemma}
\newtheorem{prop}[thm]{Proposition}
\theoremstyle{definition}
\newtheorem{defn}[thm]{Definition}
\theoremstyle{remark}
\numberwithin{equation}{section}
\DeclareMathOperator{\Ck}{Coker}
\def\CN {{\cal N}}
\def\CH {{\cal H}}
\begin{document}
	
\title{\huge \bf 3-Manifolds and VOA Characters}
\author{Miranda C. N. Cheng$^{a,b,c}$,
Sungbong Chun$^{d}$, Boris Feigin$^{e,f,g}$, Francesca Ferrari$^{h}$, Sergei Gukov$^{i}$, Sarah M. Harrison$^{j,k}$, Davide Passaro$^b$}

\date{%
  $^a$Korteweg-de Vries Institute for Mathematics
  University of Amsterdam, Amsterdam, the Netherlands\\%
  $^b$Institute of Physics, University of Amsterdam, Amsterdam, the Netherlands\\
  $^c$Institute for Mathematics, Academia Sinica, Taipei, Taiwan\\
  $^d$New High Energy Theory Center, Rutgers University, Piscataway, NJ 08854, USA\\
  $^e$National Research University Higher School of Economics, Russia\\
  $^f$International Laboratory of Representation Theory, Mathematical Physics, Myasnitskaya ul., 20, Moscow, 101000, Russia\\
  $^g$Landau Institute for Theoretical Physics, pr. Akademika Semenova, 1a, Chernogolovka, 142432, Russia\\
  $^h$International Centre for Theoretical Physics, Strada Costiera 11, Trieste 34151, Italy\\
  $^i$Division of Physics, Mathematics and Astronomy, California Institute of Technology,
  1200 E. California Blvd., Pasadena, CA 91125, USA\\
  $^j$Department of Mathematics and Statistics, McGill University, Montreal, QC, Canada\\
  $^k$Department of Physics, McGill University, Montreal, QC, Canada\\[2ex]
}
\maketitle
\begin{abstract}
  By studying the properties of $q$-series $\widehat Z$-invariants, we develop a dictionary between 3-manifolds and vertex algebras. In particular, we generalize previously known entries in this dictionary to Lie groups of higher rank, to 3-manifolds with toral boundaries, and to BPS partition functions with line operators. This provides a new physical realization of logarithmic vertex algebras in the framework of the 3d-3d correspondence and opens new avenues for their future study. For example, we illustrate how invoking a knot-quiver correspondence for $\widehat{Z}$-invariants leads to many infinite families of new fermionic formulae for VOA characters.

\end{abstract}
\newpage
\tableofcontents

\newpage
\section{Introduction and Summary of Results}

The main goal of this paper is to build a new bridge between different areas of mathematics and mathematical physics. Namely, we explore the relation between characters of vertex operator algebras (VOAs), on the one hand, and $q$-series invariants of manifolds in low-dimensional topology, on the other hand.

A prototypical example of such a duality (or, correspondence) that goes back to the mid-90s involves Vafa-Witten invariants of 4-manifolds \cite{Vafa:1994tf}. Starting with the seminal work of Nakajima \cite{Nakajima:1994nid}, these $q$-series invariants of 4-manifolds can be interpreted as VOA characters, see {\it e.g.} \cite{Feigin:2018bkf} for a recent account and identification of cutting-and-gluing operations on both sides of the correspondence. The Vafa-Witten invariants have the following general form:
\begin{equation}
  Z_b (X;\tau) \; = \; q^{\Delta_b} \big( c_0^{(b)} + c_1^{(b)} q + c_2^{(b)} q^2 + \ldots \big) \,, \qquad q= e^{2\pi i \tau}
\end{equation}
Namely, they depend on the choice of the 4-manifold $X$, a compact Lie group $G$, the variable $\tau$ with values in the upper-half plane, $\tau \in \mathbb H$, and the extra ``decoration'' data on the 4-manifold $b \in H^2 (X; \pi_1 (G))$. The physical definition of the invariants exists, at least in principle, for general 4-manifolds. However, the corresponding moduli spaces turn out to be non-compact and, as a result, rigorous mathematical definitions are currently limited to K\"ahler surfaces, see {\it e.g.} \cite{Dijkgraaf:1997ce,Gholampour:2017bxh,Tanaka:2017bcw} for some recent work in this direction.

The correspondence explored in this paper can be considered as a 3-dimensional analogue of the long-studied relation between Vafa-Witten invariants and VOA characters \cite{Cheng:2018vpl}. With its roots in the 3d-3d correspondence, the 3-manifold analogue of the Vafa-Witten $q$-series invariant, called $\widehat Z^G_{\underline{\vec{b}}} (X;\tau)$\footnote{We will refer to it as the $\widehat Z$-invariant, $q$-series invariant, BPS invariant, and homological block interchangeably.}, also depends on the choice of gauge group $G$, 3-manifold $X$, variable $q = e^{2\pi i \tau}$ in the unit disk, $|q|<1$, and extra data ${\underline{\vec{b}}}$ given by the generalised {Spin$^c$}-structure. In part due to the fact that topology of 3-manifolds is much simpler than the topology of 4-manifolds, the $q$-series invariants in dimension three are easier to define and compute. For example, $\widehat Z^G_{\underline{\vec{b}}} (X,\tau)$ can be defined \cite{Gukov:2020lqm} via Rozansky-Witten theory \cite{Rozansky:1996bq} based on affine Grassmannians, and Rozansky-Witten theory has rigorous mathematical definitions due to Kapranov \cite{MR1671737} and Kontsevich \cite{MR1671725}.

Alternatively, $\widehat Z^G_{\underline{\vec{b}}} (X;\tau)$ for closed 3-manifolds can be defined via surgery techniques \cite{Gukov:2019mnk}, which require invariants of knot (link) complements as input. The latter, in turn, can be obtained via a very efficient $R$-matrix approach \cite{Park:2020edg,Park:2021ufu} that involves quantum groups at generic $q$ and Verma modules. It allows one to compute $F_K (x,\tau) := \widehat Z (S^3 \setminus K; \tau)$ for many infinite families of knots and links, so that explicit expressions for $F_K (x,\tau)$ are now available for all knots up to 8 crossings and many knots with 8 and 9 crossings, for $G=SU(2)$.\footnote{The difficulties start at around 9 crossings \cite{Costantino:2021yfd}, where one needs to use combination of different methods in order to obtain  $\widehat Z (S^3 \setminus K; \tau)$, {\it e.g.} known relations to other 3-manifold invariants labelled by Spin or Spin$^c$ structures.} For example, for $G=SU(2)$ and $X = - S^3_{+5} ({\bf 10_{145}})$:
\be
\begin{matrix}
  b=2: &~& q^{14/5} \left( -1 +2 q +2 q^2 + q^3 +\ldots \right) \\
  b=1: &~&  q^{11/5} \left( -1 -2 q^2 -2 q^3 -4 q^4
  +\ldots \right) \\
    b=0: &~& 2 q^4 +2 q^7 +2 q^8 +2 q^9 +4 q^{10} +\ldots \\
    b=-1: &~& q^{11/5} \left( -1 -2 q^2 -2 q^3 -4 q^4
    +\ldots \right) \\
      b=-2: &~& q^{14/5} \left( -1 +2 q +2 q^2 + q^3 +\ldots \right)
    \end{matrix}
    \ee

    These techniques provide much more data for the explicit form of $\widehat Z^G_{\underline{\vec{b}}} (X;\tau)$ than we can hope to match with VOA characters. In fact, starting with rather simple classes of surgeries, such as Seifert 3-manifolds or plumbed 3-manifolds, our main goal will be to identify the $\widehat Z$ 
    \begin{equation}
      \widehat Z^G_{\underline{\vec{b}}} (X;\tau) \; \sim  \; \chi_{\underline{\vec{b}}} (\tau)
      \label{ZhatVOA}
    \end{equation}
    with characters of vertex algebras, when overall powers of $q$ and $\eta(\tau)$ are ignored. 
    Compared to the four-dimensional version mentioned earlier, one interesting feature of this relation for 3-manifolds is that in most examples the vertex algebra is logarithmic. We will refer to them as logarithmic vertex operator algebras (log VOAs). In particular, the algebras we discuss in the paper all have irreducible but indecomposable modules. For a review, see for example \cite{Creutzig:2013hma, Flohr2003, adamovic2012c2cofinite}. 

    \begin{center}
      \begin{table}
        \begin{tabular}{ccccc}
          \toprule
    & & triplet/singlet& $G=SU(2)$ & $G\neq SU(2)$  \\ \midrule
    \multirow{4}*{\parbox[c]{2cm}{\centering Three \\exceptional\\ fibers}} 
    &	{(Pseudo-)Spherical} & triplet & \cmark  & \cmark\\ 
    &$D=1$&singlet&\cmark & \cmark\\\cmidrule{2-5}
    &{Non Pseudo-Spherical} & triplet & $\bigcirc$ &$\bigcirc$ \\ 
    &$D>1$&singlet &  Cor \ref{cor:A1A2}   &  $\bigcirc$ \\\midrule
  \multirow{2}*{\parbox[c]{2cm}{\centering Four\\ exceptional\\ fibers}} &	\multirow{2}*{\parbox[c]{1.5cm}{\vspace{9pt}Spherical}}  & \parbox[c]{1.5cm}{\centering\vspace{4.5pt}triplet} & \parbox[c]{1.5cm}{\centering\vspace{4.5pt}Thm \ref{thm:4fibchimatching}} & \parbox[c]{0.5cm}{\centering\vspace{4.5pt}\,?}  \\[9pt] 
                                                                                                         &	 & \parbox[c]{1.5cm}{\centering \ singlet \vspace{4.5pt}}  &\parbox[c]{1.5cm}{\centering \ Cor \ref{cor:4fib} \vspace{4.5pt}}   &\parbox[c]{0.5cm}{\centering \ ? \vspace{4.5pt}}  
                                                                                                         \\\bottomrule
        \end{tabular}
        \caption{\label{tab:3sinfibers}The relation between $\widehat Z^G_{\underline{\vec{b}}}(X_\Gamma)$,  its integrands $\tilde \chi_{\underline{\vec{b}}}$, and the triplet and singlet characters. See Theorem \ref{3fiber_sphere} and Corollary \ref{cor:psudo} for the entries with \cmark, Theorem  \ref{thm:combining_into_characters} for the entries with $\bigcirc$.
        }
        \label{tab:scope}
      \end{table}
    \end{center}

    Specifically, the results of this paper mainly focus on Seifert manifolds with three or four exceptional fibers. 
    Apart from studying the relation \eqref{ZhatVOA}  between $\widehat Z^G_{\underline{\vec{b}}} (X)$ and VOA characters, we also explore a more refined relation among the {\em integrand} of the contour integral leading to $\widehat Z^G_{\underline{\vec{b}}} (X)$, and the ``triplet" type VOA, containing the ``singlet" type VOA as a subalgebra. Namely, we have 
    \be\label{IntegrandVOA}
    \widehat Z^G_{\underline{\vec{b}}} (X;\tau) \; \sim  \; \int_{\cal C} d\vec\xi \, \tilde \chi_{\underline{\vec{b}}} (\tau,\vec\xi\,) , ~~~~ \tilde \chi_{\underline{\vec{b}}} (\tau,\vec\xi\,) \sim \chi_{\underline{\vec{b}}} (\tau,\vec\xi\,)
    \ee 
    and $\chi_{\underline{\vec{b}}} (\tau,\vec\xi\,)$ is given by characters of certain triplet vertex algebras. 
    The results are summarised in Table \ref{tab:scope}. 
    We say a Seifert manifold with $N$ exceptional fibers, with Seifert data 
    $X_\Gamma =M(b;\{q_i/p_i\}_{i=1,\dots,N})$, is {\em pseudo-spherical} if 
    \be
    {1\over \textfrak{e}\, p_i} \in \ZZ ~{\rm for\, all}~i=1,\dots,N, 
    \ee
    where $\textfrak{e} =	b +\sum_k {q_k\over p_k} $ is the orbifold Euler characteristic, which is specifically always the case when $X_\Gamma$ is an integral homological sphere.  In the present work we focus on the cases of {\em negative Seifert} manifolds, namely those with $\textfrak{e} <0$. 

    \bigskip
    To summarize, in the present paper we show the following.  
    \begin{itemize}
      \item Let $X$ be any pseudo-spherical negative Seifert manifold with three exceptional fibers, $G$ be any choice of simply-laced Lie group, and ${\underline{\vec{b}}}$ be any choice of the generalised {Spin$^c$} structure. Then the integrand  $\tilde \chi_{\underline{\vec{b}}}$ of the three-manifold invariant $\widehat Z^G_{\underline{\vec{b}}} (X,\tau)$, up to  overall powers of $\eta(\tau)$ and $q$,  is given by a virtual generalised character of the so-called triplet vertex algebra corresponding to the Lie algebra ${\mathfrak g}$, reviewed in \S\ref{subsec:1m}. 

        As a consequence, again up to  overall powers of $\eta(\tau)$ and $q$,  the three-manifold invariant $\widehat Z^G_{\underline{\vec{b}}} (X;\tau)$ is given by a virtual generalised character\footnote{Following the commonly used definition of virtual characters, we say that an integral linear combination of generalised characters is a virtual generalised character. } of the so-called singlet vertex algebras corresponding to the Lie algebra ${\mathfrak g}$. 

        The above corresponds to the entries with \cmark in Table \ref{tab:scope} and precised in 
        Theorem \ref{3fiber_sphere} and Corollary \ref{cor:psudo}. 

      \item 
        Let $X$ be any negative Seifert manifold with three exceptional fibers, $G$ be any choice of Lie group, and ${\underline{\vec{b}}}$ be any choice of the generalised {Spin$^c$} structure. 
        Then there is a sum over the generalised {Spin$^c$} structures including ${\underline{\vec{b}}}$, such that the corresponding sum of the integrand $\chi_{\underline{\vec{b}}} (\tau,\vec\xi\,)$, up to an overall power of $\eta(\tau)$ and $q$ as well as a rescaling of  $\tau$,  is given by a virtual generalised character of the so-called triplet vertex algebras corresponding to the Lie algebra ${\mathfrak g}$. 

        As a consequence, again up to an overall power of $\eta(\tau)$ and $q$ as well as a rescaling of $\tau$, the corresponding sum of three-manifold invariants $\widehat Z^G_{\underline{\vec{b}}} (X;\tau)$ is given by a virtual generalised character of the so-called singlet vertex algebras corresponding to the Lie algebra ${\mathfrak g}$. 

        The above corresponds to the entries with $\bigcirc$ in Table \ref{tab:scope} and precised in 
        Theorem   \ref{thm:combining_into_characters}. 

      \item 
        Let $X$ be any negative Seifert manifold with four exceptional fibers that is an integral homological sphere, $G=SU(2)$, and ${\underline{\vec{b}}}$ be any choice of the  {Spin$^c$} structure. Then the integrand  $\chi_{\underline{\vec{b}}} (\tau,\vec\xi\,)$ of the three-manifold invariant $\widehat Z^G_{\underline{\vec{b}}} (X;\tau)$, up to overall multiplicative constants and powers of $\eta(\tau)$ and $q$,  is given by an integral linear combination of generalised characters of the so-called $(p,p')$ triplet vertex algebras corresponding to the Lie algebra ${\mathfrak g}$, reviewed in \S\ref{sec:pp'}. 

        As a consequence, again up to  overall powers of $\eta(\tau)$ and $q$,  three-manifold invariant $\widehat Z^G_{\underline{\vec{b}}} (X;\tau)$ is given by a virtual generalised character of the so-called $(p,p')$ singlet vertex algebras. 

        The precise version of the above is the content of Theorem  \ref{thm:4fibchimatching} and Corollary \ref{cor:4fib}. 

      \item Following the consideration of \S4 of \cite{Gukov:2017kmk}, we also investigate the effect of including Wilson operators in the theory on the relation between the BPS partition function $\widehat Z$ and the characters of log VOAs. 
        We found that the relation continues to exist in the presence of Wilson operators, but gets modified in ways that depend on to which node the Wilson operator is associated to. 

        Upon the including of a Wilson operator associated with an end node, Theorem \ref{thm:combining_into_characters} and as a result Theorem \ref{3fiber_sphere} and Corollary \ref{cor:A1A2} continue to hold, as well as Theorem \ref{thm:4fibchimatching}, with a modification of parameters that is given by \eqref{dfn:AW}. 
        When the Wilson operator is associated with the central node, Corollary  \ref{cor:psudo} gets modified into \eqref{eqn:shift_wilson}, and similarly for Corollary \ref{cor:A1A2}. This 
        ``shifting" phenomenon  has been observed for the special Lens space example  in \cite{Gukov:2017kmk}. The relation undergoes a more drastic modification when the added Wilson operator is associated to an intermediate node (a vertex in the plumbing graph with two other vertices connected to it). Namely, the statements in Corollaries  \ref{cor:psudo}-\ref{cor:A1A2} are modified into \eqref{eqn:intermediate_Wilson}, and homological blocks are no longer given by a virtual VOA character
        up to an overall multiplicative constant and powers of $\eta(\tau)$ and $q$. Instead, they are given by a virtual generalised character of a log VOA, each modified by an individual rational $q$-power,  up to an overall multiplicative constant and powers of $\eta(\tau)$. 

      \item Until recently, various ways\footnote{{\it e.g.} based on the combination of surgery and $R$-matrix techniques mentioned earlier} of computing $\widehat{Z}$-invariants would typically produce an explicit form of the $q$-series up to any desired order in $q$, but not a closed form expression. This makes the study of modular properties and other related questions quite challenging in general. Recent insights from enumerative geometry and the knot-quiver correspondence provide a new and surprising solution to this problem, which we discuss in \S \ref{sec:fermionic} and expect to be a powerful tool in the future work on $\widehat{Z}$-invariants. In particular, the closed form expressions produced by a version of the knot-quiver correspondence are perfectly suited for identifying the spectrum of quasiparticles in integrable massive deformations of the 2d logarithmic CFTs.

    \end{itemize}

    \newpage
    \section{Notation Guide}
    \begin{footnotesize} 
      \begin{list}{}{ \itemsep -1pt \labelwidth 23ex \leftmargin 13ex } 

      \item[$\Lambda$] The root lattice associated to the simply-laced Lie algebra $\mathfrak{g}$. 

      \item[$\Lambda^\vee$] The dual root lattice.

      \item[$\bar \Lambda$] The rescaled root lattice. 

      \item[$\Phi_{\rm s}$] The set of simple roots $\{\vec\a_i\}_{i=1}^{\text{rank}G}$. 

      \item[$\Phi_\pm$] The set of positive and negative roots. 

      \item[$\vec{\rho}$] The Weyl vector of the root system, $\vec{\rho}:= \frac{1}{2} \sum_{\vec{\a}\in \Phi_+} \vec{\a}$.

      \item[$\langle \cdot , \cdot \rangle$] The scalar product in the dual space of the Cartan subalgebra.

      \item[$\{\vec \omega_i\}_{i=1}^{\text{rank}G}$] The set of fundamental weights, satisfying $\langle\vec \omega_i, \vec\alpha_j \rangle = \delta_{i,j}$.

      \item[$P^+$] The set of dominant integral weights. See \eqref{eqn:P+}.

      \item[$W$] The Weyl group of the root system.   

      \item[$w(\cdot)$] The action of the element $w\in W$. The length of $w$ is denoted as $l(w)$.

      \item[$c$] The central charge of a Virasoro algebra.   

      \item[log-$\mathcal{V}_{\bar \Lambda}(m)$] The logarithmic VOA associated to $\bar \Lambda$, also known as (1,m) Log VOA. See \S \ref{subsec:1m}.

      \item[\lVs] The charge zero subsector of \lV. See \eqref{eqn:lVs}.

      \item[$\mathcal F_{\vec{\lambda}}$] The Fock modules of the Heisenberg algebra generated by the vertex operators $ V_{ {\vec{\lambda}}} (z)$.

      \item[$\mathcal Y_{\vec{\lambda}'}$] The lattice VOA  ${\cal V}_{\bar \Lambda}$ irreducible modules, which can be decomposed as
        $\mathcal Y_{\vec{\lambda}'}= \bigoplus_{\vec{\a}\in \Lambda} \mathcal F_{\vec{\lambda}'+\vec{\a}}\,.$ See \S \ref{subsec:1m}.

      \item[\lVb] The logarithmic VOA associated to $\bar \Lambda$, also known as (p,p') Log VOA. See \S \ref{sec:pp'}.

      \item[\lVbs] The charge zero subsector of \lVb. See \eqref{eqn:lVbs}.

      \item[$\mathcal F_{r,s;n}$] The Fock modules of the Heisenberg algebra generated by the vertex operators $V_{r,s;n}(z)$. For brevity $\mathcal F_0:= \mathcal F_{1,1;0}$.

      \item[$\mathcal Y_{r,s}^{\pm}$] The lattice VOA $\mathcal{V}_{\bar \Lambda}$ irreducible modules with $1\leq r\leq p$ and $1\leq s\leq p'$. 
        They can be decomposed into Fock modules $\mathcal F_{r,s;n}$ as
        $\mathcal Y_{r,s}^+= \bigoplus_{n\in \ZZ} \mathcal F_{p-r,p'-s;2n},~~\mathcal Y_{r,s}^-= \bigoplus_{n\in \ZZ} \mathcal F_{p-r,p'-s;2n+1}.$ See \S \ref{sec:pp'}.

      \item[$\mathcal K_{r,s}^{\pm}$] Virasoro modules defined in \eqref{eqn:Krs}.

      \item[$\mathcal J_{r,s;n}$] The irreducible Virasoro module of highest weight $\Delta_{r,s;n}$.

      \item[$\mathcal X_{r,s}^{\pm}$] Virasoro modules defined in \eqref{eqn:Xrs}.

      \item[$\Delta(\vec{\xi})$] The Weyl denominator of the Lie algebra ${\mathfrak g}$. See \eqref{eqn:denom_id}.

      \item[$\chi_{\vec{\lambda}}$, $\chi^0_{\vec{\lambda}}$] Characters of modules labelled by ${\vec{\lambda}}$ of \lV and \lVs respectively. See \S \ref{subsec:1mchar}.

      \item[ch$^{\pm}$, ch$^{0,\pm}$] Characters of modules of \lVb and \lVbs respectively. See \S \ref{subsec:pp'char}.

      \item[$\Psi_{m,r}(\tau)$] The holomorhpic Eichler integral of unary theta functions of weight 3/2. See \eqref{eqn:false}.

      \item[$\Phi_{m,r}(\tau)$] The holomorhpic Eichler integral of unary theta functions of weight 1/2. See \eqref{eqn:false2}.

      \item[$\Gamma$] The plumbing graph associated to the plumbed manifold $X_{\Gamma}$ with vertex set $V$. 

      \item[$v_0$] The central node of a plumbing graph $\Gamma$.

      \item[$M$] The adjacency matrix of $\Gamma$. The number of positive eigenvalues of $M$ is $\pi_M$ and its signature is given by $\sigma_M$.

      \item[$\Gamma_{M,G}$] The lattice defined as  $\Gamma_{M,G}:=M\ZZ^{|V|}\otimes_{\ZZ} \Lambda$ with norm given as in (\ref{def:notation}).

      \item[$\textfrak{e}$] The orbifold Euler characteristic of $X_\Gamma =M(b;\{q_i/p_i\}_{i})$, given by $ \textfrak{e} = 	b +\sum_k {q_k\over p_k}$.

      \item[$D$] The smallest positive integer such that ${D\over \textfrak{e} p_i}\in \ZZ$ for $i=1,\dots,N$ associated to an $N$-leg star graph corresponding to a negative Seifert manifold $X_\Gamma =M(b;\{q_i/p_i\}_{i})$.

      \item[$m^{(\vec \nu)}_{\vec\sigma}$] The multiplicity of the weight $\vec\sigma$ in the highest weight module with highest weight $\vec \nu$. See \eqref{LieAlgcharacter}.

      \item[$\underline{\vec{x}}$] An element in $\mathbb{R}^{|V|}  \otimes_\ZZ\Lambda$, whose norm squared is given by 
        $|| \underline{\vec{x}} ||^2 := \sum_{v,v'\in V} M^{-1}_{v,v'} \langle \vec{x}_v , \vec{x}_{v'} \rangle$.

      \item[${\underline{\vec{b}}}$] An element in $(\ZZ^{|V|}\otimes_{\ZZ}\Lambda+\underline{\vec{b}_{0}} )  /  \Gamma_{M,G}$, corresponding to a generalized {Spin$^c$} structure. 

      \item[ $\widehat Z^G_{\underline{\vec{b}}}(X_\Gamma)$] The homological block, a quantum invariant of $X_\Gamma$ associated to a simply-laced Lie group $G$. See Definition \ref{dfn:blocks}. 

      \item[ $\widehat{Z}^G_{\underline{\vec{b}}}(X_\Gamma, W_{\vec \nu_{v_\ast}})$ ] The homological blocks, modified by Wilson operators $W_{\vec \nu_{v_\ast}}$ associated to a  node $v_\ast \in V$ in the plumbing graph and highest weight representations with highest weight $\vec \nu \in \Lambda^\vee$. See equation \eqref{dfn:ZhatW}.

      \item[$\tilde \chi_{{\hat{w}};\underline{\vec{b}}}$] Integrand in the definition of $\widehat Z^G_{\underline{\vec{b}}}(X_\Gamma)$, $\hat{w}:=(w_1,\dots, w_N)$, ${\underline{\vec{b}}}\in (\ZZ^{|V|}\otimes_{\ZZ}\Lambda+\underline{\vec{b}_{0}} )  /  \Gamma_{M,G}$. See \ref{prop:main_generalSeif}.

      \item[$C_\Gamma^G(q)$] The constant factor in the definition of $\widehat Z^G_{\underline{\vec{b}}}(X_\Gamma)$, defined as $C_\Gamma^G(q):= (-1)^{|\Phi_+|\pi_M}q^{{3\sigma_M-{\rm Tr} M\over 2} |\vec \rho|^2}$. See equation \eqref{eqn:C}.

      \item[${S}_{w,w_1,w_2,\dots,w_N;\vec{\underline{b}}}$] The set given by
        $ \{ \vec \ell_0 \, \lvert (\vec{\ell}_0, \,{-w_1(\vec \rho), \cdots, -w_N( \vec\rho)} , {0, \cdots, 0}) \in \Gamma_{M,G} + w(\underline{\vec b}) \}$. See \eqref{dfn:theset}.

      \end{list}
    \end{footnotesize}

    \newpage
    \section{Various Log VOAs and Their Characters}
    In this section we will briefly review the  logarithmic vertex operator algebras relevant for our study of homological blocks and in particular their characters.

    We take $G $ to be a simply-laced Lie group, use ${\mathfrak g}$ to denote the associated Lie algebra and let $\Lambda=\Lambda_G$ be the corresponding root lattice. 
    We will denote by $\Phi_{\rm s}=\{\vec \alpha_i\}$ a set of simple roots and $\{\vec \omega_i\}$ the corresponding 
    fundamental weights,
    $\Phi_\pm$ the set of positive resp. negative roots, 
    and by 
    \be
    \label{eqn:P+}
    P^+:=\{\vec \lambda \in \Lambda^\vee 
      \lvert \langle \vec \lambda, \vec \alpha\rangle {>}  0  ~\forall~\vec\alpha \in \Phi_+ 
    \}
    \ee
    the set of dominant integral weights, where $\langle \cdot , \cdot \rangle$ is a quadratic form given by the Cartan matrix of $G$.   For $\vec{x} \in \mathbb{C} \otimes_\mathbb{Z} \Lambda $, we define the norm 
    $| \vec{x} |^2 := \langle \vec{x}, \vec{x} \rangle$ as usual. 

    In \S \ref{subsec:1m} we review the VOAs {\lV}   and \lVs, associated with $\mathfrak g$. They are also often referred to as the triplet and singlet $(1,m)$ log VOAs, respectively. In \S \ref{sec:pp'} we review the VOAs {\lVb} and \lVbs, associated with $\mathfrak g=A_1$. They are often referred to as the triplet and singlet $(p,p')$ log VOAs, respectively.

    \subsection{ log-\texorpdfstring{${\cal V}_{\bar \Lambda}(m)$}{V\_L(m)}}
    \label{subsec:1m}

    The logarithmic vertex operator algebra log-${\cal V}_{\bar \Lambda_{A_1}}(m)$, also known as the triplet model and sometimes denoted as ${\cal W}$(m) in the literature, was first constructed in \cite{Kausch:1990vg,Kausch:1995py}. 
    The analogous algebras are later defined for arbitrary simple-laced semi-simple Lie algebra ${\mathfrak g}$ in \cite{FeiginTipunin}. 
    We mainly follow \cite{FeiginTipunin,Sugimoto}.

    Let $\varphi_{\vec{\a}_i}(z)$ be the chiral scalar field associated to the root $\vec{\a}_i\in \Phi_{\rm s}$, 
    satisfying the following operator product expansion
    \be
    \varphi_{\vec{\a}}(z)\varphi_{\vec{\b}}(w) =  \langle \vec{\a}, \vec{\b} \rangle \log (z-w)\,.
    \ee
    In terms of the mode expansion
    \be
    \varphi_{\vec{\a}_i}(z)= (\bar{\varphi}_{\vec{\a}_i})_0 + (\bar{\varphi}_{\vec{\a}_i})_0 \log (z) - \sum_{n \neq 0} ({\varphi}_{\vec{\a}_i})_n z^{-n}, 
    \ee
    we modify the commutation rule of the zero modes such that 
    \be
    [(\bar{\varphi}_{\vec{\a}_i})_0,(\bar{\varphi}_{\vec{\a}_j})_0]=b_{ij}
    \ee
    where 
    \be
    b_{ij}=-b_{ji}=\begin{cases} 1 & i<j ~{\rm and}~ C_{ij} \neq 0\\ 0 & \text{otherwise}\end{cases}.
    \ee

    The vertex operators are defined as
    \be
    V_{\vec{\lambda}}(z):= e^{\frac{1}{\sqrt{m}}\varphi_{\vec{\lambda}}(z)}
    \ee
    with $\vec{\lambda} \in \Lambda^{\vee}$.

    The lattice VOA  ${\cal V}_{\bar \Lambda}$ can be constructed directly from these fields. The irreducible modules of this VOA are specified by an element $\vec \lambda' \in \bar \Lambda^\vee/ \bar \Lambda$, where $\bar \Lambda := \sqrt{m} \Lambda$. It is convenient to decompose $\vec \lambda'$ into
    the following two parts \cite{FeiginTipunin}: 
    \begin{equation}\label{eqn:decompose_lambda}
      \vec \lambda' = \sqrt{m} \vec \lambda +  \vec \mu  
    \end{equation}
    where $\vec \lambda \in \Lambda^\vee/\Lambda$ and  
    \begin{equation}\label{eqn:mu}
      \vec \mu = {1\over \sqrt{m}}\sum_{i=1}^{\text{rank}G} (1-s_i)\vec \omega_i 
    \end{equation}
    for $s_i\in \{1,\dots,m\}$. 

    The ${\cal V}_{\bar \Lambda}$ irreducible modules can be written in terms of the Fock module $\mathcal F_{\vec{\lambda}}$, corresponding to the vertex operator $ V_{ {\vec{\lambda}}} (z)$, as 
    $$\mathcal Y_{\vec{\lambda}'}= \bigoplus_{\vec{\a}\in \Lambda} \mathcal F_{\vec{\lambda}'+\vec{\a}}\,.$$

    Now we choose
    an energy momentum tensor 
    \be T(z):= {1\over 2} C^{ij}\partial \varphi_{\vec{\a}_i}(z) \partial \varphi_{\vec{\a}_j}(z) + {Q_0} \partial ^2\varphi_{\vec{\rho}}(z),\ee
    where  
    \be
    Q_0 = \sqrt{m} - \frac{1}{\sqrt{m}},
    \ee
    $\vec{\rho}$ is the Weyl vector, and $C^{ij}$ denotes the $ij$-entry of the inverse of the Cartan matrix. 
    The Virasoro algebra has thus central charge $$ c= \text{rank}G + 12|\vec \rho|^2 \biggl(2-m- {1 \over m}\biggr),$$ and the vertex operator $V_{ {\vec{\lambda}}} (z)$ with $\lambda \in \bar{\Lambda}^\vee$ has conformal dimension 
    $$\Delta_{\vec{\lambda}} = \frac{1}{2} | \vec{\lambda} - Q_0\vec{\rho}|^2 + \frac{c-\text{rank}G}{24}.$$

    The log-${\cal V}_{\bar \Lambda}(m)$ algebra can be described as a subalgebra of ${\cal V}_{\bar \Lambda}$ by considering the intersection of 
    the kernels of the screening charges in the original lattice VOA.

    Let 
    \be
    e_i := \frac{1}{2\pi i} \oint dz e^{\sqrt{m}\varphi_{\vec{\a}_i}(z)}, \quad f_i := \frac{1}{2\pi i} \oint dz e^{-\frac{1}{\sqrt{m}}\varphi_{\vec{\a}_i}(z)}
    \ee
    be the screening operators. They commute with the energy momentum tensor and in addition $e_i$ commutes with $f_j$ for $i,j = 1, \dots, \text{rank}G$.

    The log-${\cal V}_{\bar \Lambda}(m)$ algebra is the vertex operator subalgebra of ${\cal V}_{\bar \Lambda}$ defined by \cite{FeiginTipunin, Sugimoto}
    \be
    \label{eqn:lV}
    \text{log-}{\cal V}_{\bar \Lambda}(m) := \bigcap\limits_{i=1}^{\text{rank}G} \text{ker}_{{\cal V}_{\bar \Lambda}} f_{i}\,,
    \ee
    Instead of taking the kernel over the whole lattice algebra ${\cal V}_{\bar \Lambda}$, in order to define the $\text{log-}{\cal V}_{\bar \Lambda}^0(m)$ ``singlet" algebra, we restrict to the charge zero subalgebra $\mathcal F_{0}$
    \be
    \label{eqn:lVs}
    \text{log-}{\cal V}_{\bar \Lambda}^0(m) := \bigcap\limits_{i=1}^{\text{rank}G} \text{ker}_{ \mathcal F_{0}} f_{i}\,.
    \ee

    \subsubsection{Characters}
    \label{subsec:1mchar}
    As described in the previous section, a module $\mathcal X_{\vec\lambda'}$ for log-${\cal V}_{\bar \Lambda}(m)$ is  specified by an element $\vec \lambda' \in \bar \Lambda^\vee/ \bar \Lambda$, parametrized as in  \eqref{eqn:decompose_lambda}.
    Let 
    \begin{equation}\label{eqn:denom_id}
      \Delta(\vec{\xi})=  \ex^{\langle \vec{\xi},\vec \rho \rangle} \prod_{\vec\alpha \in \Phi_-} (1-\ex^{\langle \vec{\xi},\vec \alpha \rangle}) = \sum_{w\in W} (-1)^{l(w)} \ex^{\langle \vec{\xi}, w(\vec \rho) \rangle}
    \end{equation}
    be the usual Weyl denominator of the Lie algebra ${\mathfrak g}$, where $l(w)$ is the length of $w$ in the Weyl group $W$.
    With an abuse of notation of using $\vec \lambda$ below to denote any arbitrary representative of the  $\vec \lambda \in \Lambda^\vee/\Lambda$, the character of $\mathcal X_{\vec\lambda'}$ 
    is given as follows \cite{FeiginTipunin}: 
    \begin{gather}
      \label{eqn:char}
      \begin{split}
        \chi_{\vec{\lambda}'}(\tau, \vec{\xi}) &= \frac{1}{\eta^{\text{rank}G}(\tau)}{1\over\Delta(\vec\xi) } \sum_{\vec{\tilde{\lambda}} - \vec{\lambda} \in \Lambda} q^{\frac{1}{2} | \sqrt{m}\vec{\tilde{\lambda}} + \vec {\mu} +Q_0 \vec{\rho}|^2}\left( {\sum_{w \in \mathcal{W}} (-1)^{l(w)} \ex^{\langle  \vec{\xi}, w(\vec{\rho} + \vec{\tilde{\lambda}}) \rangle}}\right)\\
                                                         &=  \frac{1}{\eta^{\text{rank}G}(\tau)}\sum_{\substack{\vec{\tilde{\lambda}} - \vec{\lambda} \in \Lambda \\{ \vec\rho +\vec{\tilde{\lambda}} \in P^+}}} \chi^{\mathfrak g}_{\vec{\tilde{\lambda}}}(\vec\xi)  
                                                         \sum_{w\in W}(-1)^{l(w)} q^{\frac{1}{2} | \vec \mu +\sqrt{m}w ( \vec{\tilde{\lambda}} +\vec\rho )-{1\over \sqrt{m}} \vec{\rho}|^2} . 
      \end{split}
    \end{gather}

    Note that, to go from the first line to the second line in (\ref{eqn:char}), we have used $w(\vec \lambda) \equiv \vec \lambda \,{\rm mod}\, \Lambda$ if $\vec \lambda \in \Lambda^\vee$, and the fact that:
    \be
    {\sum_{w \in W} (-1)^{l(w)} \ex^{\langle  \vec{\xi}, w(\vec{\rho} + \vec{{\lambda}}) \rangle}} =   0 
    \ee
    if $\vec{\rho} + \vec{{\lambda}}$ lies on the boundary of a Weyl chamber.

    We have also used the Weyl character formula to write the multiplicity function ${\rm dim}(V_{\vec \lambda}(\vec \beta)) $ in a highest weight module of the Lie algebra $\mathfrak g$: 
    \be\label{highest_weight_module}
    \chi^{\mathfrak g}_{\vec\lambda}(\vec\xi)= \frac{\sum_{w\in W} (-1)^{l(w)} \ex^{\langle \vec{\xi}, w(\vec\lambda +\vec \rho) \rangle}}{\Delta(\vec{\xi})} = \sum_{\vec \beta\in \Lambda^\vee} {\rm dim}(V_{\vec \lambda}(\vec \beta)) \ex^{\langle \vec{\xi}, \vec \beta \rangle}.
    \ee

    For later use, we will also introduce the {\em generalised characters}, given by (\ref{eqn:char}) but with $\vec \mu \in \bar \Lambda^\vee$ lying outside the range indicated in (\ref{eqn:mu}). 

    Taking $\vec \xi =0$, we have 
    \be
    \chi_{\vec{\lambda}'}(\tau) =\chi_{\vec{\lambda}'}(\tau, \vec{\xi})\lvert_{\vec\xi=0} = \frac{1}{\eta^{\text{rank}G}(\tau)}\sum_{\substack{\vec{\tilde{\lambda}} - \vec{\lambda} \in \Lambda\\ \vec{\tilde{\lambda}}+\vec \rho  \in P^+}}   {\rm dim}(V_{\vec{\tilde{\lambda}} }) 
    \sum_{w\in W}
    (-1)^{l(w)} q^{\frac{1}{2} | \vec \mu +\sqrt{m}w ( \vec{\tilde{\lambda}}  +\vec\rho )-{1\over \sqrt{m}} \vec{\rho}|^2}. 
    \ee

    Another way to view the log-${\cal V}_{\bar \Lambda}(m)$ characters in (\ref{eqn:char}) is as a generating function for the singlet characters. 
    Assuming that the log-${\cal V}_{\bar \Lambda}(m)$ module $\mathcal X_{\vec\lambda'}$ is completely reducible as the module for the corresponding singlet model  (cf \S 5 of \cite{MR3624911}), we can write the log-${\cal V}_{\bar \Lambda}(m)$ character in terms of the 
    singlet characters $\chi^0_{\vec{\tilde \lambda}'}$, labelled by $\vec{\tilde \lambda}'  \in \bar \Lambda^\vee$, as 
    \be\label{singlet_triplet}
    \chi_{\vec{\lambda}'}(\tau, \vec{\xi}) =\sum_{\vec{\tilde{\lambda}} - \vec{\lambda} \in \Lambda}  \ex^{\langle  \vec{\tilde{\lambda}} , \vec \xi \rangle} \chi^0_{\vec{\tilde \lambda}'}(\tau) , 
    \ee 
    where we have written $\vec{\tilde \lambda}' = \sqrt{m} \vec{\tilde \lambda} + \vec \mu $ analogously to (\ref{eqn:decompose_lambda}).

    In particular, 
    the corresponding atypical singlet characters then read
    \begin{gather}\label{singlet_char}
      \begin{split}
        \chi_{\vec{\tilde \lambda}'}^0(\tau)
&={\rm CT}_{\vec z}\left(\ex^{-\langle  \vec{\tilde{\lambda}} , \vec \xi \rangle} \chi_{\vec{\lambda}'}(\tau, \vec{\xi})\right) 
 \\&= \frac{1}{\eta^{\text{rank}G}(\tau)}
 \sum_{\substack{\vec{\bar{\lambda}} + \vec\rho \in P^+\\ \vec{\bar{\lambda}}-\vec{\tilde \lambda} \in \Lambda  }}  {\rm dim}(V_{\vec{\bar{\lambda}}}(\vec{\tilde{\lambda}})) 
 \sum_{w\in W}
 (-1)^{l(w)} q^{\frac{1}{2} | \vec \mu +\sqrt{m}w ( \vec{\bar{\lambda}}  +\vec\rho )-{1\over \sqrt{m}} \vec{\rho}|^2}.
      \end{split}\end{gather}
      In the above, we have used ``${\rm CT}_{x}(f(x))$" to denote the ``$x^0$ (constant) terms of the polynomial $f(x)$ in $x$".  
      In what follows we will often use the notation $\vec\xi = \sum_i \xi_i \vec\alpha_i$,  $\quad z_i := \ex^{\langle \vec{\xi}, \vec\omega_i \rangle} = \ex^{\xi_i},$ and hence $e^{\langle \vec{\xi}, \vec{\alpha} \rangle} = \prod_{\vec{\alpha_i} \in \Phi_{\rm s}} z_i^{\langle \vec{\alpha_i}, \vec{\alpha}\rangle}$, and denote the corresponding vector by $\vec z$.
      In \eqref{singlet_char}, we say the left hand side is a generalised character of {log-${\cal V}_{\bar \Lambda}^0(m)$} when $ \chi_{\vec{\lambda}'}(\tau, \vec{\xi})$ is a generalised character of {log-${\cal V}_{\bar \Lambda}(m)$}. 

      \vspace{15pt}
      \noindent {\bf Example: $G=SU(2)$}\\

      We have
      \be
      \Delta(\vec \xi) =\Delta(\xi)= z-z^{-1},
      \ee
      where $\vec \xi = \xi \vec \alpha$ and $z=e^\xi$, 
      and the log-${\cal V}_{\bar \Lambda_{A_1}}(m)$ modules are labelled by
      \be
      \vec \lambda' =\begin{cases} {1-s\over 2\sqrt{m}} \vec \alpha\\( {\sqrt{m}\over 2}+ {1-s\over 2\sqrt{m}} )\vec\alpha \end{cases}\hspace{-0.4cm}, ~~~ s\in\{1,2,\dots,m\},
      \ee
      where we use $\vec \alpha$ to denote the simple root. Their characters are given by
      \begin{gather}\begin{split}\label{eqn:A1characters}
        \chi_{{1-s\over 2\sqrt{m}}\vec\alpha}
        (\tau,\xi) &= {1\over \eta(\tau)}
        \sum_{n\in \ZZ} q^{(-s+m+2mn)^2\over 4m} {z^{1+2n}-z^{-1-2n}\over {z-z^{-1}}}\\
        \chi_{{1-s+m\over 2\sqrt{m}}\vec\alpha}
        (\tau,\xi) &= {1\over \eta(\tau)}
        \sum_{n\in \ZZ} q^{(-s+2m+2mn)^2\over 4m} {z^{2+2n}-z^{-2-2n}\over {z-z^{-1}}}.
      \end{split}\end{gather}
      The corresponding singlet characters are 
      \begin{gather}\begin{split}
        \chi^{0}_{{1-s+2nm \over 2\sqrt{m}}\vec\alpha}(\tau) &={1\over \eta(\tau)}
        \left(\sum_{k \geq  |n| }q^{(-s+m+2mk)^2\over 4m}
          -\sum_{k \leq  -|n| }q^{(-s+m+2mk)^2\over 4m}
          +\delta_{n,0}q^{(-s+m)^2\over 4m}
        \right)\\
        \chi^{0}_{{1-s+(1+2n)m \over 2\sqrt{m}}\vec\alpha}(\tau) &={1\over \eta(\tau)}
        \left(\sum_{k \geq{\rm{max}}(n,-n-1) }q^{(-s+2m(1+k))^2\over 4m}
          -\sum_{k \leq{\rm{min}}(n,-n-1) }q^{(-s+2m(1+k))^2\over 4m}
        \right).
      \end{split}\end{gather}

      Taking the $z$-constant term in the first equation in \eqref{eqn:A1characters} we obtain
      \be\label{A1_outofrange}
      \eta(\tau)\chi^{0}_{{1-s \over 2\sqrt{m}}\vec\alpha}(\tau)  = \Psi_{m,m-s}(\tau) - \sum_{k\equiv m-s \xmod{2m}} ({\rm sgn}(k) - {\rm sgn}(k-m+s))\,q^{k^2\over 4m},
      \ee
      where 
      \be
      \label{eqn:false}
      \Psi_{m,r}(\tau) = \sum_{k\equiv r \xmod{2m}} {\rm sgn}(k)\,q^{k^2\over 4m}
      \ee
      is the false theta function \cite{MR557539}. 
      Namely, the above singlet characters are given by the false theta functions when multiplied by the eta function, and differ from the false theta function by a finite polynomial in $q$ in the case of generalised characters, for which $s$ does not lie in the range $s=1,2,\dots, m$.  

      Interestingly, note that terms of the form $q^{c}\over \eta(\tau)$ also have the interpretation as a character of the {\em typical} module character of the singlet $(1,m)$ model {log-${\cal V}_{\bar \Lambda}^0(m)$}. As a result, one can interpret the above identity  (\ref{A1_outofrange}) as the fact that a generalised $SU(2)$ singlet  character can always be expressed as an integral linear combination of the typical and atypical module characters.  
      We will see in (\ref{eqn:charID}) a somewhat similar phenomenon for the case $G=SU(3)$.

      \vspace{15pt}
      \noindent{\bf Example: $G=SU(3)$}
      \\

      The Weyl denominator is
      \begin{gather}\label{A2_Weyl} \begin{split}
        \Delta(\vec \xi) &= \Delta(\xi_1,\xi_2)   = z_1z_2(1-z_1^{-2}z_2)(1-z_2^{-2}z_1)(1-z_1^{-1}z_2^{-1}),
      \end{split}\end{gather} 
      and the modules are labelled by 
      \be\label{eqn:labelMods}
      \vec \lambda' = \begin{cases} \vec \lambda'_{0,0,s_1,s_2}:= (1-s_1)\frac{\vec \o_1}{\sqrt{m}} +(1-s_2)\frac{\vec \o_2}{\sqrt{m}} , & \\ \vec \lambda'_{1,0,s_1,s_2}:= (1-s_1+m)\frac{\vec \o_1}{\sqrt{m}} +(1-s_2)\frac{\vec \o_2}{\sqrt{m}} , & s_1,\, s_2\in\{1,2,\dots,m\} \\ \vec \lambda'_{0,1,s_1,s_2}:= (1-s_1)\frac{\vec \o_1}{\sqrt{m}} +(1-s_2+m)\frac{\vec \o_2}{\sqrt{m}} , &  \end{cases},
      \ee
      where $\vec \o_1, \vec \o_2$ denote the fundamental weights, and $\vec\a_1$, $\vec\alpha_2$ denote the
      roots of $A_2$.
      The characters are given by
      \begin{multline}
        \chi^{}_{\vec \lambda'_{a,b,s_1,s_2}} (\tau,\vec \xi) = \frac{1}{\eta(\tau)^2}\sum_{n_1,n_2\in \ZZ}q^{\frac{1}{3m}((mN_1-s_1)^2+(mN_2-s_2)^2+(mN_1-s_1)(mN_2-s_2))} \\\label{eqn:A2triplet}
        \times \frac{z_1^{N_1}z_2^{N_2}+z_1^{N_2}z_2^{-N_1-N_2}+z_1^{-N_1-N_2}z_2^{N_1} - (z_{1,2}\leftarrow  z_{2,1}^{-1}) }{\Delta(\vec \xi)}
      \end{multline}
      where we have written in the sum 
      \[
        N_1 = 2n_1-n_2 +a, ~ N_2 = 2n_2-n_1 +b. 
      \]
      For $SU(3)$ the multiplicity function is given by 
      \be\label{dim0}
      {\rm dim}(V_{n_1\vec \omega_1 +n_2\vec \omega_2}(0)) ={\rm min}(n_1,n_2) +1,
      \ee
      when $n_1\vec \omega_1 +n_2\vec \omega_2 \in \Lambda\cap P^+$, 
      and
      \be
      {\rm dim}(V_{n_1\vec \omega_1 +n_2\vec \omega_2}) ={1\over 2} (n_1+1)(n_2+1)(n_1+n_2+2).
      \ee
      Putting (\ref{dim0}) in (\ref{singlet_char}), we obtain the singlet characters 
      $\chi^0_{a,b;s_1,s_2} :=\chi^0_{\vec\lambda'_{a,b;s_1,s_2}} $. In particular, 
      for $\vec{\tilde \lambda}=0$ and 
      $\vec{\tilde \lambda}'  =   \vec \mu= {1\over \sqrt{m}}\sum_{i=1}^{2} (1-s_i)\vec \omega_i   $ we have
      \begin{gather}\label{eqn:chiMin}\begin{split}
        \eta^{2}(\tau)\chi^0_{0,0;s_1,s_2}(\tau) &=\sum_{\vec{\bar \lambda} \in \Lambda \cap P^+} {\rm{dim}}(V_{\bar \lambda}(0)) \sum_{w\in W} 
        (-1)^{l(w)} q^{{1\over 2m} |-s_1 \vec\omega_1-s_2 \vec\omega_2+m w(\vec{\bar \lambda} + \vec \rho)|^2}  \\
                                                           &  \sum_{w\in W} 
                                                           (-1)^{l(w)} q^{\langle  (s_1 \vec\omega_1+s_2 \vec\omega_2) -w  (s_1 \vec\omega_1+s_2 \vec\omega_2) ,   n_1 \vec\omega_1+n_2 \vec\omega_2 \rangle} \\
                                                                                                     &=   \sum_{\substack{n_1,n_2 \in {\mathbb N} \\n_1 \equiv n_2 \;(3)} } {\rm min}(n_1,n_2) q^{{1\over 3}m(n_1^2+n_1n_2+n_2^2)-{n_1(2s_1+s_2)+n_2(s_1+2s_2)\over 3} +{s_1^2+s_1s_2+s_2^2 \over 3m}} \\& 
                                                                                                     (1-q^{s_1 n_1}-q^{s_2 n_2}+q^{s_1 n_1 + n_2(s_1+s_2) } +q^{  n_1(s_1+s_2)+s_2 n_2 } - q^{(s_1+s_2) (n_1+n_2)}).
      \end{split}\end{gather}

      Interestingly, for $G=SU(3)$,  generalised singlet characters are integral linear combinations of the actual irreducible characters (and thus  virtual characters), via the identity (Lemma \ref{lem:charID})
      \begin{equation}\label{eqn:charID}
        \chi^{0}_{0,0;s_1,s_2+m}= 3 \chi^{0}_{1,0;s_1,s_2}
        - \chi^{0}_{0,0;m-s_1-s_2,s_2} + \chi^{0}_{0,0;s_1+s_2,m-s_2},
      \end{equation}
      which holds for all $s_1, s_2\in \ZZ$. This identity will be useful later when we establish the relation between singlet characters and $\widehat Z$ invariants, and its proof can be found in Appendix  \ref{charc_id}.

      \subsection{log-\texorpdfstring{${\cal V}_{\bar \Lambda}(p,p')$}{V\_L(p,p')}}\label{sec:pp'}

      Apart from the log-${\cal V}_{\bar \Lambda}(m)$ algebra, often referred to as the $(1,m)$ log VOA, one can consider a more general family of algebras. They are often referred to as the $(p,p')$  log VOA, labelled by two coprime integers $p$ and $p'$, and their definition reduces to the one for log-${\cal V}_{\bar \Lambda}(m)$ when setting $p=1$, $p'=m$. In the following we will focus on the case when $\Lambda=\Lambda_{A_1}$ is given by  $A_1$ root lattice. In this section, we mostly follow \cite{FGST}.\footnote{Note: our definition of $\varphi$ differs by $\sqrt 2$ from that of \cite{FGST}.} See also \cite{FGST2,BM1,Creutzig:2014nua,Ridout:2013pwa}. 

      Let $\bar \Lambda =\sqrt{pp'}\Lambda_{A_1} \cong \sqrt{2pp'}\ZZ$, and consider the lattice VOA ${\cal V}_{\bar \Lambda}$. It will turn out that the algebra {log-${\cal V}_{\bar \Lambda}(p,p')$} is a subalgebra of  ${\cal V}_{\bar \Lambda}$.
      For the rest of this section we restrict to the case of $\Lambda=A_1.$
      In this case we define a general vertex operator of $\mathcal{V}_{\bar \Lambda}(p,p')$ as,
      $$ V_{r,s;n}(z) := e^{{p'(1-r)-p(1-s) + pp'n\over\sqrt{pp'}}\varphi(z)}, ~ n\in \ZZ,~ 1\leq r \leq p, ~ 1\leq s \leq p'.$$

      The lattice VOA $\mathcal{V}_{\bar \Lambda}$ has $2pp'$ irreducible modules $\mathcal Y_{r,s}^\pm$ for $1\leq r\leq p$ and $1\leq~s\leq~p'$. They can be decomposed into Fock modules $\mathcal F_{r,s;n}$ corresponding to the vertex operator $V_{r,s;n}(z)$ as follows:
      $$\mathcal Y_{r,s}^+= \bigoplus_{n\in \ZZ} \mathcal F_{p-r,p'-s;2n},~~~\mathcal Y_{r,s}^-= \bigoplus_{n\in \ZZ} \mathcal F_{p-r,p'-s;2n+1},$$
      where $\mathcal Y^+_{p-1,p'-1}$ is the vacuum module, and 
      $$\mathcal{V}_{\bar \Lambda}(p,p')=\bigoplus_{r=1}^p\bigoplus_{s=1}^{p'}(\mathcal Y_{r,s}^+\oplus \mathcal Y_{r,s}^-).$$
      Let $\mathcal F_0:= \mathcal F_{1,1;0}$.
      Now we define
      \be
      \alpha_+:=\sqrt{p'\over p},\,~ \alpha_-:= -\sqrt{p\over p'},\, ~\alpha_0=\alpha_++\alpha_-,
      \ee
      and choose an energy momentum tensor 
      $$T(z):= {1\over 4} \partial \varphi(z) \partial \varphi(z) + {\alpha_0\over 2} \partial ^2\varphi(z),$$
      such that its modes span a Virasoro algebra with $$ c= 1- 6{(p-p')^2\over pp'}.$$ The general vertex operators $V_{r,s;n}$
      have conformal dimension
      $$\Delta_{r,s;n}:= {(ps-p'r+pp'n)^2- (p-p')^2\over 4pp'}$$
      with respect to this choice of $T(z)$. Finally, let $\mathcal J_{r,s;n}$ denote the irreducible Virasoro module of highest weight $\Delta_{r,s;n}.$

      In order to define the log-$\mathcal{V}_{\bar \Lambda}(p,p')$ VOA, we can start with the screening operators
      $$e_+:= \oint dz e^{\alpha_+\varphi(z)},~~f_-:= \oint dz e^{\alpha_-\varphi(z)},$$
      which commute with the energy-momentum tensor, $[e_+,T(z)]=[f_-,T(z)]=0.$ Let 
      \begin{align}
        \mathcal K_{r,s}^\pm= {\rm ker} e_+^s \cap {\rm ker} f_-^r ~~ {\rm in}~~\mathcal Y^{\pm}_{p-r,p'-s}  \label{eqn:Krs},
      \end{align}
      where $1 \leq r \leq p-1$ and $1 \leq s\leq p'-1$. Then the \lVb~  VOA is defined to be the subalgebra of $\mathcal{V}_{\bar \Lambda}$ with underlying vector space $\mathcal K_{1,1}^+.$
      It is strongly generated by the energy momentum tensor $T(z)$ and two Virasoro primaries $W^\pm(z)$ of conformal dimension $(2p-1)(2p'-1)$. In terms of screening charges and vertex operators, these primaries are given by
      \begin{equation}
        W^{+}\left( z \right)=\left( f_{-}^{p-1} \right)V_{1,p-1;3}\left( z \right),\quad 
        W^{-}\left( z \right)=\left( e_{+}^{p'-1} \right)V_{1,p'-1;3}\left( z \right).
      \end{equation}

      There are $2pp' + {1\over 2} (p-1)(p'-1)$ irreducible modules of the log-${\cal V}_{\bar \Lambda}(p,p')$ VOA. These come in two categories:
      \begin{itemize}
        \item ${1\over2}(p-1)(p'-1)$ number of Virasoro modules $\mathcal J_{r,s}:= \mathcal J_{r,s;0}$,  for $$(r,s) \in \{(r,s)| 1 \leq r\leq p-1,1\leq s\leq p'-1, p'r+ps\leq pp'\}.$$ 
          These modules for the $(p,p')$ Virasoro minimal model are also modules of the log-${\cal V}_{\bar \Lambda}(p,p')$ VOA and are annihilated by the maximal VOA ideal. 
        \item $2pp' $ number of  irreducible modules ${\mathcal X}_{r,s}^\pm$ for $1\leq r\leq p$ and $1\leq s \leq p'$. These can be described as 
          \begin{align}
      &\mathcal X_{r,s}^\pm= {\rm im} e_+^{p-r} \cap {\rm im} f_-^{p'-s} ~~ {\rm in}~~\mathcal Y^{\pm}_{p-r,p'-s} \notag\\
      &\mathcal X_{r,p'}^\pm= {\rm im} e_+^{p-r} ~~ {\rm in}~~\mathcal Y^{\mp}_{p-r,p'} \label{eqn:Xrs}\\
      &\mathcal X_{p,s}^\pm= {\rm im} f_-^{p'-s} ~~ {\rm in}~~\mathcal Y^{\mp}_{p,p'-s} \notag\\
      &\mathcal X_{p,p'}^\pm= \mathcal Y^{\pm}_{p,p'} \notag
          \end{align}
          for $1\leq r\leq p-1$ and $1\leq s \leq p'-1$.
      \end{itemize}

      The \lVb~ VOA admits an $sl(2, \mathbb{C})$ action which commutes with the Virasoro algebra generated by $T(z)$ and the currents $W^\pm (z)$ are highest and lowest-weight components of an $sl(2,\mathbb{C})$ triplet.  This is where the name ``triplet algebra" derives from. As Virasoro and $sl(2,\mathbb{C})$ bimodules, the ${\mathcal X}^\pm_{r,s}$ decompose as
      \bea
      {\mathcal X}^+_{r,s}&=&\bigoplus_{n\in \mathbb N} \mathcal J_{r,p'-s;2n-1} \otimes \ell_{2n-1} \label{eqn:chiRS1}, \\
      {\mathcal X}^-_{r,s}&=&\bigoplus_{n\in \mathbb N} \mathcal J_{r,p'-s;2n} \otimes \ell_{2n} \label{eqn:chiRS2},
      \eea
      for 
      \be\label{pprime_range} 1\leq r\leq p, ~~ 1\leq s\leq p',\ee and where $\ell_n$ is the $n$-dimensional irreducible representation of $sl(2,\mathbb{C})$. The ${\cal J}_{r,s}$ are $sl(2,\mathbb{C})$ singlets.

      One can also define the closely related \lVbs ~VOA (also called the $(p,p')$ singlet model) as a subalgebra of $\mathcal F_0$ via (see, e.g., \cite{Creutzig:2014nua})
      \be
      \label{eqn:lVbs}
      \text{\lVbs} := \text{ker}_{\mathcal F_0} e_+ \cap \text{ker}_{\mathcal F_0}f_-.
      \ee

      \subsubsection{Characters}
      \label{subsec:pp'char}

      From the discussion in the previous section, one can compute the characters 
      \be
      {\rm ch}_{r,s}^\pm (\tau, \xi):= {\rm Tr}_{\chi^\pm_{r,s}} q^{L_0-c/24}z^{J_0}= \begin{cases}\sum_{n \geq 0} {\rm ch}\mathcal J_{r,p'-s;2n+1}(\tau){\rm ch}\ell_{2n+1}(\xi)\\
      \sum_{n \geq 1} {\rm ch}\mathcal J_{r,p'-s;2n}(\tau){\rm ch}\ell_{2n}(\xi)\end{cases}
      \ee
      using 
      $${\rm ch}\mathcal J_{r,s;n}(\tau)={q^{1-c\over 24}\over \eta(\tau)} \sum_{k\geq 0} \left ( q^{\Delta_{r,s;n+2k}}+ q^{\Delta_{r,s;-n-2-2k}}- q^{\Delta_{r,p'-s;n+2k+1}}- q^{\Delta_{r,p'-s;-n-2k-1}}\right )$$
      and
      $${\rm ch}\ell_n(z)= {z^n - z^{-n} \over z-z^{-1}}= z^{n-1} + z^{n-3} + \ldots + z^{-n+1}.$$
      Beginning with the case of ${\rm ch}_{r,s}^+ (\tau, \xi)$, we have
      \begin{multline}\label{chiStep1}
        {\rm ch}_{r,s}^+(\tau,\xi)={q^{1-c\over 24}\over \eta(\tau)}\sum_{n=0}^\infty \ell_{2n+1}(z)\sum_{k=0}^\infty \left( q^{\Delta_{r,p'-s;2n+2k+1}}+ q^{\Delta_{r,p'-s;-2n-3-2k}}\right.\\\left.- q^{\Delta_{r,s;2n+2k+2}}- q^{\Delta_{r,s;-2n-2k-2}}\right ).
      \end{multline}
      Now we use the identity 
      $$\sum_{n\geq 0 }f(n) \sum_{k\geq 0} g(k,n)= \sum_{k\geq 0}g(0,k) \sum_{n=0}^k f(n)$$
      for $f(n) = \ell_{2n+1}(z)$ and $g(k,n)= q^{\Delta(n,k)}$ for each of the four terms in \eqref{chiStep1} to rewrite this as
      \begin{multline}
        {\rm ch}_{r,s}^+(\tau,\xi)={q^{1-c\over 24}\over \eta(\tau)}\sum_{k=0}^\infty {z^{2k+2}-2 + z^{-2k-2}\over (z-z^{-1})^2} \left( q^{\Delta_{r,p'-s;2k+1}}+ q^{\Delta_{r,p'-s;-2k-3}}\right.\\\left.- q^{\Delta_{r,s;2k+2}}- q^{\Delta_{r,s;-2k-2}}\right ),
      \end{multline}
      where we have used the fact that $$\sum_{n=0}^k \ell_{2n+1}(z)= {z^{2k+2}-2 + z^{-2k-2}\over (z-z^{-1})^2}.$$ Taking $k \to -k-2$ in the second and fourth terms, we can rewrite this as
      \be
      {\rm ch}_{r,s}^+(\tau,\xi)={q^{1-c\over 24}\over \eta(\tau)}\sum_{k\in \ZZ} {z^{2k+2}-2 + z^{-2k-2}\over (z-z^{-1})^2} \left( q^{\Delta_{r,p'-s;2k+1}}- q^{\Delta_{r,s;2k+2}}\right ),
      \ee
      which, after plugging in the explicit forms of $\Delta_{r,s;n}$ and $c(p,p')$ and shifting $k\to k-1$, finally yields
      \be\label{eqn:ppprime_tripletchar1}
      {\rm ch}_{r,s}^+(\tau, \xi)={1\over \eta(\tau)}\sum_{k\in \ZZ} {z^{2k}-2 + z^{-2k}\over (z-z^{-1})^2} \left( q^{(2pp'k+\tilde \mu_{r,s,1})^2\over 4pp'}- q^{(2pp'k+\tilde \mu_{r,s,2})^2\over 4pp'}\right ),
      \ee
      where 
      \be\label{mu_pprime}
      \tilde \mu_{r,s,1} := ps+p'r ~,~ \tilde \mu_{r,s,2} := ps-p'r.
      \ee
      In the next section, we will also consider the {\em generalised characters}, namely functions defined as in (\ref{eqn:ppprime_tripletchar1}) with $r, s \in {\mathbb Z}$ that are not necessarily in the range 
      (\ref{pprime_range}). When considered as such generalised characters, we see that they have the symmetry property
      \be\label{pprime_reflection}
      {\rm ch}_{r,s} = \epsilon_r \epsilon_s  
      {\rm ch}_{\epsilon_rr ,\epsilon_ss} ~~{\rm for}~~ \epsilon_r,\epsilon_s \in \{1,-1\}.
      \ee

      A similar computation leads to a formula for the character ${\rm ch}_{r,s}^- (\tau, \xi)$, which is given by
      \begin{multline}
        {\rm ch}_{r,s}^-(\tau, \xi)={1\over \eta(\tau)}\sum_{k\in \ZZ} {z^{2k+1}-z -z^{-1} + z^{-2k-1}\over (z-z^{-1})^2}\\\times \left( q^{(2pp'k+pp'-(ps+p'r))^2\over 4pp'}- q^{(2pp'k+pp'+(ps-p'r))^2\over 4pp'}\right ),
      \end{multline}
      where we have used that
      $$\sum_{n=1}^k {\rm ch}\ell_{2n}(\xi)= {z^{2k+1}-z -z^{-1} + z^{-2k-1}\over (z-z^{-1})^2}.$$
      Using
      $$\lim_{z\to 1} \left ({z^{2k}-2 + z^{-2k}\over (z-z^{-1})^2}\right) = k^2$$
      and 
      $$\lim_{z\to 1} \left ({z^{2k+1}-z -z^{-1} + z^{-2k-1}\over (z-z^{-1})^2}\right)= k(k+1),$$
      we obtain
      \be
      {\rm ch}_{r,s}^+(\tau, 0)={1\over \eta(\tau)} \sum_{k\in \ZZ} k^2\left( q^{(2pp'k-(ps+p'r))^2\over 4pp'}- q^{(2pp'k+(ps-p'r))^2\over 4pp'}\right )
      \ee
      and
      \be
      {\rm ch}_{r,s}^-(\tau, 0)={1\over \eta(\tau)}\sum_{k\in \ZZ} k(k+1) \left( q^{(2pp'k+pp'-(ps+p'r))^2\over 4pp'}- q^{(2pp'k+pp'+(ps-p'r))^2\over 4pp'}\right ).
      \ee
      Note that the above characters can be expressed in terms of sums of theta functions and their derivatives \cite{FGST}. 

      Taking the $z$-constant term of \eqref{eqn:ppprime_tripletchar1} gives the corresponding singlet character
      \be\label{eqn:pprime_singletChar1}
      {\rm ch}^{0,+}_{r,s}(\tau)={1\over \eta(\tau)}\sum_{k\in \ZZ} |k|  \left( q^{(2pp'k-ps-p'r)^2\over 4pp'}- q^{(2pp'k-ps+p'r))^2\over 4pp'}\right ).
      \ee
      Notice the relation between the above characters and Eichler integrals of theta functions. 
      In \cite{Cheng:2018vpl} we propose that for four exceptional fibers the following building blocks $\Xi_{m,r}$ play a role analogous to that of the false theta functions in the case of Seifert manifolds with 3 exceptional fibers:
      \be
      {\Xi}_{m,r}(\tau)  := \sum_{\substack{n\in \ZZ\\ n=r\xmod{2m}}} {\rm sgn}(n+m-1) \lfloor \tfrac{n+m-1}{2m} \rfloor \, q^{{n^2\over 4m}},
      \ee
      where $r\in \ZZ/2m$, satisfying
      \be {\Xi}_{m,r} = {\Xi}_{m,-r}. \ee
      It relates to the weight 3/2 and 1/2 holomorphic Eichler integrals as follows. 
      Let 
      \be\label{def:buildingblock4leg}
      B_{m,r}(\tau) :={1\over 2m}\left (\Phi_{m,r}(\tau)-r\Psi_{m,r}(\tau)\right ) =\sum_{k\in \ZZ} |k|\, q^{(2mk+r)^2\over 4m} , 
      \ee
      where $\Psi_{m,r}$ and $\Phi_{m,r}$ are the holomorphic Eichler integrals of  unary theta functions of weight 3/2 and 1/2 respectively, with the definitions \eqref{eqn:false} and
      \be
      \label{eqn:false2}
      \Phi_{m,r}(\tau) =
      \sum_{\substack{n\in \ZZ\\ n=r \xmod{2m}}} {\rm sgn}(n)\,n  \, q^{{n^2\over 4m}}= \sum_{\substack{n\geq 0 \\ n= r \bmod 2m}} nq^{n^2/4m}+\sum_{\substack{n\geq 0 \\ n= -r \bmod 2m}}n q^{n^2/4m}.
      \ee
      We see that 
      \be\label{shift_B}
      B_{m,r+2nm}=  {\Xi}_{m,r} - n \Psi_{m,r}
      \ee
      when $-m<r\leq m$.
      Compared with (\ref{eqn:pprime_singletChar1}), we have
      \be\label{eqn:rel_B_pprimechars}
      \eta(\tau)\,{\rm ch}^{0,+}_{r,s}(\tau)= \left(B_{pp', ps+p'r}- B_{pp', ps-p'r}\right)(\tau), 
      \ee
      which moreover coincides with $\left(\Xi_{pp', ps+p'r}- \Xi_{pp', ps-p'r}\right) (\tau)$ for $r$ and $s$ in the range \eqref{pprime_range}. 

      \subsection{Spectral Curves for Vertex Algebras}
      \label{subsec:spectral}

      When a log VOA has affine Kac-Moody symmetry $\hat{\mathfrak{g}}$, its module $V$ is naturally graded by this symmetry and the character $\chi_V (\tau,\xi) := \text{Tr} q^{L_0 - \frac{c}{24}} z^J$ is a function of $z$ that takes values in the maximal torus of $G$, such that $\mathfrak{g} = \text{Lie} (G)$. We wish to explore the $z$-dependence of characters in log VOAs, in particular $q$-difference operators that annihilate $\chi_V (\tau,\xi)$:
      \be
      \hat A \, \chi_V (\tau,\xi) \; = \; 0
      \label{qspectral}
      \ee

      \subsubsection{Example: the Triplet Algebra log-\texorpdfstring{${\cal V}_{\bar \Lambda}(m)$}{VL(m)}}

      Recall that the character of the triplet 
      $(1,m)$ model log-${\cal V}_{\bar \Lambda}(m)$
      is \eqref{eqn:A1characters}:

      \be
      \chi_{{1-s\over 2\sqrt{m}}\vec\alpha}
      (\tau,\xi) = {1\over \eta(\tau)}
      \sum_{n\in \ZZ} q^{(-s+m+2mn)^2\over 4m} {z^{1+2n}-z^{-(1+2n)} \over z-z^{-1}}
      \ee
      where we made explicit the dependence on the parameters $m$ and $s \in \{ 1, \ldots, m \}$.
      We claim that $\chi_{{1-m\over 2\sqrt{m}}\vec\alpha} (\tau,\xi)$
      is annihilated by the following $q$-difference operator:
      \be
      \hat A \chi_{{1-m\over 2\sqrt{m}}\vec\alpha} (\tau,\xi) = 0 , \ee
      where
      \be
      \hat A \; = \;
      \hat y^m + \frac{q^m({\hat z}^2-1)-q^{-m}({\hat z}^4-{\hat z}^{-2})}{q^{2m} {\hat z}^2 - 1} + \frac{{\hat z}^2 - q^{2m}}{q^{2m} {\hat z}^2 - 1} \hat y^{-m},
      \label{Amm}
      \ee
      where $\hat z$ and $\hat y$ form the algebra $\hat y \hat z = q \hat z \hat y$, usually called the {\it quantum torus}. On a function $f(\tau,\xi)$ these operators act as $$\hat z f(\tau,\xi) = z f(\tau,\xi), ~~\hat y f(\tau,\xi) = f(\tau,\xi+\tau).$$ It is easy to see that they indeed satisfy the desired $q$-commutation relation.

      Let us sketch the derivation of \eqref{Amm}. First, it is convenient to remove the denominator and introduce an auxiliary function:
      \be
      F_{m,s} (\tau,\xi) \; = \; (z - z^{-1}) \, \chi_{{1-s\over 2\sqrt{m}}\vec\alpha} (\tau,\xi)
      \ee
      We then observe that it has a structure similar to the unknot (and, more generally, torus knots), and so as in \cite{Awata:2012vzi} we make the following ansatz for the operator that annihilates $F_{m,m} (\tau,\xi)$:
      \be
      \hat A_0 \; = \; \hat y^m + \hat y^{-m} + R(\xi,\tau)
      \ee
      with some rational function $R(\xi,\tau)$ that needs to be determined. Then, it is easy to show that
      \be
      \hat A_0 F_{m,m} \; = \; 0
      \ee
      indeed holds with
      \be
      R(\xi,\tau) \; = \; 1 - q^{-2m} (1 + z^2 + z^{-2}).
      \ee
      The final step in getting to \eqref{Amm} requires passing from $F_{m,m}$ to $\chi_{{1-m\over 2\sqrt{m}}\vec\alpha}
      (\tau,\xi)$, which at the level of $q$-difference is achieved by conjugating with ${\hat z} - {\hat z}^{-1}$:
      \be
      \frac{1}{{\hat z}-{\hat z}^{-1}} \hat A_0 ({\hat z}-{\hat z}^{-1}) \; = \; \frac{q^m {\hat z} - q^{-m} {\hat z}^{-1}}{{\hat z}-{\hat z}^{-1}} \hat y^m + \frac{q^{-m} {\hat z} - q^{m} {\hat z}^{-1}}{{\hat z}-{\hat z}^{-1}} \hat y^{-m} + R(\xi,\tau).
      \ee
      Multiplying by $\frac{{\hat z}-{\hat z}^{-1}}{q^m {\hat z} - q^{-m} {\hat z}^{-1}}$ from the left, we get \eqref{Amm}.

      In the classical limit $q \to 1$, the quantum curve \eqref{Amm} becomes a hyperelliptic curve
      \be
      y^m + y^{-m} \; = \;z^{-2}+z^{2}
      \ee
      It would be interesting to extend this calculation to other values of $s$ and to more general logarithmic vertex algebras.

      \subsection{Fermionic forms of log VOA characters}
      \label{subsec:fermform}

      Later, in section \ref{sec:fermionic} we explain how connections between knot theory and physics (or, knot theory and enumerative geometry) can teach us useful lessons about the structure of the $\widehat{Z}$-invariants for many closed hyperbolic 3-manifolds.

      Until recently, this was the major obstacle in understanding the modular properties of $\widehat{Z}_b (X)$ for hyperbolic $X$ and identifying vertex algebras dual to 3-manifolds in the sense of \eqref{ZhatVOA}. By connecting knot theory to physics and enumerative geometry, this obstacle can be removed and one finds new avenues for exploring the modular properties of the BPS $q$-series invariants and connections to vertex algebras. Relegating a more complete account of these developments to section \ref{sec:fermionic}, here we briefly recall the relevant structure in the triplet vertex algebra \cite{Feigin:2007sp}.

      Let us consider the logarithmic vertex operator algebra $\text{log-}\mathcal{V}_{\bar{\Lambda}_{A_1}}(m)$ discussed in \S\ref{subsec:1m}. We are interested in the $2m$ irreducible representations whose characters are given in \eqref{eqn:A1characters}. In particular, we are interested in the linear combination of modules (in a notation consistent with \eqref{eqn:chiRS1}-\eqref{eqn:chiRS2}):
      \begin{equation}
        \mathcal{X}_{1,s} = \mathcal{X}^+_{\frac{1-s}{2\sqrt{m}} \vec{\alpha}} \oplus \mathcal{X}^-_{\frac{\sqrt{m}}{2} + \frac{1-s}{2\sqrt{m}} \vec{\alpha}}.
      \end{equation}
      One may write characters of these modules in the ``bosonic form,'' as in \eqref{eqn:A1characters}.
      Another way to write characters of these modules is via embedding the local chiral algebra of the $(1,m)$ model into a larger algebra $\mathcal{A}(m)$:
      \begin{equation}
        \text{log-}\mathcal{V}_{\bar{\Lambda}_{A_1}}(m) \xhookrightarrow{} \mathcal{A}(m).
      \end{equation}
      To obtain the algebra $\mathcal{A}(m)$, we first consider the $sl(2,\mathbb{C})$ doublet of fields:
      \begin{equation}
        a^+(z) = e^{-\sqrt{\frac{m}{2}} \varphi(z)}, \quad a^-(z) = [e,a^+], \quad e = \frac{1}{2 \pi i} \oint dz e^{\sqrt{2m} \varphi(z)},
      \end{equation}
      where the operators $a^\pm$ have the same conformal dimension $(3p-2)/4$. The OPE of $a^\pm$ has the form
      \begin{equation}
        a^+(z)a^-(w) = (z-w)^{-\frac{3p-2}{2}}\sum_{n \geq 0}H^n(w)
      \end{equation}
      where each $H^n(w)$ has conformal dimension $n$.
      The algebra $\mathcal{A}(m)$ generated by these operators is graded by the weight lattice of $sl(2,\mathbb{C})$:
      \begin{equation}
        \mathcal{A}(m) = \bigoplus_{\beta \in \sqrt{\frac{m}{2}}\mathbb{Z}} \mathcal{A}(m)^\beta
      \end{equation}
      which can be viewed as the origin of the $sl(2,\mathbb{C})$ symmetry in the triplet log VOA. In particular, this leads to the decomposition of the highest-weight irreducible modules\footnote{These modules have conformal dimension $\frac{s^2-1}{4m}+\frac{1-s}{2}$.} $\mathcal{X}_{1,s}$ generated from the vector $|s,m \rangle$,
      \begin{equation}
        a^\pm_{-\frac{3m-2s}{4}+n} | s,m \rangle = 0, \quad s \in \{1,\cdots,m\}, n \in \mathbb{N}.
      \end{equation}
      into {Vir$ \,\oplus\, sl(2,\mathbb{C})$} modules:
      \begin{equation}
        \mathcal{X}_{1,s} = \oplus_{n \in \mathbb{N}} \mathcal{J}_{m-s;n} \otimes \ell_n
      \end{equation}
      This allows to identify the irreducible modules of $\mathcal{A}(m)$ with those of the $(1,m)$ model $\text{log-}\mathcal{V}_{\bar{\Lambda}_{A_1}}(m)$.

      The fermionic form of characters then comes from the filtration on the graded algebra $\mathcal{A}(m)$. Relegating the details to \cite{Feigin:2007sp}, we reproduce here the resultant character formula for the irreducible module $\mathcal{X}_{1,s}$, with the fugacity $z$ set to $z=1$:
      \begin{equation}
        \begin{gathered}
          \chi_{s}(q) = q^{\frac{s^2-1}{4m} + \frac{1-s}{24} - \frac{c}{24}} \sum_{\vec{n} \in \mathbb{Z}_{\geq 0}^{m+1}} \frac{q^{\frac{1}{2} \vec{n} \tilde{C} \vec{n} + \vec{v}_s \cdot \vec{n}}}{(q)_{\vec{n}}} \\
          \tilde{C} = 
          \begin{pmatrix} 
            m/2 & m/2 & 1 & 2 & 3 & \cdots & m-1 \\
            m/2 & m/2 & 1 & 2 & 3 & \cdots & m-1 \\
            1 & 1 & 2 & 2 & 2& \cdots & 2 \\
            2 & 2 & 2 & 4 & 4 & \cdots & 4 \\
            3 & 3 & 2 & 4 & 6 & \cdots & 6 \\
            \cdots & \cdots & \cdots & \cdots & \cdots & \cdots & \cdots \\
            m-1 & m-1 & 2 & 4 & 6 & \cdots & 2(m-1) 
          \end{pmatrix}  \\
          \vec{v}_s = \left(\frac{m-s}{2}, \frac{m-s}{2}, 0, \cdots, 0, 1,2,\cdots,m-s \right)
        \end{gathered}
      \end{equation}
      Here, $c$ denotes the central charge of the logarithmic $(1,m)$ model and $0$ in $\vec{v}_s$ occurs $s-1$ times.

      Next, we explain how many elements of this paper find their natural home in the framework of 3d supersymmetric quantum field theory, related to the study of 3-manifold invariants via 3d-3d correspondence.

      \subsection{Log VOAs and 3d \texorpdfstring{${\cal N}=2$}{N=2} Theories}
      \label{sec:physics}

      Two-dimensional logarithmic VOAs and CFTs are relevant to many physical phenomena, including quantum Hall effect (QHE) plateau phase transition, percolation, and self-avoiding walks~\cite{Cardy:2003zr}.
      Curiously, their characters arise from supersymmetric theories in one extra dimension, in a way akin to holography, namely as half-indices of 3d $\CN=2$ theories with 2d $(0,2)$ boundary conditions~\cite{Gadde:2013wq}.

      The half-indices are basically 3d analogues of elliptic genera \cite{Witten:1986bf} in two-dimensional systems that count local operators in $\bar Q_+$-cohomology on the boundary of the 2d-3d combined system. Such combined systems naturally appear in the study of a 6d fivebrane theory partially twisted along a 4-manifold, especially in operations involving cutting and gluing~\cite{Gadde:2013sca}. The resulting 3d $\CN=2$ theory is then topologically twisted along one of its directions and holomorphically twisted along two other directions.\footnote{In particular, this clarifies a holographic-like nature in which characters of log VOAs arise \cite{Cheng:2018vpl} from three dimensions: a topological twist along one of the directions effectively projects the system to two dimensions, which then is further projected to its chiral sector by taking $\bar Q_+$-cohomology.}
      The study of such partial and holomorphic twists in supersymmetric QFTs goes back to \cite{Johansen:1994aw,Bershadsky:1995vm,Silverstein:1995re,Katz:2004nn} and has been an area of active research in recent years.

      There are several ways to formulate the half-index of the 2d-3d combined system. In radial quantization, counting local operators in QFT$_d$ requires surrounding such operators by a sphere $S^{d-1}$ and studying the Hilbert space $\CH (S^{d-1})$. In the case of 3d theory with 2d boundary, a local operator on the boundary is surrounded by a disk $D^2$, and so the analogue of radial quantization involves taking the trace over $\CH (D^2)$ or, equivalently, computing the partition function on $S^1 \times_q D^2$
      \begin{eqnarray}
        I\!\!I \Big( { \text{3d }\CN=2 \text{ theory} \atop + \text{ 2d } (0,2) \text{ bdry}} \Big) & = & \Tr_{\CH (D^2)} (-1)^F q^{R/2 + J_3} \nonumber \\
                                                                                                                                  & = & \text{partition function on } S^1 \times_q D^2
                                                                                                                                  \label{halfindex} \\
                                                                                                                                                                                                                                                                & = & \Tr_{H^* (\CH, \bar Q_+)} (-1)^F q^{R/2 + J_3} \nonumber
      \end{eqnarray}
      where this way of writing $S^1 \times_q D^2$ reminds us that the result depends on complex structure $\tau = {1\over 2\pi i}\log q$ of the boundary torus $T^2 = \partial \left( S^1 \times D^2 \right)$.
      Sometimes, the $S^1 \times_q D^2$ partition function is also called {\it K-theoretic vortex partition function} (with Omega-background along $D^2_q$).

      The basic ingredients of 3d $\CN=2$ Lagrangian theories include two types of supermultiplets: chiral and vector. Similarly, there are two types of matter supermultiplets in 2d $(0,2)$ theories, Fermi and chiral, so that basic elements of 2d $(0,2)$ gauge theories are Fermi, chiral, and vector multiplets. Below we summarize their contribution to the index \eqref{halfindex}:

      \begin{itemize}

        \item
          The contribution of a 2d $(0,2)$ Fermi multiplet to the elliptic genus and, hence, also to the index \eqref{halfindex} is basically a theta-function,
          \be
          \text{Fermi}
          \; = \; \theta \; = \;
          \overbrace{(z;q)_{\infty}}^{\partial_-^n \psi_-}
          \, \overbrace{\textcolor{red}{(qz^{-1};q)_{\infty}}}^{\partial_-^n \bar \psi_-}
          \label{Fermi}
          \ee
          where $x$ is the fugacity for the global $U(1)$ symmetry and we also indicate which modes of the Fermi multiplet contribute to various terms. Half of this contribution, shown in red, is the contribution to \eqref{halfindex} of a 3d $\CN=2$ chiral multiplet with Dirichlet boundary conditions.

        \item
          The contribution of a 2d $(0,2)$ chiral multiplet to the elliptic genus and, hence, to the index \eqref{halfindex} is the inverse theta-function,
          \begin{eqnarray}
            \text{Chiral}
& = & \frac{1}{\theta} \; = \;
\frac{\overbrace{1-z^{-1}}^{\text{0-mode of } \psi_+}}{\prod\limits_{n=0}^{\infty} \underbrace{(1-zq^n)}_{\partial_-^n \phi} \, \underbrace{(1-z^{-1}q^n)}_{\partial_-^n \bar \phi}} \\
& = & \frac{1}{\textcolor{red}{(z;q)_{\infty}} \, (qz^{-1};q)_{\infty}} \nonumber
          \end{eqnarray}
          Shown in red is the contribution to \eqref{halfindex} of a 3d $\CN=2$ chiral multiplet with Neumann boundary conditions.

        \item
          Finally, gauging a $U(1)$ symmetry with fugacity $z$ in the index \eqref{halfindex} has the effect of integrating over $z$. This operation has a clear physical meaning as it picks out gauge-invariant operators, {\it i.e.} the ``constant term'' in the $x$-dependent part of the integrand. To summarize, a 2d $(0,2)$ vector multiplet or, equivalently, a 3d $\CN=2$ vector multiplet with Neumann boundary conditions corresponds to the simple rule:
          \be
          \text{Vector} \; = \; \oint \frac{dz}{z} \; = \;
          \text{``constant term"}
          \label{vector}
          \ee

      \end{itemize}
      We can use these ingredients to (re)produce characters of older and more familiar logarithmic CFTs/VOAs.
      The best known examples of log VOAs include the following three infinite families:

      \begin{itemize}

        \item
          {\bf Symplectic fermions} are basically $\beta\gamma$-systems, labelled by an integer $d>0$ (the number of symplectic fermions) and with central charge
          \be
          c \; = \; -2d.
          \label{csymplf}
          \ee
          Note, that negative values of the central charge signal non-unitarity, which is a general feature of logarithmic theories.

        \item
          {\bf Triplet} $(1,m)$ models, denoted as {log-${\cal V}_{\bar\Lambda_{A_1}}(m)$} in \S\ref{subsec:1m},  are labelled by an integer $m>1$ and have central charge:
          \be
          c \; = \; 13 - 6 \left( m+\frac{1}{m} \right).
          \label{ctriplet}
          \ee

        \item
          {\bf Singlet} $(1,m)$ models, denoted as {log-${\cal V}_{\bar\Lambda_{A_1}}^0(m)$} in \S\ref{subsec:1m},  are particular subsectors of triplet $(1,m)$ models (and so have the same central charge):
          \be
          c \; = \; 13 - 6 \left( m+\frac{1}{m} \right).
          \label{csinglet}
          \ee

      \end{itemize}
      The last two families correspond to $\mathfrak{sl}(2)$ Lie algebra, which is not obvious in a short summary given here. They admit generalizations to other Lie algebras $\mathfrak{g}$ and to $(p,p')$ models labelled by two integers $p$ and $p'$, all of which are less studied.

      Note that, for $d=1$ and $m=2$ the central charges \eqref{csymplf} and \eqref{csinglet} take equal value $c=-2$. This is a manifestation of the relation between the simplest symplectic fermions with $d=1$ and the singlet $(1,2)$ model, which are equivalent.\footnote{To be more precise, they are related by gauging a $\mathbb{Z}_2$ symmetry.}
      The corresponding character is easily obtained by writing the invariant combinations of $\psi_n$ and $\tilde \psi_m$, the modes of two fermions \cite{RochaCaridi,Kausch:1995py,Guruswamy:1996rk,Flohr:2006id}:
      \begin{eqnarray}
        \chi (q) & = & \sum_{n=0}^{\infty} \frac{q^{n^2+n}}{(q;q)_n^2} \label{char} \\
                 & = & 1 + q^2 + 2 q^3 + 3 q^4 + 4 q^5 + 6 q^6 + 8 q^7 + 12 q^8 + 16 q^9 + \ldots \nonumber
      \end{eqnarray}
      Naturally, this is called the {\it fermionic form} of the character, which we already encountered in section \ref{subsec:fermform} and that will be discussed in more detail in section \ref{sec:fermionic}. It also has a {\it bosonic form}\footnote{In the literature, this form of characters is usually called {\it bosonic} because the alternating signs come from the subtraction of null vectors.}
      \begin{eqnarray}
        \chi (\tau) & = & \frac{1}{(q)_{\infty}}
        \sum_{n=0}^{\infty} (-1)^n q^{n(n+1)/2}
        \; = \; \frac{q^{-1/8}}{(q)_{\infty}} \Psi_{2,1} (\tau) \label{sfbos} \\
                    & = & \frac{1}{(q)_{\infty}} \sum_{n=0}^{\infty} \frac{(-q)^n (-q;q^2)_n}{(-q^2;q^2)_n} \nonumber
      \end{eqnarray}
      which will be useful in what follows. In particular, we will demonstrate how this character arises from a 3d $\CN=2$ theory.  

      Since the character of this $c=-2$ logarithmic model is constructed as the space of neutral (charge-0) states of two fermions $\psi$ and $\tilde \psi$ with charges $-1$ and $+1$, it is already in the form that can be easily converted to the supersymmetric index \eqref{halfindex} of a 2d-3d combined system. Namely, the modes of the 2d chiral fermions $\psi$ and $\tilde \psi$ each comprise the field content of a 2d $(0,2)$ ``half-Fermi'' multiplet. Since they carry charges $+1$ and $-1$, respectively, they contribute to the half-index factors
      \be
      \chi_{\pm} (\tau,\xi)
      \; = \; \prod_{n=1}^{\infty} (1 - z^{\pm 1} q^n)
      \; = \; (z^{\pm 1} q;q)_{\infty}
      \ee
      where $e^{2\pi i\xi} = z$ is the fugacity for the global symmetry (that we are about to gauge). 
      Therefore, the elliptic genus of two such multiplets (complex fermions) with charges $-1$ and $+1$ is
      $(z^{-1} q;q)_{\infty} (zq;q)_{\infty}$.
      Introducing a 2d $(0,2)$ vector multiplet and gauging this symmetry of the fermions means taking the constant term in this infinite product or, equivalently, integrating over $z$, {\it cf.} \eqref{vector}:
      \be
      \int_{|z|=1} \frac{dz}{z}
      (z^{-1} q;q)_{\infty} (zq;q)_{\infty}
      \; = \; 1 + q^2 + 2 q^3 + 3 q^4 + 4 q^5 + 6 q^6 + 8 q^7 + \ldots
      \label{chiinte}
      \ee
      This clearly agrees with \eqref{char}--\eqref{sfbos}. So, we have our first result: we managed to find a 2d $(0,2)$ physical system whose elliptic genus equals the character of the $c=-2$ log VOA.

      More precisely, our realization of symplectic fermions in supersymmetric QFT involves a 2d $(0,2)$ theory on a boundary of 3d $\CN=2$ theory. Indeed, a pure two-dimensional gauge theory with half-Fermi multiplets carrying charges $+1$ and $-1$ has gauge anomaly and, by itself, would be inconsistent. It has $-\frac{1}{2} - \frac{1}{2} = -1$ units of gauge anomaly, which can be compensated by anomaly inflow from 3d $\CN=2$ gauge theory with $G=U(1)$ and supersymmetric Chern-Simons term at level $k=+1$. In fact, this model is just a special case of a more general class of 2d-3d coupled systems in~\cite{Gadde:2013sca}.

      In particular, half-Fermi multiplets naturally arise from 3d $\CN=2$ chiral multiplets with Dirichlet boundary conditions~\cite{Gadde:2013sca}. So, we conclude that the character \eqref{char}--\eqref{sfbos} of the symplectic fermions is equal to 2d-3d half-index of the following system:
      \begin{center}
        \begin{tabular}{c || c}
          3d $\CN=2$ multiplet & boundary condition \\ \hline\hline
          $U(1)$ vector with $k=+1$ super-CS  & Neumann \\
          chiral with charge $+1$ & Dirichlet \\
          chiral with charge $-1$ & Dirichlet
        \end{tabular}
      \end{center}
      This theory is a special instance of
      \begin{center}
        \begin{tabular}{r l}
          {\bf Theory A:}
& 3d $\CN=2$ gauge theory with gauge group $U(N_c)$, \\
& Chern-Simons level $k>0$, \\
& and $N_f$ pairs of charged chirals with R-charge $R$
        \end{tabular}
      \end{center}
      which by the famous Giveon-Kutasov duality~\cite{Giveon:2008zn} is dual to
      \begin{center}
        \begin{tabular}{r l}
          {\bf Theory B:}
& 3d $\CN=2$ gauge theory with gauge group $U(N_f + k - N_c)$, \\
& Chern-Simons level $-k$, \\
& $N_f$ pairs of charged chirals $(q_a, \tilde q^a)$ with R-charge $1-R$, \\
& and $N_f^2$ uncharged chirals $M_{ab}$ with superpotential
        \end{tabular}
        \be
        W \; = \; \sum_{a,b} q^a {M_{a}}^b \tilde q_b
        \label{WtheoryB}
        \ee
      \end{center}
      Namely, our original Theory A has $N_c=1$, $k=1$, $N_f=1$, $R=0$. Therefore, its dual Theory B is a $U(1)_{-1}$ gauge theory with the following field content
      \begin{center}
        \begin{tabular}{c || c}
          3d $\CN=2$ multiplet & boundary condition \\ \hline\hline
          $U(1)$ vector with $k=-1$ super-CS  & Neumann \\
          chiral with charge $+1$ and $R=1$ & Neumann \\
          chiral with charge $-1$ and $R=1$ & Neumann \\
          chiral with charge $0$ and $R=0$ & Dirichlet		
        \end{tabular}
      \end{center}
      and the cubic superpotential \eqref{WtheoryB}.
      Here we also wrote the dual boundary conditions for all the fields.\footnote{Similar dualities for boundary conditions were also discussed {\it e.g.} in \cite{Chung:2016pgt,Dimofte:2017tpi,Franco:2019pum}.}
      Using the rules \eqref{Fermi}--\eqref{vector}, it is easy to see that the combined 2d-3d half-index \eqref{halfindex} with these boundary conditions produces another integral expression for our character \eqref{char}--\eqref{sfbos}, similar to \eqref{chiinte}:
      \begin{eqnarray}
        \chi (q) & = & (q)_{\infty} \int_{|z|=1} \, \frac{dz}{z}
        \frac{1}{(z^{-1} q^{1/2};q)_{\infty} (z q^{1/2};q)_{\infty}} \label{chiint2} \\
                 & = & 1 + q^2 + 2 q^3 + 3 q^4 + 4 q^5 + 6 q^6 + 8 q^7 + 12 q^8 + 16 q^9 + \ldots \nonumber
      \end{eqnarray}

      Here, we intentionally focused on the simplest non-trivial example of a log VOA character realized as the combined 2d-3d half-index, to illustrate how the failure of classical modular properties and the logarithmic nature of the VOA originate from three dimensions. If our system was entirely two-dimensional as a consistent QFT, its elliptic genus would be well-defined and exhibit familiar modular properties. However, as we saw in this simple example, the two-dimensional part of our system by itself is anomalous and requires three-dimensional ``bulk'' which, in turn, spoils modular properties. This simple example can be easily extended to more general systems related to characters of other logarithmic VOAs, old and new. In particular, via 3d-3d correspondence, many $\widehat{Z}$-invariants of 3-manifolds provide such examples.

      \section{\texorpdfstring{$\widehat{Z}^G$}{\^Z\^{}G}-invariants for Seifert Manifolds}
      \label{sec:homoblocks}

      In this section we will define our main object $\widehat Z^G_{\underline{\vec b}}$, labelled by a simply-laced Lie group $G$, a weakly negative plumbed  three-manifold $X_\Gamma$, a choice of generalized {Spin$^c$} structure $\underline{\vec b}$, and potential a Wilson line operator $W_{\vec \nu_{v_\ast}}$. Subsequently, we will study its relation to log VOAs reviewed in the previous section. 

      Denote the plumbing graph by $\Gamma$ and the resulting plumbed manifold by $X_\Gamma$. We write its adjacency matrix as $M$  and denote by $V$ its vertex set. 
      For a given simply-laced Lie group $G$, we  introduce the following notation:
      \begin{equation}
        \quad\underline{\vec{x}} \in \mathbb{R}^{|V|}  \otimes_\ZZ\Lambda   . 
      \end{equation}
      Sometimes we will write $\underline{\vec{x}} = (\vec{x}_v, \vec{x}_{v'} , \vec{x}_{v''} ,\dots )$ with $v,v',v'' ,\dots \in V$ and $\vec{x}_v\in \RR \otimes_\ZZ \Lambda$.
      Also, we define  the norms:
      \begin{equation}\label{def:notation} || \underline{\vec{x}} ||^2 := \sum_{v,v'\in V} M^{-1}_{v,v'} \langle \vec{x}_v , \vec{x}_{v'} \rangle.
      \end{equation} 
      We also define the lattice \be \Gamma_{M,G} :=M\ZZ^{|V|}\otimes_{\ZZ} \Lambda \ee
      with norm given as in (\ref{def:notation}).
      In what follows,  we choose the set ${\cal B}$ of $\underline{\vec{b}}$ to be isomorphic $\ZZ^{|V|}\otimes_{\ZZ} \Lambda/  \Gamma_{M,G}$. In particular, 
      we let 
      \be \label{dfn:setofb}
      {\cal B} =  (\ZZ^{|V|}\otimes_{\ZZ}\Lambda+\underline{\vec{b}_{0}} )  /  \Gamma_{M,G},
      \ee 
      where $\underline{\vec{b}_{0}}\in \ZZ^{|V|}\otimes_{\ZZ}\Lambda^\vee/\Lambda$ is given by $\vec b_{0,v} = {\rm d}_2(v) \vec \rho$, where $ {\rm d}_2(v) = {\rm deg}(v)\xmod{2}$. 
      Then, following 
      \cite{Park:2019xey}, we define homological blocks for the three manifold $X_\Gamma$. 

      \begin{defn}\label{dfn:blocks}
        Given  a simply-laced Lie group $G$,  the  homological blocks for a weakly negative plumbed three-manifold $X_\Gamma$ are defined as:
        \begin{multline}
          \widehat{Z}^G_{\underline{\vec{b}}}(X_\Gamma;\tau) =C_\Gamma^G(q)  \int_{\cal C} d \underline{\vec{\xi}}\,\left( \prod_{v \in V} \Delta(\vec{\xi}_v)^{2 - \deg v} \right)\\\times
          \sum_{w \in {W}} \sum_{\underline{\vec{\ell}} \in \Gamma_{M,G} +w(\underline{\vec{b}}) } q^{-\frac{1}{2}||  \underline{\vec{\ell}}||^2} 
          \left( \prod_{v' \in V} \ex^{\langle \vec{\ell}_{v'},  \vec{\xi}_{v'} \rangle} \right). 
        \end{multline}
        In the above equation, $W$ is the Weyl group and
        $w(\underline{\vec{b}}) $ denotes the diagonal action $w(\underline{\vec{b}}) =(w(\vec b_v), w(\vec b_{v'}),\dots)$.
        The integration measure is given by 
        $$ \int_{\cal C}  d \underline{\vec{\xi}}  :={\rm p.v.}  \int\prod_{v\in V} \prod_{i =1}^{\text{rank}G} {dz_{i,v}\over 2\pi i z_{i,v}},$$ and the contour ${\cal C}$ is given by the Cauchy principal value integral around the unique circle in the $z_{i,v}$-plane. 
        Recall that weakly negative means that $M^{-1}$ defines a negative-definite subspace in ${\mathbb C}^{|V|}$ spanned by the so-called high-valency vertices with deg$(v)>2$ \cite{Cheng:2018vpl}.

        Letting $\pi_M$ be the number of positive eigenvalues of $M$ and $\sigma_M$ the signature of $M$, according to \cite{Park:2019xey},
        \be\label{eqn:C} C_\Gamma^G(q)= (-1)^{|\Phi_+|\pi_M}q^{{3\sigma_M-{\rm Tr} M\over 2} |\vec \rho|^2},\ee
        where $ \Phi_+$ is a set of positive roots for $G$ and $\vec \rho$ is a Weyl vector for $G$.
        In the case of $G=SU(r+1)$, 
        we have 
        \be\label{eqn:rho2}
        |\vec \rho|^2 = {r(r+1)^2\over 12}, 
        \ee
        and the factor becomes
        \be C^G_\Gamma(q)= (-1)^{{r(r+1)\over 2}{\bf\pi}_M}q^{{3\sigma_M-{\rm Tr}M\over2}{r(r+1)^2\over 12}}.\ee

      \end{defn}
      \vspace{10pt}

      In what follows, we will  specialise the above definition to our main cases of interest in this paper. First, we specialise to the ``$N$-leg star graphs" which contain only a single 
      node with degree $N$ larger than two, which we will refer to as the central node $v=v_0$. 
      The resulting plumbed manifolds are Seifert three-manifolds. 
      We will  restrict to weakly negative star graphs, and the weak negativity simply means $M^{-1}_{v_0,v_0}<0$ in this case. We say that the corresponding manifold is  negative Seifert.  From \eqref{eqn:M00}, we see that these are precisely the manifolds 
      $X_\Gamma =M(b;\{q_i/p_i\}_{i=1,\dots,N})$ 
      with 
      \be \textfrak{e} = 	b +\sum_k {q_k\over p_k} < 0 . 
      \ee

      \begin{prop}\label{prop:main_generalSeif}

        Fix a simply-laced Lie group $G$. Consider an $N$-leg star graph that corresponds to a negative Seifert manifold $X_\Gamma =M(b;\{q_i/p_i\}_{i=1,\dots,N})$ with $N$ exceptional fibers, with the orbifold Euler characteristic given by $\textfrak{e} =b +\sum_k {q_k\over p_k}$.
        Let  $D$ be the smallest positive integer such that ${D\over 
        \textfrak{e} p_i}\in \ZZ$ for $i=1,\dots,N$ and set $m =  -D M_{v_0,v_0}^{-1}$. The homological blocks, defined in Definition \ref{dfn:blocks}, are given by
        \begin{gather}\begin{split}
          \widehat{Z}^G_{\underline{\vec{b}}}(\tau) &=C_\Gamma^G(q) \sum_{\hat{w} \in {W}^{\otimes N}} (-1)^{\ell(\hat{w})} \int_{\cal C} d\vec{\xi}
        \,\tilde \chi_{\hat{w};\underline{\vec{b}}}(\tau,\vec \xi) \\  \end{split}
      \end{gather}
      where  the contour $\cal C$ is as described in Definition \ref{dfn:blocks}, the 
      integrand is either $\tilde \chi_{\hat{w};\underline{\vec{b}}}=0$, or 
      there exists a unique  $\vec \kappa_{\hat{w};\vec{\underline{b}}}  \in  \Lambda/D\Lambda$ such that 
      \begin{gather}\label{eqn:integrand_N}
        \begin{split}
          \tilde \chi_{\hat{w};\underline{\vec{b}}}(\tau,\vec \xi) & = {q^\delta \over \Delta(\vec{\xi})^{N-2}}\times \\& \sum_{ \vec\lambda \in \Lambda }q^{\frac{1}{2D}  |\sqrt{m}(D \vec \lambda +\vec \kappa_{\hat{w};\vec{\underline{b}}}  + \varepsilon(N)\vec\rho ) + {m\vec A_{\hat {w}} \over \sqrt{m}}|^2}\sum_{w\in W} (-1)^{Nl(w)} 
          \,  \ex^{\langle w(D \vec \lambda +\vec \kappa_{\hat{w};\vec{\underline{b}}}  + \varepsilon(N)\vec\rho ),  \vec{\xi} \rangle} ,
        \end{split}
      \end{gather}
      where  $\delta$ and $\vec A_{\hat {w}}$ are given as
      \be\label{eqn:delta_A_main}
      \vec A_{\hat w} =  -\sum_{v_i \in V_1} 
      {{\rm sgn}(q_i)\over p_i} w_{v_i}(\vec \rho), ~~
      \delta= 
      \sum_{v\in V_1}{|\vec\rho|^2\over 2} \left({(M_{v_0,v}^{-1})^2\over M_{v_0,v_0}^{-1}} -  M_{v,v}^{-1} \right), 
      \ee
      and 
      \be\varepsilon(N) = 
      \begin{cases}  0 &~ N~ {\rm even}\\
      1 & ~ N~ {\rm odd}\end{cases}.
      \ee
      \end{prop}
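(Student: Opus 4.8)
\textbf{Proof strategy for Proposition \ref{prop:main_generalSeif}.}
The plan is to start from Definition \ref{dfn:blocks} specialized to an $N$-leg star graph and massage the lattice sum into the quoted form by exploiting the block structure of the adjacency matrix $M$. First I would partition the vertex set $V$ into the central node $v_0$, the set $V_1$ of vertices of degree one (ends of the legs), and the set $V_2$ of degree-two vertices lying along the legs. The integrals over $\vec\xi_v$ for $v\notin V_1\cup\{v_0\}$ come with measure factors $\Delta(\vec\xi_v)^{2-\deg v}=\Delta(\vec\xi_v)^0=1$, so those integrations are pure Gaussian-type lattice sums and can be performed; the key point is that summing $\ex^{\langle\vec\ell_v,\vec\xi_v\rangle}$ over $\vec\ell_v$ subject to the $\Gamma_{M,G}+w(\underline{\vec b})$ constraint and integrating over $\vec\xi_v$ effectively localizes $\vec\ell_v$ in terms of its neighbors, contracting each leg down to a single linear relation. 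After integrating out all interior leg vertices one is left only with $\vec\xi_{v_0}$ and the $\vec\xi_{v_i}$ for $v_i\in V_1$, and the residual lattice sum runs over the images of $\vec\ell_{v_0}$ and $\vec\ell_{v_i}$ modulo $M$.

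The next step is to use the explicit entries of $M^{-1}$ for a star graph, namely the standard continued-fraction formulas $M^{-1}_{v_0,v_0}=1/\textfrak{e}$, together with $M^{-1}_{v_0,v_i}$ and $M^{-1}_{v_i,v_i}$ expressed via the $p_i,q_i$; see \eqref{eqn:M00}. This is where the integer $D$ and $m=-DM^{-1}_{v_0,v_0}$ enter: the norm $\|\underline{\vec\ell}\|^2$ decomposes into a term proportional to $M^{-1}_{v_0,v_0}|\vec\ell_{v_0}|^2$ plus cross terms with the leg endpoints, and clearing denominators so that all quadratic-form coefficients become integral is exactly what forces the rescaling by $D$. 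Completing the square in $\vec\ell_{v_0}$ produces (i) the overall $q^\delta$ prefactor, with $\delta=\sum_{v\in V_1}\tfrac{|\vec\rho|^2}{2}\big((M^{-1}_{v_0,v})^2/M^{-1}_{v_0,v_0}-M^{-1}_{v,v}\big)$ absorbing the leg-endpoint Gaussians, and (ii) the shift $\vec A_{\hat w}=-\sum_{v_i\in V_1}\tfrac{\mathrm{sgn}(q_i)}{p_i}w_{v_i}(\vec\rho)$ coming from the $\vec b_0$ contribution $\vec b_{0,v_i}=\mathrm{d}_2(v_i)\vec\rho=\vec\rho$ at each degree-one node combined with the Weyl-twisted $w_{v_i}(\underline{\vec b})$. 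The parity factor $\varepsilon(N)$ and the sign $(-1)^{Nl(w)}$ arise because the leftover central Weyl sum from $\prod_{v}\Delta(\vec\xi_v)^{2-\deg v}$ at $v_0$ contributes $\Delta(\vec\xi_{v_0})^{2-N}$: pulling the $N-2$ powers of the Weyl denominator out leaves a single factor $\Delta$ to the power $N-1$ in numerator-shaped form, and when $N$ is odd one Weyl vector $\vec\rho$ is left over inside the exponent and the norm, whereas when $N$ is even it cancels.

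Then I would collect the remaining free sum: after the dust settles, $\vec\ell_{v_0}$ ranges over a coset of $D\Lambda$ inside $\Lambda$, which I write as $D\vec\lambda+\vec\kappa_{\hat w;\underline{\vec b}}$ with $\vec\lambda\in\Lambda$ and $\vec\kappa_{\hat w;\underline{\vec b}}\in\Lambda/D\Lambda$; the uniqueness of $\vec\kappa$ (or the vanishing $\tilde\chi_{\hat w;\underline{\vec b}}=0$) follows because the constraint $(\vec\ell_0,-w_1(\vec\rho),\dots,-w_N(\vec\rho),0,\dots,0)\in\Gamma_{M,G}+w(\underline{\vec b})$ from the set ${S}_{w,w_1,\dots,w_N;\underline{\vec b}}$ in the notation guide either has no solution or pins down $\vec\ell_0$ modulo $D\Lambda$. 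Substituting and rescaling $\tau\mapsto\tau$ (the $\frac{1}{2D}$ in the exponent is precisely the rescaling $m\to m$, $\Lambda\to\bar\Lambda=\sqrt m\Lambda$ bookkeeping) reproduces \eqref{eqn:integrand_N}. I expect the main obstacle to be the careful bookkeeping of the interior leg vertices: one must verify that integrating out $V_2$ really does collapse each leg to the single relation encoded by the continued-fraction entries of $M^{-1}$, with no residual $q$-dependence beyond what is captured in $\delta$, and that the principal-value prescription on the contour ${\cal C}$ is respected at each intermediate integration. The Gaussian completion and the identification of $D$ as the minimal denominator-clearing integer are the technical heart; everything else is tracking signs, Weyl-group twists, and the even/odd-$N$ dichotomy.
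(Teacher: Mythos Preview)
Your overall architecture---integrate out the non-central vertices, complete the square at $v_0$, then analyze the residual coset---matches the paper's. But the mechanism you describe for several individual steps is off in ways that would prevent the argument from going through as written.

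First, the $V_2$ integration is much simpler than you suggest. For $v\in V_2$ the factor is $\Delta(\vec\xi_v)^0=1$, so the integral is just $\int d\vec\xi_v\,\ex^{\langle\vec\ell_v,\vec\xi_v\rangle}=\delta_{\vec\ell_v,\vec 0}$; it pins $\vec\ell_v=0$ outright, not ``in terms of its neighbors''. No continued-fraction contraction happens here---the continued fractions enter only later, via the closed-form entries of $M^{-1}$ (Lemma \ref{lem:Midentity}) when you evaluate $\|\underline{\vec\ell}\|^2$.

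Second, you never explain where the sum over $\hat w=(w_1,\dots,w_N)\in W^{\otimes N}$ comes from. This is the key missing step: for each end node $v_i\in V_1$ the factor $\Delta(\vec\xi_{v_i})^{1}$ is expanded via the denominator identity \eqref{eqn:denom_id}, and then $\int d\vec\xi_{v_i}\,\Delta(\vec\xi_{v_i})\,\ex^{\langle\vec\ell_{v_i},\vec\xi_{v_i}\rangle}=\sum_{w_i\in W}(-1)^{l(w_i)}\delta_{\vec\ell_{v_i},-w_i(\vec\rho)}$. That is what fixes $\vec\ell_{v_i}=-w_i(\vec\rho)$ and generates both the $\hat w$-sum and the $(-1)^{\ell(\hat w)}$ sign. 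Your attribution of the $w_{v_i}(\vec\rho)$ shift to $\underline{\vec b_0}$ is not the mechanism; $\vec A_{\hat w}$ arises from completing the square in $\|\underline{\vec\ell}\|^2$ once those end-node values are plugged in.

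Third, your account of $\varepsilon(N)$ and $(-1)^{Nl(w)}$ is garbled. After the $V_1,V_2$ integrations only the $\vec\xi_{v_0}$-integral remains, with the \emph{outer} Weyl sum $\sum_{w\in W}$ still present from Definition \ref{dfn:blocks}. One uses $S_{w,\hat w;\underline{\vec b}}\cong S_{1,w^{-1}\hat w;\underline{\vec b}}$ to trade that outer $w$ for a relabeling of $\hat w$; the sign $(-1)^{\ell(\hat w)}$ picks up $(-1)^{Nl(w)}$ under $\hat w\mapsto w^{-1}\hat w$, which is where $(-1)^{Nl(w)}$ actually comes from. The shift $\varepsilon(N)\vec\rho$ comes from $\vec b_{0,v_0}=(\deg v_0\bmod 2)\vec\rho$, not from any manipulation of $\Delta$.
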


      \begin{proof} 
        For a star graph, we can separate the vertices into the central node $v_0$, the end nodes with degree one and the intermediate nods with degree two: 
        \be\label{dfn:threesubsets}
        V= \{v_0\} \cup V_1 \cup V_2.
        \ee
        Integrating over the intermediate vertices, we obtain
        \be
        \int \prod_{v\in V_2} d {\vec{\xi}_v} \ex^{\langle \vec \xi_v, \vec \ell_v\rangle} = \prod_{v\in V_2} \delta_{\vec \ell_v, \vec 0} ,
        \ee
        while integrating over the end vertices gives
        \be
        \int \prod_{v\in V_1} d {\vec{\xi}_v} \,\Delta(\vec{\xi}_v) \,  \ex^{\langle \vec \xi_v, \vec \ell_v\rangle} = \prod_{v\in V_1} \left(\sum_{w\in W} (-1)^{l(w)} \delta_{\vec \ell_v, -w(\vec\rho)} \right),
        \ee
        where we have made use of the denominator identity (\ref{eqn:denom_id}).
        We are then left with an integral over the central node. Writing $\vec \xi=\vec\xi_{v_0}$ and 
        \[
          \underline{\vec{\ell}}= \Big(\vec{\ell}_0, \, \underbrace{-w_1(\vec \rho), \cdots, -w_N( \vec\rho)}_{N} , \underbrace{0, \cdots, 0}_{|V|-N-1} \Big)
        , \]
        where the three groups of vectors correspond to the three subsets of the vertices (\ref{dfn:threesubsets}), 
        we get
        \begin{multline}
          \widehat{Z}^G_{\underline{\vec{b}}} =C_\Gamma^G(q)  \int_{\cal C} d\vec{\xi}\, {1\over \Delta(\vec{\xi})^{N-2}}\\\times
          \sum_{w,w_1,\dots,w_N \in {W}} (-1)^{l(w_1)+\dots+l(w_N)} \sum_{ \vec{\ell_0}\in {S}_{w,w_1,w_2,\dots,w_N} } q^{-\frac{1}{2}||  \underline{\vec{\ell}}||^2} 
          \ex^{\langle \vec{\ell}_{0},  \vec{\xi} \rangle} 
        \end{multline}
        where we define the set
        \be\label{dfn:theset}
        {S}_{w,w_1,w_2,\dots,w_N;\vec{\underline{b}}} := \left\{ \vec \ell_0 \,\Bigg\lvert \Big(\vec{\ell}_0, \, \underbrace{-w_1(\vec \rho), \cdots, -w_N( \vec\rho)}_{N} , \underbrace{0, \cdots, 0}_{|V|-N-1} \Big) \in \Gamma_{M,G} + w(\underline{\vec b}) \right\}.
          \ee
          Note that 
          \be
          \vec{\ell_0}\in  {S}_{w,w_1,w_2,\dots,w_N;\vec{\underline{b}}} \Leftrightarrow 
          w^{-1}(\vec{\ell_0})\in  {S}_{1,w^{-1}w_1,w^{-1}w_2,\dots,w^{-1}w_N;\vec{\underline{b}}}, 
          \ee
          so we can rewrite the integral as 
          \begin{gather}\begin{split}
            \widehat{Z}^G_{\underline{\vec{b}}} =C_\Gamma^G(q)   \int_{\cal C} d\vec{\zeta}\, {1\over \Delta(\vec{\xi})^{N-2}}
            \sum_{\hat w \in {W}^{\otimes N}} (-1)^{\ell(\hat w)} \sum_{ \vec{\ell}_0\in {S}_{1,\hat w};\underline{\vec b} }\,\sum_{w\in W}(-1)^{Nl(w)} q^{-\frac{1}{2}||  \underline{\vec{\ell}}||^2} 
            \ex^{\langle w(\vec{\ell}_0),  \vec{\xi} \rangle} ,
          \end{split}
        \end{gather}
        where we have introduced the notation $\hat{w}$ to denote $(w_1,\dots,w_N)$, and define $$\ell(\hat{w}): = \sum_{v\in V_1} l(w_v) =\sum_{i=1}^N l(w_i).$$
        Explicitly, 
        we have
        \be
        ||  \underline{\vec{\ell}}||^2 = M_{v_0,v_0}^{-1} |\vec \ell_0|^2 - 2 \sum_{v\in V_1} M_{v_0,v}^{-1} \langle w_v(\vec\rho), \vec \ell_0 \rangle +\sum_{v,v'\in V_1} M_{v,v'}^{-1}\langle w_v (\vec\rho)  ,  w_{v'} (\vec\rho) \rangle. 
        \ee
        Combined with Lemma \ref{lem:Midentity}, this leads to 
        \begin{gather}
          \label{hatZ1}
          \begin{split}
            \widehat{Z}^G_{\underline{\vec{b}}}(\tau) &=C_\Gamma^G(q) (-1)^N \sum_{\hat{w} \in {W}^{\otimes N}} (-1)^{\ell(\hat{w})} \int_{\cal C} d\vec{\xi}
          \,\tilde \chi_{\hat{w};\underline{\vec{b}}}(\tau,\vec \xi) \\  \end{split},
        \end{gather}
        with the integrand given by 
        \begin{gather}
          \label{hatZ12}
          \begin{split}
            \tilde \chi_{\hat{w};\underline{\vec{b}}}(\tau,\vec \xi) &: = {q^\delta \over \Delta(\vec{\xi})^{N-2}}  \sum_{ \vec{\ell}\in {S}_{1,\hat{w};\vec{\underline{b}}} }q^{-\frac{1}{2} M_{v_0,v_0}^{-1} |\vec \ell+{\vec A_{\hat {w}} }|^2 }\sum_{w\in W} (-1)^{Nl(w)} 
            \,  \ex^{\langle w(\vec{\ell}),  \vec{\xi} \rangle} , 
          \end{split}
        \end{gather}
        where
        \be\label{dfn:A}
        \vec A_{\hat {w} } ={-1\over M_{v_0,v_0}^{-1}} \sum_{v\in V_1} M_{v_0,v}^{-1}  w_v(\vec\rho), ~~
        \delta= 
        \sum_{v\in V_1}{|\vec\rho|^2\over 2} \left({(M_{v_0,v}^{-1})^2\over M_{v_0,v_0}^{-1}} -  M_{v,v}^{-1} \right).
        \ee
        Moreover, as shown in Lemma \ref{lem:Midentity}, we can write the above in terms of the Seifert data $M(b;\{q_i/p_i\}_i)$ for the plumbed manifold as in \eqref{eqn:delta_A_main}. (See also Lemma \ref{lem:Midentity} for an alternative expression for $\delta$.)

        Next, we observe from the form of the vectors $\underline{\vec{b}}$ (\ref{dfn:setofb}) that the set (\ref{dfn:theset}) can be expressed as  
        \be	\label{eqn:setD}
        {S}_{1,\hat{w};\vec{\underline{b}}}
        =  \left\{ \vec \lambda +\varepsilon(N)\,  \vec\rho \,\Bigg\lvert M^{-1}\Big(\vec \lambda + \vec \lambda_{\hat w;v_0}, \, \vec \lambda_{\hat w;v'}, \vec\lambda_{\hat w;v''}, \dots \Big)^T \in  \Lambda^{\otimes |V|} \right\} 
          \ee	
          where  $\vec\lambda_{\hat w;v}\in \Lambda$ for all $v\in V$. More precisely, we have 
          \[\vec \lambda_{\hat w;v} =\begin{cases}- \varepsilon(N)\, \vec \rho -\vec b_{v_0}  ,& v=v_0  \\ 
            -w_i (\vec \rho) - \vec b_{v_i} ,& v=v_i \in V_1 \\ 
            - \vec b_{v} ,& {\rm otherwise} 
          \end{cases}
          \]
          for the central node, end nodes and the other nodes respectively.
          In terms of the root basis $$\vec \lambda = \sum_{\vec \alpha_k \in \Phi_s} \l^{(k)} \vec \alpha_k , $$
          the above set is given by 
          \be\label{eqn:set_non_sphere}
          {S}_{1,\hat{w};\vec{\underline{b}}}
          =  \left\{ \vec \lambda +\varepsilon(N) \vec\rho \,\Bigg\lvert 
              d_v \l^{(k)} + c_v^{(k)} \equiv 0 \xmod{D} \text{  for all }v\in V, \vec \alpha_k \in \Phi_s
            \right\},
            \ee
            where 
            $$
            c_v^{(k)} = D\left\langle \sum_{v'\in V} M_{v,v'}^{-1} \vec\lambda_{\hat w;v'}, \vec \omega_k\right\rangle 
            $$
            and the condition $D{1\over {\textfrak{e} \,p_i}}\in \ZZ$, when combined with (\ref{M0v_inv}),
            ensures that 
            \[ d_v := DM^{-1}_{v_0,v} \] 
            is an integer. 
            For a given choice of $\vec \alpha_k$ and $v\in V$, the condition  
            \be\label{enq:mod}
            d_v \l^{(k)} + c_v^{(k)} \equiv 0 \xmod{D}
            \ee
            on $\l^{(k)}$
            has a unique solution 
            in $\ZZ/D_v\ZZ$
            if 
            \[ D/D_v := {\rm{g.c.d.}}(d_v,D)\] is the greatest common divisor of each pair in the triplet $(d_v,c_v^{(k)}, D)$, and no solution otherwise.   
            Since $${\rm{l.c.m.}}(\{D_v\}_{v\in V})= {D\over {\rm{g.c.d.}}(\{d_v\}_{v\in V})}=D$$ from the definition of $D$, we conclude that (\ref{enq:mod}) for all $v\in V$ either has no solution or has a unique solution in $\l^{(k)}\in \ZZ/D\ZZ$. 
            As a result, for given $\vec{\underline{b}}$ and $\hat w$  either there exists a unique $\vec \kappa_{\hat{w};\vec{\underline{b}}}\in \Lambda /D\L$ such that
            \be\label{dfn:kappa}
            {S}_{1,\hat{w};\vec{\underline{b}}} = \{\vec \kappa_{\hat{w};\vec{\underline{b}}} + D\vec \lambda +\varepsilon(N) \vec\rho :\vec \lambda \in \Lambda \}, 
            \ee
            or ${S}_{1,\hat{w};\vec{\underline{b}}}=\emptyset$.  
            Put into (\ref{hatZ12}), we obtain (\ref{eqn:integrand_N})  in the first case and zero in the second case.

          \end{proof}

          \vspace{15pt}
          \noindent
          {\bf  Specialisation: Integral Homology Spheres}
          \vspace{5pt}

          Now, in Proposition \ref{prop:main_generalSeif},  we restrict to the graphs $\Gamma$ with one central node and $N$ legs with a unimodular plumbing matrix. 
          The only choice (up to a trivial shift with elements in $\Gamma_{M,G}$) for $\underline{\vec b}$ in \eqref{dfn:setofb} is given by  
          \be\label{hom_sphere_b}
          {\underline{\vec b}}_0 = (\varepsilon(N)\vec \rho,\underbrace{\vec \rho,\dots ,\vec \rho}_{N} , \underbrace{0, \cdots, 0}_{|V|-N-1}  ). 
          \ee
          It then follows that $ \vec \kappa_{\hat{w};\vec{\underline{b}}_0}=0$ and the set (\ref{dfn:theset}) is given by
          \be\label{hom_sphere_set}
          {S}_{w,w_1,w_2,\dots,w_N;\vec{\underline{b}}_0}\, = \{\vec \alpha +\varepsilon(N) \vec\rho : \vec \alpha\in \Lambda\}, 
          \ee
          independent of $w$ and $\hat w$. 
          Together with $w(\vec A_{\hat w}) = \vec A_{w\hat w} $, where $w$ acts diagonally on $\hat w$, we see that 
          \be\label{eq:diagonal_action}
          (-1)^{\ell(\hat w)}\chi_{ \hat{w};\vec{\underline{b}}_0}(\tau,\vec \xi)
          = (-1)^{l(w\hat w)}\chi_{w\hat{w};\vec{\underline{b}}_0}(\tau,\vec \xi) .
          \ee
          As a result, choosing any representative ${\cal W}$ of the coset $W^{\otimes N}/W$, one can express 
          the only non-trivial homological block as
          \be\label{eqn:zhat_sphere}
          \widehat{Z}^G_{0}(\tau) := \widehat{Z}^G_{\underline{\vec{b}}_0}(\tau) = C_\Gamma^G(q) |W| \sum_{\hat{w} \in {\cal W}} (-1)^{\ell(\hat{w})} \int_{\cal C} d\vec{\xi}
          \,\tilde \chi_{\hat{w};\underline{\vec{b}}_0}(\tau,\vec \xi) \ee
          with
          \be
          \label{eq:chitildebriesk}
          \tilde \chi_{\hat{w};\underline{\vec{b}}_0}(\tau,\vec \xi) = {q^\delta \over \Delta(\vec{\xi})^{N-2}}  \sum_{ \vec\alpha \in \Lambda+\varepsilon({N})\vec \rho}q^{\frac{1}{2} |\sqrt{m}\vec \alpha +{m\vec A_{\hat w}\over \sqrt{m}}|^2} \sum_{w\in W} (-1)^{Nl(w)} 
          \,  \ex^{\langle w(\vec \alpha ),  \vec{\xi} \rangle}.
          \ee
          Since $\vec A_{w\hat w}=w(\vec A_{\hat w})$, 
          a convenient choice of $W$ is given by those $\hat w$ with the corresponding $\vec A_{\hat w}\in P^+$.

          \subsection{Seifert Manifolds with Three Exceptional Fibers}
          \label{subsec:3fiber}

          Given  a simply-laced Lie group $G$, we study the integrands $\tilde \chi_{{\hat{w}};\underline{\vec{b}}}$ 
          and the resulting invariants $\widehat Z^G_{\underline{\vec{b}}}(X_\Gamma)$ for arbitrary negative three-leg graphs, leading to Seifert manifolds with three exceptional fibres. 
          As seen in Proposition \ref{prop:main_generalSeif}, the integrand is a finite sum  of specific $q$-series labelled by the set $W^{\otimes 3}$. 
          We find that these $q$-series  closely resemble the  characters of the $(1,m)$ triplet algebra {log-${\cal V}_{\bar \Lambda}$}, for a given positive integer $m$ which we will specify shortly. Subsequently, it follows from the relation (\ref{singlet_triplet}) between the singlet and triplet characters that the $\widehat Z^G_{\underline{\vec{b}}}(X_\Gamma)$ closely resembles a particular linear combination of the characters of the $(1,m)$ singlet algebra {log-${\cal V}^0_{\bar \Lambda}$}. 
          Taking this observation as a starting point, we will establish the various relations between $\tilde \chi_{{\hat{w}};\underline{\vec{b}}}$,  $\widehat Z^G_{\underline{\vec{b}}}(X_\Gamma)$ and log VOA characters summarised in Table  \ref{tab:3sinfibers}. 

          To start, we first establish the following. 
          For $N=3$, we can rewrite the integrand \eqref{eqn:integrand_N} in Proposition \ref{prop:main_generalSeif} in terms of the Lie algebra characters \eqref{highest_weight_module} as 
          \begin{gather}\label{integrand_3star2}\begin{split}
            \tilde \chi_{\hat{w};\underline{\vec{b}}}(\tau,\vec \xi)
& =  {q^{\delta}\over \Delta(\vec \xi)}  
\sum_{\substack{{\vec{\tilde \lambda}} \in \vec \kappa_{\hat{w};\vec{\underline{b}}}  +D \Lambda }}  q^{\frac{1}{2D} | \vec \mu_{\hat w} +\sqrt{m}  ( \vec{\tilde{\lambda}}  +\vec\rho )-{1\over \sqrt{m}} \vec{\rho}|^2} \sum_{w\in W}(-1)^{l(w)}
\,  \ex^{\langle w({\vec{\tilde \lambda}} +\vec \rho) ,  \vec{\xi} \rangle}\\ 
&= q^{\delta}  
\sum_{w\in W} (-1)^{l(w)}  \sum_{\substack{\vec\rho+\vec{\tilde \lambda}  \in P^+ \\ \vec\rho+\vec{\tilde \lambda} \in  w^{-1}(\vec \rho + \vec \kappa_{\hat{w};\vec{\underline{b}}} ) + D\Lambda}} \chi_{\vec{\tilde \lambda} }^{\mathfrak g}(\vec \xi) \,q^{\frac{1}{2D} | \vec \mu_{\hat w} +\sqrt{m} w( \vec{\tilde{\lambda}}  +\vec\rho )-{1\over \sqrt{m}} \vec{\rho}|^2} ,
          \end{split}
        \end{gather}
        where 
        \be
        \label{eq:Awdef}
        \sqrt{m}\vec\mu_{\hat w}= \vec\rho +m \vec A_{\hat w}=\vec\rho -D \sum_{v\in V_1} M_{v_0,v}^{-1}  w_v(\vec\rho).
        \ee
        In terms of the notation \eqref{eqn:decompose_lambda} analogous to that of the log VOA modules, we can write 
        \be\label{eqn:s}
        \vec\mu_{\hat w} = {1\over\sqrt{m}}(\vec\rho - \vec s)~,~~ \vec s = \sum_i s_i \vec\omega_i = D\sum_{v\in V_1} M_{v_0,v}^{-1}  w_v(\vec\rho) = -m\sum_{v_i \in V_1} 
        {{\rm sgn}(q_i)\over p_i} w_{v_i}(\vec \rho). 
        \ee

        Note that the expression in (\ref{integrand_3star2}) has the same structure as the triplet character (\ref{eqn:char}) multiplied by $\eta(\tau)^{{\rm rank} G}$, with the only difference being that in the $\widehat Z$-integrand $\tilde \chi_{\hat{w};\underline{\vec{b}}}$ (\ref{integrand_3star2}) we restrict the sum to $D\Lambda$ instead of $\Lambda$. Performing the contour integral in order to obtain the invariant $\widehat Z_{\underline{\vec{b}}}^{G}$  (\ref{hatZ1}), we obtain an answer which again shares the same structure of the singlet characters (\ref{singlet_char}). 
        The structure of the restricted lattice sum  in $\tilde \chi_{\hat{w};\underline{\vec{b}}}$  for a given $\underline{\vec{b}}$ leads to the following interesting result: fixing $\hat{w}$, a sum of $\tilde \chi_{\hat{w};\underline{\vec{b}}}$ over a specific class of $\underline{\vec{b}}$ coincides with, up to an overall factor and a rescaling $\tau\mapsto D\tau$, a generalised character of the triplet algebra {log-${\cal V}_{\bar \Lambda}$}. 

        \begin{thm}\label{thm:combining_into_characters}
          Let $X_\Gamma$ and $D$ be as  in Proposition \ref{prop:main_generalSeif}, with $N=3$.
          Fix a simply-laced Lie group $G$. 
          Let $\underline{\vec{b}}_\ast$, $\hat{w}_\ast\in W^{\otimes 3}$ be such that  $\tilde{\chi}_{\hat{w}_\ast;\underline{\vec{b}}_\ast}\neq  0$. 
          Then 
          \be\label{eqn:integrand_triplet_sum}
          {q^{-D\delta}\over \eta^{\rm rank G}}\,\sum_{\substack{\underline{\vec{b}} =\underline{\vec{b}}_\ast + (\Delta\vec b ,0,0,\dots,0) \\ \Delta\vec b\in \Lambda/D\Lambda}} \tilde \chi_{{\hat{w}_\ast};\underline{\vec{b}}}(D\tau,\vec\xi)  = \chi_{\vec{\mu}_{\hat w_\ast}}(\tau,\vec\xi) 
          \ee
          is given by a (generalised) character  (\ref{eqn:char}) with $\vec{\mu}_{\hat w_\ast}$ as given in (\ref{eq:Awdef}). 

        \end{thm}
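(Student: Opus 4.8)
The plan is to show that summing $\tilde\chi_{\hat{w}_\ast;\underline{\vec{b}}}$ over the indicated family of generalized Spin$^c$ structures has the single effect of promoting the restricted lattice sum in \eqref{integrand_3star2} to an unrestricted sum over the full root lattice $\Lambda$, which, after the rescaling $\tau\mapsto D\tau$ and stripping the $q^{D\delta}/\eta^{\text{rank}G}$ prefactor, is precisely the triplet character \eqref{eqn:char} with module label $\vec\mu_{\hat{w}_\ast}$. The starting point is the description obtained in the proof of Proposition \ref{prop:main_generalSeif}: for $N=3$ the set ${S}_{1,\hat{w};\underline{\vec{b}}}$ is, whenever nonempty, the single coset $\{\vec\kappa_{\hat{w};\underline{\vec{b}}}+D\vec\lambda+\vec\rho:\vec\lambda\in\Lambda\}$ of $D\Lambda$, with $\vec\kappa_{\hat{w};\underline{\vec{b}}}\in\Lambda/D\Lambda$ cut out by the congruences \eqref{eqn:set_non_sphere}--\eqref{dfn:kappa}; meanwhile, by \eqref{eq:Awdef}, \eqref{eqn:s} and \eqref{eqn:delta_A_main}, the data $\vec A_{\hat{w}}$, $\vec\mu_{\hat{w}}$ and $\delta$ depend only on $\hat{w}$, not on $\underline{\vec{b}}$.

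First I would record how ${S}_{1,\hat{w}_\ast;\underline{\vec{b}}}$ depends on the central-node entry of $\underline{\vec{b}}$. Replacing $\vec b_{v_0}$ by $\vec b_{v_0}+\Delta\vec b$ with $\Delta\vec b\in\Lambda$ sends $\vec\lambda_{\hat{w};v_0}$ to $\vec\lambda_{\hat{w};v_0}-\Delta\vec b$ and hence, in the notation of the proof of Proposition \ref{prop:main_generalSeif}, sends each $c_v^{(k)}$ to $c_v^{(k)}-d_v\langle\Delta\vec b,\vec\omega_k\rangle$; since $\langle\Delta\vec b,\vec\omega_k\rangle$ is the $\vec\alpha_k$-coordinate of $\Delta\vec b$, the congruence $d_v\lambda^{(k)}+c_v^{(k)}\equiv 0\pmod D$ of the shifted structure is satisfied by $\vec\lambda$ precisely when the original congruence is satisfied by $\vec\lambda-\Delta\vec b$. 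Therefore ${S}_{1,\hat{w}_\ast;\underline{\vec{b}}_\ast+(\Delta\vec b,0,\dots,0)}={S}_{1,\hat{w}_\ast;\underline{\vec{b}}_\ast}+\Delta\vec b$, equivalently $\vec\kappa_{\hat{w}_\ast;\underline{\vec{b}}_\ast+(\Delta\vec b,0,\dots,0)}\equiv\vec\kappa_{\hat{w}_\ast;\underline{\vec{b}}_\ast}+\Delta\vec b\pmod{D\Lambda}$. In particular the set stays nonempty for every $\Delta\vec b$ (it is nonempty for $\underline{\vec{b}}_\ast$ by hypothesis), and as $\Delta\vec b$ runs over the $|\Lambda/D\Lambda|=D^{\text{rank}G}$ classes it yields each of the $D^{\text{rank}G}$ cosets of $D\Lambda$ exactly once. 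Hence the sets ${S}_{1,\hat{w}_\ast;\underline{\vec{b}}}$ partition $\Lambda+\vec\rho$, i.e. $\sum_{\Delta\vec b}\sum_{\vec{\tilde\lambda}\in\vec\kappa_{\hat{w}_\ast;\underline{\vec{b}}}+D\Lambda}(\cdots)=\sum_{\vec{\tilde\lambda}\in\Lambda}(\cdots)$.

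The remainder is substitution. Since the Gaussian weight and the alternating sum $\sum_w(-1)^{l(w)}\ex^{\langle w(\vec{\tilde\lambda}+\vec\rho),\vec\xi\rangle}$ in \eqref{integrand_3star2} involve only $\vec\mu_{\hat{w}_\ast}$ and the summation variable, summing over $\Delta\vec b$ gives
\begin{equation*}
\sum_{\substack{\underline{\vec{b}}=\underline{\vec{b}}_\ast+(\Delta\vec b,0,\dots,0)\\ \Delta\vec b\in\Lambda/D\Lambda}}\tilde\chi_{\hat{w}_\ast;\underline{\vec{b}}}(\tau,\vec\xi)=\frac{q^{\delta}}{\Delta(\vec\xi)}\sum_{\vec{\tilde\lambda}\in\Lambda}q^{\frac{1}{2D}\,|\vec\mu_{\hat{w}_\ast}+\sqrt m(\vec{\tilde\lambda}+\vec\rho)-\frac{1}{\sqrt m}\vec\rho|^2}\sum_{w\in W}(-1)^{l(w)}\ex^{\langle w(\vec{\tilde\lambda}+\vec\rho),\vec\xi\rangle}.
\end{equation*}
Replacing $\tau$ by $D\tau$ turns $q^{\frac{1}{2D}|\cdots|^2}$ into $q^{\frac12|\cdots|^2}$ and $q^{\delta}$ into $q^{D\delta}$; multiplying by $q^{-D\delta}/\eta^{\text{rank}G}(\tau)$ then produces exactly the first line of \eqref{eqn:char} at the label $\vec\lambda'=\vec\mu_{\hat{w}_\ast}$ (that is, $\vec\lambda=0$ and $\vec\mu=\vec\mu_{\hat{w}_\ast}$ in \eqref{eqn:decompose_lambda}), using $Q_0=\sqrt m-\tfrac{1}{\sqrt m}$. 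This is $\chi_{\vec\mu_{\hat{w}_\ast}}(\tau,\vec\xi)$, a (generalized) character of log-${\cal V}_{\bar\Lambda}(m)$, genuinely generalized precisely when the $s_i$ of \eqref{eqn:s} leave the range $\{1,\dots,m\}$ of \eqref{eqn:mu}.

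I expect the main obstacle to be the bookkeeping in the first step: confirming that a central-node shift of $\underline{\vec{b}}$ acts as an honest \emph{translation} of the solution set of the congruences \eqref{eqn:set_non_sphere}, and that the resulting $D^{\text{rank}G}$ cosets tile $\Lambda$ without overlap. Both rest on the defining feature of $D$ already exploited in the proof of Proposition \ref{prop:main_generalSeif}, namely that it is the least common multiple making the congruences $d_v\lambda^{(k)}+c_v^{(k)}\equiv 0\pmod D$ simultaneously solvable with solution set a single $D\Lambda$-coset; once this is granted, the identification with \eqref{eqn:char} is a direct comparison of formulas.
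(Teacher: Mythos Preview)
Your proof is correct and follows essentially the same approach as the paper: both arguments establish that shifting $\underline{\vec b}$ at the central node by $\Delta\vec b$ shifts $\vec\kappa_{\hat w_\ast;\underline{\vec b}}$ by $\Delta\vec b$, so summing over $\Delta\vec b\in\Lambda/D\Lambda$ recovers the full lattice sum and matches \eqref{eqn:char}. The paper simply invokes \eqref{eqn:setD} for the $\kappa$-shift, whereas you spell it out via the congruences \eqref{eqn:set_non_sphere}, but the content is the same.
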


        \begin{proof}
          From Proposition \ref{prop:main_generalSeif}, we see that the non-vanishing of 
          $\tilde{\chi}_{\hat{w}_\ast;\underline{\vec{b}}_\ast}$ implies  that there exists a unique $\vec \kappa_{\hat{w}_\ast;\vec{\underline{b}}_\ast}\in \Lambda /D\L$ such that
          \begin{gather}
            \begin{split}
              \tilde \chi_{\hat{w}_\ast;\underline{\vec{b}}_\ast}(\tau,\vec \xi) =  {q^{\delta}\over \Delta(\vec \xi)}  
              \sum_{\substack{{\vec{\tilde \lambda}} \in \vec \kappa_{\hat{w}_\ast;\vec{\underline{b}}_\ast}  +D \Lambda }}  q^{\frac{1}{2D} | \vec \mu_{\hat w} +\sqrt{m}  ( \vec{\tilde{\lambda}}  +\vec\rho )-{1\over \sqrt{m}} \vec{\rho}|^2} \sum_{w\in W}(-1)^{l(w)}
              \,  \ex^{\langle w({\vec{\tilde \lambda}} +\vec \rho) ,  \vec{\xi} \rangle}. 
            \end{split}\end{gather}	
            From	\eqref{eqn:setD} it is clear that the same holds for $\vec{\underline{b}}=\underline{\vec{b}}_\ast + (\Delta\vec b,0,0,\dots,0)$ with
            \be
            \vec\kappa_{\hat{w}_\ast;\vec{\underline{b}}} = \vec\kappa_{\hat{w}_\ast;\vec{\underline{b}}_\ast} + \Delta\vec b . 
            \ee
            It then follows that 
            \begin{multline}
              \sum_{\substack{\underline{\vec{b}} =
              \underline{\vec{b}}_\ast + (\Delta\vec b ,0,0,\dots,0) \\ \Delta\vec b\in \Lambda/D\Lambda}} \tilde \chi_{\hat{w}_\ast;\underline{\vec{b}}}(\tau,\vec\xi)
              =\\  {q^{\delta}\over \Delta(\vec \xi)}  
              \sum_{\substack{{\vec{\tilde \lambda}} \in \vec\kappa_{\hat{w}_\ast;\vec{\underline{b}}} + \Lambda }}  q^{\frac{1}{2D} | \vec \mu_{\hat w} +\sqrt{m}  ( \vec{\tilde{\lambda}}  +\vec\rho )-{1\over \sqrt{m}} \vec{\rho}|^2} \sum_{w\in W}(-1)^{l(w)}
              \,  \ex^{\langle w({\vec{\tilde \lambda}} +\vec \rho) ,  \vec{\xi} \rangle}
            \end{multline}
            which leads to the theorem
            when comparing to the generalised character  (\ref{eqn:char}). 
          \end{proof}

          Restricting to the case with $D=1$, we obtain the following result for the pseudo-spherical Seifert manifolds. 
          \begin{thm}
            \label{3fiber_sphere}
            Fix a simply-laced Lie group $G$. Consider a  three-leg graph  corresponding to a negative Seifert manifold $X_\Gamma =M(b;\{q_i/p_i\}_{i=1,2,3})$ with three exceptional fibers which has integral inverse Euler number, and  let 
            \[ m=  {1\over |\textfrak{e}(X_\Gamma)|}\in \ZZ. 
            \]
            If ${m\over p_i}\in \ZZ$ for $p=1,2,3$, then 
            the integrands of the homological invariants (\ref{hatZ1})
            are equal, up to an overall rational $q$-power and the factor $\eta^{{\rm rank}G}$, to a virtual generalised character (\ref{eqn:char}) of the $(1,m)$ triplet algebra {log-${\cal V}_{\bar \Lambda}(m)$} with given $G$. More precisely, we have either $\chi_{\underline{\vec{b}},\hat{w}} =0$, or
            \be\label{eqn:integrand_triplet}
            {q^{-\delta}\over \eta^{\rm rank G}}\,\tilde \chi_{\hat{w};\underline{\vec{b}}} = \chi_{\vec{\mu}_{\hat w}} 
            \ee
            where $\delta$ is given in (\ref{eqn:delta_A_main}),  $\chi_{\vec{\lambda}'}$  is given as in (\ref{eqn:char}) with 
            \be\label{eqn:which_character}
            \vec{\lambda}'=\vec \mu_{\hat w}={1\over \sqrt{m}}\left(\vec \rho+m\vec A_{\hat {w}}\right) \ee
            (cf. \eqref{eqn:decompose_lambda}, (\ref{dfn:A}), (\ref{eqn:s})).  
          \end{thm}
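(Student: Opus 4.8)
The plan is to specialise Proposition~\ref{prop:main_generalSeif} to $N=3$ and then match the integrand \eqref{integrand_3star2} termwise against the triplet character formula \eqref{eqn:char}. The first thing to record is that the hypothesis $m/p_i\in\ZZ$ for $i=1,2,3$ is exactly the condition $D=1$: since $\mathfrak{e}<0$ and $m=1/|\mathfrak{e}|=-1/\mathfrak{e}$, one has $D/(\mathfrak{e}\,p_i)=-Dm/p_i$, so the smallest positive integer $D$ with $D/(\mathfrak{e}\,p_i)\in\ZZ$ for all $i$ is $D=1$ precisely when all the $m/p_i$ are integers. One should also note, using the identity \eqref{eqn:M00} relating $M^{-1}_{v_0,v_0}$ to the orbifold Euler characteristic, that the $m=-DM^{-1}_{v_0,v_0}$ appearing in Proposition~\ref{prop:main_generalSeif} then coincides with $m=1/|\mathfrak{e}|$.

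With $D=1$ the group $\Lambda/D\Lambda$ is trivial, hence for every $\hat w$ with $\tilde\chi_{\hat w;\underline{\vec b}}\neq 0$ one is forced to have $\vec\kappa_{\hat w;\vec{\underline b}}=0$, and the lattice sum in the first line of \eqref{integrand_3star2} runs over all $\vec{\tilde\lambda}\in\Lambda$. Substituting $D=1$, $\vec\kappa_{\hat w;\vec{\underline b}}=0$, $\varepsilon(3)=1$ and $\Delta(\vec\xi)^{N-2}=\Delta(\vec\xi)$, the integrand becomes
\[
\tilde\chi_{\hat w;\underline{\vec b}}(\tau,\vec\xi)=\frac{q^{\delta}}{\Delta(\vec\xi)}\sum_{\vec{\tilde\lambda}\in\Lambda}q^{\frac12\,|\vec\mu_{\hat w}+\sqrt{m}(\vec{\tilde\lambda}+\vec\rho)-\tfrac{1}{\sqrt{m}}\vec\rho|^2}\sum_{w\in W}(-1)^{l(w)}\,\ex^{\langle w(\vec{\tilde\lambda}+\vec\rho),\vec\xi\rangle},
\]
with $\vec\mu_{\hat w}$ as in \eqref{eq:Awdef}.

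Next I would compare this with \eqref{eqn:char} for the module label $\vec\lambda'=\vec\mu_{\hat w}$, i.e.\ with $\vec\lambda=0$ and $\vec\mu=\vec\mu_{\hat w}$ in the decomposition \eqref{eqn:decompose_lambda}. Using $Q_0=\sqrt{m}-\tfrac{1}{\sqrt{m}}$ one rewrites the exponent $\tfrac12|\sqrt{m}\vec{\tilde\lambda}+\vec\mu_{\hat w}+Q_0\vec\rho|^2=\tfrac12|\vec\mu_{\hat w}+\sqrt{m}(\vec{\tilde\lambda}+\vec\rho)-\tfrac{1}{\sqrt{m}}\vec\rho|^2$, so the first line of \eqref{eqn:char} times $\eta^{\text{rank}G}(\tau)$ is literally the displayed expression divided by $q^{\delta}$; this is \eqref{eqn:integrand_triplet}. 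Since the $s_i$ attached to $\vec\mu_{\hat w}$ via \eqref{eqn:s} generically lie outside $\{1,\dots,m\}$, the right-hand side is a \emph{generalised} character. Finally, because $\delta$ in \eqref{eqn:delta_A_main} does not depend on $\hat w$, the full integrand $\sum_{\hat w\in W^{\otimes 3}}(-1)^{\ell(\hat w)}\tilde\chi_{\hat w;\underline{\vec b}}$ of \eqref{hatZ1} equals $q^{\delta}\eta^{\text{rank}G}(\tau)\sum_{\hat w}(-1)^{\ell(\hat w)}\chi_{\vec\mu_{\hat w}}$, an integral linear combination of generalised characters, i.e.\ a virtual generalised character, which is the assertion.

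I do not expect a conceptual obstacle here: once Proposition~\ref{prop:main_generalSeif} is available the theorem is essentially a termwise identification. The two points that need care are the bookkeeping $D=1\Leftrightarrow m/p_i\in\ZZ$ together with the matching of the two expressions for $m$, and the decision to carry out the comparison at the level of the \emph{first} line of \eqref{integrand_3star2}, before the rewriting in terms of Lie-algebra characters, so that no vanishing on Weyl-chamber walls has to be invoked and the identification with the first line of \eqref{eqn:char} is immediate.
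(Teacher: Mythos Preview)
Your proposal is correct and follows essentially the same route as the paper: specialise Proposition~\ref{prop:main_generalSeif} to $D=1$, observe that either $S_{1,\hat w;\underline{\vec b}}=\emptyset$ (giving $\tilde\chi_{\hat w;\underline{\vec b}}=0$) or the lattice sum runs over all of $\Lambda$, then match the exponent against \eqref{eqn:char} via $Q_0=\sqrt{m}-1/\sqrt{m}$. Your explicit verification that the hypothesis $m/p_i\in\ZZ$ is equivalent to $D=1$, and that the two occurrences of $m$ agree, is a useful addition that the paper leaves implicit.
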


          \begin{proof}
            Note that $D=1$ in the notation of 
            Proposition \ref{prop:main_generalSeif}. 
            As a result, from the expression (\ref{eqn:set_non_sphere})  we have 
            \be
            {S}_{1,\hat{w};\vec{\underline{b}}} = \left\{\vec \lambda +\vec\rho :\vec \lambda \in \Lambda \right\},
            \ee
            if 
            \be
            \sum_{v'\in V} (M^{-1})_{v, v'} \vec \lambda_{\hat w;v'} \in \Lambda\ee
            for all $v\in V$ \footnote{Note that this is always true when $X_\Gamma$ is a intgeral homological sphere.}, and ${S}_{1,\hat{w};\vec{\underline{b}}}=\emptyset$ otherwise. 
            In the latter case, we have simply $\tilde \chi_{\hat{w};\underline{\vec{b}}} =0$. In the former case we have
            \begin{gather}
              \begin{split}
                \tilde \chi_{\hat{w};\underline{\vec{b}}}(\tau,\vec \xi) & = {q^\delta \over \Delta(\vec{\xi})}  \sum_{ \vec\alpha \in \Lambda} q^{\frac{1}{2} m |\vec\rho+ \vec\alpha+{\vec A_{\hat {w}} }|^2 }\sum_{w\in W} (-1)^{l(w)} 
                \,  \ex^{\langle w(\vec{\rho}+ \vec\alpha),  \vec{\xi} \rangle} \\
                                                                                                             &= 
                                                                                                             {q^\delta \over \Delta(\vec{\xi})}  \sum_{ \vec\alpha \in \Lambda} q^{\frac{1}{2}  |\sqrt{m}\vec\alpha+Q_0\vec\rho +\vec\mu_{\hat w} |^2} \sum_{w\in W} (-1)^{l(w)} 
                                                                                                             \,  \ex^{\langle w(\vec\rho+ \vec\alpha),  \vec{\xi} \rangle}, 
              \end{split}
            \end{gather}
            where 
            \(\vec \mu_{\hat w}\)
            is given as in (\ref{eqn:which_character}). Comparing with (\ref{eqn:char}) establishes the result. 
          \end{proof}	

          From the result of Theorem \ref{3fiber_sphere} and performing the contour integral (\ref{hatZ1}), we obtain the following corollary. 
          \begin{cor}\label{cor:psudo}
            For $X_\Gamma$ and $m$ as in Theorem \ref{3fiber_sphere}, the homological invariants $\widehat{Z}^G_{\underline{\vec{b}}}(\tau;X_\Gamma)$  are, up to an overall factor, given by a virtual generalised character of the $(1,m)$ singlet  algebra {log-${\cal V}_{\bar \Lambda}^0(m)$} as
            \[ C_\Gamma^G(q)^{-1} \,{q^{-\delta}\over \eta^{\rm rank G}}\, \widehat{Z}^G_{\underline{\vec{b}}}(X_\Gamma) \in 
              \left\{ \sum_{\vec{\mu}} a_{\vec{\mu}} \chi^0_{\vec{\mu}} \middle| a_{\vec{\mu}}\in \ZZ \right\}. 
            \] 
          \end{cor}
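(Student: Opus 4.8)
\emph{Proof proposal.} The statement follows from Theorem \ref{3fiber_sphere} once the contour integral of Proposition \ref{prop:main_generalSeif} is carried out, so the plan is to combine the two and then invoke the triplet--singlet relation \eqref{singlet_triplet}--\eqref{singlet_char}. First I would take the expression for the homological block from Proposition \ref{prop:main_generalSeif} with $N=3$, namely $\widehat Z^G_{\underline{\vec b}}(\tau)=C_\Gamma^G(q)\sum_{\hat w\in W^{\otimes 3}}(-1)^{\ell(\hat w)}\int_{\cal C}d\vec\xi\,\tilde\chi_{\hat w;\underline{\vec b}}(\tau,\vec\xi)$, and substitute the identity \eqref{eqn:integrand_triplet}: under the hypothesis $m/p_i\in\ZZ$ each integrand is either identically zero or equals $q^{\delta}\,\eta^{\text{rank}G}(\tau)\,\chi_{\vec\mu_{\hat w}}(\tau,\vec\xi)$, a (generalised) character of the triplet algebra {\lV}. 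Hence $C_\Gamma^G(q)^{-1}q^{-\delta}\eta^{-\text{rank}G}(\tau)\,\widehat Z^G_{\underline{\vec b}}(\tau)=\sum_{\hat w\in W^{\otimes 3}}(-1)^{\ell(\hat w)}\int_{\cal C}d\vec\xi\,\chi_{\vec\mu_{\hat w}}(\tau,\vec\xi)$, and the whole computation reduces to evaluating $\int_{\cal C}d\vec\xi\,\chi_{\vec\mu}(\tau,\vec\xi)$ for a fixed $\vec\mu$.

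To evaluate this integral I would pass to the Weyl-character form of the triplet character, i.e. the second line of \eqref{eqn:char} (equivalently the form used in \eqref{integrand_3star2}), which writes $\eta^{\text{rank}G}(\tau)\,\chi_{\vec\mu}(\tau,\vec\xi)$ as a $q$-series whose coefficients are finite $\ZZ_{\geq 0}$-combinations of the Lie-algebra characters $\chi^{\mathfrak g}_{\vec{\tilde\lambda}}(\vec\xi)=\sum_{\vec\beta}\dim(V_{\vec{\tilde\lambda}}(\vec\beta))\,\ex^{\langle\vec\xi,\vec\beta\rangle}$; in this form the would-be pole $1/\Delta(\vec\xi)$ has already cancelled against the Weyl-antisymmetric numerator, so each term is a genuine Laurent polynomial in $\vec z$. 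Since the measure $\int_{\cal C}d\vec\xi={\rm p.v.}\oint\prod_{i=1}^{\text{rank}G}\frac{dz_i}{2\pi i z_i}$ extracts the $\vec z^{\,\vec 0}$-coefficient of a Laurent polynomial, and the sum over $\vec{\tilde\lambda}$ converges $q$-adically because the exponents $\tfrac12|\vec\mu+\sqrt m\,w(\vec{\tilde\lambda}+\vec\rho)-\tfrac{1}{\sqrt m}\vec\rho|^2$ tend to $+\infty$, I can interchange sum and integral and pick up $\dim(V_{\vec{\tilde\lambda}}(\vec 0))$ from each $\chi^{\mathfrak g}_{\vec{\tilde\lambda}}$. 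Comparing with \eqref{singlet_char} at $\vec{\tilde\lambda}=0$, this is precisely $\chi^0_{\vec\mu}(\tau)$, so $\int_{\cal C}d\vec\xi\,\chi_{\vec\mu}(\tau,\vec\xi)=\chi^0_{\vec\mu}(\tau)$.

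Assembling the pieces gives $C_\Gamma^G(q)^{-1}q^{-\delta}\eta^{-\text{rank}G}(\tau)\,\widehat Z^G_{\underline{\vec b}}(X_\Gamma)=\sum_{\hat w\in W^{\otimes 3}}(-1)^{\ell(\hat w)}\chi^0_{\vec\mu_{\hat w}}(\tau)$, where the index set is finite and the coefficients are $\pm1$ with multiplicities, hence integers; when $\vec\mu_{\hat w}$ falls outside the fundamental range \eqref{eqn:mu} the corresponding $\chi^0_{\vec\mu_{\hat w}}$ is a \emph{generalised} singlet character (in the sense recorded after \eqref{singlet_char}), so the right-hand side lies in $\left\{\sum_{\vec\mu}a_{\vec\mu}\chi^0_{\vec\mu}\,\middle|\,a_{\vec\mu}\in\ZZ\right\}$, i.e. is a virtual generalised character of {\lVs}, which is the claim.

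The one step requiring more than bookkeeping is the identification of the Cauchy principal value integral with the naive constant-term map ${\rm CT}_{\vec z}$: one must argue that the spurious singularities of $\chi_{\vec\mu}(\tau,\vec\xi)$ along the reflection hyperplanes $\Delta(\vec\xi)=0$ do not contribute after the principal-value symmetrisation. Passing to the Weyl-character form above removes the poles altogether, but one should still check that this rewriting is legitimate on the prescribed contour, i.e. that the summation over $W$ in \eqref{eqn:char} may be performed before integrating term by term; this uses exactly the Weyl-antisymmetry of $\sum_{w\in W}(-1)^{l(w)}\ex^{\langle\vec\xi,w(\vec{\tilde\lambda}+\vec\rho)\rangle}$ and its vanishing on the chamber walls, both established in \S\ref{subsec:1mchar}. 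Everything else is a direct substitution of Theorem \ref{3fiber_sphere} into \eqref{hatZ1}.
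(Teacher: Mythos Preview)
Your proposal is correct and follows essentially the same route as the paper: the paper simply remarks that the corollary follows ``from the result of Theorem \ref{3fiber_sphere} and performing the contour integral (\ref{hatZ1})'', and you have spelled out exactly that---substituting \eqref{eqn:integrand_triplet} into \eqref{hatZ1} and identifying $\int_{\cal C}d\vec\xi\,\chi_{\vec\mu}(\tau,\vec\xi)={\rm CT}_{\vec z}\,\chi_{\vec\mu}(\tau,\vec\xi)=\chi^0_{\vec\mu}(\tau)$ via \eqref{singlet_char}. Your extra care in justifying the principal-value integral against the constant-term map (by passing to the Weyl-character form \eqref{eqn:char} where the $1/\Delta(\vec\xi)$ pole is already cancelled) is more detail than the paper provides, but it is the intended mechanism.
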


          From the relation between the generalised characters and the actual characters when $G=SU(2)$  or $G=SU(3)$, and from the special properties of the $G=SU(2)$ characters, we can further conclude 

          \begin{cor}\label{cor:A1A2}~
            \begin{enumerate}
              \item 
                Consider $G=SU(2)$.  
                For any three-leg graph corresponding to a negative Seifert manifold $X_\Gamma$, and for any $\underline{\vec{b}}$, there exists a function  $\widehat{Z}'^G_{\underline{\vec{b}}}(\tau;X_\Gamma)$ on the upper half plane such that 
                \[ C_\Gamma^G(q)^{-1} \,{q^{-\delta}\over \eta^{\rm rank G}}\, \widehat{Z}'^G_{\underline{\vec{b}}}(X_\Gamma) \in 
                  \left\{ \sum_{\vec{\mu}} a_{\vec{\mu}} \chi^0_{\vec{\mu}} \middle| a_{\vec{\mu}}\in \ZZ \right\}, 
                \] 
                and 
                \[\widehat{Z}'^G_{\underline{\vec{b}}}(\tau;X_\Gamma) =\widehat{Z}^G_{\underline{\vec{b}}}(\tau;X_\Gamma)+ \text{finite polynomial in }q. \] 
              \item
                For $X_\Gamma$ and $m$ as in Theorem \ref{3fiber_sphere}, the homological invariants $\widehat{Z}^G_{\underline{\vec{b}}}(\tau;X_\Gamma)$  are given by characters (\ref{singlet_triplet}) of the $(1,m)$ singlet  algebra {log-{${\cal V}^0_{\bar \Lambda}(m)$}}  as
                \[ C_\Gamma^G(q)^{-1} \,{q^{-\delta}\over \eta^{\rm rank G}}\, \widehat{Z}^G_{\underline{\vec{b}}}(\tau;X_\Gamma) \in 
                  \left\{ \sum_{\vec{\mu}} a_{\vec{\mu}} \chi^0_{\vec{\mu}} \middle| a_{\vec{\mu}}\in\ZZ \right\}, 
                \] 
                when $G=SU(3)$.
            \end{enumerate}
          \end{cor}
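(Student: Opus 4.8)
For the first part I would specialise Proposition~\ref{prop:main_generalSeif} to $N=3$ and $G=SU(2)$: this presents $\widehat{Z}^{SU(2)}_{\underline{\vec b}}$ as a finite $\ZZ$-linear combination, indexed by $\hat w\in W^{\otimes 3}$, of the contour integrals $\int_{\cal C}d\vec\xi\,\tilde\chi_{\hat w;\underline{\vec b}}$. For $\Lambda=A_1$ one has $\Delta(\vec\xi)=z-z^{-1}$, so the integrand \eqref{eqn:integrand_N} has the form $\frac{q^\delta}{z-z^{-1}}\sum_{\lambda\in\ZZ}q^{\frac1{2D}|\cdots|^2}(z^{n_\lambda}-z^{-n_\lambda})$ with $n_\lambda=2(D\lambda+\kappa_{\hat w})+1$ always odd, where $\kappa_{\hat w}$ is the $A_1$-component of the integer $\vec\kappa_{\hat w;\underline{\vec b}}$ of Proposition~\ref{prop:main_generalSeif}. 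Extracting the $z^0$-coefficient of the Laurent polynomial $\frac{z^n-z^{-n}}{z-z^{-1}}$ -- which equals $\mathrm{sgn}(n)$ for $n$ odd -- turns $\int_{\cal C}d\vec\xi\,\tilde\chi_{\hat w;\underline{\vec b}}$ into a dilated, shifted partial false theta function $q^\delta\sum_{\lambda\in\ZZ}\mathrm{sgn}(2(D\lambda+\kappa_{\hat w})+1)\,q^{mD(\lambda+b_{\hat w})^2}$ for a rational $b_{\hat w}$. I would then re-index by $K=2mD(\lambda+b_{\hat w})$, which runs over a fixed residue class modulo $2mD$ precisely because $D/(\textfrak{e}\,p_i)\in\ZZ$ -- the same divisibility input used in the proof of Proposition~\ref{prop:main_generalSeif} -- so that the argument of the sign function becomes affine in $K$, and up to finitely many monomials the sum equals $\pm\Psi_{mD,r_{\hat w}}(\tau)$. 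Finally I would invoke \eqref{A1_outofrange} with $m$ replaced by $mD$: each $\Psi_{mD,r}(\tau)/\eta(\tau)$ is a (generalised) atypical $(1,mD)$ singlet character $\chi^0_{\vec\mu}$ up to a further finite polynomial divided by $\eta$. Gathering both finite-polynomial discrepancies into one polynomial $R(q)$ and setting $\widehat{Z}'^{SU(2)}_{\underline{\vec b}}:=\widehat{Z}^{SU(2)}_{\underline{\vec b}}-C^G_\Gamma(q)q^\delta R(q)$ gives $\widehat{Z}'=\widehat{Z}+(\text{finite polynomial in }q)$ and $C^G_\Gamma(q)^{-1}q^{-\delta}\eta^{-1}\widehat{Z}'^{SU(2)}_{\underline{\vec b}}=\sum_{\hat w}\pm\chi^0_{\vec\mu_{\hat w}}$, which is the first claim.

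For the second part $D=1$, so Corollary~\ref{cor:psudo} applies directly and expresses $C^G_\Gamma(q)^{-1}q^{-\delta}\eta^{-2}\widehat{Z}^{SU(3)}_{\underline{\vec b}}$ as an integral combination $\sum_{\hat w}(-1)^{\ell(\hat w)}\chi^0_{\vec\mu_{\hat w}}$ of \emph{generalised} $(1,m)$ singlet characters, the $\chi^0_{\vec\mu_{\hat w}}$ arising from the triplet character of Theorem~\ref{3fiber_sphere} via the contour integral and the relation \eqref{singlet_triplet}. It remains to promote ``generalised'' to ``genuine'' for $\Lambda=A_2$. Here I would iterate the identity \eqref{eqn:charID} of Lemma~\ref{lem:charID} -- an exact $\ZZ$-linear relation valid for all $s_1,s_2\in\ZZ$ -- together with the Dynkin diagram symmetry $\vec\omega_1\leftrightarrow\vec\omega_2$ and the reflection symmetries of $\chi^0$, in order to fold the labels $(s_1,s_2)$ of every $\chi^0_{\vec\mu_{\hat w}}$ into the fundamental range $\{1,\dots,m\}^2$. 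Since \eqref{eqn:charID} carries no polynomial term, this rewrites the generalised characters as $\ZZ$-combinations of genuine $(1,m)$ singlet characters, giving $C^G_\Gamma(q)^{-1}q^{-\delta}\eta^{-2}\widehat{Z}^{SU(3)}_{\underline{\vec b}}\in\{\sum_{\vec\mu}a_{\vec\mu}\chi^0_{\vec\mu}\mid a_{\vec\mu}\in\ZZ\}$ with the $\chi^0_{\vec\mu}$ now genuine characters; note that no correction $\widehat{Z}'$ is needed precisely because both Theorem~\ref{3fiber_sphere} and \eqref{eqn:charID} are exact (unlike the $SU(2)$ input \eqref{A1_outofrange}).

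The main obstacle in Part~1 is the careful bookkeeping of the $SU(2)$ contour integral, in particular cleanly isolating the two sources of finite-polynomial defect -- the $z^0$-coefficient versus the ``aligned'' sign in $\Psi_{mD,r}$, and the correction term of \eqref{A1_outofrange} for out-of-range labels -- and verifying that the coefficients that appear are integers. The main obstacle in Part~2 is showing that the iteration of \eqref{eqn:charID} terminates, i.e. exhibiting a monovariant on the label set $(a,b;s_1,s_2)$ of the $A_2$ singlet characters (for instance $\max(s_1,s_2)$ after reducing by the symmetries) that strictly decreases under the substitution on the right-hand side of \eqref{eqn:charID}.
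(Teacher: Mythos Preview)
Your proposal is correct and follows essentially the same route as the paper's own proof. For Part~1 the paper also specializes the integrand \eqref{integrand_3star2} to $G=SU(2)$, performs the contour integral to obtain $\Psi_{mD,r}(\tau)$ plus a finite sign-mismatch correction (with $r=-s+m+2m\kappa$), and then compares with \eqref{A1_outofrange}; for Part~2 the paper likewise invokes Corollary~\ref{cor:psudo} together with the identity \eqref{eqn:charID}. Your write-up is somewhat more explicit about the two separate sources of the finite polynomial in Part~1 and about the termination issue for the iteration of \eqref{eqn:charID} in Part~2 --- the paper leaves both of these implicit --- but the underlying argument is the same.
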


          \begin{proof}
            To prove {\it 1.}, integrating 
            (\ref{integrand_3star2})  for $G=SU(2)$ we obtain 
            \begin{gather}
              q^{-\delta}\int_{\cal C} d\vec{\xi}
              \,\tilde \chi_{\hat{w};\underline{\vec{b}}}(\tau,\vec \xi) = \Psi_{mD,r}(\tau) - \sum_{k\equiv r \xmod{2mD}} q^{k^2\over 4mD} \left({\rm sgn}\left(k-{m-s\over 2m}\right) -{\rm sgn}(k)\right) 
            \end{gather}
              where $r=-s+m+2m\kappa$ for $\vec \kappa_{\hat{w};\vec{\underline{b}}}  = \kappa \vec \alpha$, with $\vec\alpha$ denoting the simple root of $A_1$. Comparing with (\ref{A1_outofrange}) gives the statement. Similarly, the statement {\it 2.} is a consequence of the identity \eqref{eqn:charID} and  Corollary \ref{cor:psudo}. 
          \end{proof}
          It would be interesting to investigate the relation between the generalised characters and the actual characters for general $G$.

          \subsection{Seifert Manifolds with Four Exceptional Fibers}\label{subsec:4fibers}

          For negative Seifert manifold $X_\Gamma =M(b;\{q_i/p_i\}_{i=1,\dots,4})$ with four exceptional fibers,  Proposition \ref{prop:main_generalSeif} gives that the integrand (\ref{hatZ12})  \begin{gather}
            \begin{split}
              \tilde \chi_{\hat{w};\underline{\vec{b}}}(\tau,\vec \xi) &: = {q^\delta \over \Delta(\vec{\xi})^{2}}  \sum_{ \vec{\ell}\in {S}_{1,\hat{w};\vec{\underline{b}}} }q^{-\frac{1}{2} M_{v_0,v_0}^{-1} |\vec \ell+{\vec A_{\hat {w}} }|^2 }\sum_{w\in W} 
              \,  \ex^{\langle w(\vec{\ell}),  \vec{\xi} \rangle} 
            \end{split}
          \end{gather}
          either vanishes because ${S}_{1,\hat{w};\vec{\underline{b}}}=\emptyset$, or it reads
          \begin{gather}\label{integrand_4star2}\begin{split}
            \tilde \chi_{\hat{w};\underline{\vec{b}}}(\tau,\vec \xi)
&
= q^{\delta}  
\sum_{ \vec{\tilde \lambda} \in   \vec \kappa_{\hat{w};\vec{\underline{b}}} 
+ D\Lambda } 
q^{\frac{1}{2D} | \vec \mu_{\hat w} +\sqrt{m}  \vec{\tilde{\lambda}} |^2} 
\sum_{w\in W} \frac{\ex^{\langle\vec \xi,w\vec{\tilde \lambda}\rangle} }{\Delta^2(\vec\xi)}
          \end{split}
        \end{gather}
        with 
        $$\sqrt{m}\vec\mu_{\hat w}=m \vec A_{\hat w}=-D \sum_{v\in V_1} M_{v_0,v}^{-1}  w_v(\vec\rho) =  -\sum_{v_i \in V_1} 
        {{\rm sgn}(q_i) m\over p_i} w_{v_i}(\vec \rho)$$
        and 
        $\kappa_{\hat{w};\vec{\underline{b}}}$ given as in 
        (\ref{dfn:kappa}).

        \vspace{15pt}

        In particular, for $G=SU(2)$ we can write  
        \[\vec \kappa_{\hat{w};\vec{\underline{b}}} = \kappa_{\hat{w};\vec{\underline{b}}}\,\vec\alpha\in \Lambda ,~~ 
          \sqrt{m} \vec \mu_{\hat w}  = m \vec A_{\hat w} =: {\mu_{\hat w}} \vec\omega
        \]
        with $ \kappa_{\hat{w};\vec{\underline{b}}},~ {\mu_{\hat w}}  \in \ZZ$,  
        and have 
        \be\label{eqn:1_4legged_A1}
        \tilde \chi_{\hat{w};\underline{\vec{b}}}(\tau, \xi) =  q^{\delta}  
        \sum_{n\in \ZZ} q^{\frac{(2m(nD+\kappa_{\hat{w};\vec{\underline{b}}})+\mu_{\hat w})^2}{4Dm}  } 
        \frac{z^{2Dn+ 2\kappa_{\hat{w};\vec{\underline{b}}}}+ z^{-(2Dn+2 \kappa_{\hat{w};\vec{\underline{b}}})} }{(z-z^{-1})^2}.
        \ee
        Moreover, note that the Cauchy principal value integral gives 
        \begin{multline}
          {\rm p.v.}\int {dz\over z} \left( \frac{z^{2k}+ z^{-2k}}{(z-z^{-1})^2} \right) 
          = {\rm p.v.}\int {dz\over z} \left( \frac{z^{2k}-2+ z^{-2k}}{(z-z^{-1})^2} \right)  \\= {\rm CT}_{z} \left( \frac{z^{2k}-2+ z^{-2k}}{(z-z^{-1})^2} \right)  = |k|
        \end{multline}
        and we can therefore adjust the integrand in the following way: 

        \begin{multline}
          \widehat{Z}^{SU(2)}_{\underline{\vec{b}}}(\tau) =C_\Gamma^G(q) \sum_{\hat{w} \in {W}^{\otimes 4}} (-1)^{\ell(\hat{w})} \int_{\cal C} d{\xi}
          \,\tilde \chi_{\hat{w};\underline{\vec{b}}}(\tau, \xi)\\=C_\Gamma^G(q)  \sum_{\hat{w} \in {W}^{\otimes 4}} (-1)^{\ell(\hat{w})} \int_{\cal C} d{\xi}
          \,\tilde \chi'_{\hat{w};\underline{\vec{b}}}(\tau, \xi),
        \end{multline}
        where 
        \begin{gather}
          \begin{split}
            \tilde \chi'_{\hat{w};\underline{\vec{b}}}(\tau,\vec \xi) &: = {q^\delta \over \Delta(\vec{\xi})^{2}}  \sum_{ \vec{\ell}\in {S}_{1,\hat{w};\vec{\underline{b}}} }q^{-\frac{1}{2} M_{v_0,v_0}^{-1} |\vec \ell+{\vec A_{\hat {w}} }|^2 }\sum_{w\in W} 
            \,  (\ex^{\langle w(\vec{\ell}),  \vec{\xi} \rangle}-1) 
          \end{split}
        \end{gather}
        which satisfies  $\tilde \chi'_{\hat{w};\underline{\vec{b}}} =\tilde \chi_{\hat{w};\underline{\vec{b}}} =0$
        when ${S}_{1,\hat{w};\vec{\underline{b}}} =\emptyset$, and otherwise
        \be\label{eqn:1_4legged_A1_modified}
        \tilde \chi'_{\hat{w};\underline{\vec{b}}}(\tau, \xi) =  q^{\delta}  
        \sum_{n\in \ZZ} q^{\frac{(2m(nD+\kappa_{\hat{w};\vec{\underline{b}}})+\mu_{\hat w})^2}{4Dm}  } 
        \frac{z^{2Dn+ 2\kappa_{\hat{w};\vec{\underline{b}}}}-2+ z^{-(2Dn+2 \kappa_{\hat{w};\vec{\underline{b}}})} }{(z-z^{-1})^2}. 
        \ee
        Clearly, we have $\tilde \chi'_{\hat{w};\underline{\vec{b}}} = \tilde \chi'_{-\hat{w};\underline{\vec{b}}}$ where $-\hat{w}$ denotes multiplying  $\hat{w}\in W^{\otimes 4}$ by the non-trivial element of $W\cong \ZZ_2$ diagonally, in accordance with 
        (\ref{eq:diagonal_action}). 
        Comparing with the characters (\ref{eqn:ppprime_tripletchar1}) for the $(p,p')$ triplet model, we see that the form of the integrand $\chi'_{\hat{w};\underline{\vec{b}}}(\tau, \xi)$ is tantalisingly close to that of the $(p,p')$ characters. 

        In the following we will give a proof that this relation to   $(p,p')$ triplet model  always holds when $X_\Gamma =M(b;\{q_i/p_i\}_{i=1,\dots,4})$ is a weakly-negative Brieskorn sphere, and in \S\ref{sec:examples}
        give other non-spherical  examples for which this happens.

        \begin{thm}\label{thm:4fibchimatching}
          Fix the simply-laced Lie group $G=SU(2)$. Consider a   four-leg graph that corresponds to a negative Seifert Brieskorn sphere $X_\Gamma =M(b;\{q_i/p_i\}_{i=1,...,4})$ with four exceptional fibers. 
          Then we have the following identity for the integrand of the homological blocks 
          \begin{multline}\label{pprime_integrand}
            \sum_{\hat{w} \in {W}^{\otimes 4}} (-1)^{\ell(\hat{w})} 
            \tilde \chi'_{\hat{w};\underline{\vec{b}_0}}(\tau, \xi) =
            \\ q^{\delta} \eta(\tau)\sum_{(w_1,w_2,w_3)\in W^{\otimes 3}} (-1)^{\ell({w_1})+\ell({w_2})+\ell({w_3})}\varepsilon_{w_1,w_2,w_3}
            {\rm ch}^{+}_{r,s_{w_1,w_2,w_3}}(\tau,\xi) ,
          \end{multline}
          where $ {\rm ch}^{+}_{r,s}$ is the characters (\ref{eqn:ppprime_tripletchar1}) of the  $(p,p')$ triplet model {log-${\cal V}_{\bar \Lambda}(p,p')$}  with $p=p_4$, $p'=p_1p_2p_3$, and 
          \be\label{s_value_4fiber} r = 1, ~~s_{w_1,w_2,w_3} = \left|p'\sum_{i\neq 4} (-1)^{l(w_i)} {{\rm sgn}(q_i) \over p_i}\right| ,\ \varepsilon_{w_1,w_2,w_3}={\rm sgn}\left(q_{4}\sum_{i\neq 4} (-1)^{l(w_i)} {{\rm sgn}(q_i) \over p_i}\right) \ee
          and similarly for all permutations of $(p_1,p_2,p_3,p_4)$. Beyond these four possible pairs of $(p,p')$ and their images under $p\leftrightarrow p'$, there are no other choices of $(p,p')$ algebras for which the relation  (\ref{pprime_integrand}) between the homological block integrands and triplet characters holds.

        \end{thm}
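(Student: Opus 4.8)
The plan is to bring both sides of \eqref{pprime_integrand} to the same family of elementary building blocks, verify the resulting identity by a one–line summation over a single Weyl variable, and then establish the ``no other choices'' clause by a linear–independence / graph–connectivity argument. For the reduction: since $X_\Gamma$ is a negative Seifert Brieskorn sphere with four fibers one has $\textfrak{e}=-1/(p_1p_2p_3p_4)$, so in Proposition~\ref{prop:main_generalSeif} $D=1$, $m=-M^{-1}_{v_0,v_0}=p_1p_2p_3p_4=:P$, the only admissible Spin$^c$ structure is $\underline{\vec b_0}$, $\vec\kappa_{\hat w;\underline{\vec b_0}}=0$, and (for $G=SU(2)$, where $\vec\rho=\vec\omega$ and $W=\{1,s\}$ with $s\vec\omega=-\vec\omega$) the number $\mu_{\hat w}$ of \eqref{eqn:1_4legged_A1} becomes the integer $\mu_{\hat w}=-\sum_{i=1}^4(-1)^{l(w_i)}{\rm sgn}(q_i)\,P/p_i$. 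Writing
\[
f(\mu):=\sum_{n\in\ZZ}\Bigl(\tfrac{z^{n}-z^{-n}}{z-z^{-1}}\Bigr)^{\!2}q^{(2Pn+\mu)^2/(4P)},
\]
equation \eqref{eqn:1_4legged_A1_modified} reads $\tilde\chi'_{\hat w;\underline{\vec b_0}}(\tau,\xi)=q^{\delta}f(\mu_{\hat w})$, so the left side of \eqref{pprime_integrand} is $q^{\delta}\sum_{\hat w}(-1)^{\ell(\hat w)}f(\mu_{\hat w})$; and from \eqref{eqn:ppprime_tripletchar1}--\eqref{mu_pprime}, $\eta(\tau)\,{\rm ch}^{+}_{r,s}(\tau,\xi)=f(ps+p'r)-f(ps-p'r)$ whenever $pp'=P$, since $f$ depends only on the product $P$.

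The existence half then follows by summing over the single Weyl element $w_4$. With $p=p_4$, $p'=p_1p_2p_3$ and $A=A(w_1,w_2,w_3):=\sum_{i=1}^3(-1)^{l(w_i)}{\rm sgn}(q_i)\,p'/p_i\in\ZZ$ one has $\mu_{(w_1,w_2,w_3,w_4)}=-(-1)^{l(w_4)}{\rm sgn}(q_4)\,p'-pA$, and since $f$ is even in $\mu$ (replace $n$ by $-n$),
\[
\tilde\chi'_{(w_1,w_2,w_3,1)}-\tilde\chi'_{(w_1,w_2,w_3,s)}=q^{\delta}\bigl[f(-{\rm sgn}(q_4)p'-pA)-f({\rm sgn}(q_4)p'-pA)\bigr]=q^{\delta}\,{\rm sgn}(q_4A)\bigl[f(p|A|+p')-f(p|A|-p')\bigr],
\]
which is $q^{\delta}\,\varepsilon_{w_1,w_2,w_3}\,\eta(\tau)\,{\rm ch}^{+}_{1,s_{w_1,w_2,w_3}}(\tau,\xi)$ with $s_{w_1,w_2,w_3}=|A|$ and $\varepsilon_{w_1,w_2,w_3}={\rm sgn}(q_4A)$ exactly as in \eqref{s_value_4fiber} (both sides vanish when $A=0$). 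Summing against $(-1)^{l(w_1)+l(w_2)+l(w_3)}$ over $W^{\otimes3}$ gives \eqref{pprime_integrand}; letting each of the four fibers in turn be the distinguished index produces the four pairs $(p,p')=(p_i,\prod_{j\ne i}p_j)$, and $p\leftrightarrow p'$ the transposed models.

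For uniqueness, suppose the left side is also expressible through characters of the $(\mathfrak p,\mathfrak p')$ triplet model. One first checks — from the leading $q$-exponents of $f(\mu)$, which for $0<\mu<2P$ come from $n=-1$ and equal $(2P-\mu)^2/(4P)$ — that the finitely many $f(\mu_{\hat w})$ are linearly independent, so the coefficients $D_\mu:=\sum_{\hat w:\,|\mu_{\hat w}|=\mu}(-1)^{\ell(\hat w)}$ in $\sum_\mu D_\mu f(\mu)$ are invariants of the left side. Two facts drive the rest: (i) every $\mu_{\hat w}$ is prime to $P$, because $\mu_{\hat w}\equiv-(-1)^{l(w_i)}{\rm sgn}(q_i)\prod_{j\ne i}p_j\not\equiv0\pmod{p_i}$ by pairwise coprimality; (ii) $\eta\,{\rm ch}^{\pm}_{r,s}$ of the $(\mathfrak p,\mathfrak p')$ model is a difference $f(\,\cdot\,)-f(\,\cdot\,)$, so any integral combination of such characters, expanded in the $f(\mu)$, has coefficients summing to zero on each connected component of the graph $G_{\mathfrak p,\mathfrak p'}$ whose vertices are the $f$-arguments modulo sign and whose edges are the pairs $\{\mathfrak p s+\mathfrak p' r,\ \mathfrak p s-\mathfrak p' r\}$. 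Matching $q$-exponent denominators using (i) forces $\mathfrak p\mathfrak p'=P$; the degenerate split $\{\mathfrak p,\mathfrak p'\}=\{1,P\}$ (the $(1,m)$ triplet model) is excluded because, in the basis $\{{\rm ch}\ell_b(z)\}_{b\ge1}$, the coefficients of the left side are genuinely infinite $q$-series — the ${\rm ch}\ell_1$-coefficient being $q^{\delta}\sum_{\hat w}(-1)^{\ell(\hat w)}\sum_{n\ne0}q^{(2Pn+\mu_{\hat w})^2/(4P)}$, which does not telescope since the $\mu_{\hat w}$ are distinct up to sign and prime to $P$ — whereas a combination of $(1,m)$ characters \eqref{eqn:A1characters} contributes only finitely many $q$-powers to each such coefficient. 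Finally, for a $2$-against-$2$ split $\{\mathfrak p,\mathfrak p'\}=\{p_ap_b,\,p_cp_d\}$ the residues of $\mu_{\hat w}$ modulo the $p_j$ show, via (ii), that every nontrivial edge of $G_{\mathfrak p,\mathfrak p'}$ meeting the support $\{\mu_{\hat w}\}$ simply flips the pair $(w_a,w_b)$ or the pair $(w_c,w_d)$; both operations, and the total flip $\hat w\mapsto\hat w^{abcd}$ (which sends $\mu_{\hat w}\mapsto-\mu_{\hat w}$), preserve $(-1)^{\ell(\hat w)}$, so each component of $G_{\mathfrak p,\mathfrak p'}$ restricted to that support is a single $2$-element piece $\{[\mu_{\hat w}],[\mu_{\hat w^{ab}}]\}$ on which $D$ sums to $\pm4\ne0$, contradicting the component-sum constraint. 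Hence $\{\mathfrak p,\mathfrak p'\}$ must be one of the four $1$-against-$3$ splits.

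The existence identity is essentially bookkeeping once the Brieskorn specialization is in place; I expect the work to be concentrated in the uniqueness clause — specifically, carefully establishing the linear independence of the building blocks $f(\mu)$, and, in the non-generic case where two of the eight sign patterns yield $|\mu_{\hat w}|=|\mu_{\hat w'}|$, checking that such coincidences do not spoil the component-sum obstruction used to kill the $2$-against-$2$ (and $\{1,P\}$) factorizations.
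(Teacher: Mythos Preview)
Your existence argument is correct and is essentially identical to the paper's: both reduce the Brieskorn-sphere integrand to $q^\delta f(\mu_{\hat w})$ with $f(\mu)=\sum_{n}\bigl(\tfrac{z^{n}-z^{-n}}{z-z^{-1}}\bigr)^{2}q^{(2Pn+\mu)^2/4P}$, then sum over the fourth Weyl factor $w_4$ and use evenness of $f$ to produce $\eta\,{\rm ch}^{+}_{1,s}$ with $s=|A(w_1,w_2,w_3)|$. The paper writes this pairing as \eqref{eq:4fchitchmatch}; your use of the single building block $f$ and the explicit sign bookkeeping via $\varepsilon_{w_1,w_2,w_3}$ is slightly cleaner but otherwise the same.

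Where you genuinely diverge is in the uniqueness clause. The paper's argument is short and structural: any two terms paired into a single $(p,p')$ character must carry opposite parity, hence differ in one or three Weyl slots, and modulo the symmetry $\tilde\chi'_{\hat w}=\tilde\chi'_{-\hat w}$ this is a single-slot flip at some index~$i$; the resulting constraint $|\mu_{\hat w}-\mu_{\hat w'}|=2m/p_i$ then forces $\{p,p'\}=\{p_i,\,m/p_i\}$. This is tailored to the specific shape of the right-hand side of \eqref{pprime_integrand} (a sum over $W^{\otimes 3}$ of one character each). Your route instead treats the question as ``for which $(\mathfrak p,\mathfrak p')$ can the left side be expressed as \emph{any} integral combination of $(\mathfrak p,\mathfrak p')$ triplet characters'', and attacks it via linear independence of the $f(\mu_{\hat w})$ together with a component-sum obstruction on the graph $G_{\mathfrak p,\mathfrak p'}$. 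Your verification that, for a $2$-against-$2$ split, every edge of $G_{\mathfrak p,\mathfrak p'}$ hitting the support corresponds to flipping $(w_a,w_b)$ (equivalently $(w_c,w_d)$) is correct and is the heart of that argument; it is strictly more than what the paper proves, at the cost of the extra work you flag (linear independence of the $f(\mu)$ and the possible coincidences $|\mu_{\hat w}|=|\mu_{\hat w'}|$). If you only need the statement as the paper intends it, the single-flip pairing observation gets you there with much less machinery; your approach buys robustness against more general decompositions on the right-hand side.
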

        \begin{proof}
          Since $D=1$, we have $\vec{\underline{b}}= \vec{\underline{b}}_0$ and 
          $\kappa_{\hat{w};\vec{\underline{b}}}=0$ 
          for all $\hat{w}\in W^{\otimes 4}$ in  
          (\ref{eqn:1_4legged_A1_modified}), and we have
          \be
          \tilde \chi'_{\hat{w};\underline{\vec{b}_0}}(\tau, \xi) =  q^{\delta}  
          \sum_{n\in \ZZ} q^{\frac{(2mn+\mu_{\hat w})^2}{4m}  } 
          \frac{z^{2n}-2+ z^{-2n} }{(z-z^{-1})^2}.
          \ee
          So, if $p, p' \in \ZZ_+$ with $m=pp'$ and given a pair $(\hat w, \hat w')$ such that 
          $(-1)^{\ell(\hat w)+\ell(\hat w') +1}=1$, we can find $r$, $s$ such that 
          \be
          \mu_{\hat w} =  \pm(ps+p'r) , ~~\mu_{\hat w'}  = \pm(ps-p'r)
          \ee
          leading to:
          \be
          (-1)^{\ell(\hat w)}\tilde \chi'_{\hat{w};\underline{\vec{b}_0}}(\tau, \xi)  
          +(-1)^{\ell(\hat{w}')}\tilde \chi'_{\hat{w}';\underline{\vec{b}_0}}(\tau, \xi) = 
          q^{\delta} \eta(\tau) (-1)^{\ell(\hat w)} {\rm ch}^{+}_{r,s}(\tau,\xi) . 
          \label{eq:4fchitchmatch}
          \ee
          Using equations \eqref{M0v_inv}, \eqref{eqn:M00} and the expression of $m$ in terms of the plumbing data, $m=-DM^{-1}_{v_0,v_0}$, we can express $\mu_{\hat{w}}$ and $\mu_{\hat{w}'}$ in terms of the Seifert data as:
          \begin{equation}
            \mu_{\hat{w}} = - \sum_{i=1}^{4}{{\rm sgn}(q_i)m\over p_i}(-1)^{l(w_i)},\ \mu_{\hat{w}'} = - \sum_{i=1}^{4}{{\rm sgn}(q_i)m\over p_i}(-1)^{l(w_i')}.
          \end{equation}
          For $p=p_4$, we have $p'=p_1 p_2 p_3$ and $p{\lvert}{m\over p_i}$ since for $i=1,2,3$, $(p_4,p_i)=1$. As a result, we can choose  $w_i = w_i'$ for $i=1,2,3$ and $w_4= {\id} = - w_4'$, for which case 
          we have 
          \be 
          \mu_{\hat w} = -\varepsilon_{w_1,w_2,w_3}{\rm sgn} (q_4) \left(ps_{w_1,w_2,w_3}+\varepsilon_{w_1,w_2,w_3}(-1)^{\ell(w_4)}p'r\right)
          \ee
          with 
          \be
          \label{eq:rs4f}
          r = 1, ~~ s_{w_1,w_2,w_3} = -{1\over p}\sum_{i\neq 4} (-1)^{l(w_i)} {{\rm sgn}(q_i) m\over p_i}  .
          \ee 

          Apart from a symmetry in the exchange of $p$ with $p'$ one can show that no other splitting of $m$ into $p$ and $p'$ will offer a similar result.
          To see this, note that a pairing of $\hat w$ and $\hat w'$ satisfying 
          $(-1)^{\ell(\hat w)+\ell(\hat w') +1} =1$ must have $w_i = w'_i$ for three or for one $i\in \{1,2,3,4\}$, and the two cases are in fact equivalent since 
          $\tilde \chi'_{\hat{w};\underline{\vec{b}}} = \tilde \chi'_{-\hat{w};\underline{\vec{b}}}$. As a result, we must have either 
          $p\lvert p_4, m/p\lvert p_1p_2p_3$ or a permutation of $(p_1,p_2,p_3,p_4)$ or swapping $p$ and $p'$, from which we conclude that the solutions given in the theorem are the only possibilities. 
        \end{proof}

        Finally, performing the contour integral of \eqref{pprime_integrand} and using the relation (\ref{shift_B}) between the generalised and the true singlet characters, we obtain the following Corollary.
        \begin{cor}\label{cor:4fib}
          Consider $G=SU(2)$.  For any $X_\Gamma$ as in Theorem \ref{thm:4fibchimatching}, let $p, p'$, $s_{w_1,w_2,w_3}$ as in Theorem \ref{thm:4fibchimatching}.  There exists $ \widehat{Z}'^{SU(2)}_{\underline{\vec{b}_0}}(\tau)$  such that 
          \be
          \widehat{Z}'^{SU(2)}_{\underline{\vec{b}_0}}(\tau)  = C_\Gamma^G(q)  q^\delta   \eta(\tau)\sum_{(w_1,w_2,w_3)\in W^{\otimes 3}} (-1)^{\ell({w_1})+\ell({w_2})+\ell({w_3})}
          {\rm ch}^{+,0}_{1,s'_{w_1,w_2,w_3}}(\tau),
          \ee
          where $ {\rm ch}^{+,0}_{1,s'_{w_1,w_2,w_3}}$ are characters of the singlet model {log-${\cal V}^0_{\bar \Lambda}(p,p')$} with 
          $s'_{w_1,w_2,w_3} \equiv \pm s_{w_1,w_2,w_3}~(p')$ 
          and 
          \be
          \widehat{Z}'^{SU(2)}_{\underline{\vec{b}_0}}(\tau) - \widehat{Z}^{SU(2)}_{\underline{\vec{b}_0}}(\tau) =q^\delta \sum_{r\in \ZZ/2pp'} a_r \Psi_{m,r}(\tau)
          \ee
          for some $a_r\in {\mathbb C}$. 
        \end{cor}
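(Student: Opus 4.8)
The idea is to apply the contour integral $\int_{\cal C}d\xi$ to the integrand identity of Theorem~\ref{thm:4fibchimatching}, recognise the result as an integral combination of (a priori \emph{generalised}) $(p,p')$ singlet characters, and then trade each generalised character for a genuine one using the shift relation~\eqref{shift_B}, pushing the discrepancy into false theta functions $\Psi_{m,r}$; this is the $(p,p')$ analogue of the discussion following~\eqref{A1_outofrange}. Concretely, recall from the adjustment of the integrand around~\eqref{eqn:1_4legged_A1_modified} that $\widehat{Z}^{SU(2)}_{\underline{\vec{b}_0}}(\tau)=C_\Gamma^G(q)\sum_{\hat w\in W^{\otimes 4}}(-1)^{\ell(\hat w)}\int_{\cal C}d\xi\,\tilde\chi'_{\hat w;\underline{\vec{b}_0}}(\tau,\xi)$, the primed and unprimed integrands having the same ${\cal C}$-integral. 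Applying $\int_{\cal C}d\xi$ to~\eqref{pprime_integrand} and using the constant-term identity ${\rm CT}_z\,\tfrac{z^{2k}-2+z^{-2k}}{(z-z^{-1})^2}=|k|$ of \S\ref{subsec:4fibers} (equivalently $\int_{\cal C}d\xi\,{\rm ch}^{+}_{r,s}(\tau,\xi)={\rm ch}^{0,+}_{r,s}(\tau)$, cf.~\eqref{eqn:pprime_singletChar1}) gives
\[
\widehat{Z}^{SU(2)}_{\underline{\vec{b}_0}}(\tau)=C_\Gamma^G(q)\,q^{\delta}\,\eta(\tau)\sum_{(w_1,w_2,w_3)\in W^{\otimes 3}}(-1)^{\ell(w_1)+\ell(w_2)+\ell(w_3)}\,\varepsilon_{w_1,w_2,w_3}\,{\rm ch}^{0,+}_{1,s_{w_1,w_2,w_3}}(\tau),
\]
where the ${\rm ch}^{0,+}_{1,\cdot}$ are generalised singlet characters of log-${\cal V}_{\bar\Lambda}(p,p')$ in general, since $s_{w_1,w_2,w_3}=\bigl|p'\sum_{i\ne 4}(-1)^{l(w_i)}{\rm sgn}(q_i)/p_i\bigr|$ need not lie in $\{1,\dots,p'\}$.

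Next, by the reflection property~\eqref{pprime_reflection}, ${\rm ch}^{0,+}_{1,-s}=-{\rm ch}^{0,+}_{1,s}$, so $\varepsilon_{w_1,w_2,w_3}\,{\rm ch}^{0,+}_{1,s_{w_1,w_2,w_3}}={\rm ch}^{0,+}_{1,\tilde s_{w_1,w_2,w_3}}$ with $\tilde s_{w_1,w_2,w_3}:=\varepsilon_{w_1,w_2,w_3}\,s_{w_1,w_2,w_3}$. For the reduction I would use~\eqref{eqn:rel_B_pprimechars}, $\eta\,{\rm ch}^{0,+}_{1,s}=B_{pp',ps+p'}-B_{pp',ps-p'}$, together with $B_{m,r+2m}=B_{m,r}-\Psi_{m,r}$ (the case $n=1$ of~\eqref{shift_B}, with $m=pp'$), which give ${\rm ch}^{0,+}_{1,s+2p'}={\rm ch}^{0,+}_{1,s}+\eta^{-1}(\Psi_{m,ps-p'}-\Psi_{m,ps+p'})$. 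Iterating this identity together with the reflection, any $\tilde s$ can be brought to a representative $s'\in\{1,\dots,p'\}$ with $s'\equiv\pm\tilde s\equiv\pm s_{w_1,w_2,w_3}\pmod{p'}$, at the cost of a finite $\ZZ$-linear combination of the $\Psi_{m,r}$, $r\in\ZZ/2pp'$:
\[
{\rm ch}^{0,+}_{1,\tilde s_{w_1,w_2,w_3}}={\rm ch}^{+,0}_{1,s'_{w_1,w_2,w_3}}+\eta(\tau)^{-1}\sum_{r\in\ZZ/2pp'}c_{r,w_1,w_2,w_3}\,\Psi_{m,r}(\tau),
\]
where ${\rm ch}^{+,0}_{1,s'}$ is a genuine singlet character of log-${\cal V}^0_{\bar\Lambda}(p,p')$ (those with $s'\in\{0,p'\}$ being themselves pure false thetas), and any residual overall sign has been absorbed into the choice of $s'_{w_1,w_2,w_3}$.

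Putting these together, set $\widehat{Z}'^{SU(2)}_{\underline{\vec{b}_0}}(\tau):=C_\Gamma^G(q)\,q^{\delta}\,\eta(\tau)\sum_{(w_1,w_2,w_3)\in W^{\otimes 3}}(-1)^{\ell(w_1)+\ell(w_2)+\ell(w_3)}{\rm ch}^{+,0}_{1,s'_{w_1,w_2,w_3}}(\tau)$, which by construction is an integral combination of genuine singlet characters. Subtracting the displayed formula for $\widehat{Z}^{SU(2)}_{\underline{\vec{b}_0}}$ and inserting the reduction above, every genuine character cancels and one is left with
\[
\widehat{Z}'^{SU(2)}_{\underline{\vec{b}_0}}(\tau)-\widehat{Z}^{SU(2)}_{\underline{\vec{b}_0}}(\tau)=-C_\Gamma^G(q)\,q^{\delta}\sum_{r\in\ZZ/2pp'}\Bigl(\sum_{(w_1,w_2,w_3)}(-1)^{\ell(w_1)+\ell(w_2)+\ell(w_3)}c_{r,w_1,w_2,w_3}\Bigr)\Psi_{m,r}(\tau),
\]
which is of the asserted form $q^{\delta}\sum_{r}a_r\Psi_{m,r}$ once the monomial prefactor $C_\Gamma^G(q)$ is absorbed into the overall rational power of $q$ and the constants $a_r$.

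The main obstacle is the reduction step: the arithmetic of bringing $ps\pm p'$ into the fundamental range modulo $2pp'$, tracking the $\Psi_{m,r}$ accumulated at each shift, verifying that the resulting genuine singlet character carries the correct sign and the correct residue class $\pm s_{w_1,w_2,w_3}\pmod{p'}$, and confirming that the leftover is \emph{exactly} an integral linear combination of the false theta functions $\Psi_{m,r}$. The contour integration producing the singlet characters and the handling of the $\varepsilon_{w_1,w_2,w_3}$ signs are routine given Theorem~\ref{thm:4fibchimatching} and the character identities of \S\ref{subsec:pp'char}.
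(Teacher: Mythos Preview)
Your proof is correct and follows essentially the same approach as the paper's own argument, which consists of the two steps you identify: performing the contour integral of~\eqref{pprime_integrand} to obtain generalised singlet characters, and then invoking the shift relation~\eqref{shift_B} to trade each generalised character for a genuine one at the cost of false theta corrections. Your treatment is in fact more detailed than the paper's, which states only that the corollary follows from these two ingredients; your explicit handling of the $\varepsilon_{w_1,w_2,w_3}$ via the reflection~\eqref{pprime_reflection}, the iteration of the shift ${\rm ch}^{0,+}_{1,s+2p'}={\rm ch}^{0,+}_{1,s}+\eta^{-1}(\Psi_{m,ps-p'}-\Psi_{m,ps+p'})$, and the absorption of $C_\Gamma^G(q)$ into the overall $q$-power are all correct elaborations of what the paper leaves implicit.
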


        \subsection{\texorpdfstring{$\widehat{Z}$}{\^Z}-invariants with Line Operators}
        \label{subsec:lineop}

        As discussed in \S4 of \cite{Gukov:2017kmk}, one can also consider  homological blocks when Wilson operators, corresponding to half-BPS  line operators in the 3d ${\cal N}=2$ SCFT, are incorporated. Here we consider the homological blocks, modified by Wilson operators $W_{\vec \nu_{v_\ast}}$ associated to a  node $v_\ast \in V$ in the plumbing graph, corresponding to a highest weight representation with highest weight $\vec \nu \in \Lambda^\vee$: 
        \begin{gather}\label{dfn:ZhatW}\begin{split}
&\widehat{Z}^G_{\underline{\vec{b}}}(X_\Gamma, W_{\vec \nu_{v_\ast}};\tau): =\\& C_\Gamma^G(q) \int_{\cal C} d \underline{\vec{\xi}}\,\left( \prod_{v \in V} \Delta(\vec{\xi}_v)^{2 - \deg v} \right)
\chi_{\vec \nu}(\vec \xi_{v_\ast})\sum_{w \in {W}} \sum_{\underline{\vec{\ell}} \in \Gamma_{M,G} +w(\underline{\vec{b}}) } q^{-\frac{1}{2}||  \underline{\vec{\ell}}||^2} 
\left( \prod_{v' \in V} \ex^{\langle \vec{\ell}_{v'},  \vec{\xi}_{v'} \rangle} \right)
        \end{split}\end{gather}
        where  
        \be\label{LieAlgcharacter}
        \chi_{\vec \nu}(\vec \xi) = {1\over {\Delta(\vec \xi)}}\sum_{w \in W}(-1)^{l(w)} \ex^{\langle \vec \xi, w(\vec \rho+ \vec \nu)\rangle} = \sum_{\vec \sigma\in P^+} m^{(\vec \nu)}_{\vec\sigma}  \sum_{w \in W}  \ex^{\langle \vec \xi, w(\vec\sigma)\rangle} 
        \ee
        is the character of the representation of $G$ with highest weight $\vec \nu$, where $m^{(\vec \nu)}_{\vec\sigma}$ is the multiplicity of the weight $\vec\sigma$ in the highest weight module with highest weight $\vec \nu$.
        They are building blocks of the half index of the three-dimensional theory with line operators included, and reduce to the homological blocks without Wilson operators when one sets the highest weight $\vec\nu=0$. 

        In what follows, as in the rest of the paper, we mainly focus on Seifert manifolds, and consider Wilson operators associated with the end nodes, the central node, and the intermediate nodes of the plumbing graph. We will see that in each of the three cases,
        the Wilson operator leads to 
        a different modification of the relation to log VOA characters discussed in the earlier part of the section. 
        \vspace{15pt}
        \newpage
        \noindent{\bf Wilson Operator at an End Node}\\

        First consider including a Wilson operator associated with an end node, say  $v_1\in V_1$.     While integrating over the end 
        vertices with $v\neq v_1$ gives
        \be
        \int d {\vec{\xi}_v} \,\Delta(\vec{\xi}_v) \,  \ex^{\langle \vec \xi_v, \vec \ell_v\rangle} = \left(\sum_{w\in W} (-1)^{l(w)} \delta_{\vec \ell_v, -w(\vec\rho)} \right),
        \ee
        integrating over $\vec\xi_{v_1}$ gives 
        \be
        \int d {\vec{\xi}} \,\Delta(\vec{\xi}) \chi_{\vec \nu}(\vec \xi)  \ex^{\langle \vec \xi, \vec \ell_{}\rangle} = \int  d {\vec{\xi}_{}}\sum_{w\in W} (-1)^{l(w)}  \ex^{\langle \vec \xi_v, \vec \ell 
      _v+ w(\vec \rho +\vec \nu )\rangle} = \sum_{w\in W} (-1)^{l(w)}  \delta_{\vec \ell_v, -w(\vec\rho+\nu )} . 
        \ee
        In summary,  define for all $v\in V_1$ 
        \be
        \vec \rho_v = \begin{cases} \vec \rho + \vec\nu & ~{\rm if}~v=v_1 \\ \vec \rho & ~{\rm if}~v\neq v_1  \end{cases}, 
        \ee  
        then we have $ \sum_{w\in W} (-1)^{l(w)}  \delta_{\vec \ell_v, -w(\vec\rho_v)}$ as the result of integration of $\vec \xi_v$ for all end nodes $v\in V_1$. 

        As a result, it is easy to check that the following statement, which is completely analogous to Proposition \ref{prop:main_generalSeif},  holds for $\widehat{Z}^G_{\underline{\vec{b}}}(W_{\vec \nu_{v_1}})$.
        Namely, 
        \begin{gather}\begin{split}
          \widehat{Z}^G_{\underline{\vec{b}}}(X_\Gamma, W_{\vec \nu_{v_1}};\tau)&=C_\Gamma^G(q) \sum_{\hat{w} \in {W}^{\otimes N}} (-1)^{\ell(\hat{w})} \int_{\cal C} d\vec{\xi}
        \,\tilde \chi_{\hat{w};\underline{\vec{b}}}(\tau,\vec \xi) \\  \end{split}
      \end{gather}
      and the 
      integrand is 	either $\tilde \chi_{\hat{w};\underline{\vec{b}}}=0$, or 
      there exists a unique  $\vec \kappa_{w,\hat{w};\vec{\underline{b}}}  \in  \Lambda^\vee/D\Lambda$ such that 
      \begin{gather}\label{eqn:integrand_N_Wilson}
        \begin{split}
          \tilde \chi_{\hat{w};\underline{\vec{b}}}(\tau,\vec \xi) & = {q^\delta \over \Delta(\vec{\xi})^{N-2}}\times \\& \sum_{ \vec\lambda \in \Lambda }q^{\frac{1}{2D}  |\sqrt{m}(D \vec \lambda +\vec \kappa_{\hat{w};\vec{\underline{b}}}  + \varepsilon(N)\vec\rho ) + {m\vec A_{\hat {w}} \over \sqrt{m}}|^2}\sum_{w\in W} (-1)^{Nl(w)} 
          \,  \ex^{\langle w(D \vec \lambda +\vec \kappa_{\hat{w};\vec{\underline{b}}}  + \varepsilon(N)\vec\rho ),  \vec{\xi} \rangle} ,
        \end{split}
      \end{gather}
      with  $\delta$ and $\vec A_{\hat {w}}$ as given in (\ref{eqn:delta_A_main}), but now with $\vec \rho$ replaced by $\vec\rho_v$, namely 
      \begin{gather}\label{dfn:AW}\begin{split}
        \vec A_{\hat {w} } &={-1\over M_{v_0,v_0}^{-1}} \sum_{v\in V_1} M_{v_0,v}^{-1}  w_v(\vec\rho_v) = -\sum_{v_i \in V_1} 
        {{\rm sgn}(q_i)\over p_i} w_{v_i}(\vec \rho_{v_i}) 
        , \\
        \delta &=   \sum_{\substack{v\in V_1}}
        {|\vec\rho_v|^2\over 2}  \left({(M_{v_0,v}^{-1})^2\over M_{v_0,v_0}^{-1}} -  M_{v,v}^{-1} \right) =\sum_{i=1}^N {|\vec\rho_{v_i}|^2\over 2}\left( {1\over p_i q_i} - {\theta^{(i)}_{\ell_i-1}\over \theta^{(i)}_{\ell_i}}\right) ,
      \end{split}\end{gather}
      and $\vec \kappa_{\hat{w};\vec{\underline{b}}}  \in  \Lambda^\vee/D\Lambda$ such that  

      \begin{gather}\notag\begin{split}
        {S}_{1,\hat w;\vec{\underline{b}}} &= \left\{ \vec \ell_0 \,\Bigg\lvert \Big(\vec{\ell}_0, \, -w_1(\vec \rho+\vec \nu),\underbrace{ -w_2(\vec \rho), \cdots, -w_N( \vec\rho)}_{N-1} , \underbrace{0, \cdots, 0}_{|V|-N-1} \Big) \in \Gamma_{M,G} +\underline{\vec b} \right\} \\ 
                                                     & = \{\vec \kappa_{\hat{w};\vec{\underline{b}}} + D\vec \lambda +\varepsilon(N) \vec\rho :\vec \lambda \in \Lambda \}.
        \end{split}\end{gather}
        Consequently, we have the following proposition.  

        \begin{prop}
          \label{prop:Wil-End}
          Let $v_\ast\in V_1$ be one of the  end nodes.
          \begin{enumerate}
            \item In the case of $N=3$, 
              Theorem \ref{thm:combining_into_characters} and Theorem \ref{3fiber_sphere},  and as a result Corollaries  \ref{cor:psudo}-\ref{cor:A1A2} all hold for  $\widehat{Z}^G_{\underline{\vec{b}}}(X_\Gamma,W_{\vec \nu_{v_\ast}};\tau)$, with the only difference being given by \eqref{dfn:AW}.
            \item 
              In the case of $N=4$, 
              Theorem  \ref{thm:4fibchimatching}  and as a result Corollary  \ref{cor:4fib} hold for  $\widehat{Z}^G_{\underline{\vec{b}}}(X_\Gamma,W_{\vec \nu_{v_\ast}};\tau)$, with the only difference being given by \eqref{dfn:AW}, and as a result
              $$
              s_{w_1,w_2,w_3} = -{1\over p}\sum_{i\neq 4} (-1)^{l(w_i)} {\rm sgn}(q_i) (1+\nu)
              { m\over p_i}, 
              $$
              where the highest weight is given by $\vec\nu = \nu \vec\omega$.
          \end{enumerate}
        \end{prop}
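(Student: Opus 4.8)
The plan is to re-run the derivations of Proposition \ref{prop:main_generalSeif}, Theorems \ref{thm:combining_into_characters}, \ref{3fiber_sphere}, \ref{thm:4fibchimatching} and their corollaries almost verbatim, the only new input being the observation already recorded above the proposition: inserting the Wilson character $\chi_{\vec \nu}$ at the end node $v_\ast$ and integrating out $\vec \xi_{v_\ast}$ has the sole effect of replacing the constraint $\delta_{\vec \ell_{v_\ast},-w(\vec \rho)}$ coming from that leg by $\delta_{\vec \ell_{v_\ast},-w(\vec \rho+\vec \nu)}$. This is because, by the Weyl character formula, $\Delta(\vec \xi_{v_\ast})\,\chi_{\vec \nu}(\vec \xi_{v_\ast})=\sum_{w\in W}(-1)^{l(w)}\ex^{\langle \vec \xi_{v_\ast},w(\vec \rho+\vec \nu)\rangle}$, so an end node carrying a Wilson line behaves exactly like an ordinary end node with $\vec \rho$ replaced by the leg-dependent vector $\vec \rho_v$. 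Thus the entire computation goes through with $\vec \rho_v$ in place of $\vec \rho$ throughout.

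Concretely, I would first redo the proof of Proposition \ref{prop:main_generalSeif}: integrating out the intermediate nodes still gives $\delta_{\vec \ell_v,\vec 0}$, integrating out each end node $v\in V_1$ now gives $\sum_{w\in W}(-1)^{l(w)}\delta_{\vec \ell_v,-w(\vec \rho_v)}$, and the remaining contour integral over the central node is manipulated exactly as before; Lemma \ref{lem:Midentity} is a statement about $M$ alone and is untouched, so one arrives at the integrand \eqref{eqn:integrand_N_Wilson} with $\vec A_{\hat w}$ and $\delta$ given by \eqref{dfn:AW}. The one genuinely new point is the solvability analysis of the constraint set ${S}_{1,\hat w;\underline{\vec b}}$: since $\vec \nu\in\Lambda^\vee$, the shift $w_1(\vec \nu)$ need not lie in $\Lambda$, so the counting of solutions of the congruences $d_v\lambda^{(k)}+c_v^{(k)}\equiv 0\pmod{D}$ must be rerun with $\Lambda^\vee/D\Lambda$ in place of $\Lambda/D\Lambda$ --- which is precisely why the uniqueness statement above records $\vec \kappa_{w,\hat w;\underline{\vec b}}\in\Lambda^\vee/D\Lambda$. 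Granting this, the proofs of Theorems \ref{thm:combining_into_characters} and \ref{3fiber_sphere} used only the structural form of the integrand produced by Proposition \ref{prop:main_generalSeif} together with the comparison with \eqref{eqn:char}, so they transfer without change; Corollary \ref{cor:psudo} follows by performing the same contour integral, and Corollary \ref{cor:A1A2} follows from the same $SU(2)$ and $SU(3)$ identities. This gives part \emph{1.}

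For part \emph{2.} ($N=4$, $G=SU(2)$) I would rerun the proof of Theorem \ref{thm:4fibchimatching}. There $D=1$ and $\underline{\vec b}=\underline{\vec b}_0$, and the core step pairs $\hat w$ with $\hat w'$ of opposite total parity and rewrites $(-1)^{\ell(\hat w)}\tilde\chi'_{\hat w;\underline{\vec b}_0}+(-1)^{\ell(\hat w')}\tilde\chi'_{\hat w';\underline{\vec b}_0}$ as a $(p,p')$ triplet character ${\rm ch}^{+}_{r,s}$ via $\mu_{\hat w}=\pm(ps+p'r)$, $\mu_{\hat w'}=\pm(ps-p'r)$. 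With $\vec \rho$ replaced by $\vec \rho_{v_\ast}$ at the Wilson leg --- for $SU(2)$ one has $\vec \rho=\vec \omega$ and, writing $\vec \nu=\nu\vec \omega$, $\vec \rho_{v_\ast}=(1+\nu)\vec \omega$ --- the quantity $\mu_{\hat w}$ defined through $\sqrt m\,\vec \mu_{\hat w}=m\vec A_{\hat w}$ picks up the factor $(1+\nu)$ on that leg, while the divisibility facts ($p\mid m/p_i$ for $i\neq 4$, hence the admissible splittings of $m$) are unaffected. This produces the value of $s_{w_1,w_2,w_3}$ in the statement and the identity \eqref{pprime_integrand} with the modified parameters; Corollary \ref{cor:4fib} then follows by the same contour integration and the relation \eqref{shift_B}.

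The main obstacle is bookkeeping rather than anything conceptual: one has to make sure that none of the Weyl-equivariance properties used silently along the way break. In particular I would verify that $w(\vec A_{\hat w})=\vec A_{w\hat w}$ and the diagonal-action identity \eqref{eq:diagonal_action} still hold with $\vec \rho_v$ in place of $\vec \rho$ --- they do, because $\vec \nu$ enters only through the single factor $w_{v_\ast}(\vec \rho_{v_\ast})$, on which $w$ acts in the same equivariant manner --- and that the modified $\delta$ in \eqref{dfn:AW}, which no longer factors as $|\vec \rho|^2$ times a sum because $|\vec \rho_{v_\ast}|^2\neq|\vec \rho|^2$ in general, is still the correct overall $q$-power; this is immediate from tracking $\frac{1}{2}M^{-1}_{v_0,v_0}|\vec A_{\hat w}|^2$ and the $\vec \ell$-independent terms through the same completion-of-square step. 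The only genuinely new ingredient is the $\Lambda^\vee$-versus-$\Lambda$ distinction in the constraint set noted above.
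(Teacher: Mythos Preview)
Your approach is exactly the paper's: the paper does not give a separate proof environment for this proposition, and the text preceding it \emph{is} the argument—namely the observation that the Weyl character formula converts the Wilson insertion at the end node $v_\ast$ into the single replacement $\vec\rho\mapsto\vec\rho_{v}$ in the end-node integration, after which Proposition~\ref{prop:main_generalSeif} and hence Theorems~\ref{thm:combining_into_characters}, \ref{3fiber_sphere}, \ref{thm:4fibchimatching} and their corollaries reproduce themselves with the data~\eqref{dfn:AW}. You are in fact more explicit than the paper in flagging the $\Lambda^\vee$ versus $\Lambda$ distinction for $\vec\kappa_{\hat w;\underline{\vec b}}$ and in verifying that the diagonal Weyl-equivariance \eqref{eq:diagonal_action} survives; the paper leaves these implicit.
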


        \vspace{15pt} 
        \noindent{\bf Wilson Operator at the Central Node}\\

        Next we consider  a Wilson operator associated with the central node $v_0$. 
        The integral over $\vec \xi_{v_0}$ in (\ref{dfn:ZhatW}) reads
        \be
        \int d\vec \xi  {1\over \Delta(\vec \xi)^{N-2} } \ex^{\langle \vec \xi, \vec \ell\rangle}   \chi_{\vec \nu}(\vec \xi)  
        = 
        \sum_{\vec \sigma \in P^+} m^{(\vec \nu)}_{\vec\sigma} \int d\vec \xi  {1\over \Delta(\vec \xi)^{N-2} } \ex^{\langle \vec \xi, \vec \ell\rangle} \sum_{w\in W}  \ex^{\langle \vec \xi, w(\vec\sigma)\rangle} 
        \ee
        where we  have dropped the subscript in $\vec \xi_{v_0}$. 
        As a result, we see that the statement in Proposition \ref{prop:main_generalSeif} is modified in a very simple way by the inclusion of a Wilson operator associated to the central node: 
        \begin{gather}\label{eqn:zhat_wilson_central}\begin{split}
          \widehat{Z}^G_{\underline{\vec{b}}}(X_\Gamma, W_{\vec \nu_{v_0}};\tau) &=C_\Gamma^G(q)\sum_{\vec \sigma\in P^+} m^{(\vec \nu)}_{\vec\sigma}  \sum_{\hat{w} \in {W}^{\otimes N}} (-1)^{\ell(\hat{w})} \int_{\cal C} d\vec{\xi}
        \,  \sum_{w\in W} \ex^{\langle \vec \xi, w(\vec\sigma)\rangle} \,\tilde \chi_{\hat{w};\underline{\vec{b}}}(\tau,\vec \xi)   \end{split}
  \end{gather}
  where $\tilde \chi_{\hat{w};\underline{\vec{b}}}$ is given as in (\ref{eqn:integrand_N}). 
  The two changes in the integrand on  right-hand side upon including Wilson operators are 1) a sum over the weights $\vec\sigma$ that appear in the corresponding highest weight module, and 2)  a multiplication by a factor $\ex^{\langle \vec \xi, \vec\sigma\rangle}$.
  These changes alter but do not destroy the form of the relation between the homological blocks and the generalised singlet and triplet characters, in the case of negative Seifert manifolds with three singular fibers, and we have the following proposition. 

  \begin{prop}\label{prop:Wil-Center} Let $X_\Gamma$, $D$, $\chi_{\vec \mu_{\hat w}}$ be as in Theorem \ref{thm:combining_into_characters}. Then  
    \begin{equation}\label{eqn:Prop_zhat_wilson_central}
      \widehat{Z}^G_{\underline{\vec{b}}}(X_\Gamma, W_{\vec \nu_{v_0}};\tau) =C_\Gamma^G(q) \sum_{\hat{w} \in {W}^{\otimes N}} (-1)^{\ell(\hat{w})} \int_{\cal C} d\vec{\xi}
      \tilde \chi_{\hat{w};\underline{\vec{b}}}^{(\vec\nu)}(\tau,\vec \xi) , 
    \end{equation}
    where a particular sum of the  integrands is given by a polynomial in $z_i$ times a generalised character of the log VOA algebra {log-${\cal V}_{\bar \Lambda}$} :  
    \be
    {q^{-D\delta}\over \eta^{\rm rank G}}\,\sum_{\substack{\underline{\vec{b}} =\underline{\vec{b}}_\ast + (\Delta\vec b ,0,0,\dots,0) \\ \Delta\vec b\in \Lambda/D\Lambda}} \tilde \chi^{(\vec\nu)}_{{\hat{w}_\ast};\underline{\vec{b}}}(D\tau,\vec\xi)  =\sum_{\vec \sigma\in P^+} m^{(\vec \nu)}_{\vec\sigma} 
    \sum_{w\in W} \ex^{\langle \vec \xi, w(\vec\sigma)\rangle} \chi_{\vec{\mu}_{\hat w_\ast}}(\tau,\vec\xi) ,
    \ee
    and analogously for Theorem \ref{3fiber_sphere}. 
    As a result,
    given that the original homological block satisfies 
    \be
    C_\Gamma^G(q)^{-1} \,{q^{-\delta}\over \eta^{\rm rank G}}\, \widehat{Z}^G_{\underline{\vec{b}}}(X_\Gamma;\tau) = \sum_{\vec{\mu}} a_{\vec{\mu}} \chi^0_{\vec{\mu}} 
    \ee
    for some $a_{\vec \mu}\in \ZZ$ as  in  Corollary \ref{cor:psudo}, 
    the homological block with Wilson operator is given by 
    \be\label{eqn:shift_wilson}
    C_\Gamma^G(q)^{-1} \,{q^{-\delta}\over \eta^{\rm rank G}}\, \widehat{Z}^G_{\underline{\vec{b}}}(X_\Gamma, W_{\vec \nu_{v_0}};\tau)=
    \sum_{\vec \sigma\in P^+} m^{(\vec \nu)}_{\vec\sigma}  \sum_{w \in W} 
    \sum_{\vec{\mu}} 
    a_{\vec{\mu}} \chi^0_{\vec\mu-\sqrt{m} w(\vec\sigma)}. 
    \ee
  \end{prop}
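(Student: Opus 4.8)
The plan is to exploit the fact that inserting a Wilson line at the central node multiplies the $\widehat{Z}$-integrand by the fixed function $\sum_{\vec\sigma\in P^+}m^{(\vec\nu)}_{\vec\sigma}\sum_{w\in W}\ex^{\langle\vec\xi,w(\vec\sigma)\rangle}$, which depends neither on $\underline{\vec{b}}$ nor on $\tau$, so it commutes with every operation performed in the proofs of Theorem \ref{thm:combining_into_characters}, Theorem \ref{3fiber_sphere} and Corollary \ref{cor:psudo}. Concretely, I would first record the elementary identity \eqref{eqn:Prop_zhat_wilson_central} by \emph{defining} $\tilde\chi^{(\vec\nu)}_{\hat w;\underline{\vec{b}}}(\tau,\vec\xi):=\sum_{\vec\sigma\in P^+}m^{(\vec\nu)}_{\vec\sigma}\sum_{w\in W}\ex^{\langle\vec\xi,w(\vec\sigma)\rangle}\,\tilde\chi_{\hat w;\underline{\vec{b}}}(\tau,\vec\xi)$; comparison with \eqref{eqn:zhat_wilson_central}, which was itself obtained by expanding $\chi_{\vec\nu}(\vec\xi_{v_0})$ via \eqref{LieAlgcharacter} and integrating out $\vec\xi_{v_0}$ exactly as in Proposition \ref{prop:main_generalSeif}, makes this immediate. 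Then, multiplying the identity of Theorem \ref{thm:combining_into_characters} through by the same prefactor and pulling it past the sum over $\Delta\vec b\in\Lambda/D\Lambda$ and past the rescaling $\tau\mapsto D\tau$ (on which it does not depend) gives the stated formula for $\tfrac{q^{-D\delta}}{\eta^{\rm rank G}}\sum_{\Delta\vec b}\tilde\chi^{(\vec\nu)}_{\hat w_\ast;\underline{\vec{b}}}(D\tau,\vec\xi)$, and the same manipulation applied to Theorem \ref{3fiber_sphere} yields the analogous $D=1$ statement.

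For the last assertion I would specialise to $N=3$, $D=1$ as in Corollary \ref{cor:psudo}. There Theorem \ref{3fiber_sphere} identifies $\tfrac{q^{-\delta}}{\eta^{\rm rank G}}\tilde\chi_{\hat w;\underline{\vec{b}}}$ with the (generalised) triplet character $\chi_{\vec\mu_{\hat w}}(\tau,\vec\xi)$ (or with $0$), and the contour integral $\int_{\cal C}d\vec\xi$ with the measure of Definition \ref{dfn:blocks} amounts to the constant-term operator ${\rm CT}_{\vec z}$ applied coefficient-by-coefficient in $q$, each coefficient being a finite Laurent polynomial in $\vec z$ by the Weyl character formula \eqref{highest_weight_module}. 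Hence
\[
C_\Gamma^G(q)^{-1}\tfrac{q^{-\delta}}{\eta^{\rm rank G}}\widehat{Z}^G_{\underline{\vec{b}}}(X_\Gamma,W_{\vec\nu_{v_0}};\tau)=\sum_{\vec\sigma\in P^+}m^{(\vec\nu)}_{\vec\sigma}\sum_{w\in W}\sum_{\hat w}(-1)^{\ell(\hat w)}\,{\rm CT}_{\vec z}\left(\ex^{\langle\vec\xi,w(\vec\sigma)\rangle}\chi_{\vec\mu_{\hat w}}(\tau,\vec\xi)\right).
\]
The key input is then the singlet--triplet relation \eqref{singlet_triplet}--\eqref{singlet_char} read as a shift identity: inserting $\ex^{\langle\vec\xi,w(\vec\sigma)\rangle}$ merely reindexes the Fock-module decomposition $\mathcal Y_{\vec\lambda'}=\bigoplus_{\vec\alpha\in\Lambda}\mathcal F_{\vec\lambda'+\vec\alpha}$ that underlies \eqref{eqn:char}, so that ${\rm CT}_{\vec z}\bigl(\ex^{\langle\vec\xi,w(\vec\sigma)\rangle}\chi_{\vec\lambda'}\bigr)=\chi^0_{\vec\lambda'-\sqrt m\,w(\vec\sigma)}$, the factor $\sqrt m$ coming from the rescaling $\bar\Lambda=\sqrt m\Lambda$. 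Applying this with $\vec\lambda'=\vec\mu_{\hat w}$ and noting that all labels shift by the same $-\sqrt m\,w(\vec\sigma)$, I replace $\sum_{\hat w}(-1)^{\ell(\hat w)}\chi^0_{\vec\mu_{\hat w}-\sqrt m w(\vec\sigma)}$ by $\sum_{\vec\mu}a_{\vec\mu}\chi^0_{\vec\mu-\sqrt m w(\vec\sigma)}$, where the integers $a_{\vec\mu}$ are precisely those of Corollary \ref{cor:psudo}, i.e. $a_{\vec\mu}=\sum_{\hat w\,:\,\vec\mu_{\hat w}=\vec\mu}(-1)^{\ell(\hat w)}$. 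Substituting produces \eqref{eqn:shift_wilson}.

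The main obstacle is the shift identity ${\rm CT}_{\vec z}\bigl(\ex^{\langle\vec\xi,w(\vec\sigma)\rangle}\chi_{\vec\lambda'}\bigr)=\chi^0_{\vec\lambda'-\sqrt m\,w(\vec\sigma)}$: I expect this to require a careful treatment of which Weyl-orbit representatives and which coset of $\bar\Lambda^\vee/\bar\Lambda$ appear when the exponent $w(\vec\sigma)$ is combined with the lattice sum in \eqref{eqn:char}, and a check that the shifted labels $\vec\mu-\sqrt m\,w(\vec\sigma)$ are the correct \emph{generalised} singlet labels --- generically lying outside the range \eqref{eqn:mu}, which is exactly why the statement only claims an integral combination of generalised characters rather than an identity among honest irreducible characters. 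Once this is in place, the remaining steps --- interchanging the finite sums over $\hat w$, $\vec\sigma$ and $w$ with the infinite lattice sums and with the constant-term extraction --- are routine, since only finitely many terms contribute in each $q$-degree.
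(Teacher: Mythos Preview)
Your proposal is correct and follows essentially the same route as the paper. The paper does not give a detailed proof of this proposition: after deriving \eqref{eqn:zhat_wilson_central} it simply observes that the Wilson insertion at $v_0$ amounts to the two multiplicative modifications (sum over $\vec\sigma$ and factor $\ex^{\langle\vec\xi,w(\vec\sigma)\rangle}$) and states the result. Your write-up makes explicit exactly the steps the paper leaves implicit --- in particular the shift identity ${\rm CT}_{\vec z}\bigl(\ex^{\langle\vec\xi,w(\vec\sigma)\rangle}\chi_{\vec\lambda'}\bigr)=\chi^0_{\vec\lambda'-\sqrt m\,w(\vec\sigma)}$, which follows directly from \eqref{singlet_triplet} by reindexing $\vec{\tilde\lambda}\mapsto\vec{\tilde\lambda}-w(\vec\sigma)$ --- and correctly flags that the shifted labels are in general only generalised singlet labels.
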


  Note that this is precisely the ``shifting" phenomenon that has been observed for the special case of a Lens space example  in \cite{Gukov:2017kmk}.

  \vspace{15pt}
  \noindent{\bf Wilson Operator at an Intermediate Node}\\

  Finally we will consider the case when a Wilson operator associated to an intermediate node in a star graph, say $v_{int} \in V_2$, is added. 
  In this case we have 
  \be
  \int d\vec \xi   \ex^{\langle \vec \xi, \vec \ell\rangle}   \chi_{\vec \nu}(\vec \xi)  
  =  \sum_{\vec \sigma \in P^+} m^{(\vec \nu)}_{\vec\sigma}  \sum_{w'\in W}
  \int d\vec \xi  \ex^{\langle \vec \xi, \vec \ell + w'(\vec\sigma)\rangle}   =   \sum_{\vec \sigma \in P^+} m^{(\vec \nu)}_{\vec\sigma}  \sum_{w'\in W} \delta_{\vec \ell_v, -w'(\vec\sigma)} .
  \ee
  As a result, the relevant sets are now
  \begin{multline}\nonumber
    {S}_{w,w',\hat w;\vec{\underline{b}}} : = \\ \left\{ \vec \ell_0 \,\Bigg\lvert \Big(\vec{\ell}_0,\underbrace{-w_1(\vec \rho),  -w_2(\vec \rho), \cdots, -w_N( \vec\rho)}_{N} , \underbrace{0, \cdots, 0, -w'(\vec\sigma),0,\cdots,0}_{|V|-N-1} \Big) \in \Gamma_{M,G} +w(\underline{\vec b}) \right\},  \\ 
    \end{multline}
    where $ -w'(\vec\sigma)$ is the vector  corresponding to the vertex $v_{int}$, 
    satisfying
    \be
    {S}_{w,w',\hat w;\vec{\underline{b}}} \cong {S}_{w_\ast w,w_\ast w',w_\ast \hat w;\vec{\underline{b}}}, 
    \ee
    with the isomorphism given by  $\vec \ell_0\mapsto w_\ast (\ell_0)$. 
    Putting everything together, we get 
    \begin{equation}\label{eq:ZhatWmid}
      \widehat{Z}^G_{\underline{\vec{b}}}(X_\Gamma,W_{\vec \nu_{v_1}};\tau)=C_\Gamma^G(q) \sum_{\hat{w} \in {W}^{\otimes N}} (-1)^{\ell(\hat{w})} \sum_{w'\in W} \int_{\cal C} d\vec{\xi}
      \,\tilde \chi_{\hat{w},w';\underline{\vec{b}}}(\tau,\vec \xi) 
    \end{equation}
    where  $\tilde \chi_{\hat{w},w';\underline{\vec{b}}}$  vanishes when ${S}_{1,w',\hat w;\vec{\underline{b}}} =\emptyset$,  and is given by 
    \begin{multline}
      \tilde \chi_{\hat{w},w';\underline{\vec{b}}}(\tau,\vec \xi)   = {q^{\delta_{\hat w,w'}} \over \Delta(\vec{\xi})^{N-2}} \\ \times\sum_{ \vec\lambda \in \Lambda }q^{\frac{1}{2D}  |\sqrt{m}(D \vec \lambda +\vec \kappa_{\hat{w},w';\vec{\underline{b}}}  + \varepsilon(N)\vec\rho ) + {m\vec A_{\hat {w},w'} \over \sqrt{m}}|^2}\\\times\sum_{w\in W} (-1)^{Nl(w)} 
      \,  \ex^{\langle w(D \vec \lambda +\vec \kappa_{\hat{w},w';\vec{\underline{b}}}  + \varepsilon(N)\vec\rho ),  \vec{\xi} \rangle}
    \end{multline}
    where $\kappa_{\hat{w},w';\vec{\underline{b}}} \in \Lambda^\vee/D\Lambda$ satisfying
    \be
    {S}_{1,w',\hat w;\vec{\underline{b}}}  = 	 \{\vec \kappa_{\hat{w},w';\vec{\underline{b}}} + D\vec \lambda +\varepsilon(N) \vec\rho :\vec \lambda \in \Lambda \}. 
    \ee
    Moreover, compared to the case without Wilson operators (\ref{dfn:A}),  the data for the homological blocks are modified as 
    \begin{align}\label{eq:AwWilMid}
      \vec A_{\hat {w},w'}  & = \vec A_{\hat {w}} -  { M_{v_0,v_{int}}^{-1} \over M_{v_0,v_0}^{-1}}w'(\vec\sigma) \\
      \delta_{\hat w,w'}& = \delta +{|\vec\sigma|^2\over 2}  \left({(M_{v_0,v_{int}}^{-1})^2\over M_{v_0,v_0}^{-1}} -  M_{v_{int},v_{int}}^{-1} \right) \notag \\&\hspace{3cm}+ \sum_{v\in V_1}\langle w_v(\vec \rho), w'(\vec\sigma)\rangle \left({M_{v_0,v}^{-1}M_{v_0,v_{int}}^{-1}\over M_{v_0,v_0}^{-1}} -  M_{v_{int},v}^{-1} \right)
    \end{align}
    in the notation of Proposition \ref{prop:main_generalSeif}. 
    As a result, we see that Theorem \ref{thm:combining_into_characters} and Theorem \ref{3fiber_sphere} hold analogously for $\tilde \chi_{\hat{w},w';\underline{\vec{b}}}(\tau,\vec \xi) $ in this case, with the modification of the data as given above. Note that the $q$-power $\delta_{\hat w,w'}$ is no longer independent of $\hat w$ and $w'$ due to the last term. Consequently, the  homological block is no longer given by a sum of log VOA characters up to an overall factor. For instance, the statement of Corollary \ref{cor:psudo} gets modified into 
    \be\label{eqn:intermediate_Wilson}
    {1\over C_\Gamma^G(q)} \,{1\over \eta^{{\rm rank}G}} \widehat{Z}^G_{\underline{\vec{b}}}(W_{\vec \nu_{v_1}};\tau)\in 
    \left\{ \sum_{\vec{\mu}}
    q^{\delta_{\vec\mu} }\, a_{\vec{\mu}}  \chi^0_{\vec{\mu}} \middle| a_{\vec{\mu}}\in \ZZ , ~\delta_{\vec\mu}\in \RR \right\}, 
    \ee
    and similarly for Corollary \ref{cor:A1A2}.

    \section{Quivers, Nahm Sums, and Fermionic Characters}
    \label{sec:fermionic}

    The main theme of this paper is the identification of $q$-series invariants of 3-manifolds with VOA characters. In this section, we show how this identification can be used to produce new fermionic forms of characters for logarithmic VOAs:
    \be
    \chi (q) = 
    \sum_{d_i\geq 0} \frac{1}{(q)_{\boldsymbol{d}}}\,q^{ \frac{1}{2} \boldsymbol{d} \cdot C \cdot \boldsymbol{d}+ (\text{terms linear in } \boldsymbol{d})}.
    \label{fermform}
    \ee
    We will focus on $\widehat Z^G$ with $G=SU(2)$, and often drop the superscript for notational convenience. 

    The main idea is to use the enumerative interpretation of $\widehat Z$-invariants and their connection with quiver (COHA) generating series. This perspective on $\widehat Z$-invariants allows one to write the invariants of knot and link complements in the fermionic form (also known as the quiver form or Nahm sum form). Then, it is easy to see that surgery formulae preserve this form, so that $\widehat Z^G_{\underline{\vec{b}}} (X,\tau)$ for closed 3-manifolds can be expressed as a linear combination of fermionic characters \eqref{fermform}. In fact, this is the same linear combination of characters of log VOAs we saw earlier, so that individual terms can be matched and provide (new) fermionic expressions for (combinations of) log VOA characters.

    The fermionic form of VOA characters has a long history and goes as far back as  the original work of Hans Bethe \cite{Bethe1931}. Its modern form is rooted in the relation between 2d CFTs and vertex algebras, on the one hand, and their massive integrable deformations, on the other hand.\footnote{A hallmark of an integrable QFT produced via deformation of a 2d CFT is the factorized scattering, illustrated in Figure~\ref{fig:scattering} and directly related to the structure of the matrix $C$ in \eqref{fermform}.} Underlying this integrable structure are quantum groups, Bethe ansatz equations, Yangian symmetry, and various other symmetries discovered and studied throughout the 1980s, by the Zamolodchikov brothers \cite{Zamolodchikov:1978xm,Zamolodchikov:1989hfa,Zamolodchikov:1989cf}, by the Leningrad school \cite{Smirnov:1990pr,Reshetikhin:1989qg} where the explicit form \eqref{fermform} appeared in connection with the Kostka polynomials \cite{MR869576,MR869577}, and by many other groups.

    \begin{figure}
      \centering
      \includegraphics[width=0.5\textwidth]{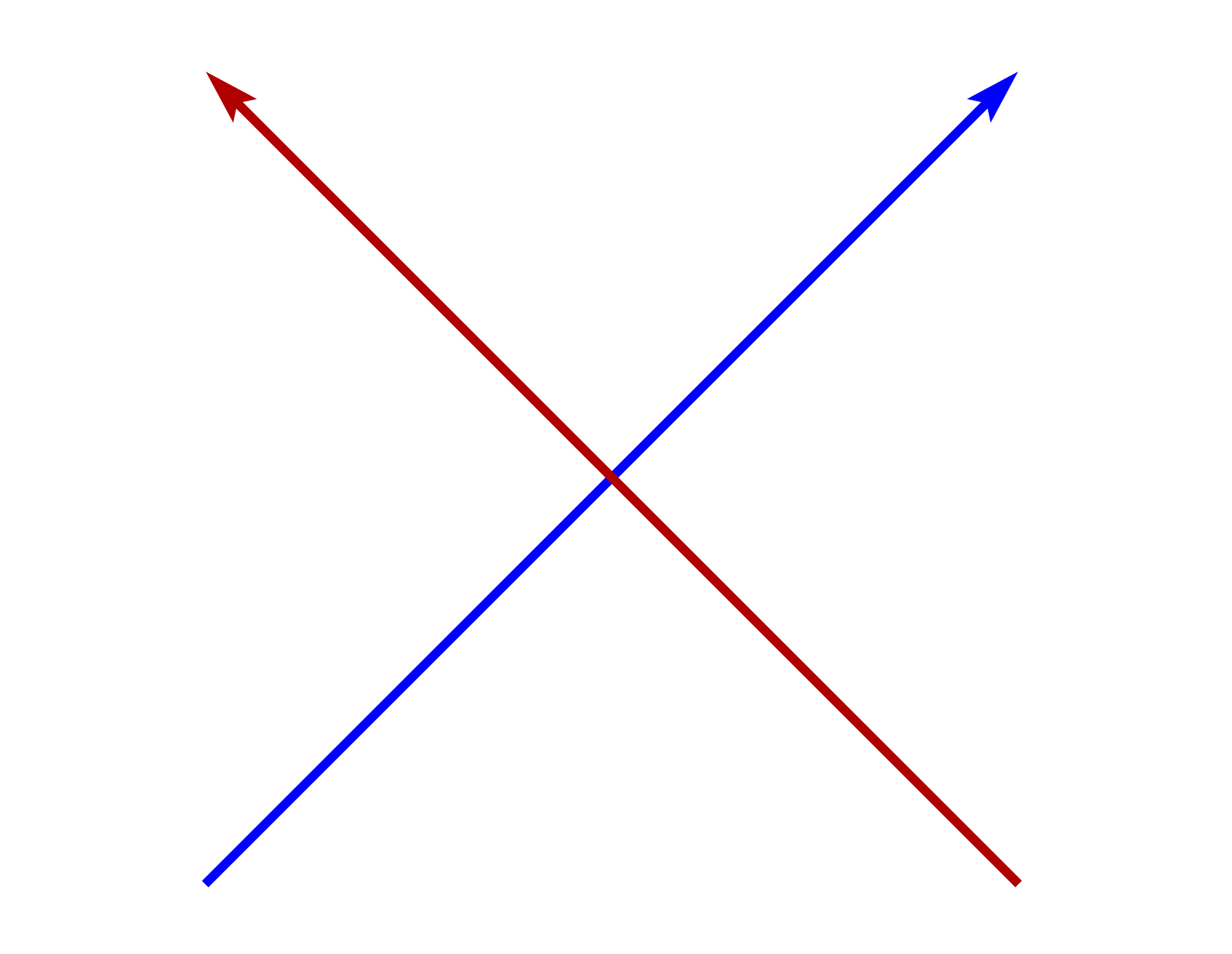}
      \caption{The factorized scattering of quasiparticles in an integrable deformation of 2d CFT. Different types of quasiparticles correspond to rows and columns of the matrix $C$ in \eqref{fermform}.}
      \label{fig:scattering}
    \end{figure}

    The name for the structure of the $q$-series \eqref{fermform} was coined by Barry McCoy and the Stony Brook group in the early 1990s \cite{Kedem:1993ve,Kedem:1992jv,Kedem:1993ze,Dasmahapatra:1993pw}, where various properties of \eqref{fermform} were studied, including the Rogers-Ramanujan type identities as manifestations of the bosonization/fermionization \cite{MR1415414,Berkovich:1997ht}. In all these developments, the matrix $C$ in the $q$-series \eqref{fermform} is one of the main ingredients; in particular, the size of $C$ is equal to the number of quasi-particles in the integrable massive deformation of a CFT.

    On the other hand, the study of finite-size effects and the Bethe ansatz equations mentioned earlier involve dilogarithms and dilogarithm identities \cite{MR1362962,Nahm:1992sx}, which is one of the main conceptual reasons why one should expect these developments from the early 1990s to have direct relation to the 3d-3d correspondence where dilogarithms and dilogarithm identities also play a key role. Building on these developments, in 1995 E.~Frenkel and A.~Szenes \cite{MR1266736} proposed a relation to algebraic K-theory, and in 2004 W. Nahm \cite{Nahm:2004ch} proposed a relation between the fermionic characters \eqref{fermform} and the Bloch group. Sometimes the fermionic expressions \eqref{fermform} are called Nahm sums, though it is not clear whether this term was intended to be the same or different from the original notion introduced by B.~McCoy and others.

    More recently, the fermionic form \eqref{fermform} was also studied in the context of logarithmic CFTs and vertex algebras \cite{Flohr:2006id,Feigin:2007sp,Warnaar:2007mq,Adamovic:2007cs,Feigin:2008sg,Flohr:2013dma}, closely related to the subject of the present paper.

      In a completely different line of developments --- that has origins in the relation between quantum topology and enumerative geometry --- a very interesting correspondence between knots and (equivalence classes of) quivers was proposed about four years ago \cite{Kucharski:2017ogk}. In this correspondence, studied {\it e.g.} in \cite{MR4230391,Panfil:2018sis,Ekholm:2018eee,Ekholm:2020lqy,Kucharski:2020rsp,Jankowski:2021flt}, the combinatorial and algebraic data associated to the quiver encodes wealth of information about the knot. For example, the structure of the quiver matrix $C$ has a close relation to the structure of the (uncolored) superpolynomials and triply-graded knot homology \cite{MR2253002}: the size of $C$ is equal to the number of generators in the reduced superpolynomial / HOMFLY-PT homology, so that the diagonal values of $C$ are given by the homological $t$-degrees of the generators, {\it etc.} One of the main statements in this correspondence is that all HOMFLY-PT polynomials of a knot $K$ coloured by Young tableaux that consists of a single row (column) can be combined in a generating series
    \be
    \sum_{n=0}^{\infty} P_n (a,q) x^n = \sum_{d_1, \ldots, d_m \ge 0} q^{\frac{1}{2} \sum_{i,j} C_{i,j} d_i d_j}
    \prod_{i=1}^m \frac{(-1)^{t_i d_i} q^{l_i d_i} a^{a_i d_i} x^{d_i}}{(q;q)_{d_i}}
    \label{quiverHOMFLY}
    \ee
    where the right-hand side is the motivic DT generating series of the corresponding quiver. Again, the key ingredient is the quiver matrix $C$, accompanied by the vectors $\Bf{t}$, $\Bf{a}$, and $\Bf{l}$.

    One of the conceptual underpinnings of the knot-quiver correspondence, from which it derives its strength and leads to expressions like \eqref{quiverHOMFLY}, has to do with the HOMFLY-PT variable $a$. Namely, instead of working with quantum group invariants of a fixed rank, passing to HOMFLY-PT homology or polynomial invariants allows one to see a much richer structure, associated with enumerative and BPS invariants. In fact, an even richer structure can be uncovered\footnote{The reason for this is that, with both $a$- and $t$-gradings manifest, one can see a rich structure of differentials acting on the space of (refined) BPS states, and this plays an important role in many aspects of the knot-quiver correspondence.} by incorporating another variable $t$, which keeps track of the homological grading, but we will not need it here.

    What will be important to us is that many lessons from the physical interpretation of the HOMFLY-PT homology in terms of BPS states and enumerative invariants extend to $\widehat{Z}$-invariants. In the language of enumerative geometry, this means that the structure of differentials $d_N$ and spectral sequences can be transferred from knot conormal Lagrangian submanifolds in Calabi-Yau geometry to knot complements. The first hints for this were seen \cite{Gukov:2016gkn} already for very simple closed 3-manifolds, such as $X = S^3$. Very recently, it was realized that the knot-quiver correspondence can be extended to $\widehat{Z}$-invariants of knot complements as well \cite{Kucharski:2020rsp,Ekholm:2021irc}.

    In particular, this means that $\widehat{Z}$-invariants of knot complement also can be written in the fermionic form {\it a la} \eqref{fermform} or \eqref{quiverHOMFLY}. Then, it quickly follows that the same is true for closed 3-manifolds obtained via surgery operations. This last step is very simple, so let us start by explaining why the surgery operation preserves the fermionic form. Suppose (in the conventions of \cite{Ekholm:2021irc}):
    \be
    F_K (x,q) = \frac{1}{\eta (\tau)} \sum_{\Bf{d}} q^{\frac{1}{2} \Bf{d}\cdot C_K \cdot \Bf{d} + A_K} \, \frac{(-1)^{\Bf{d} \cdot \Bf{t}} q^{\Bf{d} \cdot \Bf{l}_K} x^{|\Bf{d}|+c} }{(q)_{\Bf{d}}}
    \label{FKquiver}
    \ee
    where $c \in \mathbb{Z}$, and $|\Bf{d}| = \sum_i d_i$. Then, the surgery formula \cite{Gukov:2019mnk} gives:\footnote{Note that in \eqref{eqn:LaplaceT}, the convention of $F_K$ follows \cite{Ekholm:2021irc}. In the convention of \cite{Gukov:2019mnk}, $F_K$ is normalized in a way that its Laplace transformation reads: $\widehat{Z}_0(S^3_{-1/r}(K)) = \mathcal{L}^{(0)}_{-1/r} \left( (x^{\tfrac{1}{2r}} - x^{-\tfrac{1}{2r}}) F_K(x,q) \right)$.}
    \begin{equation}\label{eqn:LaplaceT}
      \widehat{Z}_0(S^3_{-1/r}(K)) = \mathcal{L}^{(0)}_{-1/r} \left( x(x^{\tfrac{1}{2r}} - x^{-\tfrac{1}{2r}})(x^{\tfrac{1}{2}} - x^{-\tfrac{1}{2}}) F_K(q^{-1}x,q) \right),
    \end{equation}
    where $\mathcal{L}^{(0)}_{-1/r}: x^u \mapsto q^{ru^2}$. As a result, we obtain the following formula for $\widehat{Z}_0$:
    \be
    \widehat Z \big( S^3_{-1/r} (K) \big) = \frac{1}{\eta (\tau)} \sum_{\Bf{d} \geq 0} q^{\frac{1}{2} \Bf{d}\cdot C \cdot \Bf{d} + A}
    \Big[ q - q^{2r+2r|\Bf{d}|} - q^{2+2|\Bf{d}|} + q^{3+2r+2(1+r)|\Bf{d}|} \Big]
    \frac{(-1)^{\Bf{d} \cdot \Bf{t}} q^{\Bf{d} \cdot \Bf{l}}}{(q)_{\Bf{d}}}
    \label{Zhatferm}
    \ee
    with
    $$
    C = C_K + 2rE \,, \qquad
    l_i = (l_K)_i + r-2+2rc \,, \qquad
    A = A_K + \frac{r}{4} + \frac{1}{4r} - \frac{3}{2} + rc^2.
    $$
    Here, $E$ is the matrix where every entry is 1. More generally, if
    $$
    F_K (x,q) = \frac{1}{\eta (\tau)} \sum_{\Bf{d}} q^{\frac{1}{2} \Bf{d}\cdot C_K \cdot \Bf{d} + A_K} \, \frac{(-1)^{\Bf{d} \cdot \Bf{t}} q^{\Bf{d} \cdot \Bf{l}_K} x^{\Bf{d}  \cdot \Bf{n} + c } }{(q)_{\Bf{d}}}
    $$
    we can still use \eqref{Zhatferm}, with the more general modified norm, $|\Bf{d}| = \sum_i n_i d_i$, and matrix $E$ whose $ij$ entry is
    \be
    E_{ij} \; = \; n_i n_j.
    \ee

    Various methods for producing the quiver/fermionic form of $F_K (x,q)$ can be found in \cite{Ekholm:2021irc}. They include non-trivial dualities, {\it e.g.} to enumerative geometry or to 3d $\mathcal{N}=2$ theories $T[X]$, as well as more direct diagrammatic techniques, {\it e.g.} based on the $R$-matrix approach \cite{Park:2020edg,Park:2021ufu}. In the rest of this section, we illustrate how, starting with such expressions for knot complements, one can obtain analogous fermionic/quiver forms for $\widehat{Z}$-invariants of closed 3-manifolds, namely the so-called ``small'' surgeries on various knots:
    \begin{equation}
      X \; = \; S^3_{-1/r} (K).
    \end{equation}

    \subsection{New Fermionic Forms Related to {log-\texorpdfstring{${\cal V}_{\bar\Lambda}^0(m)$}{V\_L\^{}0(m)}}}

    A simple infinite family of examples can be obtained by considering surgeries on a torus knot. For concreteness, and to avoid dealing with Spin$^c$ structures, we can consider the so-called ``small surgeries'' on $(s,t)$ torus knots, which all give Brieskorn spheres:
    \begin{equation}
      X \; = \; S^3_{-1/r} (T_{s,t}) \; = \; \Sigma (s,t,rst+1).
    \end{equation}
    In fact, as a warm up, we can start with the simplest knot of all, the right-handed trefoil knot ${\bf 3_1^r} = T_{2,3}$:
    \begin{equation}
      X \; = \; S^3_{-1/r} ({\bf 3_1^r}) \; = \; \Sigma (2,3,6r+1).
    \end{equation}
    For the right-handed trefoil knot, $F_{{\bf 3_1^r}} (x,q)$ can be written in the quiver form \eqref{FKquiver} with \cite{Ekholm:2021irc}:
    $$
    C_{{\bf 3_1^r}} \; = \;
    \begin{pmatrix}
      0 & 1 & 0 & 0 \\
      1 & 0 & 1 & 0 \\
      0 & 1 & 1 & 0 \\
      0 & 0 & 0 & 1
    \end{pmatrix}
    \qquad
    \begin{array}{rcl}
      A & = & 0, \\
      \boldsymbol{t} & = & (0,0,1,1), \\
      \boldsymbol{l}_K & = & (1,2,\frac{3}{2},\frac{3}{2})
    \end{array}
    $$
    Therefore, after the surgery we get a linear combination of fermionic forms \eqref{Zhatferm} with
    \be
    C \; = \;
    \begin{pmatrix}
      2r & 2r+1 & 2r & 2r \\
      2r+1 & 2r & 2r+1 & 2r \\
      2r & 2r+1 & 2r+1 & 2r \\
      2r & 2r & 2r & 2r+1
    \end{pmatrix}
    \label{Crtrefoil}
    \ee
    On the other hand, we know that the same $q$-series invariants can be written as linear combinations of false theta functions and characters of the $(1,m)$ singlet algebra {log-${\cal V}^0_{\bar \Lambda_{A_1}}(m)$} with $m = 36r + 6$ and central charge $c = 1 - \frac{(36r+5)^2}{6r+1}$  \cite{Cheng:2018vpl,Gukov:2019mnk}:
    $$
    \widehat{Z}_0 \big( S^3_{-1/r} (\mathbf{3}_1^r) \big) \; = \;
    \frac{1}{\eta(\tau)}
    \left( 
      \Psi_{36r+6,6r-5} 
      - \Psi_{36r+6,6r+7}
      - \Psi_{36r+6,30r-1}
      + \Psi_{36r+6,30r+11}
    \right)
    $$
    Comparing this expression with \eqref{Zhatferm} and rearranging the sum, we obtain a new fermionic form for the {log-${\cal V}^0_{\bar \Lambda_{A_1}}(m)$} (virtual) character \eqref{A1_outofrange} with $4 \times 4$ matrix \eqref{Crtrefoil}.

    \subsection{Log VOA Characters and Torus Knots}

    Let us consider $(-1/r)$-surgeries on left-handed torus knots $T_{2,2p+1}$. The resultant manifold is a Brieskorn sphere $\Sigma(p_1,p_2,p_3)$ with
    \begin{equation}
      (p_1,p_2,p_3) = (2,2p+1,2(2p+1)r-1).
    \end{equation}
    Its $\widehat{Z}$-invariant  is given by \cite{Gukov:2017kmk,Chung:2018rea}:
    \begin{equation}
      \begin{gathered}
        \widehat{Z}_0\left(S^3_{-1/r}(T_{2,2p+1})\right) \sim \frac{1}{\eta(\tau)} \sum_{r \in S} {\rm sgn}(r) \Psi_{m,r} \\
        m = 2(2p+1)(2(2p+1)r-1), \quad S = \left\{ m + m \sum_{i=1}^3 \frac{\epsilon_i}{p_i} \; | \; \epsilon_i = \pm 1 \right\}, \\
        {\rm sgn}(r) = -\epsilon_1 \epsilon_2 \epsilon_3,
      \end{gathered}
    \end{equation}
    where $\sim$ means ``up to an overall rational power of $q$ and a multiplicative constant.'' 
    The relation between $ \widehat{Z}_0\left(S^3_{-1/r}(T_{2,2p+1})\right) $ and {log-${\cal V}_{\bar \Lambda_{A_1}}^0(m)$} characters is covered in Theorem \ref{3fiber_sphere}. 
    These expressions were anticipated by Hikami \cite{Hikami2004}, in a way that does not explain topological invariance, {\it i.e.} invariance under the Kirby moves.

    For $T_{2,2p+1}$, we have the following $(2p+2)\times(2p+2)$ matrix $C_K$ and auxiliary vectors $\mathbf{t}$ and $\Bf{l}$:
    \begin{equation}
      C_K = \begin{pmatrix} \mathbf{I}_{2p} - \mathbf{D} & \mathbf{-1}^T & \mathbf{0}^T \\
        \mathbf{-1} & 1 & 0 \\
        \mathbf{0} & 0 & 0 
        \end{pmatrix}, \quad \begin{cases} 
        \Bf{t} = {\rm diag}(C_K) \\
        \Bf{l}_K = 1 + \Bf{a} - \frac{1}{2} {\rm diag}(C_K)
      \end{cases}
      \end{equation}
      where $\mathbf{a} = (1,0,1,0,\cdots,1,0,0,0)$. The matrix $\mathbf{I}_{2p}$ is a $2p\times 2p$ identity matrix, and $\mathbf{D}$ is given by $\mathbf{D}_{ij} = {\rm min}(i,j)-1$. Lastly, $\mathbf{-1}$ and $\mathbf{0}$ are $2p$-dimensional row vectors with all entries given by $-1$ resp. $0$.

      To make use of \eqref{Zhatferm}, we must compute $C = C_K + 2r E$, where the matrix $E$ is given by $E_{ij} = n_i n_j$. For left-handed torus knots, the vector $\Bf{n}$ is given by:
      \begin{equation}
        \Bf{n} = (1,1,3,3,\cdots,2p-1,2p-1,1,1).
      \end{equation}
      Let us consider the examples of $p = 1,2$. For $p=1$ (left-handed trefoil), we will need following data to compute the $\widehat{Z}$-invariants via \eqref{Zhatferm}:
      \begin{equation}
        C = \begin{pmatrix} 1 + 2r & 2r & -1+2r & 2r \\ 
          2r & 2r & -1+2r & 2r \\
          -1+2r & -1+2r & 1+2r & 2r \\
          2r & 2r & 2r & 2r
          \end{pmatrix}, \quad \begin{cases} \Bf{t} = (1,0,1,0) \\
          \Bf{l} = (r-\frac{1}{2},r-1,r-\frac{3}{2},r-1)
        \end{cases}.
        \end{equation}

        For $p=2$, left-handed $T_{2,5}$ knot, the data for the fermionic form of the $\widehat{Z}$-invariants are given by:
        \begin{equation}
          \begin{gathered}
            C = \begin{pmatrix} 1+2r & 2r & 6r & 6r & -1+2r & 2r \\
              2r & 2r & -1+6r & -1+6r & -1+2r & 2r \\
              6r & -1+6r & -1+18r & -2+18r & -1+6r & 6r \\
              6r & -1+6r & -2+18r & -2+18r & -1+6r & 6r \\
              -1+2r & -1+2r & -1+6r & -1+6r & 1+2r & 2r \\
              2r & 2r & 6r & 6r & 2r & 2r
            \end{pmatrix}, \\
            \Bf{t} = (1,0,-1,-2,1,0), \quad \Bf{l} = \left( r-\tfrac{1}{2}, r-1, r+\tfrac{1}{2}, r, r-\tfrac{3}{2}, r+1 \right).
            \end{gathered}
          \end{equation}
          In these infinite families of examples we obtain new fermionic forms for linear combinations of {log-${\cal V}^0_{\bar \Lambda_{A_1}}(m)$} characters, suggesting that an actual log VOA associated to $X = S^3_{-1/r}(T_{2,2p+1})$ is likely to be an extension of {log-${\cal V}^0_{\bar \Lambda_{A_1}}(m)$}. While it would be interesting to study this further, we should emphasize that there are many infinite families of 3-manifolds for which we can write the fermionic forms but can not offer any connection to {log-${\cal V}^0_{\bar \Lambda_{A_1}}(m)$} or other familiar log VOAs. We conclude this section by writing explicitly several such families.

          \subsection{Fermionic characters from \texorpdfstring{${\bf 4_1}$}{4\_1}, \texorpdfstring{${\bf 5_2}$}{5\_2}, and \texorpdfstring{${\bf 6_2}$}{6\_2} knots}

          Next, we turn to other classes of knots, twist knots ${\bf 4_1}$, ${\bf 5_2}$, and ${\bf 6_2}$. Small surgeries on these knots produce infinitely many distinct hyperbolic manifolds. Another advantage of this family of examples is that all manifolds that result from small surgeries have $H_1 (X) = 0$, so that there is a unique Spin$^c$ structure, and we do not need to worry about the labels of $\widehat{Z}$-invariants.

          For the ${\bf 4_1}$ knot, its $F_K$ invariant is given by (with $a = q^2$ specialization):
          \begin{equation}
            \begin{gathered}
              F_{\bf 4_1}(x,q) = \sum_{d_1, \cdots, d_6 \geq 0} \left(-q^{\frac{1}{2}}\right)^{\Bf{d}.C_K.\Bf{d}} \prod_{i=1}^6\frac{x_i^{d_i}}{(q)_{d_i}} \\
              x_1 = x_2 = x_3 = q x, \quad x_4 = x_5 = x_6 = q^{\frac{3}{2}}x.
            \end{gathered}
          \end{equation}
          For ${\bf 4_1}$ knot, there are some equivalent choices of the matrix $C_K$ (see \S 4.2.1 of \cite{Ekholm:2021irc}).
          To be explicit, we will make the following choice:
          \begin{equation}
            C_K = \begin{pmatrix} 0 & 0 & 0 & 0 & 0 & 0 \\
              0 & 0 & -1 & -1 & 0 & 0 \\
              0 & -1 & 0 & 0 & 1 & 0 \\
              0 & -1 & 0 & 1 & 1 & 0 \\
              0 & 0 & 1 & 1 & 1 & 0 \\
              0 & 0 & 0 & 0 & 0 & 1
            \end{pmatrix}.
            \end{equation}
            With \eqref{eqn:LaplaceT}, we obtain the following data $(C, \Bf{t}, \Bf{l})$ to compute the $\widehat{Z}$-invariants. 
            \begin{equation}
              \begin{gathered}
                C = \begin{pmatrix} 2r & 2r & 2r & 2r & 2r & 2r \\
                  2r & 2r & 2r-1 & 2r-1 & 2r & 2r \\
                  2r & 2r-1 & 2r & 2r & 2r+1 & 2r \\
                  2r & 2r-1 & 2r & 2r+1 & 2r+1 & 2r \\
                  2r & 2r & 2r+1 & 2r+1 & 2r+1 & 2r \\
                  2r & 2r & 2r & 2r & 2r & 2r+1
                \end{pmatrix} \\
                \Bf{t} = (0,0,0,1,1,1) \\
                \Bf{l} = (r-1,r-1,r-1,r-\tfrac{1}{2},r-\tfrac{1}{2},r-\tfrac{1}{2}).
                \end{gathered}
              \end{equation}
              Similarly, for the ${\bf 5_2}$ knot,
              the matrix $C_K$ can be written as follows:
              \begin{equation}
                C_K = \begin{pmatrix} 
                  0 & 0 & 0 & -1 & 0 & 0 & -1 & 0\\
                  0 & 1 & 0 & 0 & 1 & 1 & -1 & 0\\
                  0 & 0 & 1 & 0 & 1 & 1 & -1 & 0 \\
                  -1 & 0 & 0 & 0 & 1 & 1 & -1 & 0\\
                  0 & 1 & 1 & 1 & 2 & 2 & -1 & 0\\
                  0 & 1 & 1 & 1 & 2 & 3 & -1 & 0\\
                  -1 & -1 & -1 & -1 & -1 & -1 & 1 & 0 \\
                  0 & 0 & 0 & 0 & 0 & 0 & 0 & 0
                \end{pmatrix}.
              \end{equation}
              In this notation, we find:
              \begin{equation}
                \begin{gathered}
                  \Bf{t} = (0,1,1,0,0,1,1,0), \\
                  \Bf{l}_K = (1,\frac{3}{2},\frac{1}{2},2,1,\frac{5}{2},\frac{1}{2},1).
                \end{gathered}
              \end{equation}
              As a result, we obtain the following data $(C,\Bf{t},\Bf{l})$:
              \begin{equation}
                \begin{gathered}
                  C = \begin{pmatrix}
                    2r & 2r & 2r & 2r-1 & 2r & 2r & 2r-1 & 2r \\
                    2r & 2r+1 & 2r & 2r & 2r +1 & 2r +1 & 2r -1 & 2r \\
                    2r & 2r & 2r+1 & 2r & 2r+1 & 2r+1 & 2r-1 & 2r \\
                    2r-1 & 2r & 2r & 2r & 2r+1 & 2r+1 & 2r-1 & 2r \\
                    2r & 2r+1 & 2r+1 & 2r+1 & 2r+2 & 2r+2 & 2r-1 & 2r \\
                    2r & 2r+1 & 2r+1 & 2r+1 & 2r+2 & 2r+3 & 2r-1 & 2r \\
                    2r-1 & 2r-1 & 2r-1 & 2r-1 & 2r-1 & 2r-1 & 2r+1 & 2r \\
                    2r & 2r & 2r & 2r & 2r & 2r & 2r & 2r
                  \end{pmatrix} \\
                  \Bf{t} = (0,1,1,0,0,1,1,0) \\
                  \Bf{l} = (r-1,r-\frac{1}{2}, r-\frac{3}{2}, r, r-1, r+\frac{1}{2},r-\frac{3}{2},r-1).
                \end{gathered}
              \end{equation}
              For the ${\bf 6_2}$ knot, its $F_K$ invariant is given by:
              \begin{equation}
                F_{{\bf 6_2}}(x,q) = -q^{-1}x^{-2} \sum_{\Bf{d}} (-q^{\frac{1}{2}})^{\Bf{d} \cdot C_K \cdot \Bf{d}} \frac{x^{\Bf{n} \cdot \Bf{d}} q^{\Bf{l}_K \cdot \Bf{d}}}{(q)_{\Bf{d}}},
              \end{equation}
              where the matrix $C_K$ and the auxiliary vector $\Bf{l}_K$ are given by:
              \begin{small}
                $$
                \begin{gathered}
                  C_K =
                  \begin{pmatrix}
                    1 & 0 & 1 & 0 & 1 & 0 & 1 & 0 & 1 & 0 & 1 & 0 & 1 & 0 & 0 & 0 \\
                    0 & 0 & 1 & 0 & 1 & 0 & 1 & 0 & 1 & 0 & 1 & 0 & 1 & 0 & 0 & 0 \\
                    1 & 1 & -1 & -1 & -1 & -1 & -1 & 0 & 0 & 0 & -1 & 0 & -1 & -1 & 0 & -1 \\
                    0 & 0 & -1 & 0 & -1 & -1 & -2 & -1 & 0 & 0 & -1 & 0 & -1 & -1 & 0 & -1 \\
                    1 & 1 & -1 & -1 & 2 & 1 & 0 & -1 & 0 & -1 & 1 & 0 & 0 & 0 & 0 & 0 \\
                    0 & 0 & -1 & -1 & 1 & 1 & 0 & -1 & 0 & -1 & 1 & 0 & 0 & 0 & 0 & 0 \\
                    1 & 1 & -1 & -2 & 0 & 0 & 2 & 1 & 0 & 0 & 0 & -1 & 1 & 0 & 0 & 0 \\
                    0 & 0 & 0 & -1 & -1 & -1 & 1 & 1 & 0 & 0 & 0 & -1 & 1 & 0 & 0 & 0 \\
                    1 & 1 & 0 & 0 & 0 & 0 & 0 & 0 & 1 & 0 & 0 & 0 & 0 & 0 & 0 & 0 \\
                    0 & 0 & 0 & 0 & -1 & -1 & 0 & 0 & 0 & 0 & 0 & 0 & 0 & 0 & 0 & 0 \\
                    1 & 1 & -1 & -1 & 1 & 1 & 0 & 0 & 0 & 0 & 1 & 0 & 0 & 0 & 0 & 0 \\
                    0 & 0 & 0 & 0 & 0 & 0 & -1 & -1 & 0 & 0 & 0 & 0 & 0 & 0 & 0 & 0 \\
                    1 & 1 & -1 & -1 & 0 & 0 & 1 & 1 & 0 & 0 & 0 & 0 & 1 & 0 & 0 & 0 \\
                    0 & 0 & -1 & -1 & 0 & 0 & 0 & 0 & 0 & 0 & 0 & 0 & 0 & 0 & 0 & 0 \\
                    0 & 0 & 0 & 0 & 0 & 0 & 0 & 0 & 0 & 0 & 0 & 0 & 0 & 0 & 1 & 0 \\
                    0 & 0 & -1 & -1 & 0 & 0 & 0 & 0 & 0 & 0 & 0 & 0 & 0 & 0 & 0 & 0
                  \end{pmatrix} \\
                  \Bf{l}_K = (\tfrac{1}{2}, 0, \tfrac{1}{2}, 0, 0, -\tfrac{1}{2}, 0, -\tfrac{1}{2}, \tfrac{1}{2}, 0, \tfrac{1}{2}, 0, \tfrac{1}{2}, 0, \tfrac{1}{2}, 0).
                \end{gathered}
                $$
              \end{small}
              \noindent
              After a $(-1/r)$-surgery along the ${\bf 6_2}$ knot, the relevant data $(C,\Bf{t},\Bf{l})$ to compute $\widehat{Z}$ can be explicitly given as:
              $$
              \begin{gathered}
                C = 
                \scalebox{0.52}[0.6]{$
                  \begin{pmatrix}
                    2r+1 & 2r & 4r+1 & 4r & 2r+1 & 2r & 2r+1 & 2r & 2r+1 & 2r & 2r+1 & 2r & 2r+1 & 2r & 2r & 2r \\
                    2r & 2r & 4r+1 & 4r & 2r+1 & 2r & 2r+1 & 2r & 2r+1 & 2r & 2r+1 & 2r & 2r+1 & 2r & 2r & 2r \\
                    4r+1 & 4r+1 & 8r-1 & 8r-1 & 4r-1 & 4r-1 & 4r-1 & 4r & 4r & 4r & 4r-1 & 4r & 4r-1 & 4r-1 & 4r & 4r-1 \\
                    4r & 4r & 8r-1 & 8r & 4r-1 & 4r-1 & 4r-2 & 4r-1 & 4r & 4r & 4r-1 & 4r & 4r-1 & 4r-1 & 4r & 4r-1 \\
                    2r+1 & 2r+1 & 4r-1 & 4r-1 & 2r+2 & 2r+1 & 2r & 2r-1 & 2r & 2r-1 & 2r+1 & 2r & 2r & 2r & 2r & 2r \\
                    2 r& 2 r& 4 r& -1 + 4 r& -1 + 2 r& -1 + 2 r& 1 + 2 r& 1 + 2 r& 2 r& -1 + 2 r& 1 + 2 r& 2 r& 2 r& 2 r& 2 r& 2 r \\
                    1 + 2 r& 1 + 2 r& -1 + 4 r& -2 + 4 r& 2 r& 2 r& 2 + 2 r& 1 + 2 r& 2 r& 2 r& 2 r& -1 + 2 r& 1 + 2 r& 2 r& 2 r& 2 r\\
                    2 r& 2 r& 4 r& -1 + 4 r& -1 + 2 r& -1 + 2 r& 1 + 2 r& 1 + 2 r& 2 r& 2 r& 2 r& -1 + 2 r& 1 + 2 r& 2 r& 2 r& 2 r\\
                    1 + 2 r& 1 + 2 r& 4 r& 4 r& 2 r& 2 r& 2 r& 2 r& 1 + 2 r& 2 r& 2 r& 2 r& 2 r& 2 r& 2 r& 2 r\\
                    2 r& 2 r& 4 r& 4 r& -1 + 2 r& -1 + 2 r& 2 r& 2 r& 2 r& 2 r& 2 r& 2 r& 2 r& 2 r& 2 r& 2 r\\
                    1 + 2 r& 1 + 2 r& -1 + 4 r& -1 + 4 r& 1 + 2 r& 1 + 2 r& 2 r& 2 r& 2 r& 2 r& 1 + 2 r& 2 r& 2 r& 2 r& 2 r& 2 r\\
                    2 r& 2 r& 4 r& 4 r& 2 r& 2 r& -1 + 2 r& -1 + 2 r& 2 r& 2 r& 2 r& 2 r& 2 r& 2 r& 2 r& 2 r\\
                    1 + 2 r& 1 + 2 r& -1 + 4 r& -1 + 4 r& 2 r& 2 r& 1 + 2 r& 1 + 2 r& 2 r& 2 r& 1 + 2 r& 2 r& 1 + 2 r& 2 r& 2 r& 2 r\\
                    2 r& 2 r& -1 + 4 r& -1 + 4 r& 2 r& 2 r& 2 r& 2 r& 2 r& 2 r& 2 r& 2 r& 2 r& 2 r& 2 r& 2 r\\
                    2 r& 2 r& 4 r& 4 r& 2 r& 2 r& 2 r& 2 r& 2 r& 2 r& 2 r& 2 r& 2 r& 2 r& 1 + 2 r& 2 r\\
                    2 r& 2 r& -1 + 4 r& -1 + 4 r& 2 r& 2 r& 2 r& 2 r& 2 r& 2 r& 2 r& 2 r& 2 r& 2 r& 2 r& 2 r 
                \end{pmatrix}$} \\
                \Bf{t} = (1,0,1,0,0,1,0,1,1,0,1,0,1,0,1,0) \\
                \Bf{l} = \scalebox{0.75}{$(r-\frac{3}{2}, r-2, r-\frac{3}{2}, r-2, r-2, r-\frac{5}{2}, r-2, r-\frac{5}{2}, r-\frac{3}{2}, r-2, r-\frac{3}{2}, r-2, r-\frac{3}{2}, r-2, r-\frac{3}{2}, r-2)$}.
              \end{gathered}
              $$
              As we already mentioned earlier, it would interesting to understand which logarithmic VOAs can have (linear combinations of) characters that match these fermionic forms.

              In all of the above examples, the VOAs have the effective central charge
              \begin{equation}
                c_{\text{eff}} \; = \; 1
                \label{ceffone}
              \end{equation}
              which can be obtained directly from the growth of the integer coefficients in the $q$-expansion, {\it cf.} \cite{Ekholm:2021irc}. This result is somewhat surprising since generic vertex algebras, logarithmic or non-logarithmic, can have (and do have!) other values of $c_{\text{eff}}$. When the fermionic forms are available, $c_{\text{eff}}$ can be obtained from the Thermodynamic Bethe Ansatz controlled by matrix $C$, and comes out to be a sum of special values of dilogarithms, evaluated at the roots of the Bethe equations \cite{Zamolodchikov:1989cf,Nahm:1992sx,MR1362962}.

              Therefore, a generic $C$-matrix would produce values of $c_{\text{eff}}$ very different from~1, and it is not completely clear at present why VOA charactes that come from $\widehat{Z}$-invariants \eqref{ZhatVOA} have this property. There must be something special about fermionic forms that come from $\widehat{Z}$-invariants. It seems that in our examples \eqref{ceffone} has to do with the rank of $G=SU(2)$. However, as far as we know, the analogue of \eqref{ceffone} has not been tested for groups of higher rank, nor do we know if it holds for all closed 3-manifolds, even when $G=SU(2)$. All these questions are excellent subject for future work.

              \section{Examples}
\label{sec:examples}

In this section we explicitly demonstrate the relations found in this paper between the $\widehat Z$-invariant, its integrand $\tilde \chi$, and log VOA characters through concrete examples. 
This section is divided in three subsections, reflecting the three cases analyzed in \S \ref{sec:homoblocks}.

In \S \ref{subsec:3fib-ex}, we focus on Seifert manifolds with three exceptional fibers.
We provide a general analysis for integral homology spheres (which we sometimes refer to as ``spherical manifolds" or as ``Brieskorn spheres'') and four explicit examples:  a spherical manifold ($D=|\det(M)|=1$), a pseudo-spherical manifold ($1=D<|\det(M)|$) and two non-spherical cases with $1<D<|\det(M)|$ and $1<D=|\det(M)|$ respectively.
These examples will serve to demonstrate numerically the results of Theorems \ref{thm:combining_into_characters}, \ref{3fiber_sphere}, and thereby Corollaries  \ref{cor:psudo} and \ref{cor:A1A2}.

In \S\ref{subsec:4fib-ex} we focus on Seifert manifolds with four exceptional fibers.
We provide two examples to demonstrate Theorem \ref{thm:4fibchimatching} and thus Corollary \ref{cor:4fib}.
The first example presented in \S \ref{subsec:4fib-ex} is that of a spherical manifold.
With the second example, we also demonstrate that results similar to Theorem \ref{thm:4fibchimatching} and Corollary \ref{cor:4fib} apply to select cases of pseudo-spherical manifolds.

Lastly, in \S \ref{subsec:line-ope-ex} we provide examples of Seifert manifolds with three exceptional fibers with Wilson line operator insertions. As mentioned in \S\ref{subsec:lineop}, for Seifert manifolds the line operators can be associated to the central, intermediate and end nodes of the plumbing graph. 
  Examples with operator insertions at end nodes for spherical, pseudo-spherical and non spherical Seifert manifolds will serve to demonstrate Proposition \ref{prop:Wil-End}. 
Proposition \ref{prop:Wil-Center} is then demonstrated on a spherical and a pseudo-spherical example, by insertion of a Wilson operator at the central node.
We will conclude this section by providing examples of insertions of Wilson line operators at intermediate nodes of a spherical and a non-spherical Seifert manifold and relating the $\widehat{Z}$-invariant to linear combination of log VOA  characters, as in  \eqref{eqn:intermediate_Wilson}.

In this section we will make explicit reference to the Weyl groups of $SU(2)$ and $SU(3)$.
The Weyl group of $SU(2)$ is isomorphic to $\mathbb{Z}_2$.
We will write the elements of Weyl length zero and one respectively as $\id$ and $-\id$.
Elements of the Weyl group of $SU(3)$, isomorphic to $D_3$, will be written in terms of group elements $a, b$, corresponding to reflections with respect to planes orthogonal to the two simple roots $\vec\alpha_1,\ \vec\alpha_2$.
The $SU(3)$ Weyl group is then given by $\{\id, a, b, ab, ba , aba=bab\}$.
Furthermore, we will always use the weight basis when we explicitly write the vectors as tuples. 
For instance,  we write $\vec s = \sum_i s_i \vec{\omega}_i =: (s_1,s_2) $.

All $q$-series and topological quantities were computed using ``pySeifert'': a computational toolkit written using Sage \cite{sagemath,pySeifert}.

\subsection{Seifert Manifolds with Three Exceptional Fibers}\label{subsec:3fib-ex}

In this section we give examples of Seifert manifolds with three exceptional fibers.
After presenting general results for spherical manifolds, we compute $\widehat{Z}$-invariant integrands, $\tilde{\chi}_{\hat{w};\vec{\underline{b}}}$, for spherical, pseudo-spherical and non-spherical manifolds. 
We then verify their relation to the log VOA characters as described in Theorems  \ref{thm:combining_into_characters}, \ref{3fiber_sphere} and Corollaries  \ref{cor:psudo} and \ref{cor:A1A2}.

\vspace{15pt}
\noindent
{\bf General Spherical Examples}

The simplest class of examples we can consider is that of Brieskorn spheres  $X_\Gamma = M(-1;\left\{{q_i}/{p_i}\right\}_{i=1,2,3}) =: \Sigma(p_1,p_2,p_3)$, where we have
\be\label{eqn:Brieskornpq}
\sum_i {q_i\over p_i} = 1-{1\over p_1 p_2 p_3}. 
\ee
In these cases, we have
\begin{equation}
m=-{1\over \textfrak{e}}=p_1p_2p_3,~~ M^{-1}_{v_0,v_i}=\frac{m}{p_i}{\rm sgn}(q_i) ~{\rm for }~ v_i\in V_1.
\end{equation}
As explained in \S \ref{sec:homoblocks}, spherical examples have a single $\vec{\underline{b}}_0$, whose expression can be found in equation \eqref{hom_sphere_b}.
From \eqref{eqn:which_character} and \eqref{eqn:s}, we obtain 
\begin{equation}
\vec{s}=\sum_{i=1}^{3}s_i\vec{\omega}_i = \sum_{k=1}^{3}\frac{p_1p_2p_3}{p_k}{w}_{k}(\vec{\rho})\,  {\rm sgn}(q_k)
\end{equation}
for a given choice of  $\hat w\in W^{\otimes 3}$. 
\begin{table}
\begin{subtable}[h]{\textwidth}
\small
\begin{tabular}{ll}
$(s_{1},s_{2})$ for $\hat w$ with $(-1)^{\ell(\hat{w})}=1$& \\\toprule
$\left(p_{1} p_{2} - p_{1} p_{3} - p_{2} p_{3},\,-2 \, p_{1} p_{2} - p_{1} p_{3} - p_{2} p_{3}\right)$ & $\left(-2 \, p_{1} p_{2} - p_{1} p_{3} - p_{2} p_{3},\,p_{1} p_{2} - p_{1} p_{3} - p_{2} p_{3}\right)$ \\
$\left(p_{1} p_{2} - p_{1} p_{3} - p_{2} p_{3},\,p_{1} p_{2} - p_{1} p_{3} - p_{2} p_{3}\right)$ & $\left(-p_{1} p_{2} + p_{1} p_{3} - p_{2} p_{3},\,-p_{1} p_{2} - 2 \, p_{1} p_{3} - p_{2} p_{3}\right)$ \\
$\left(-p_{1} p_{2} + p_{1} p_{3} - p_{2} p_{3},\,2 \, p_{1} p_{2} - 2 \, p_{1} p_{3} - p_{2} p_{3}\right)$ & $\left(2 \, p_{1} p_{2} + p_{1} p_{3} - p_{2} p_{3},\,-p_{1} p_{2} - 2 \, p_{1} p_{3} - p_{2} p_{3}\right)$ \\
$\left(-p_{1} p_{2} - 2 \, p_{1} p_{3} - p_{2} p_{3},\,-p_{1} p_{2} + p_{1} p_{3} - p_{2} p_{3}\right)$ & $\left(-p_{1} p_{2} - 2 \, p_{1} p_{3} - p_{2} p_{3},\,2 \, p_{1} p_{2} + p_{1} p_{3} - p_{2} p_{3}\right)$ \\
$\left(2 \, p_{1} p_{2} - 2 \, p_{1} p_{3} - p_{2} p_{3},\,-p_{1} p_{2} + p_{1} p_{3} - p_{2} p_{3}\right)$ & $\left(p_{1} p_{2} - p_{1} p_{3} - p_{2} p_{3},\,-2 \, p_{1} p_{2} + 2 \, p_{1} p_{3} - p_{2} p_{3}\right)$ \\
$\left(-2 \, p_{1} p_{2} - p_{1} p_{3} - p_{2} p_{3},\,p_{1} p_{2} + 2 \, p_{1} p_{3} - p_{2} p_{3}\right)$ & $\left(p_{1} p_{2} - p_{1} p_{3} - p_{2} p_{3},\,p_{1} p_{2} + 2 \, p_{1} p_{3} - p_{2} p_{3}\right)$ \\
$\left(p_{1} p_{2} + 2 \, p_{1} p_{3} - p_{2} p_{3},\,-2 \, p_{1} p_{2} - p_{1} p_{3} - p_{2} p_{3}\right)$ & $\left(-2 \, p_{1} p_{2} + 2 \, p_{1} p_{3} - p_{2} p_{3},\,p_{1} p_{2} - p_{1} p_{3} - p_{2} p_{3}\right)$ \\
$\left(p_{1} p_{2} + 2 \, p_{1} p_{3} - p_{2} p_{3},\,p_{1} p_{2} - p_{1} p_{3} - p_{2} p_{3}\right)$ & $\left(-p_{1} p_{2} + p_{1} p_{3} - p_{2} p_{3},\,-p_{1} p_{2} + p_{1} p_{3} - p_{2} p_{3}\right)$ \\
$\left(-p_{1} p_{2} + p_{1} p_{3} - p_{2} p_{3},\,2 \, p_{1} p_{2} + p_{1} p_{3} - p_{2} p_{3}\right)$ & $\left(2 \, p_{1} p_{2} + p_{1} p_{3} - p_{2} p_{3},\,-p_{1} p_{2} + p_{1} p_{3} - p_{2} p_{3}\right)$ \\
\end{tabular}
\vspace{0.2cm}
\end{subtable}
\hfill
\begin{subtable}[H!]{\textwidth}
\small
\begin{tabular}{ll}
$(s_{1},s_{2})$ for $\hat w$ with $(-1)^{\ell(\hat{w})}=-1$& \\\toprule
$\left(-p_{1} p_{2} - p_{1} p_{3} - p_{2} p_{3},\,-p_{1} p_{2} - p_{1} p_{3} - p_{2} p_{3}\right)$ & $\left(-p_{1} p_{2} - p_{1} p_{3} - p_{2} p_{3},\,2 \, p_{1} p_{2} - p_{1} p_{3} - p_{2} p_{3}\right)$ \\
$\left(2 \, p_{1} p_{2} - p_{1} p_{3} - p_{2} p_{3},\,-p_{1} p_{2} - p_{1} p_{3} - p_{2} p_{3}\right)$ & $\left(p_{1} p_{2} + p_{1} p_{3} - p_{2} p_{3},\,-2 \, p_{1} p_{2} - 2 \, p_{1} p_{3} - p_{2} p_{3}\right)$ \\
$\left(-2 \, p_{1} p_{2} + p_{1} p_{3} - p_{2} p_{3},\,p_{1} p_{2} - 2 \, p_{1} p_{3} - p_{2} p_{3}\right)$ & $\left(p_{1} p_{2} + p_{1} p_{3} - p_{2} p_{3},\,p_{1} p_{2} - 2 \, p_{1} p_{3} - p_{2} p_{3}\right)$ \\
$\left(p_{1} p_{2} - 2 \, p_{1} p_{3} - p_{2} p_{3},\,-2 \, p_{1} p_{2} + p_{1} p_{3} - p_{2} p_{3}\right)$ & $\left(-2 \, p_{1} p_{2} - 2 \, p_{1} p_{3} - p_{2} p_{3},\,p_{1} p_{2} + p_{1} p_{3} - p_{2} p_{3}\right)$ \\
$\left(p_{1} p_{2} - 2 \, p_{1} p_{3} - p_{2} p_{3},\,p_{1} p_{2} + p_{1} p_{3} - p_{2} p_{3}\right)$ & $\left(-p_{1} p_{2} - p_{1} p_{3} - p_{2} p_{3},\,-p_{1} p_{2} + 2 \, p_{1} p_{3} - p_{2} p_{3}\right)$ \\
$\left(-p_{1} p_{2} - p_{1} p_{3} - p_{2} p_{3},\,2 \, p_{1} p_{2} + 2 \, p_{1} p_{3} - p_{2} p_{3}\right)$ & $\left(2 \, p_{1} p_{2} - p_{1} p_{3} - p_{2} p_{3},\,-p_{1} p_{2} + 2 \, p_{1} p_{3} - p_{2} p_{3}\right)$ \\
$\left(-p_{1} p_{2} + 2 \, p_{1} p_{3} - p_{2} p_{3},\,-p_{1} p_{2} - p_{1} p_{3} - p_{2} p_{3}\right)$ & $\left(-p_{1} p_{2} + 2 \, p_{1} p_{3} - p_{2} p_{3},\,2 \, p_{1} p_{2} - p_{1} p_{3} - p_{2} p_{3}\right)$ \\
$\left(2 \, p_{1} p_{2} + 2 \, p_{1} p_{3} - p_{2} p_{3},\,-p_{1} p_{2} - p_{1} p_{3} - p_{2} p_{3}\right)$ & $\left(p_{1} p_{2} + p_{1} p_{3} - p_{2} p_{3},\,-2 \, p_{1} p_{2} + p_{1} p_{3} - p_{2} p_{3}\right)$ \\
$\left(-2 \, p_{1} p_{2} + p_{1} p_{3} - p_{2} p_{3},\,p_{1} p_{2} + p_{1} p_{3} - p_{2} p_{3}\right)$ & $\left(p_{1} p_{2} + p_{1} p_{3} - p_{2} p_{3},\,p_{1} p_{2} + p_{1} p_{3} - p_{2} p_{3}\right)$ \\
\end{tabular}
\end{subtable}
\caption{The set of $G=SU(3)$ $\vec s$ values for Brieskorn spheres $\Sigma(p_1,p_2,p_3)$, satisfying \eqref{eqn:Brieskornpq} and $q_i>0$ for all $i=1,2,3$.
}\label{tab:SymbA2Briesk}
\end{table}
Integrands of the $\widehat{Z}$-invariant for Brieskorn spheres enjoy a symmetry property (invariance up to a sign, see \eqref{eq:diagonal_action}) under 
the diagonal action of the Weyl group $\hat w \mapsto w'\hat w$. 
Under diagonal action of the Weyl group, equation \eqref{eqn:s} shows that $\vec s$ is mapped to $w(\vec s)$. 
Hence one can always choose a representative $\vec s$ that lies in the fundamental Weyl chamber. 
Equivalence under diagonal action allows for the reduction of the number of inequivalent $\vec{s}$ from $\lvert W^{\otimes N}\lvert$ to $\lvert W^{\otimes N-1}\lvert$. 

In the $SU(2)$ case we thus only have $|W|^{N-1}=2^2$ inequivalent $\vec s= s\,\vec \omega$, which  can be obtained by fixing one of the components of $\hat{w}$ to any $w\in W$.
One possible choice is to fix $w_1$ to the identity $\id$.
The $s$ values we obtain with this choice are
\begin{align}
    s &\in\left\{-p_2p_3-p_1p_3+p_1p_2,-p_2p_3+p_1p_3-p_1p_2,\right.\nonumber\\
    &\quad\ \left.-p_2p_3-p_1p_3-p_1p_2,-p_2p_3+p_1p_3+p_1p_2\right\},
    \label{eq:SymbolicA1Briesk}
\end{align}
assuming ${\rm sign}(q_i)=1$ for $i=1,2,3$.
In the set \eqref{eq:SymbolicA1Briesk}, the first two $s$ values are computed from $\hat{w}=\left(\id,\id,-\id\right)$ and $\hat{w}=\left(\id,-\id,\id\right)$, which have $(-1)^{\ell(\hat{w})}=-1$, whereas the last two come from $\hat{w}=\left(\id,\id,\id\right)$ and $\hat{w}=\left(\id,-\id,-\id\right)$, which have $(-1)^{\ell(\hat{w})}=1$.

For the $SU(3)$ case, we have $|W|^{N-1}=36$ inequivalent pairs of $\vec s = (s_1,s_2)$. 
Using the diagonal Weyl group action to fix $\hat{w}_{1}$ to be the identity element, the pairs $\vec s = (s_1,s_2)$  corresponding to even resp. odd Weyl length can be found in Table \ref{tab:SymbA2Briesk}.
Alternatively, symmetry under the diagonal action of the Weyl group \eqref{eq:diagonal_action} can be used in most cases to fix $\vec{s}$ so that the its weight components fall into the range $s_i\in\left\{1,2,\dots,m\right\}$ corresponding to the expected range for characters of log VOA representations (as opposed to generalized characters).

This restriction on the magnitude of $s_i$ depends on the relative magnitude of the $p_i$ coefficients.
For $G=SU(2)$ the only manifold not to have this simplification is the Poincaré sphere, $(p_1,p_2,p_3)=(2,3,5)$; for $G=SU(3)$ the property fails on all spheres with at least one $p_i<4$\footnote{The choices given in table \ref{tab:SymbA2Briesk} do not always fall in the range $s_i\in\left\{1,2,\dots,m\right\}$ for all Brieskorn spheres $\Sigma(p_1,p_2,p_3)$.}.

\vspace{15pt}
\noindent
{\bf One spherical example}

We explicitly work out the Brieskorn sphere example $X_\Gamma=M\left( -1; \frac{3}{5}, \frac{2}{7}, \frac{1}{9} \right)=\Sigma(5,7,9)$.
By expanding the continued fractions we can compute the plumbing matrix for this Seifert manifold
\begin{equation}
\frac{5}{3}=2-\frac{1}{3},\quad \frac{7}{2}=4 - \frac{1}{2},\quad 9=9
\end{equation}
so the plumbing matrix is
\begin{equation}
M=\left(\begin{array}{rrrrrr}
-1 & 1 & 0 & 1 & 0 & 1 \\
1 & -2 & 1 & 0 & 0 & 0 \\
0 & 1 & -3 & 0 & 0 & 0 \\
1 & 0 & 0 & -4 & 1 & 0 \\
0 & 0 & 0 & 1 & -2 & 0 \\
1 & 0 & 0 & 0 & 0 & -9
\end{array}\right).
\end{equation}
The plumbing graph of the Brieskorn sphere can also be found in Figure \ref{fig:BriesGraph}.
\begin{figure}
	\centering
	\includegraphics[width=0.5\textwidth]{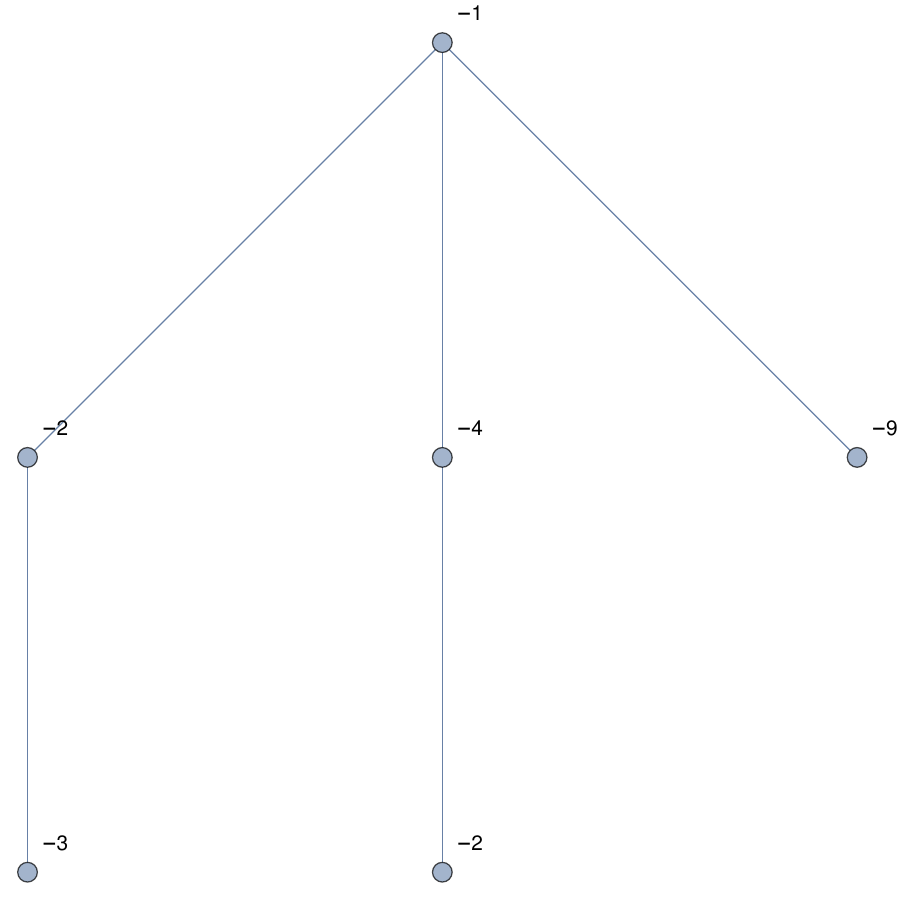}
	\caption{Graph of $M\left( -1,\frac{3}{5},\frac{2}{7},\frac{1}{9} \right)$}
	\label{fig:BriesGraph}
\end{figure}
Of the three legs two are of length two (corresponding to $\left|q_i\right| > 1$) and one  is of length one.
In total this graph has $\left|V\right|=6$ nodes which coincide with the dimensions of the plumbing matrix.

Other important topological data needed to compute the integrand of $\widehat{Z}^G_{\vec{\underline{b}}}$ are $m=-M^{-1}_{v_{0},v_{0}}=\prod_{i=1}^{3}p_i=315$.
\begin{table}
	\centering
	\begin{subtable}{\textwidth}
	\small
	\centering
		\begin{tabular}{lll|lll}
$\hat{w}$ & $(-1)^{\ell(\hat{w})}$ & $\left(s_{1}\right)$ & 
$\hat{w}$ & $(-1)^{\ell(\hat{w})}$ & $\left(s_{1}\right)$ \\\toprule
$\left({\id}, {\id}, {\id}\right)$ & $1$ & $\left(143\right)$ & $\left({\id}, {\id}, -{\id}\right)$ & $-1$ & $\left(73\right)$ \\
$\left({\id}, -{\id}, {\id}\right)$ & $-1$ & $\left(53\right)$ & $\left(-{\id}, {\id}, {\id}\right)$ & $-1$ & $\left(17\right)$ \\
\end{tabular}
	\vspace{0.2cm}
	\end{subtable}
\hfill

\begin{subtable}{\textwidth}
\small
	\centering
\begin{tabular}{lll|lll}
$\hat{w}$ & $(-1)^{\ell(\hat{w})}$ & $\left(s_{1},s_{2}\right)$ & 
$\hat{w}$ & $(-1)^{\ell(\hat{w})}$ & $\left(s_{1},s_{2}\right)$ \\ \toprule
$\left({\id}, {\id}, {\id}\right)$ & $1$ & $\left(143,\,143\right)$ & $\left({\id}, {\id}, a\right)$ & $-1$ & $\left(73,\,178\right)$ \\
$\left({\id}, {\id}, b\right)$ & $-1$ & $\left(178,\,73\right)$ & $\left({\id}, {\id}, \mathit{ab}\right)$ & $1$ & $\left(38,\,143\right)$ \\
$\left({\id}, {\id}, \mathit{ba}\right)$ & $1$ & $\left(143,\,38\right)$ & $\left({\id}, {\id}, \mathit{aba}\right)$ & $-1$ & $\left(73,\,73\right)$ \\
$\left({\id}, a, {\id}\right)$ & $-1$ & $\left(53,\,188\right)$ & $\left({\id}, a, b\right)$ & $1$ & $\left(88,\,118\right)$ \\
$\left({\id}, a, \mathit{ba}\right)$ & $-1$ & $\left(53,\,83\right)$ & $\left({\id}, b, {\id}\right)$ & $-1$ & $\left(188,\,53\right)$ \\
$\left({\id}, b, a\right)$ & $1$ & $\left(118,\,88\right)$ & $\left({\id}, b, \mathit{ab}\right)$ & $-1$ & $\left(83,\,53\right)$ \\
$\left({\id}, \mathit{ab}, {\id}\right)$ & $1$ & $\left(8,\,143\right)$ & $\left({\id}, \mathit{ab}, b\right)$ & $-1$ & $\left(43,\,73\right)$ \\
$\left({\id}, \mathit{ab}, \mathit{ba}\right)$ & $1$ & $\left(8,\,38\right)$ & $\left({\id}, \mathit{ba}, {\id}\right)$ & $1$ & $\left(143,\,8\right)$ \\
$\left({\id}, \mathit{ba}, a\right)$ & $-1$ & $\left(73,\,43\right)$ & $\left({\id}, \mathit{ba}, \mathit{ab}\right)$ & $1$ & $\left(38,\,8\right)$ \\
$\left({\id}, \mathit{aba}, {\id}\right)$ & $-1$ & $\left(53,\,53\right)$ & $\left(a, {\id}, {\id}\right)$ & $-1$ & $\left(17,\,206\right)$ \\
$\left(a, {\id}, b\right)$ & $1$ & $\left(52,\,136\right)$ & $\left(a, {\id}, \mathit{ba}\right)$ & $-1$ & $\left(17,\,101\right)$ \\
$\left(a, b, {\id}\right)$ & $1$ & $\left(62,\,116\right)$ & $\left(a, b, b\right)$ & $-1$ & $\left(97,\,46\right)$ \\
$\left(a, b, \mathit{ba}\right)$ & $1$ & $\left(62,\,11\right)$ & $\left(a, \mathit{ba}, {\id}\right)$ & $-1$ & $\left(17,\,71\right)$ \\
$\left(a, \mathit{ba}, b\right)$ & $1$ & $\left(52,\,1\right)$ & $\left(b, {\id}, {\id}\right)$ & $-1$ & $\left(206,\,17\right)$ \\
$\left(b, {\id}, a\right)$ & $1$ & $\left(136,\,52\right)$ & $\left(b, {\id}, \mathit{ab}\right)$ & $-1$ & $\left(101,\,17\right)$ \\
$\left(b, a, {\id}\right)$ & $1$ & $\left(116,\,62\right)$ & $\left(b, a, a\right)$ & $-1$ & $\left(46,\,97\right)$ \\
$\left(b, a, \mathit{ab}\right)$ & $1$ & $\left(11,\,62\right)$ & $\left(b, \mathit{ab}, {\id}\right)$ & $-1$ & $\left(71,\,17\right)$ \\
$\left(b, \mathit{ab}, a\right)$ & $1$ & $\left(1,\,52\right)$ & $\left(\mathit{aba}, {\id}, {\id}\right)$ & $-1$ & $\left(17,\,17\right)$ \\
\end{tabular}
\end{subtable}
	\caption{$SU(2)$ and $SU(3)$ $\vec s$ values for $M\left( -1;\frac{3}{5},\frac{2}{7},\frac{1}{9} \right)$, where $\mathit{a},\, \mathit{b}$ are reflections to the planes orthogonal to $\vec\alpha_1$ and $\vec\alpha_2$ respectively.}
	\label{tab:Briesk}
\end{table}
Symmetry with respect to the components of $\vec{s}$ is a general property of Brieskorn spheres and a direct consequence of equations  (\ref{dfn:A}) and (\ref{eqn:delta_A_main}).

For each $\hat{w}$, and thus for each pair $(s_{1},\ s_{2})$,  we can compute $\tilde{\chi}_{\hat{w};\vec{\underline{b}}}$ explicitly.
If we set $\hat{w}=\left(\mathds{1},\mathds{1},\mathds{1} \right)$,  using equation \eqref{integrand_3star2}:
\begin{align}
\tilde{\chi}_{\hat{w};\vec{\underline{b}}}(\tau,\vec{\xi})&=q^{95} - 2q^{238} + 2q^{524} - q^{667} + \frac{ \Delta\left(2 \xi_{1}, 2 \xi_{2}\right)}{\Delta\left(\xi_{1}, \xi_{2}\right)}q^{754} + O(q^{1000})
\nonumber\\&=q^{\delta}\eta^2( \tau )\,\chi_{\vec{\lambda}'_{0,0,s_1,s_2}}(\tau,\vec{\xi})
\end{align}
where:
\begin{equation}
\vec{s}= \sum_{i=1}^2 s_i\vec\omega_i= -m\vec{A}_{\hat{w}}=143\vec\rho \, ,~~ \delta = \frac{341}{315}.
\end{equation}
In Table \ref{tab:briespi} the reader can find a representative of the set of equivalent $\vec s$'s for the case $G=SU(3)$ in terms of the $p_i$'s, such that all $s_1, s_2$ are within the range between $1$ and $m$ for $X_\Gamma=\Sigma(5,7,9)$. 
The explicit values of 
 $\vec{s}$'s for $G=SU(2)$ and $G=SU(3)$ can be found in Table \ref{tab:Briesk}.
\begin{table}
\small
	\centering
	\begin{tabular}{ll}
		$\vec s = \left( s_{1},s_{2} \right)$&\\\toprule
		$\left(p_1 p_2+p_3 p_2+p_1 p_3,  p_1 p_2+p_3 p_2+p_1 p_3 \right)$&
		$\left(p_1 p_3+p_2 p_3 , p_1 p_2+p_3 p_2+p_1 p_3\right)$\\ 
		$\left(p_1 p_2+p_3 p_2+p_1 p_3 , p_1 p_3+p_2 p_3 \right)$&
		$\left(-p_1 p_2+p_3 p_2+p_1 p_3 , p_1 p_3+p_2 p_3 \right)$\\
		$\left(p_1 p_3+p_2 p_3 , -p_1 p_2+p_3 p_2+p_1 p_3 \right)$&
		$\left(-p_1 p_2+p_3 p_2+p_1 p_3 , -p_1 p_2+p_3 p_2+p_1 p_3 \right)$\\
		$\left(p_1 p_2+p_3 p_2 , p_1 p_2+p_3 p_2+p_1 p_3 \right)$&
		$\left(p_2 p_3 , p_1 p_2+p_3 p_2+p_1 p_3 \right)$\\
		$\left(p_1 p_2+p_3 p_2 , p_1 p_3+p_2 p_3 \right)$&
		$\left(p_2 p_3-p_1 p_2 , p_1 p_3+p_2 p_3 \right)$\\
		$\left(p_2 p_3 , -p_1 p_2+p_3 p_2+p_1 p_3 \right)$&
		$\left(p_2 p_3-p_1 p_2 , -p_1 p_2+p_3 p_2+p_1 p_3 \right)$\\
		$\left(p_1 p_2+p_3 p_2+p_1 p_3 , p_1 p_2+p_3 p_2 \right)$&
		$\left(p_1 p_3+p_2 p_3 , p_1 p_2+p_3 p_2 \right)$\\
		$\left(p_1 p_2+p_3 p_2+p_1 p_3 , p_2 p_3 \right)$&
		$\left(-p_1 p_2+p_3 p_2+p_1 p_3 , p_2 p_3 \right)$\\
		$\left(p_1 p_3+p_2 p_3 , p_2 p_3-p_1 p_2 \right)$&
		$\left(-p_1 p_2+p_3 p_2+p_1 p_3 , p_2 p_3-p_1 p_2 \right)$\\
		$\left(p_1 p_2+p_3 p_2-p_1 p_3 , p_1 p_2+p_3 p_2 \right)$&
		$\left(p_2 p_3-p_1 p_3 , p_1 p_2+p_3 p_2 \right)$\\
		$\left(p_1 p_2+p_3 p_2-p_1 p_3 , p_2 p_3 \right)$&
		$\left(-p_1 p_2+p_3 p_2-p_1 p_3 , p_2 p_3 \right)$\\
		$\left(p_2 p_3-p_1 p_3 , p_2 p_3-p_1 p_2 \right)$&
		$\left(-p_1 p_2+p_3 p_2-p_1 p_3 , p_2 p_3-p_1 p_2 \right)$\\
		$\left(p_1 p_2+p_3 p_2 , p_1 p_2+p_3 p_2-p_1 p_3 \right)$&
		$\left(p_2 p_3 , p_1 p_2+p_3 p_2-p_1 p_3 \right)$\\
		$\left(p_1 p_2+p_3 p_2 , p_2 p_3-p_1 p_3 \right)$&
		$\left(p_2 p_3-p_1 p_2 , p_2 p_3-p_1 p_3 \right)$\\
		$\left(p_2 p_3 , -p_1 p_2+p_3 p_2-p_1 p_3 \right)$&
		$\left(p_2 p_3-p_1 p_2 , -p_1 p_2+p_3 p_2-p_1 p_3 \right)$\\
		$\left(p_1 p_2+p_3 p_2-p_1 p_3 , p_1 p_2+p_3 p_2-p_1 p_3 \right)$&
		$\left(p_2 p_3-p_1 p_3 , p_1 p_2+p_3 p_2-p_1 p_3 \right)$\\
		$\left(p_1 p_2+p_3 p_2-p_1 p_3 , p_2 p_3-p_1 p_3 \right)$&
		$\left(-p_1 p_2+p_3 p_2-p_1 p_3 , p_2 p_3-p_1 p_3 \right)$\\
		$\left(p_2 p_3-p_1 p_3 , -p_1 p_2+p_3 p_2-p_1 p_3 \right)$&
		$\left(-p_1 p_2+p_3 p_2-p_1 p_3 , -p_1 p_2+p_3 p_2-p_1 p_3 \right)$
	\end{tabular}
	\caption{$\vec s$ values for $G=SU(3)$ and $X_\Gamma=\Sigma(p_1,p_2,p_3)$.}\label{tab:briespi}
\end{table}
Finally, we can demonstrate Corollary \ref{cor:psudo} by summing over all $\tilde{\chi}_{\hat{w};\vec{\underline{b}}}(\tau,\vec \xi)$ and integrating:
\begin{align}
C_\Gamma^G(q)^{-1}\widehat{Z}^{A_2}_{\vec{\underline{b}}_0}(\tau)&=6 q^{95} - {12} q^{118} - {12} q^{126} - {12} q^{142} + 12 q^{144} + 12 q^{158} + {O}\left(q^{160}\right)\nonumber\\
&=\sum_{(s_1,s_2)}(-1)^{\ell_{s_1,s_2}}q^\delta\eta^2(\tau)\chi_{\vec{\mu}_{(s_1,s_2)}}^0(\tau).
\end{align}

\vspace{15pt}
\noindent
{\bf A pseudo-spherical example}

Pseudo-spherical Seifert manifolds share similar features to spherical Seifert manifolds, with the crucial difference that not all $\hat{w}$ contribute to the $\widehat{Z}-$invariant.
Such manifolds have non-unimodular plumbing matrix, but because $M^{-1}_{v_0,v}\in \mathbb{Z}\, ~\forall v\in V$ the lattice dilation factor $D$ is one.
One such example is given by $X_\Gamma=M\left( -1;\frac{1}{2},\frac{1}{3},\frac{1}{9} \right)$.
Its plumbing matrix is:
\begin{equation}
M=\left(
\begin{array}{cccc}
-1 & 1 & 1 & 1 \\
1 & -2 & 0 & 0 \\
1 & 0 & -3 & 0 \\
1 & 0 & 0 & -9 \\
\end{array}
\right), ~~ M^{-1}=\left(
\begin{array}{cccc}
-18 & -9 & -6 & -2 \\
-9 & -5 & -3 & -1 \\
-6 & -3 & -\frac{7}{3} & -\frac{2}{3} \\
-2 & -1 & -\frac{2}{3} & -\frac{1}{3} \\
\end{array}
\right).
\end{equation}
Because $D=1$, this manifold serves as an example of Theorem \ref{3fiber_sphere}.
We can also compute:
$$\det\left( M \right)=3, ~ m = -DM_{v_0,v_0}^{-1}=18,$$
 and $$ \Ck\left( M \right)\cong \left\{\left(0, 0, 0, 0\right), \left(1, 0, -1, -6\right), \left(1, 0, -2, -3\right)\right\},$$
leading to two inequivalent choices of $\vec{\underline{b}}$ for $G=SU(3)$:
\begin{gather}
\begin{split}
\vec{\underline{b}}_0&=\left(\left(-1,\,-1\right), \left(1,\,1\right), \left(1,\,1\right), \left(1,\,1\right)\right)\\
\vec{\underline{b}}_1&=\left(\left(0,\,-1\right), \left(1,\,1\right), \left(-1,\,1\right), \left(-2,\,1\right)\right).
\end{split}
\end{gather}
Because the inverse plumbing matrix contains non integer entries, not all choices of $\hat{w}$ give nonempty $S_{\hat{w};\vec{\underline{b}}}$, so not all $\hat{w}$ contribute to the sum in \eqref{hatZ1}.
For instance, inspecting the explicit form of the inverse plumbing matrix, one sees that the set \eqref{dfn:theset} is never empty when $w_{2}=w_{3}$ for $\vec{\underline{b}}=\vec{\underline{b}}_0$, leading to 36 possible choices of $\hat{w}$. 
We collect the set of corresponding $\vec s$, for $\vec{\underline{b}}=\vec{\underline{b}}_0$ and $\vec{\underline{b}}_1$, in Table \ref{tab:pseudoA2}.

As a very explicit example, we look at $\vec{\underline{b}}=\vec{\underline{b}}_0$ and $\hat{w}=\left( \mathds{1},\mathds{1},\mathds{1} \right)$, for which we have
\begin{align}
 \tilde{\chi}_{\hat{w};\vec{\underline{b}}}(\tau,\vec{\xi})&=q - 2 \, q^{18} + \frac{q^{21} \Delta\left(2\xi_{1}, 2  \xi_{2}\right)}{\Delta\left(\xi_{1}, \xi_{2}\right)} + 2 \, q^{52} - \frac{2 \, q^{55} \Delta\left(2 \xi_{1}, 2  \xi_{2}\right)}{\Delta\left(\xi_{1}, \xi_{2}\right)}
+O\left(q^{58}\right)\nonumber\\&=q^{\delta}(\eta\left( \tau \right))^2\chi_{\vec \lambda'_{0,0,s_1,s_2}}(\tau,\vec{\xi})
\end{align}
where $(s_1,s_2)=\left( 17, 17 \right)$ as in (\ref{eqn:A2triplet}),  and $\delta=\frac{17}{18}$, as in \eqref{eqn:delta_A_main}.

As for the spherical case, we can recover the $\widehat Z$ invariant by summing over all $\hat{w}$ that contribute with a non-empty $\mathcal S_{\hat w; \vec{\underline{b}}}$.
For $\vec{\underline{b}}=\vec{\underline{b}}_0$ we get
\begin{align}
C_\Gamma^G(q)^{-1}\widehat{Z}^{A_2}_{\vec{\underline{b}}_0}(\tau)&=6 q + 12 q^{5}  {-24} q^{6} + 12 q^{13} + 12 q^{16} + 6 q^{17} {-24} q^{18} + \mathcal{O}\left(q^{20}\right)\nonumber\\
&=\sum_{(s_1,s_2)}(-1)^{\ell_{s_1,s_2}}q^\delta\eta^2(\tau)\chi_{\vec{\mu}_{(s_1,s_2)}}^0(\tau)
\end{align}
and for $\vec{\underline{b}}=\vec{\underline{b}}_1$
\begin{align}
C_\Gamma^G(q)^{-1}\widehat{Z}^{A_2}_{\vec{\underline{b}}_1}(\tau)&=q^{1/3}\left({-2} q  {-1} q^{2}  {-2} q^{3}  {-2} q^{4} + 4 q^{5}  {-2} q^{6} + 2 q^{8} + 4 q^{9}\right) + \mathcal{O}\left(q^{31/3}\right) \nonumber\\
&=\sum_{(s_1,s_2)}(-1)^{\ell_{s_1,s_2}}q^\delta\eta^2(\tau)\chi_{\vec{\mu}_{(s_1,s_2)}}^0(\tau).
\end{align}

\begin{table}
\begin{subtable}{\textwidth}
\small
\begin{tabular}{lll|lll}
$\hat{w}$ & $(-1)^{\ell(\hat{w})}$ & $\left(s_{1},s_{2}\right)$ & 
$\hat{w}$ & $(-1)^{\ell(\hat{w})}$ & $\left(s_{1},s_{2}\right)$ \\\toprule

$\left({\id}, {\id}, {\id}\right)$ & $1$ & $\left(17,\,17\right)$ & $\left({\id}, a, a\right)$ & $1$ & $\left(1,\,25\right)$ \\
$\left({\id}, b, b\right)$ & $1$ & $\left(25,\,1\right)$ & $\left({\id}, \mathit{ab}, \mathit{ab}\right)$ & $1$ & $\left(-7,\,17\right)$ \\
$\left({\id}, \mathit{ba}, \mathit{ba}\right)$ & $1$ & $\left(17,\,-7\right)$ & $\left({\id}, \mathit{aba}, \mathit{aba}\right)$ & $1$ & $\left(1,\,1\right)$ \\
$\left(a, {\id}, {\id}\right)$ & $-1$ & $\left(-1,\,26\right)$ & $\left(a, a, a\right)$ & $-1$ & $\left(-17,\,34\right)$ \\
$\left(a, b, b\right)$ & $-1$ & $\left(7,\,10\right)$ & $\left(a, \mathit{ab}, \mathit{ab}\right)$ & $-1$ & $\left(-25,\,26\right)$ \\
$\left(a, \mathit{ba}, \mathit{ba}\right)$ & $-1$ & $\left(-1,\,2\right)$ & $\left(a, \mathit{aba}, \mathit{aba}\right)$ & $-1$ & $\left(-17,\,10\right)$ \\
$\left(b, {\id}, {\id}\right)$ & $-1$ & $\left(26,\,-1\right)$ & $\left(b, a, a\right)$ & $-1$ & $\left(10,\,7\right)$ \\
$\left(b, b, b\right)$ & $-1$ & $\left(34,\,-17\right)$ & $\left(b, \mathit{ab}, \mathit{ab}\right)$ & $-1$ & $\left(2,\,-1\right)$ \\
$\left(b, \mathit{ba}, \mathit{ba}\right)$ & $-1$ & $\left(26,\,-25\right)$ & $\left(b, \mathit{aba}, \mathit{aba}\right)$ & $-1$ & $\left(10,\,-17\right)$ \\
$\left(\mathit{ab}, {\id}, {\id}\right)$ & $1$ & $\left(-10,\,17\right)$ & $\left(\mathit{ab}, a, a\right)$ & $1$ & $\left(-26,\,25\right)$ \\
$\left(\mathit{ab}, b, b\right)$ & $1$ & $\left(-2,\,1\right)$ & $\left(\mathit{ab}, \mathit{ab}, \mathit{ab}\right)$ & $1$ & $\left(-34,\,17\right)$ \\
$\left(\mathit{ab}, \mathit{ba}, \mathit{ba}\right)$ & $1$ & $\left(-10,\,-7\right)$ & $\left(\mathit{ab}, \mathit{aba}, \mathit{aba}\right)$ & $1$ & $\left(-26,\,1\right)$ \\
$\left(\mathit{ba}, {\id}, {\id}\right)$ & $1$ & $\left(17,\,-10\right)$ & $\left(\mathit{ba}, a, a\right)$ & $1$ & $\left(1,\,-2\right)$ \\
$\left(\mathit{ba}, b, b\right)$ & $1$ & $\left(25,\,-26\right)$ & $\left(\mathit{ba}, \mathit{ab}, \mathit{ab}\right)$ & $1$ & $\left(-7,\,-10\right)$ \\
$\left(\mathit{ba}, \mathit{ba}, \mathit{ba}\right)$ & $1$ & $\left(17,\,-34\right)$ & $\left(\mathit{ba}, \mathit{aba}, \mathit{aba}\right)$ & $1$ & $\left(1,\,-26\right)$ \\
$\left(\mathit{aba}, {\id}, {\id}\right)$ & $-1$ & $\left(-1,\,-1\right)$ & $\left(\mathit{aba}, a, a\right)$ & $-1$ & $\left(-17,\,7\right)$ \\
$\left(\mathit{aba}, b, b\right)$ & $-1$ & $\left(7,\,-17\right)$ & $\left(\mathit{aba}, \mathit{ab}, \mathit{ab}\right)$ & $-1$ & $\left(-25,\,-1\right)$ \\
$\left(\mathit{aba}, \mathit{ba}, \mathit{ba}\right)$ & $-1$ & $\left(-1,\,-25\right)$ & $\left(\mathit{aba}, \mathit{aba}, \mathit{aba}\right)$ & $-1$ & $\left(-17,\,-17\right)$ \\
\end{tabular}  
\vspace{0.2cm}
\end{subtable}
\hfill

\begin{subtable}{\textwidth}
\small
\begin{tabular}{lll|lll}
$\hat{w}$ & $(-1)^{\ell(\hat{w})}$ & $\left(s_{1},s_{2}\right)$ & 
$\hat{w}$ & $(-1)^{\ell(\hat{w})}$ & $\left(s_{1},s_{2}\right)$ \\\toprule
$\left({\id}, a, {\id}\right)$ & $-1$ & $\left(5,\,23\right)$ & $\left({\id}, b, \mathit{ab}\right)$ & $-1$ & $\left(17,\,5\right)$ \\
$\left({\id}, \mathit{aba}, \mathit{ba}\right)$ & $-1$ & $\left(5,\,-1\right)$ & $\left(a, a, {\id}\right)$ & $1$ & $\left(-13,\,32\right)$ \\
$\left(a, b, \mathit{ab}\right)$ & $1$ & $\left(-1,\,14\right)$ & $\left(a, \mathit{aba}, \mathit{ba}\right)$ & $1$ & $\left(-13,\,8\right)$ \\
$\left(b, a, {\id}\right)$ & $1$ & $\left(14,\,5\right)$ & $\left(b, b, \mathit{ab}\right)$ & $1$ & $\left(26,\,-13\right)$ \\
$\left(b, \mathit{aba}, \mathit{ba}\right)$ & $1$ & $\left(14,\,-19\right)$ & $\left(\mathit{ab}, a, {\id}\right)$ & $-1$ & $\left(-22,\,23\right)$ \\
$\left(\mathit{ab}, b, \mathit{ab}\right)$ & $-1$ & $\left(-10,\,5\right)$ & $\left(\mathit{ab}, \mathit{aba}, \mathit{ba}\right)$ & $-1$ & $\left(-22,\,-1\right)$ \\
$\left(\mathit{ba}, a, {\id}\right)$ & $-1$ & $\left(5,\,-4\right)$ & $\left(\mathit{ba}, b, \mathit{ab}\right)$ & $-1$ & $\left(17,\,-22\right)$ \\
$\left(\mathit{ba}, \mathit{aba}, \mathit{ba}\right)$ & $-1$ & $\left(5,\,-28\right)$ & $\left(\mathit{aba}, a, {\id}\right)$ & $1$ & $\left(-13,\,5\right)$ \\
$\left(\mathit{aba}, b, \mathit{ab}\right)$ & $1$ & $\left(-1,\,-13\right)$ & $\left(\mathit{aba}, \mathit{aba}, \mathit{ba}\right)$ & $1$ & $\left(-13,\,-19\right)$ \\
\end{tabular}
\end{subtable}

\caption{The set of $\vec s$ for $X_\Gamma= M\left(-1,\frac{1}{2},\frac{1}{3},\frac{1}{9}\right)$, with $\vec{\underline{b}}_0,\
\vec{\underline{b}}=\vec{\underline{b}}_0$, $\vec{\underline{b}}_1$ for the upper and lower sub-table respectively.  
}
\label{tab:pseudoA2}
\end{table}

\vspace{15pt}
\noindent
{\bf A non-spherical example with $1<D<\det{M}$ }

The manifold $X_\Gamma=M\left( -1; \frac{1}{3},-\frac{1}{2},-\frac{1}{2} \right)$ 
has plumbing matrix
\begin{equation}
M=\left(
\begin{array}{cccc}
-1 & 1 & 1 & 1 \\
1 & -3 & 0 & 0 \\
1 & 0 & 2 & 0 \\
1 & 0 & 0 & 2 \\
\end{array}
\right),
\end{equation}
with $\det(M)=20$ and  $D=10<\det(M)$.  
Theorem \ref{thm:combining_into_characters} applies to this example as well, as we will explicitly demonstrate here. 
The Cokernel of the plumbing matrix is 
\begin{align}
&\Ck\left( M \right)=\nonumber\\
&\left\{  \left(1,\,-2,\,1,\,1\right), \left(1,\,-1,\,2,\,2\right), \left(2,\,-1,\,2,\,2\right), \left(0,\,0,\,1,\,1\right), \left(1,\,0,\,1,\,1\right),\right.\nonumber\\
& \left(2,\,-2,\,2,\,2\right), \left(0,\,-1,\,1,\,1\right), \left(1,\,-1,\,1,\,1\right), \left(1,\,0,\,2,\,2\right), \left(0,\,0,\,0,\,0\right),\nonumber\\
&\left(1,\,0,\,1,\,2\right), \left(0,\,0,\,1,\,2\right), \left(1,\,-1,\,2,\,1\right), \left(2,\,-2,\,1,\,2\right), \left(1,\,-2,\,1,\,2\right),\nonumber\\
&\left. \left(1,\,0,\,2,\,1\right), \left(0,\,0,\,2,\,1\right), \left(1,\,-1,\,1,\,2\right), \left(2,\,-2,\,2,\,1\right), \left(1,\,-2,\,2,\,1\right)
\right\},
\end{align}
leading to
\begin{align}
\mathcal{B} =& \{\vec{\underline{b}}_i\}_{i=0,\dots,10}\\
=&\left\{\left(\left(-1,\,-1\right), \left(1,\,1\right), \left(1,\,1\right), \left(1,\,1\right)\right), \left(\left(-1,\,-1\right), \left(1,\,1\right), \left(3,\,0\right), \left(3,\,0\right)\right),\right.\nonumber\\
&\left(\left(-1,\,-1\right), \left(1,\,1\right), \left(3,\,0\right), \left(5,\,-1\right)\right), \left(\left(1,\,-2\right), \left(-3,\,3\right), \left(3,\,0\right), \left(3,\,0\right)\right),\nonumber\\
&\left(\left(1,\,-2\right), \left(-3,\,3\right), \left(3,\,0\right), \left(5,\,-1\right)\right), \left(\left(1,\,-2\right), \left(-1,\,2\right), \left(5,\,-1\right), \left(3,\,0\right)\right),\nonumber\\
&\left(\left(1,\,-2\right), \left(-1,\,2\right), \left(5,\,-1\right), \left(5,\,-1\right)\right), \left(\left(1,\,-2\right), \left(1,\,1\right), \left(3,\,0\right), \left(3,\,0\right)\right),\nonumber\\
&\left(\left(1,\,-2\right), \left(1,\,1\right), \left(3,\,0\right), \left(5,\,-1\right)\right), \left(\left(3,\,-3\right), \left(-3,\,3\right), \left(3,\,0\right), \left(5,\,-1\right)\right),\nonumber\\
&\left.\left(\left(3,\,-3\right), \left(-1,\,2\right), \left(5,\,-1\right), \left(5,\,-1\right)\right)\right\}
\end{align}
As for the pseudo-spherical example, the inverse plumbing matrix contains non integer entries, leading to a smaller set of admissible $\hat{w}$ with non-vanishing contribution.
One such admissible $\hat{w}$ is $\hat{w}=\left( \mathds{1},\mathds{1},\mathds{1} \right)$, which has $\vec{\kappa}_{\hat{w};\vec{\underline{b}}_{0}}=\left( 9,9 \right)$.
We can therefore compute its $\widehat{Z}^G_{\vec{\underline{b}}}$ integrand contribution
\begin{equation}
\tilde{\chi}_{\hat{w};\vec{\underline{b}}_{0}}\left( D\tau,\vec{\xi} \right)=q^{10}+q^{410}\frac{\Delta\left( 9\xi_{1},9\xi_{2} \right)}{\Delta\left( \xi_{1},\xi_{2} \right)}+O\left( q^{510} \right)
\end{equation}
and
\begin{equation}
  q^{D\delta}\eta^2(\tau)\chi_{\vec{\lambda}'_{0,0,s_1,s_2}}(\tau,\xi)=- {q^{-6}}+ {2}{q^{-2}} - \frac{q^{4} \Delta\left(2 \, \xi_{1}, 2 \, \xi_{2}\right)}{\Delta\left(\xi_{1}, \xi_{2}\right)} -2q^{6}+q^{10}+O\left( q^{12} \right)
\end{equation}
where $\delta=-\frac{2}{3}$ and $\vec{s}=4\vec{\rho}$.
As proven in Theorem \ref{thm:combining_into_characters}, we explicitly check that the full character is recovered if we sum over all possible $\vec{\lambda}\in \Lambda/D\Lambda$, which is achieved by summing over $\Delta \vec{b}$.

\vspace{15pt}
\noindent
{\bf A non-spherical example with $1<D=\left|\det{M}\right|$}

We conclude this subsection with the Seifert manifold $X_\Gamma=M( 0; -{1\over2},{2\over7},-{1\over3})$, with  plumbing matrix:
\begin{equation}
M=\left(
\begin{array}{ccccc}
0 & 1 & 0 & 1 & 1 \\
1 & 2 & 0 & 0 & 0 \\
0 & 0 & -2 & 0 & 1 \\
1 & 0 & 0 & 3 & 0 \\
1 & 0 & 1 & 0 & -4 \\
\end{array}
\right)
\end{equation}
and plumbing graph given in Figure \ref{fig:GeneralNonUni}.
\begin{figure}
	\centering
	\includegraphics[width=0.5\textwidth]{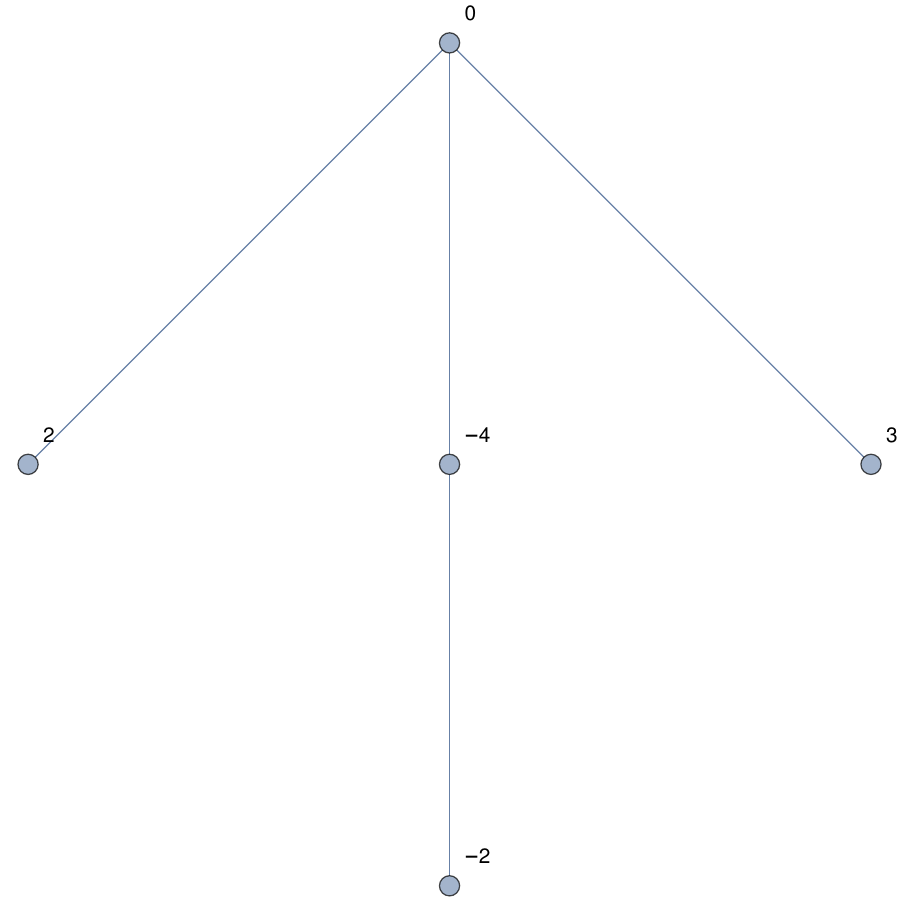}
	\caption{Graph of $M(0;-1/2,2/7,-1/3)$}
	\label{fig:GeneralNonUni}
\end{figure}
The manifold is non-unimodular, with plumbing matrix determinant of ${\rm det}(M)=-23$ so we expect to find nontrivial $\vec{\underline{b}}$ values.
The Cokernel can be computed and from that the independent Spin$^c$ structures, which are collected in Table \ref{tab:spincstruct}.
\begin{table}
  \centering
  \small
  \begin{tabular}{ll}
  $\left(\left(-1,-1\right), \left(1,1\right), \left(0,0\right), \left(1,1\right), \left(1,1\right)\right) $&$ \left(\left(1,-2\right), \left(3,0\right), \left(-4,2\right), \left(1,1\right), \left(3,0\right)\right)$\\
  $\left(\left(1,-2\right), \left(3,0\right), \left(-2,1\right), \left(1,1\right), \left(3,0\right)\right) $&$ \left(\left(1,-2\right), \left(3,0\right), \left(0,0\right), \left(-1,2\right), \left(5,-1\right)\right)$\\
  $\left(\left(1,-2\right), \left(3,0\right), \left(0,0\right), \left(1,1\right), \left(7,-2\right)\right) $&$ \left(\left(1,-2\right), \left(5,-1\right), \left(0,0\right), \left(-1,2\right), \left(3,0\right)\right)$\\
  $\left(\left(3,-3\right), \left(3,0\right), \left(-6,3\right), \left(1,1\right), \left(5,-1\right)\right) $&$ \left(\left(3,-3\right), \left(3,0\right), \left(-4,2\right), \left(-1,2\right), \left(7,-2\right)\right)$\\
  $\left(\left(3,-3\right), \left(5,-1\right), \left(-6,3\right), \left(1,1\right), \left(3,0\right)\right) $&$ \left(\left(3,-3\right), \left(5,-1\right), \left(-4,2\right), \left(-1,2\right), \left(5,-1\right)\right)$\\
  $\left(\left(3,-3\right), \left(5,-1\right), \left(-2,1\right), \left(-1,2\right), \left(5,-1\right)\right)$ & $\left(\left(3,-3\right), \left(5,-1\right), \left(-2,1\right), \left(1,1\right), \left(7,-2\right)\right)$
  \end{tabular}
  \caption{Independent Spin$^c$ structures of $M\left( 0;-\frac{1}{2},\frac{2}{7},-\frac{1}{3} \right)$}
  \label{tab:spincstruct}
\end{table}

From the inverse plumbing matrix we can compute $m=-DM^{-1}_{v_0,v_0}=42$.
As for the other non-spherical case and the pseudo-spherical case, the sets \eqref{dfn:theset} are empty for some choices of $\hat w$. 
For each $\hat{w}$ with non-empty set $S_{\hat w;\vec{\underline{b}}}$ we compute the vector  $\vec{\kappa}_{\hat{w};\vec{\underline{b}}_{0}}$ as in \eqref{dfn:kappa}.
For $\vec{\underline{b}}_0$, $\hat{w}=\left( \id,\id,\id \right)$, we compute $\vec{\kappa}_{\hat{w};\vec{\underline{b}}_{0}}=\left( 22,22 \right)$ and find
\begin{equation}
\tilde{\chi}_{\hat{w};\vec{\underline{b}}_{0}}(D\tau,\vec{\xi})= \frac{\Delta(2\xi_1,2\xi_2)}{\Delta(\xi_1,\xi_2)} q^{298}+\frac{\Delta(21\xi_1,21\xi_2)}{\Delta(\xi_1,\xi_2)}q^{17318}+O(q^{19917})
\end{equation}
and
\begin{equation}
\label{eq:ellipsis}
q^{D\delta}\eta^2\left( \tau \right)\chi_{\vec{\lambda}'_{0,0,s_1,s_2}}(\tau,\vec{\xi})=q^{-2}+\dots+\frac{\Delta(2\xi_1,2\xi_2)}{\Delta(\xi_1,\xi_2)} q^{298}+O(q^{305}) 
\end{equation}
with $\delta=-\frac{11}{42}$ and $s_1=s_2=-29$.
In \eqref{eq:ellipsis} we have put in dots the $q$-powers between $q^{-2}$ and $q^{298}$ to highlight the fact that $\tilde{\chi}_{\hat{w},\vec{\underline{b}}_0}$ is included in the log VOA character $\chi_{\vec{\lambda}'_{0,0,-29,-29}}$.
To recover the full character we sum over all possible $\vec{\lambda}\in \Lambda/D\Lambda$,
\begin{align}
\sum_{\substack{\vec{\underline{b}}=\vec{\underline{b}}_0+\left(\Delta \vec b,0,0,0 \right)\\ \Delta \vec b\in \Lambda/D\Lambda}}\tilde{\chi}_{\hat{w};\vec{\underline{b}}}(D\tau,\vec{\xi})&= -q^{-2} + 2q^{27} -\frac{\Delta(2\xi_1,2\xi_2)}{\Delta(\xi_1,\xi_2)}q^{66} -2q^{85}+q^{114} + O(q^{124})
\nonumber\\
&=q^{D\delta}\eta^2\left( \tau \right)\chi_{\vec{\lambda}'_{0,0,s_1,s_2}}(\tau,\vec{\xi}).
\end{align}

\subsection{Seifert Manifolds with Four Exceptional Fibers}\label{subsec:4fib-ex}

In this subsection we will provide two examples Seifert manifolds with four exceptional fibers to demonstrate the results of  \S\ref{subsec:4fibers}.
The first example, which will be of a spherical manifold, will give a numerical confirmation of Theorem \ref{thm:4fibchimatching} and its Corollary \ref{cor:4fib}.
The second example in this subsection will be of a pseudo-spherical Seifert manifold, 
even though this case is not covered by Theorem \ref{thm:4fibchimatching}. 
This demonstrates that the relation between {log-${\cal V}_{\bar \Lambda}^0(p,p')$} and three-manifolds holds more generally than what is proven in Theorem \ref{thm:4fibchimatching}. 

\vspace{15pt}
\noindent
{\bf A Spherical Example}

Our first example is the spherical manifold $X_\Gamma=M\left(-2, \frac{1}{2}, \frac{2}{3}, \frac{2}{5}, \frac{3}{7}\right)$.  
  We analyse this manifold to demonstrate the relation between $\widehat{Z}^{SU(2)}_{\vec{\underline{b}}}$ and {log-${\cal V}_{\bar \Lambda}^0(p,p')$ characters described in Theorem \ref{thm:4fibchimatching}. 
The plumbing matrix of $X_\Gamma$ is
\begin{equation}
M=\left(\begin{array}{rrrrrrrrr}
-2 & 1 & 1 & 0 & 1 & 0 & 1 & 0 & 0 \\
1 & -2 & 0 & 0 & 0 & 0 & 0 & 0 & 0 \\
1 & 0 & -2 & 1 & 0 & 0 & 0 & 0 & 0 \\
0 & 0 & 1 & -2 & 0 & 0 & 0 & 0 & 0 \\
1 & 0 & 0 & 0 & -3 & 1 & 0 & 0 & 0 \\
0 & 0 & 0 & 0 & 1 & -2 & 0 & 0 & 0 \\
1 & 0 & 0 & 0 & 0 & 0 & -3 & 1 & 0 \\
0 & 0 & 0 & 0 & 0 & 0 & 1 & -2 & 1 \\
0 & 0 & 0 & 0 & 0 & 0 & 0 & 1 & -2
\end{array}\right)
\end{equation}
and, $M_{v_0,v_0}^{-1}=m=p_1p_2p_3p_4=210$.
Because $X_\Gamma$ is a spherical manifold all $\hat{w}$ produce a non-empty $S_{\hat w;\vec{\underline{b}}}$, as defined in \eqref{dfn:theset}.
There are eight different choices of admissible pairs $p,p'$, corresponding to $(p,p')=\left(p_i,\frac{m}{p_i}\right)$ and their permutations $p\leftrightarrow p'$:
\begin{equation}
(p,p')\in \left\{(2,105),(3,70),(5,42),(7,30)\right\}.
\end{equation}
Using  \eqref{eq:rs4f} we can produce all possible $s_{w_1,w_2,w_3}$ for each pair.
Independent $s_{w_1,w_2,w_3}$ are collected in Table \ref{tab:pp'svals}, while the remaining ones are obtained with an extra a minus sign.
\begin{table}
\centering
\begin{tabular}{c|cccc}
$(p,p')$,$\hat{w}$ & $(-\id,-\id,-\id)$ & $(-\id,\id,\id)$ &$(\id,-\id,\id)$ & $(\id,\id,-\id)$ \\\toprule
$(2,105)$ & $71$ & $-1$ & $-29$ & $-41$ \\
$(3,70)$& $59$ & $11$ & $-31$ & $-39$ \\
$(5,42)$& $41$ & $1$ & $-13$ & $-29$ \\
$(7,30)$& $31$ & $-1$ & $-11$ & $-19$ \\
\end{tabular}
\caption{$s_{w_1,w_2,w_3}$ for $X_\Gamma=M\left(-2, \frac{1}{2}, \frac{2}{3}, \frac{2}{5}, \frac{3}{7}\right)$, for different choices of $p,p'$}.\label{tab:pp'svals}
\end{table}

Using the data above one may verify the main claim in Theorem \ref{thm:4fibchimatching}:
\begin{align}
\sum_{\hat{w}\in W^{\otimes 4}}(-1)^{l(w)}\tilde{\chi}^{\prime}_{\hat{w};\vec{\underline{b}}}(\tau,\xi)&=
2\left(q^{145/4}\frac{\Delta(2\xi)-2}{\Delta^2(\xi)}+q^{(261/4)}\frac{\Delta(2\xi)-2}{\Delta^2(\xi)}\right)+O(q^{317/4})\nonumber\\
&=q^{\delta} \eta(\tau)\sum_{w_1,w_2,w_3\in W} (-1)^{\ell({w_1})+\ell({w_2})+\ell({w_3})}
 {\rm ch}^{+}_{r,s_{w_1,w_2,w_3}}(\tau,\xi).
\end{align}

\vspace{15pt}
\noindent
{\bf A Pseudo-Spherical Example}

In pseudo-spherical and non-spherical cases, 
non-integer entries in the inverse plumbing matrix cause the $S_{\hat{w};\vec{\underline{b}}}$ set to be empty for some $\hat{w}$, and the corresponding  $\tilde{\chi}'_{\hat{w};\vec{\underline{b}}}$ to vanish.
In most cases, this results in the absence of pairings between $\widehat{Z}$ integrands with $\hat{w}$ and $\hat{w}'$ such that $w_i=-w_i'$ and $w_j=w_j',\ i\neq j$ which are a key assumption necessary for Theorem \ref{thm:4fibchimatching}.

Nonetheless, for some non-spherical manifolds such pairings do exist, and hence Theorem \ref{thm:4fibchimatching} is also applicable and a set of $(p,p')$ may be found to produce characters satisfying a similar equation to \eqref{eq:4fchitchmatch}.
One such example is the pseudo-spherical Seifert manifold $X_\Gamma = M\left(-1;-\frac{1}{2},\frac{1}{3},\frac{1}{3},\frac{2}{3}\right)$.
The plumbing matrix of $X_\Gamma$ is given by:
\begin{equation}
M=\left(\begin{array}{rrrrrr}
-1 & 1 & 1 & 1 & 1 & 0 \\
1 & 2 & 0 & 0 & 0 & 0 \\
1 & 0 & -3 & 0 & 0 & 0 \\
1 & 0 & 0 & -3 & 0 & 0 \\
1 & 0 & 0 & 0 & -2 & 1 \\
0 & 0 & 0 & 0 & 1 & -2
\end{array}\right).
\end{equation}
Its inverse is:
\begin{equation}
M^{-1}=\left(\begin{array}{rrrrrr}
-6 & 3 & -2 & -2 & -4 & -2 \\
3 & -1 & 1 & 1 & 2 & 1 \\
-2 & 1 & -1 & -\frac{2}{3} & -\frac{4}{3} & -\frac{2}{3} \\
-2 & 1 & -\frac{2}{3} & -1 & -\frac{4}{3} & -\frac{2}{3} \\
-4 & 2 & -\frac{4}{3} & -\frac{4}{3} & -\frac{10}{3} & -\frac{5}{3} \\
-2 & 1 & -\frac{2}{3} & -\frac{2}{3} & -\frac{5}{3} & -\frac{4}{3}
\end{array}\right)
\end{equation}
from which we read $m=6$.
The Cokernel of $M$ is:
\begin{align}
\text{Coker}(M)=\{&\left(0,\,0,\,0,\,0,\,0,\,0\right), \left(1,\,0,\,0,\,-1,\,-1,\,0\right),\nonumber\\
&\left(1,\,0,\,0,\,-2,\,0,\,-1\right), \left(1,\,0,\,-1,\,0,\,-1,\,0\right),\nonumber\\
&\left(1,\,0,\,-2,\,0,\,0,\,-1\right), \left(2,\,0,\,-2,\,-2,\,-1,\,0\right),\nonumber\\
&\left.\left(1,\,0,\,-1,\,-1,\,0,\,-1\right)\right\}
\end{align}
which leads to four possible $\vec{\underline{b}}$:
\begin{align}
\mathcal{B}=&\left\{\vec{\underline{b}}_{i}\right\}_{i=0,\dots,3}\nonumber\\
=&\left\{\left(\left(-2\right), \left(1\right), \left(1\right), \left(1\right), \left(0\right), \left(1\right)\right),\right.\left(\left(0\right), \left(1\right), \left(-1\right), \left(1\right), \left(-2\right), \left(1\right)\right),\nonumber\\
&\left(\left(0\right), \left(1\right), \left(1\right), \left(-1\right), \left(-2\right), \left(1\right)\right),
\left.\left(\left(2\right), \left(1\right), \left(-3\right), \left(-3\right), \left(-2\right), \left(1\right)\right)\right\}.
\end{align}
The only combinations of $\vec{\underline{b}}$ and $\hat{w}$ which give non-empty $S_{\hat{w};\vec{\underline{b}}}$ are:
\begin{align}
    b_0: &\quad W_0 = \left\{(\id,-\id,-\id,-\id),\ (\id,-\id,-\id,-\id)\right\}\nonumber\\
    b_1: &\quad W_1 = \emptyset\nonumber\\
    b_2: &\quad W_2 = \emptyset\nonumber\\
    b_3: &\quad W_3 = \left\{(\id,\id,\id,-\id),\  (-\id,\id,\id,-\id)\right.\nonumber\\
&\quad\left.\qquad\ \ \ (\id,-\id,-\id,\id),\ (-\id,-\id,-\id,\id)\right\}.
\end{align}

Because $S_{\hat{w};\vec{\underline{b}}}$ is empty for $\vec{\underline{b}}=\vec{\underline{b}}_1$ and $\vec{\underline{b}}=\vec{\underline{b}}_2$ the homological blocks for such choices vanish. 
For $\vec{\underline{b}}_0$ and $\vec{\underline{b}}_3$ cases, we can pair $\hat{w}$ to $\hat{w}'$ such that ${w}_1=-{w}'_1$ and ${w}_i={w}'_i$ for $i=2,3,4$, leading to  $(p,p')=(2,3)$ as the only choice.
The $s$ values corresponding to the $\hat w$ that give non-vanishing homological blocks are gives:
\begin{align}
s_{-\id,-\id,-\id} &=3 \nonumber\\
s_{\id,\id,-\id} &=-1 \nonumber\\
s_{-\id,-\id,\id} &=1 \nonumber.
\end{align}
With this data, the analogue of \eqref{pprime_integrand} reads:
\begin{align}
\sum_{\hat{w}\in W_0}\tilde{\chi}'_{\vec{\hat{w}};\underline{b}_0}(\tau,\xi)
&=q^{5/24}\left(-q^{3/8}+q^{27/8}+q^{75/8}\left(1-\frac{\Delta(4\xi)-2}{\Delta^2(\xi)}\right) \right)+O(q^{18})\nonumber\\
    &=q^{\delta} \eta(\tau)\,{\rm ch}^{+}_{1,3}(\tau,\xi)
\end{align}
and:
\begin{align}
\sum_{\hat{w}\in W_{3}}\tilde{\chi}'_{\vec{\hat{w}};\underline{b}_3}(\tau,\xi)&=2q^{5/24} \left(q^{49/24} -q^{121/24} -q^{169/24} +q^{289/24} \right)+O(q^{15})\nonumber\\
    &=q^{\delta} \eta(\tau)\left({\rm ch}^{+}_{1,1}(\tau,\xi)-{\rm ch}^{+}_{1,-1}(\tau,\xi)\right)
\end{align}

\subsection{\texorpdfstring{$\widehat{Z}$}{\^Z}-invariants with Line Operators}
\label{subsec:line-ope-ex}

As discussed in \S\ref{subsec:lineop}, the insertion of Wilson operators allows us to access different log VOA characters  through the $\widehat Z$-invariants.
In this subsection we provide examples of Propositions \ref{prop:Wil-End}, \ref{prop:Wil-Center} and  \eqref{eqn:intermediate_Wilson}.
Without further remark, in this subsection we exemplify the new phenomena when incorporating line operators using the same manifolds as those in \S\ref{subsec:3fib-ex}, and simply referring them as the ``spherical manifold", ``pseudo-spherical manifold" etc.  

\vspace{15pt}
\noindent
{\bf Wilson Operators at an End Node}

Theorems \ref{thm:combining_into_characters} and  \ref{3fiber_sphere} can also be applied to spherical, non-spherical and pseudo-spherical Seifert manifolds when Wilson operators are inserted at end nodes of the plumbing graphs legs.
Such generalizations merely require a substitution of $\delta$ and $\vec{A}_{\hat{w}}$ with the definition in \eqref{dfn:AW}.

Theorem \ref{3fiber_sphere} will apply to the spherical and pseudo-spherical cases.
Let $\vec{\nu} = (1,4)$ be the highest weight of the $A_2$ representation.
For the spherical case we find
\begin{align}
    \vec{A}_{\hat{w}} &= {-1\over M_{v_0,v_0}^{-1}} \sum_{v\in V_1} M_{v_0,v}^{-1} w_v(\vec\rho_v) = {206\over315}\vec\omega_1+ {79\over63}\vec\omega_2\nonumber\\
    \delta &= \frac{1853}{315}\nonumber
\end{align}
from which we can compute $\sqrt{m}\mu=\rho+m\vec{A}_{\hat{w}}=207 \vec\omega_1 + 396 \vec\omega_2$ and $s_1 = -206$, $s_2 = -395$.
Hence we can compute, for $\hat{w}=(\id,\id,\id)$
\begin{align}
\tilde{\chi}_{\hat{w};\vec{\underline{b}}_0}(\tau,\xi) &= - q^{16} + q^{222} - \frac{q^{360} \Delta\left(2 \, \xi_{1}, 2 \, \xi_{2}\right)}{\Delta\left(\xi_{1}, \xi_{2}\right)} + q^{411} +O(q^{772})\\
&=q^\delta\eta^2(\tau)\chi_{\vec{\lambda}_{0,0,s_1,s_2}}(\tau,\xi)
\end{align}
which provides a numerical confirmation of Theorem \ref{3fiber_sphere} for spherical manifolds.
A similar result can also be obtained for the pseudo-spherical case.
With the same highest weight $\vec \nu$ we get $\vec{A}_{\hat{w}}={13\over 9}\vec\omega_1 + {53\over18}\vec\omega_2$, $\delta = 125/18$,  $\sqrt{m}\mu = 39\omega_1+42\omega_2$ and, therefore, for $\hat{w}=(\id,\id,\id)$
\begin{align}
\tilde{\chi}_{\hat{w};\vec{\underline{b}}_0}&=
- \frac{q^{11} \Delta\left(2  \xi_{1}, 2  \xi_{2}\right)}{\Delta\left(\xi_{1}, \xi_{2}\right)}
- \frac{q^{12} \Delta\left(4  \xi_{1}, \xi_{2}\right)}{\Delta\left(\xi_{1}, \xi_{2}\right)} 
- \frac{q^{22} \Delta\left(3 \xi_{1}, 3  \xi_{2}\right)}{\Delta\left(\xi_{1}, \xi_{2}\right)} 
- q^{36} + O(q^{38})\nonumber\\
&=q^{\delta}\eta^2\left( \tau \right)\chi_{\vec{\lambda}'_{0,0,s_1,s_2}}(\tau,\vec{\xi})\nonumber.
\end{align}

For non-spherical cases Theorem \ref{thm:combining_into_characters} is similarly generalized. 
With $\vec \nu = (1,4)$, $\hat{w}=(\id,\id,\id)$ we get, for the first non-spherical example, $\vec{A}_{\hat{w}}=-1/30\vec \omega_1 + 1/15\vec\omega_2$, $\delta = 10/3$ and $\sqrt{m}\mu=4/5\omega_1 + 7/5\omega_2$, and
\begin{align}
\tilde{\chi}_{\hat{w};\vec{\underline{b}}_0}&= -\frac{\Delta(7\xi_1,\xi_2)}{\Delta(\xi_1,\xi_2)}q^{7357/50}+O(q^{13457/50}).
\end{align}
The triplet character is
\begin{align}
\chi_{\vec{\lambda}'_{0,0,s_1,s_2}}(\tau,\vec{\xi})&=q^{1947/50}+\dots+ \frac{\Delta(4\xi_1,10\xi_2)}{\Delta(\xi_1,\xi_2)}q^{17307/50}+O(q^{17407/50}).
\end{align}
 Summing over $\Delta \vec b$, we recover the full character
\begin{align}
\sum_{\substack{\vec{\underline{b}}=\vec{\underline{b}}_0+\left(\Delta \vec b,0,0,0 \right)\\ \Delta \vec b\in \Lambda/D\Lambda}}\tilde{\chi}_{\hat{w};\vec{\underline{b}}}(D\tau,\vec{\xi})&= q^{1947/50}- 2q^{1957/50}
+2q^{1977/50}-q^{1987/50}+O(q^{2827/50})
\nonumber\\
&=q^{D\delta}\eta^2\left( \tau \right)\chi_{\vec{\lambda}'_{0,0,s_1,s_2}}(\tau,\vec{\xi})
\end{align}
as per Theorem \ref{thm:combining_into_characters}.

\vspace{15pt}
\noindent
{\bf Wilson Operators at the Central Node}\\

When a Wilson operator is inserted at the central node of the plumbing graph, integrands of  $\widehat{Z}^G_{\vec{\underline{b}}}$ obtain an extra polynomial in $z_i$ (cf. \eqref{eqn:zhat_wilson_central}) as a multiplicative factor.
As argued in Proposition \ref{prop:Wil-Center}, for spherical and pseudo-spherical Seifert manifolds with three exceptional fibers,  the $\widehat{Z}$ invariant can still be expressed as a linear combination of singlet characters \eqref{singlet_char}.

  Consider the spherical example described above with a Wilson operator of highest weight $\vec{\nu} = (3,3)$ inserted at the central node of the plumbing graph.
Up to Weyl group action, weights of nonzero multiplicity in the highest weight module of highest weight $\vec{\nu}$ are $\vec{\sigma} \in \left\{(2,2),(3,0),(0,3),(1,1),(0,0)\right\}$ with multiplicities $m^{(\vec\nu)}_{\vec\sigma}$ given by  $1,1,1,2,3$ respectively.
Using \eqref{eqn:zhat_wilson_central} we can compute
\begin{align}
\widehat{Z}^{A_2}_{\vec{\underline{b}}_0}(X_\Gamma, W_{\vec \nu_{v_0}};\tau) &=108 q^{95} - {216} q^{118} - {216} q^{126} - {216} q^{142} + \mathcal{O}\left(q^{143}\right)\nonumber\\
&=q^{\delta}\eta(\tau)^2\sum_{\vec \sigma\in P^+} m^{(\vec \nu)}_{\vec\sigma}  \sum_{w \in W}
  \sum_{\hat{w}\in W^{\otimes 3}} 
 \chi^0_{\vec\mu_{\hat{w}}-\sqrt{m} w(\vec\sigma)}.
\end{align}
For pseudo-spherical examples not all $\hat{w}$ contribute to a nonempty $S_{\hat{w};\vec{\underline{b}}}$, and therefore not all $\vec{\mu}_{\hat{w}}$ contribute to the sum over singlet characters.
The $\widehat{Z}$-invariant is:
\begin{align}
\widehat{Z}^{A_2}_{\vec{\underline{b}}_0}(X_\Gamma, W_{\vec \nu_{v_0}};\tau) &=
108 q + 216 q^{5} - 432 q^{6} + 216 q^{13} + {216} q^{16} + \mathcal{O}\left(q^{17}\right)\nonumber\\
&=q^{\delta}\eta(\tau)^2\sum_{\vec \sigma\in P^+} m^{(\vec \nu)}_{\vec\sigma}  \sum_{w \in W}
  \sum_{\hat{w}\in W^{\otimes 3}} a_{\mu_{\hat{w}}}
 \chi^0_{\vec\mu_{\hat{w}}-\sqrt{m} w(\vec\sigma)}
\end{align}
where 
\begin{equation}
a_{\mu_{\hat{w}}} = \begin{cases} 
0 & \text{if } S_{\hat{w};\vec{\underline{b}}}=\emptyset\\
1 & \text{otherwise}
\end{cases}.
\end{equation}

\vspace{15pt}
\noindent
{\bf Wilson Operators at an Intermediate Node}\\

Similar to the case of Wilson operators at an end node, the insertion of a Wilson operator at an intermediate node results in a modification of $\vec A_{\hat{w}}$ and $\delta$.
This introduces an extra dependence of $\vec A_{\hat{w}}$ and $\delta$ on a weight $\vec \sigma$, and on $w'\in W$
Using the definitions in equation \eqref{eq:AwWilMid}, in the spherical case, for $\hat{w}=(\id,\id,\id)$, $w' = a$, and $\vec\sigma = (1,4)$, we get $\vec{A}_{\hat{w},w'}= {332\over315}\vec\omega_1 -{802\over315}\vec\omega_2$, $\delta_{\hat{w},w'}=1916/315$, and
\begin{align}
\tilde{\chi}_{\hat{w},w';\vec{\underline{b}}_0}(\tau,\xi) &= -q^{173/5}- \frac{\Delta(2\xi_1,2\xi_2)}{\Delta(\xi_1,\xi_2)}q^{888/5} + q^{1833/5} + q^{2523/5}  +O(q^{3263/5})\nonumber\\
&=q^\delta\eta^2(\tau)\chi_{\vec{\lambda}_{0,0,s_1,s_2}}(\tau,\xi).
\end{align}
Hence we can rewrite each integrand in \eqref{eq:ZhatWmid} as a triplet character, and, by so doing, we can write the $\widehat{Z}$ invariant as a linear combination of singlet characters, multiplied by additional individual rational factors of $q$.

We consider $X_\Gamma = M (-1; \frac{2}{3},-\frac{1}{2},-\frac{1}{2})$, a non-spherical Seifert manifold with plumbing matrix
\begin{equation}
M=\left(\begin{array}{rrrrr}
-1 & 1 & 0 & 1 & 1 \\
1 & -2 & 1 & 0 & 0 \\
0 & 1 & -2 & 0 & 0 \\
1 & 0 & 0 & 2 & 0 \\
1 & 0 & 0 & 0 & 2
\end{array}\right)
\end{equation}
and with $D=8$ and $m = 6$.
In the case where a Wilson operator, of highest weight $\vec{\nu} = (1,4)$ is inserted at the mid point of the first leg.
The computation of $\tilde{\chi}$ directly follows from \eqref{eq:AwWilMid}, from which, with $\hat{w}=(\id,\id,\id)$ and $w'=a$  gives
\begin{equation}
\tilde{\chi}_{\hat{w},w';\vec{\underline{b}}_0}(\tau,\xi) = - \frac{\Delta(3\xi_1,6\xi_2)}{\Delta(\xi_1,\xi_2)}q^{739/8} + O(q^{1667/8})\\
\end{equation}
whereas the triplet character is
\begin{equation}
q^\delta(\eta(\tau))^2\chi_{\vec{\lambda}_{0,0,s_1,s_2}}(\tau,\xi)= - \frac{\Delta(2\xi_1,2\xi_2)}{\Delta(\xi_1,\xi_2)}q^{387/8} + \dots - \frac{\Delta(3\xi_1,6\xi_2)}{\Delta(\xi_1,\xi_2)}q^{739/8} +O(q^{747/8}).
\end{equation}
To recover the full character we sum over $\Delta \vec b\in \Lambda/D\Lambda$

\begin{align}
\sum_{\substack{\vec{\underline{b}}=\vec{\underline{b}}_0+\left(\Delta\vec b,0,0,0 \right)\\ \Delta \vec b\in \Lambda/D\Lambda}} \tilde{\chi}_{\hat{w};\vec{\underline{b}}}(D\tau,\vec{\xi}) &= -q^{387/8}\frac{\Delta(2\xi_1,2\xi_2)}{\Delta(\xi_1,\xi_2)}-q^{395/8}\frac{\Delta(\xi_1,4\xi_2)}{\Delta(\xi_1,\xi_2)}-q^{407/8}+O(q^{423/8})
\nonumber\\
&=q^{D\delta}\eta^2\left( \tau \right)\chi_{\vec{\lambda}'_{0,0,s_1,s_2}}(\tau,\vec{\xi}).
\end{align}

\section{Discussions and Future Directions}

We close this work with a list of questions and suggestions for future research:

\begin{itemize}

    \item Quantum spectral curves like \eqref{qspectral} are ubiquitous in $SL(2,\mathbb{C})$ Chern-Simons theory and in 3d-3d correspondence. However, their role in vertex algebra is less clear, aside from the obvious fact that such $q$-difference equations encode dependence of VOA characters on the fugacity associated with a symmetry of the VOA. In particular, even the classical limit of the quantum curve that we worked out for the triplet algebra requires a better understanding and interpretation, from the VOA perspective.

    \item Following \cite{Gadde:2013wq}, in the earlier work \cite{Cheng:2018vpl} and in \S \ref{sec:physics} of this paper we offered a physical explanation for the departure from the classical modular properties of the BPS half-indices in 3d $\mathcal{N}=2$ theories with 2d $(0,2)$ boundary conditions. Since characters of logarithmic vertex algebras exhibit similar deviations from traditional modularity, it is perhaps not surprising to find many relations of the form \eqref{ZhatVOA}, which is indeed one of the main results of this work. While we were able to build a large dictionary between $q$-series invariants of families of 3-manifolds and characters of log VOAs, it would be interesting to develop new tools that allow us to access other aspects of a log VOA, {\it e.g.} other conformal blocks, directly from the data of a 3-manifold. We hope this can be achieved by a further study of the 3d-3d correspondence and better understanding of the relation between the category of log VOA modules and the algebraic structure $\text{MTC} [X]$ of line operators in 3d theory $T[X]$, along the lines of \cite{Gukov:2016gkn,Cheng:2018vpl,Costello:2018swh,Gukov:2020lqm,Costantino:2021yfd,Creutzig:2021ext}.

    \item The relations between non-traditional (``exotic'') forms of modularity, logarithmic vertex algebras, and $q$-series invariants of 3-manifolds also involve quantum groups at generic values of $q$ and their various specializations:
\begin{equation}
\begin{tikzcd}
& {\text{log VOAs}} \ar[dd, leftrightarrow] \ar[dl, leftrightarrow] \ar[dr, leftrightarrow] &  \\
\widehat Z \text{-invariants} \ar[dr, leftrightarrow] \ar[rr, leftrightarrow] & ~~~~ & {\text{quantum} \atop \text{groups}} \ar[dl, leftrightarrow] \\
& {\text{quantum} \atop \text{modular forms}}  &  
\end{tikzcd}
\label{theweb}
\end{equation}
    In this work we focused on the upper and the left, and the lower to a lesser extent,  nodes in this diagram. It is however important to stress that the relation to quantum groups also plays an important role in these connections, see {\it e.g.} \cite{Park:2020edg,Park:2021ufu} and the upcoming work \cite{toappear}.
    
    \item The relation discussed in the present work between three-manifold invariants and log VOAs is far from being one-to-one.
    In particular, in Proposition \ref{prop:main_generalSeif} and Theorem \ref{3fiber_sphere}, the log VOA in question only depends on
    $m =  -D M_{v_0,v_0}^{-1}$. 
    A natural question is hence whether there is an extension of the {log-${\cal V}_{\bar \Lambda}$} algebra such that $\widehat Z^G(X_\Gamma)$ is related to the algebra in an even closer way? 
     In \cite{Cheng:2018vpl} we discussed the Weil representation attached to $\widehat Z^{SU(2)}(X_\Gamma)$ when 
     $X_\Gamma$ is a negative Seifert manifold with three exceptional fibers. We believe that this is a crucial property that provides important hints for the search of such an extended algebra ${\cal V}^G_{X_\Gamma}$.   
    \item 
    In Theorem  \ref{thm:combining_into_characters}, 
    we see that the integrands of a specific  combination of $\widehat Z^G_{\underline{\vec b}}$, with the summand labelled by $\Lambda/D\Lambda$, are given by log VOA (generalised) characters. While $\widehat Z^G_{\underline{\vec b}}$ with different generalised {Spin$^c$} structures are independent topological invariants, it has been noticed   that sometimes one has to combine different  $\underline{\vec b}$ 
     to recover various known topological invariants. See \cite{Gukov:2020frk} for interesting examples.  In this sense, what we have found in this work is an analogous phenomenon.  In this regard, natural questions include the following. What is the topological meaning of the parameter $D$, and in particular, what is the meaning for the manifold to be ``pseudo-spherical", namely to have $D=1$? For the case of $D>1$, does an individual $\widehat Z^G_{\underline{\vec b}}$ with a given ${\underline{\vec b}}$ have an interpretation in terms of the log VOA?       
     
    \item The results in \S\ref{subsec:3fiber} relate the  $\widehat{Z}^G$ invariants and log VOA characters for \textit{all} simply-laced gauge groups $G$. The relation in particular holds for $G = SU(N)$ for all positive integers $N$. 
      It is therefore natural to consider the large-$N$ behaviour of  the  {log-${\cal V}_{\bar \Lambda_{SU(N)}}(m)$} models.
    More specifically, an effective variable $a = q^N$ is expected to play a role in the homological blocks $\widehat Z^{SU(N)}$ \cite{Gukov:2016gkn,Ekholm:2020lqy,Ekholm:2021irc}.
    In the large-$N$ limit, then, could there be a triply graded version of the {log-${\cal V}_{\bar \Lambda}$} model, with an additional $a$-grading? We expect the answer to be in affirmative and the corresponding log-VOA to be an analogue of the triplet algebra where a finite-dimensional symmetry is replaced by an infinite-dimensional symmetry {\it a la} Yangian.
    
	 \item 
	 It has been conjectured  that the homological invariants $\widehat Z$ are closely related to quantum modular forms  (defined by Zagier \cite{zagier2010quantum})  in some way \cite{Cheng:2018vpl}.  See \cite{Cheng:2018vpl,bringmann2020quantum,bringmann2020higherdepth} for earlier results for the $G=SU(2)$ case. At the same time, the quantum modular properties of the log VOA characters have been an active area of research \cite{MR3624911,BM1,MR3919493,BM2021}, and this immediately leads to some results on the quantum modular properties of $\widehat Z^{G}$ for $G\neq SU(2)$. 
     An in-depth analysis of the  modular properties of $\widehat Z^{G}$ for $G= SU(3)$ will appear in an upcoming paper  \cite{quantum_mod_rank2}. It would be interesting to further develop the triangular relation between quantum modular forms, log VOAs, and homological blocks, as depicted in \eqref{theweb}.
     
\item In this work, we mainly focus on negative Seifert manifolds for the sake of concreteness. It would be very interesting to explore the VOAs corresponding to other weakly negative plumbed three-manifolds. In particular, it would be very interesting if one could construct the VOAs with a procedure reflecting the operations on plumbing graphs, such as connecting weighted graphs into a larger one. 

     \item In the present work we focus on {\em weakly negative plumbed three-manifolds}, and in particular negative Seifert manifolds. 
     In \cite{Cheng:2018vpl}, it was proposed that the role of (higher rank) false theta functions in these cases will be replaced by (higher depth) mock modular forms in the case of plumbed manifolds that are not weakly negative, and in particular {\em positive} Seifert manifolds. A natural question is thus to find the VOAs connected to the homological blocks of such three-manifolds, in a systematic manner analogous to the results presented in the present paper. 
     
	 \item Understanding of the relations in \eqref{theweb} can be greatly facilitated by the fermionic form of log VOA characters and $\widehat{Z}$-invariants. In this paper, we only made some initial steps in this direction, leaving many interesting questions to future work. For example, the appearance of classical and quantum dilogarithms on both sides of the 3d-3d correspondence suggests many connections to cluster algebras which, while natural, so far did not appear in the study of $\widehat{Z}$-invariants of log VOA. The relation between cluster algebras and $\widehat{Z}$ TQFT is also expected because the latter provides a non-perturbative definition of $SL(2,\mathbb{C})$ Chern-Simons theory, whereas the relation between cluster algebras and log VOAs is natural in view knot-quiver correspondence and recent work \cite{Cecotti:2015lab,Jankowski:2021flt}.
	 
	 \item The quiver/fermionic form of $q$-series invariants also offers new ways of addressing long-standing questions in logarithmic vertex algebras. For example, it offers a fresh new perspective on the ``semi-classical'' limit ($m \to \infty$) of the {log-${\cal V}_{\bar \Lambda}^0(m)$} model  and going from the ``positive zone'' in Kazhdan-Lusztig correspondence to the ``negative zone,'' on which we plan to report elsewhere. The counterpart of this question in quantum topology is the reversal of orientation on $X$ and the relation between $\widehat{Z} (X)$ and $\widehat{Z} (-X)$.
	 
	 \item The quiver/fermionic forms discussed in this paper are {\it virtual} characters of the familiar vertex algebras like triplet and singlet log VOAs. In other words, these fermionic formulas are linear combinations of the characters of irreducible modules. As was mentioned in the fourth bullet point of this section, this seems to suggest that logarithmic VOAs associated to 3-manifolds are extensions of these familiar algebras, at least in simple examples. This point should be important for upgrading the dictionary between 3-manifolds and characters to actual VOAs.
	 
\end{itemize}

\section*{Acknowledgements}
We would like to thank Kathrin Bringmann, John Cardy, Thomas Creutzig, Tobias Ekholm, Pavel Etingof, Dennis Gaitsgory, Angus Gruen, Antun Milas, Piotr Kucharski, Sunghyuk Park, Du Pei, Nicolai Reshetikhin, Vyacheslav Rychkov, Marko Stosic, Piotr Sulkowski and Alexander Zamolodchikov for helpful discussions.
The work of M.C. and D. P. is supported by an NWO vidi grant (number 016.Vidi.189.182).
The work of S.C. is supported by the US Department of Energy under
grant DE-SC0010008.
The work of S.G.~is supported by the U.S.~Department of Energy, Office of Science, Office of High Energy Physics, under Award No.~DE-SC0011632, and by the National Science Foundation under Grant No.~NSF DMS 1664227.
The work of B. F. has been funded within the framework of the HSE University Basic Research Program. The work of S.M.H. is supported by the National Science and Engineering Council of
Canada and the Canada Research
Chairs program.  This research was initiated at the American Institute of Mathematics (AIM) as part of the SQuaRE meeting.

\appendix
\section{Plumbing Matrix for Seifert Manifolds}
\label{sec:proofMvanish}

In this appendix we establish a few properties of the plumbing matrix for Seifert manifolds.

\begin{lem}	
	\label{lem:Midentity}~~
	\begin{enumerate}
		\item 
The following holds for the plumbing matrix $M$ for all Seifert manifolds. 
\be\label{eq:Midentity2}
M_{v_0,v_0}^{-1}  M_{v,v'}^{-1}  -M_{v_0,v}^{-1}  M_{v_0,v'}^{-1}  =0 \;\;\text{ for all  }v,v'\in V_1, v\neq v' , 
\ee

\item 
The $q$ power $\delta$ defined in (\ref{dfn:A}) can be expressed 
In terms of the plumbing data of the Seifert manifolds with Seifert data $M(b;\{q_i/p_i\}_i)$ as
\be\label{eqn:delta}
\delta = {|\vec\rho|^2\over 2}  \sum_{i=1}^N \left( {1\over p_i q_i} - {\theta^{(i)}_{\ell_i-1}\over \theta^{(i)}_{\ell_i}}\right) ,
\ee
where ${\theta^{(i)}_{k}}$s are defined in \eqref{dfn:theta}. 

\item
	The vector $\vec A_{\hat w}$ defined in (\ref{dfn:A}) is given in terms of the Seifert data $M(b;\{q_i/p_i\}_i)$ as 
\begin{equation}\label{Arel}
\vec A_{\hat w} = \sum_{v_i \in V_1} 
{{\rm sgn}(q_i)\over p_i} w_{v_i}(\vec \rho)
\end{equation}	
In the above, $v_i$ denotes the end point of the $i$-th leg with Seifert data $q_i/p_i$ and $\textfrak{e} = b+\sum_i {q_i\over p_i}$ is the  Euler number of the Seifert manifold. 

	\end{enumerate}	

\end{lem}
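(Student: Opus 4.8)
The three statements are all identities relating the entries of the inverse plumbing matrix $M^{-1}$ for a star-shaped graph to the Seifert data $M(b;\{q_i/p_i\}_i)$. The plan is to set up explicit formulas for $M^{-1}_{v_0,v_0}$, $M^{-1}_{v_0,v_i}$ and $M^{-1}_{v_i,v_j}$ in terms of continued-fraction expansions of the $q_i/p_i$, and then verify each claim by direct computation. Concretely, for the $i$-th leg one writes the continued fraction ${p_i\over q_i} = a^{(i)}_1 - \cfrac{1}{a^{(i)}_2 - \cfrac{1}{\ddots - \cfrac{1}{a^{(i)}_{\ell_i}}}}$ with the $a^{(i)}_k$ the (negated) weights of the leg vertices, and introduces the continuant-type quantities $\theta^{(i)}_k$ defined in \eqref{dfn:theta} which are the numerators/denominators of the truncated continued fractions. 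One then has the standard facts (which are presumably already recorded near \eqref{eqn:M00} and \eqref{M0v_inv}) that
\be
M^{-1}_{v_0,v_0} = {1\over \textfrak{e}}\,, \qquad M^{-1}_{v_0,v_i} = {\mathrm{sgn}(q_i)\over \textfrak{e}\, p_i}\,,
\ee
and, for two vertices $v,v'$ lying on the \emph{same} or on \emph{different} legs, a formula for $M^{-1}_{v,v'}$ as a product of partial continuants divided by $\theta^{(i)}_{\ell_i}$ plus a term proportional to $M^{-1}_{v_0,v}M^{-1}_{v_0,v'}/M^{-1}_{v_0,v_0}$.

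For part \emph{1.}, the cleanest route is the ``series-parallel'' structure of the graph: deleting the central node $v_0$ disconnects the legs, so for $v\in$ leg $i$ and $v'\in$ leg $j$ with $i\neq j$ the matrix $M$ has a block form in which the $(v,v')$ Schur-complement contribution factors through $v_0$. Writing out $M^{-1}$ via the cofactor expansion, any path in the weighted-graph / spanning-tree interpretation of $M^{-1}_{v,v'}$ must pass through $v_0$, which gives exactly $M^{-1}_{v,v'} = M^{-1}_{v_0,v}M^{-1}_{v_0,v'}/M^{-1}_{v_0,v_0}$, i.e.\ \eqref{eq:Midentity2}. Alternatively one substitutes the explicit continuant formulas for all three quantities and checks the identity algebraically; I expect the spanning-tree / Schur-complement argument to be both shorter and less error-prone.

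For part \emph{3.}, combine the definition $\vec A_{\hat w} = {-1\over M^{-1}_{v_0,v_0}}\sum_{v_i\in V_1} M^{-1}_{v_0,v_i}\,w_{v_i}(\vec\rho)$ from \eqref{dfn:A} with $M^{-1}_{v_0,v_0}=1/\textfrak{e}$ and $M^{-1}_{v_0,v_i}=\mathrm{sgn}(q_i)/(\textfrak{e} p_i)$; the factors of $\textfrak{e}$ cancel and one is left with $\vec A_{\hat w} = \sum_i {\mathrm{sgn}(q_i)\over p_i} w_{v_i}(\vec\rho)$. (One should be careful about an overall sign convention — the statement \eqref{Arel} and the ``$-$'' in \eqref{eqn:delta_A_main} should be reconciled, likely a typo in one of the two; I would state it consistently with \eqref{dfn:A}.) For part \emph{2.}, start from $\delta = {|\vec\rho|^2\over 2}\sum_{v\in V_1}\big((M^{-1}_{v_0,v})^2/M^{-1}_{v_0,v_0} - M^{-1}_{v,v}\big)$; using the above values of $M^{-1}_{v_0,v_0}, M^{-1}_{v_0,v_i}$ the first term in each summand becomes ${|\vec\rho|^2\over 2}\cdot{1\over p_i^2 \textfrak{e}}$, while $M^{-1}_{v_i,v_i}$ — the $(v_i,v_i)$ diagonal entry with $v_i$ the end vertex of leg $i$ — equals, by the continuant formula, ${1\over p_i^2\textfrak{e}} + {\theta^{(i)}_{\ell_i-1}\over\theta^{(i)}_{\ell_i}}$ after isolating the ``through-$v_0$'' piece via part \emph{1.} The two $1/(p_i^2\textfrak{e})$ contributions do not cancel unless one also uses a continued-fraction identity rewriting ${1\over p_i^2\textfrak{e}} - \big({1\over p_i^2\textfrak{e}} + {\theta^{(i)}_{\ell_i-1}\over\theta^{(i)}_{\ell_i}}\big)$; rather, careful bookkeeping shows $\sum_i (M^{-1}_{v_0,v_i})^2/M^{-1}_{v_0,v_0}$ assembled with the off-diagonal leg entries reproduces the $1/(p_iq_i)$ term, and the net result is \eqref{eqn:delta}. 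The main obstacle will be precisely this part \emph{2.} accounting: matching $\sum_i {1\over p_iq_i}$ against the collection of $M^{-1}$ entries requires the continued-fraction reciprocity relation $q_i\theta^{(i)}_{\ell_i-1} \equiv \pm 1 \pmod{p_i}$ (equivalently $\theta^{(i)}_{\ell_i}=p_i$, $\theta^{(i)}_{\ell_i-1}\equiv q_i^{-1}$), and one must track signs of $q_i$ throughout. Everything else is routine linear algebra on a tree.
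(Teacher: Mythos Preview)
Your outline for parts~1 and~3 is essentially the paper's own approach, carried out via a block (Schur-complement) inversion of the star-shaped plumbing matrix; the sign discrepancy you flag between \eqref{Arel} and \eqref{eqn:delta_A_main} is real --- the lemma's displayed formula is missing the minus sign that the computation (and the main text) produces.

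The genuine gap is in part~2. Your proposed formula $M^{-1}_{v_i,v_i}=\tfrac{1}{p_i^2\textfrak{e}}+\tfrac{\theta^{(i)}_{\ell_i-1}}{\theta^{(i)}_{\ell_i}}$ is wrong, and the missing piece is exactly the source of the $1/(p_iq_i)$ term you are hunting for. Part~1 does \emph{not} extend to the diagonal entry: the ``through-$v_0$'' splitting applies only to off-diagonal entries on different legs. For the end vertex $v_i$ of leg $i$, the block-inversion formula gives
\[
M^{-1}_{v_i,v_i}\;=\;\tilde M^{-1}_{i,i}\,\bigl((X_i^{-1})_{1,\ell_i}\bigr)^2+(X_i^{-1})_{\ell_i,\ell_i},
\]
and one computes $\tilde M^{-1}_{i,i}=\tfrac{1}{A_i}+\tfrac{1}{AA_i^2}$ with $A_i=-p_i/q_i$ and $A=\textfrak{e}$ --- not just the ``through-$v_0$'' piece $\tfrac{1}{AA_i^2}$. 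Moreover, in the paper's conventions $\theta^{(i)}_{\ell_i}=\det X_i=\pm q_i$, \emph{not} $p_i$ as you assert; hence $((X_i^{-1})_{1,\ell_i})^2=1/q_i^2$, and the extra $\tfrac{1}{A_i}$ term contributes $-\tfrac{1}{p_iq_i}$. The correct expression is
\[
M^{-1}_{v_i,v_i}\;=\;\frac{1}{p_i^2\,\textfrak{e}}\;-\;\frac{1}{p_iq_i}\;+\;\frac{\theta^{(i)}_{\ell_i-1}}{\theta^{(i)}_{\ell_i}},
\]
after which the $1/(p_i^2\textfrak{e})$ terms cancel cleanly against $(M^{-1}_{v_0,v_i})^2/M^{-1}_{v_0,v_0}$, leaving precisely $\tfrac{1}{p_iq_i}-\tfrac{\theta^{(i)}_{\ell_i-1}}{\theta^{(i)}_{\ell_i}}$ per leg. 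No ``off-diagonal leg entries'' enter (none appear in the definition of $\delta$), and no mod-$p_i$ reciprocity is needed; the only continued-fraction input is the identification $|\theta^{(i)}_{\ell_i}|=|q_i|$, which follows directly from the recursion \eqref{dfn:theta} for the continuants of the truncated leg $a_1^{(i)},\dots,a_{\ell_i}^{(i)}$.
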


\begin{proof}
	  For a block matrix of the form,
	 $$M= \begin{pmatrix} A & B \\ C &D \end{pmatrix},$$
	 the inverse takes the form
	 \begin{gather}
	 \begin{split}
	 \label{eqn:brblock}
	 M^{-1}&=
	 \begin{pmatrix}  (A-BD^{-1}C)^{-1} & -(A-BD^{-1}C)^{-1}BD^{-1} \\  -D^{-1}C(A-BD^{-1}C)^{-1} &  D^{-1} + D^{-1}C(A-BD^{-1}C)^{-1}BD^{-1} \end{pmatrix}
	 \\
	 &= \begin{pmatrix} A^{-1} + A^{-1} B (D-CA^{-1}B)^{-1}CA^{-1} & -A^{-1}B(D-CA^{-1}B)^{-1} \\ - (D-CA^{-1}B)^{-1}CA^{-1} & (D-CA^{-1}B)^{-1} \end{pmatrix}\, .
	 \end{split}
	 \end{gather}
	 
	 For a $N$-star graphs, let $a$ be the weight of the central node, denoted $v_0$; let $\ell_i+1$ be the length of the $i$th leg of the plumbing graph, for $i \in \{1,\ldots, N\}$, $L:= \sum_{i=1}^N \ell_i$, and let $a^{(i)}_m$ be the weight of the $m$th node along leg $i$, denoted $v^{(i)}_m$. We order the vertices such that $v_0^{(i)}$ is adjacent to the central node and $v_{\ell_i}^{(i)}$ is the end point of the $i$-th leg.
	 
	 We can then write the linking matrix as a block matrix of the form,
	 \be
	 M=\left(\begin{array}{@{}c|c@{}} M_0 &B \\\hline B^T & X\end{array} \right ). 
	 \ee
	 In the above, $M_0$ is an $(N+1) \times (N+1)$ matrix capturing the central node and the $N$ nodes adjacent to it,
	 \be 
	 M_0=\begin{pmatrix}
	 	a & 1 & 1 &\cdots & 1 \\
	 	1 & a^{(1)}_0 & 0& \cdots &0 \\
	 	1 & 0 & a^{(2)}_0 & \cdots & 0\\
	 	\vdots & \vdots & \vdots & \ddots&\vdots\\
	 	1 &0&0 & \cdots & a^{(N)}_0
	 \end{pmatrix},
	 \ee
	 and 
	 $B$ is an $(N+1) \times L$ matrix with entries $B_{0\beta}=0$, 
	 \be
	 B_{i\beta}=\begin{cases} 1 & ~{\rm when}~\beta = 1+ \sum_{j=1}^{i-1} \ell_j \\ 0 & ~{\rm otherwise} \end{cases}
	 \ee
	 for $i\in\{1,\ldots, N\}$, $\beta\in \{1,\dots, L\}$, 
	 capturing the linking between nodes $v^{(i)}_0$ and $v^{(i)}_1$ for $i \in {1, \ldots, N}$. Finally, $D$ is square matrix of size $L\times L$ and is  block-diagonal  of the form
	 \be
	 X=\begin{pmatrix}
	 	X_1 & 0&\cdots &0 \\
	 	0 &X_2 &\cdots &0 \\
	 	\vdots & \vdots & \ddots & \vdots\\
	 	0 & 0 &\cdots & X_N
	 \end{pmatrix},
	 \ee
	 where $X_i$ is an $\ell_i \times \ell_i$ tridiagonal linking matrix
	 \be
	 X_i=\begin{pmatrix}
	 	a^{(i)}_1 & 1 & 0 & \cdots&0 \\
	 	1 & a^{(i)}_2 & 1 & \cdots&0 \\
	 	0& 1 & a^{(i)}_3 & \cdots&0\\
	 	\vdots&\vdots&\vdots&\ddots&\vdots\\
	 	0& &  & 1& a^{(i)}_{\ell_i}
	 \end{pmatrix},
	 \ee
	 for the $i$th leg.
	 
	 Equation (\ref{eqn:brblock}) states that
	 \be\label{prooflem1_eq1}
	 M^{-1}=\left(\begin{array}{@{}c|c@{}} \tilde M^{-1} &-\tilde M^{-1}BX^{-1} \\\hline -X^{-1}B^T\tilde M^{-1}&X^{-1}+ X^{-1}B^{T}\tilde M^{-1} B X^{-1}\end{array} \right ). 
	 \ee
	 where we have defined the $(N+1) \times (N+1)$ matrix
	 $$\tilde M := M_0-B X^{-1} B^T.$$

	 Given the explicit form of the matrices $B$ and $X$, it is straightforward to compute that the $(N+1) \times (N+1)$ matrix $B X^{-1} B^T$ is diagonal and takes the form
	 \be\label{eqn:tildeM_1}
	 (BX^{-1}B^T)_{ij}=\delta_{i,j}(X_i^{-1})_{11}
	 , ~~ i,j\in \{0,\ldots, N\}.
	 \ee

	 It follows directly that
	 \be\label{eqn:tildeM_3}
	 \tilde M =\begin{pmatrix}
	 	b & 1 & 1 & \cdots &1 \\
	 	1 & A_1 & 0 &\cdots &0 \\
	 	1 & 0 & A_2 & \cdots &0\\
	 	\vdots & \vdots & \vdots & \ddots &\vdots \\
	 	1 & 0 & 0& \cdots &A_N
	 \end{pmatrix}
	 \ee
	 where 
	 \be\label{dfn:Ai} A_i= a_0^{(i)} - (X_i^{-1})_{11}.\ee
	 Explicitly, we have 
	 \be\label{eqn:expression_tildeM1}
	 \tilde M^{-1}_{0,0} ={1\over A},~ \tilde M^{-1}_{0,i} =  
	 - {1\over A A_i}, ~\tilde M^{-1}_{i,j}= {1 \over A A_i A_j} ~{\rm for}~i\neq j
	 \ee
	 and 
	 \be
	 (\tilde M^{-1})_{i,i}= {1\over AA_i}\left({b} -\sum_{j\neq i}{1\over A_j} \right)
	 \ee
	 with 
	 \begin{equation}
	 \label{dfn:AA}
	 A := {{\rm det}(\tilde M)\over \prod_{k=1}^N A_k}= b-\sum_k {1\over A_k} .
	 \end{equation}

	At the same time, from (\ref{prooflem1_eq1}) we have
	 \begin{gather}\label{eqn:M0i}
	 \begin{split}
	M_{v_0,v_0}^{-1}= \tilde M^{-1}_{0,0}, ~~ M^{-1}_{v_0,v^{(i)}_{k}} = - \tilde M^{-1}_{0,i} (X^{-1}_i)_{1,k}  
	 \end{split}
	 \end{gather}
	 and 
	 \begin{gather}\label{eqn:Mij}
	 \begin{split}
	 M^{-1}_{v^{(i)}_{\ell_i},v^{(j)}_{\ell_j}} &=\left(X^{-1}+ X^{-1}B^{T}\tilde M^{-1} B X^{-1}\right)_{v^{(i)}_{\ell_i},v^{(j)}_{\ell_j}} \\ &= \tilde M^{-1}_{i,j} (X^{-1}_i)_{1,\ell_i}   (X^{-1}_i)_{1,\ell_j}  + (X^{-1}_i)_{{\ell_i},{\ell_i}} \delta_{i,j} . 
	 \end{split}
	 \end{gather}

	 The above expressions 
	   and (\ref{eqn:expression_tildeM1}) immediately leads to the identity  
	    (\ref{eq:Midentity2}). 
	    
	 It will be illuminating to express the above quantities in terms of the Seifert data (as opposed to plumbing data) directly. To do so, let us compute the entries of $X^{-1}_i$. To avoid an overload of indices, we will momentarily suppress the index $i$ labelling the different legs of the graph when the context is clear.

	Note that $X$ is a tridiagonal matrix. Using the recursion formula for its inverse, we obtain
	\begin{equation}\label{eqn:Xinverse}
	(X^{-1})_{k,k'}= (-1)^{k+k'} {\theta_{k-1}\theta'_{\ell-k'}\over {\rm det}(X)}
	\end{equation}
	for $k\leq  k'$, 
	where $\theta_k$ and $\theta'_k$ are integers satisfying the  recursion formula 
	\begin{gather}\label{dfn:theta}
	\begin{split}
	\theta_k = a_k \theta_{k-1} - \theta_{k-2}  ,   \\ 
	\theta'_k = a_{\ell+1-k} \theta'_{k-1} - \theta'_{k-2}  , ~
	\end{split}
	\end{gather}
	for $k =-1, 0,1,\dots, \ell $, 
	and the initial conditions are 
	\[
	\theta_0= \theta'_0=1, ~~ \theta_{-1}=\theta'_{-1}= 0.  
	\]
	In particular, we have ${\rm det}(X_i) = \theta_{\ell_i}= \theta'_{\ell_i}$. 
	
	Note that 
	
	\[a_{\ell-k} - \cfrac{1}{a_{\ell-k+1} - \cfrac{1}{a_{\ell-k+2} - \cfrac{1}{ \ddots - \cfrac{1}{a_{\ell} } }}} = {\theta'_{k+1}\over\theta'_{k}},  \]
	which leads to
	\[
	(X^{-1})_{11} = {\theta'_{\ell-1} \over \theta'_\ell} = \frac{1}{a_{1} - \cfrac{1}{a_{2} - - \cfrac{1}{ \ddots - \cfrac{1}{a_{\ell} } }}}
	\] and
	similarly
		\begin{equation}\label{eqn:Xellell}
		(X^{-1})_{\ell,\ell} = {\theta_{\ell-1}\over \theta_{\ell}} =  \frac{1}{a_{\ell} - \cfrac{1}{a_{\ell-1} - - \cfrac{1}{ \ddots - \cfrac{1}{a_{1} } }}}.
		\end{equation}

	Restoring the leg-index $i$, we have $a_k=a_k^{(i)}$ and $X=X_i$ in the above, and  similarly $\theta_k=\theta^{(i)}_k$. 
	We then obtain from (\ref{dfn:Ai})
	\begin{equation}\label{eqn:contFrac}
A_i = 	(\tilde M)_{i,i} = a^{(i)}_0 - \frac{1}{a^{(i)}_{1} - \cfrac{1}{a^{(i)}_{2} - - \cfrac{1}{ \ddots - \cfrac{1}{a^{(i)}_{\ell_i} } }}} =-{p_i\over q_i}
	\end{equation}
 \cite{Gukov:2016gkn}
	in terms of the Seifert data $M(b;\{q_i/p_i\}_i)$ of the plumbed manifold, with $p_i\geq 1$, and we can write $\tilde M$ entirely in terms of the Seifert data as 
	\be\label{eqn:tildeM_2}
	\tilde M =\begin{pmatrix}
		b & 1 & 1 & \cdots &1 \\
		1 & -p_1/q_1 & 0 &\cdots &0 \\
		1 & 0 & -p_2/q_2 & \cdots &0\\
		\vdots & \vdots & \vdots & \ddots &\vdots \\
		1 & 0 & 0& \cdots &- p_N/q_N.
	\end{pmatrix}
	\ee
	
	In particular we have from (\ref{dfn:AA}) 
	\be\label{eqn:M00}
	A =
	{{\rm det}(\tilde M)\over \prod_{k=1}^N \tilde M_{k,k} }= 
	b +\sum_k {q_k\over p_k} =\textfrak{e} = {1\over M^{-1}_{v_0,v_0}}\ee
	is given by the orbifold Euler characteristic of the Seifert manifold. 
	    
Using (\ref{eq:Midentity2}),  (\ref{eqn:M0i}) and (\ref{eqn:Mij}) we obtain
\be
\delta = {|\vec \rho|^2\over 2} 
\sum_{i} {(M_{v_0,v^{(i)}_{\ell_i}}^{-1})^2 \over M_{v_0,v_0}^{-1}} - 
M^{-1}_{v^{(i)}_{\ell_i},v^{(i)}_{\ell_i}}
={|\vec \rho|^2\over 2}  \sum_i {((X_i^{-1})_{1,\ell_i})^2\over AA_i}\left(-b+\sum_k {1\over A_k} \right) - 
(X_i^{-1})_{\ell_i,\ell_i}	   \ee
which leads to (\ref{eqn:delta}) upon using  (\ref{eqn:Xellell}) and (\ref{eqn:thetaell}).

To prove (\ref{Arel}), 
 we choose the continued fraction expression  (\ref{eqn:contFrac}) for $-p_i/q_i$ that is effective, namely the   $\{a^{(i)}_{k}\}$ satisfying $a^{(i)}_{k}\leq -2$ for all $k>0$. 
 We will see that the final expression we have derived using this particular choice is invariant under the 3d Kirby moves\footnote{Note that $q^\delta$, on the other hand, depends explicitly on the plumbing graph and is not necessarily invariant under Kirby moves. However, it should be invariant under Kirby moves when combined with $C_\Gamma^G(q)$.  }, which clarifies that making this choice does not lead to a loss of generality. 
With this choice,  it is easy to see that  
$\theta_k = (-1)^{k} |\theta_k|$ and similarly for $\theta'_k$, and in particular 
\be\label{eqn:thetaell} \theta'_\ell  = \theta_\ell=  (-1)^{\ell} |q|. \ee

To sum up,  (\ref{eqn:expression_tildeM1}),  (\ref{eqn:M0i}) and (\ref{eqn:Xinverse})
lead  to 
\begin{equation}\label{M0v_inv}
M^{-1}_{v_0,v^{(i)}_{\ell_i}} = - \tilde M^{-1}_{0,i} (X^{-1}_i)_{1,\ell_i}  = {\rm sgn}(q_i) {1\over \chi p_i}
\end{equation}
which together with (\ref{eqn:M00}) immediately leads to (\ref{Arel}).

\end{proof}

\section{The \texorpdfstring{$A_2$}{A\_2} Character Identity}
\label{charc_id}

Let $\chi^{0}_{a,b;s_1,s_2}$ be generalized $A_2$ characters defined as in \eqref{eqn:chiMin}. In this appendix we prove the identity: 
\begin{lem}\label{lem:charID}
\be\label{eqn:charID_B}
\chi^{0}_{0,0;s_1,s_2+m}= 3 \chi^{0}_{1,0;s_1,s_2}
- \chi^{0}_{0,0;m-s_1-s_2,s_2} + \chi^{0}_{0,0;s_1+s_2,m-s_2}. 
\ee
where $s_1, s_2$ are defined as in \eqref{eqn:labelMods}. 
\end{lem}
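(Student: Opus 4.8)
The plan is to reduce the identity \eqref{eqn:charID_B} to a purely combinatorial statement about the lattice sums appearing in \eqref{eqn:chiMin}. Recall from \eqref{singlet_char} that the generalised $A_2$ singlet character $\chi^0_{\vec{\tilde\lambda}'}$ with $\vec{\tilde\lambda}' = \sqrt m\,\vec{\tilde\lambda} + \vec\mu$ is, up to $1/\eta^2$, a sum over dominant $\vec{\bar\lambda}$ with $\vec{\bar\lambda} - \vec{\tilde\lambda} \in \Lambda$ of $\dim(V_{\vec{\bar\lambda}}(\vec{\tilde\lambda}))$ times a Weyl-antisymmetrised Gaussian in the exponent $\tfrac{1}{2m}|{-\vec s} + m\,w(\vec{\bar\lambda}+\vec\rho)|^2$, where $\vec s = s_1\vec\omega_1 + s_2\vec\omega_2$. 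The left side of \eqref{eqn:charID_B} corresponds to $\vec s \mapsto (s_1, s_2+m)$, i.e. to shifting $s_2$ by $m$, while keeping $\vec{\tilde\lambda}=0$; the three terms on the right correspond to $(s_1,s_2)$ itself but with $\vec{\tilde\lambda}$ shifted to $\vec\omega_1$ (the $a=1$ module), and to the two reflected weight-labels $(m-s_1-s_2, s_2)$ and $(s_1+s_2, m-s_2)$ with $\vec{\tilde\lambda}=0$. So the first step is to write all four characters explicitly via \eqref{eqn:chiMin} (or \eqref{singlet_char}) with a common $1/\eta^2$ prefactor stripped off.

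The key observation to exploit is that shifting $s_2 \mapsto s_2 + m$ changes the Gaussian exponent $\tfrac{1}{2m}|{-\vec s} + m\,w(\vec{\bar\lambda}+\vec\rho)|^2$ by absorbing the $m\vec\omega_2$ into the lattice vector: $-\vec s - m\vec\omega_2 + m\,w(\vec{\bar\lambda}+\vec\rho) = -\vec s + m\big(w(\vec{\bar\lambda}+\vec\rho) - \vec\omega_2\big)$. Now $\vec\omega_2$ is a minuscule coweight of $A_2$, and its Weyl orbit has three elements $\{\vec\omega_2,\ \vec\omega_1-\vec\omega_2,\ -\vec\omega_1\}$; equivalently $w^{-1}(\vec\omega_2)$ runs over these three vectors as $w$ ranges over $W/W_{\vec\omega_2}$. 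This means the $s_2\mapsto s_2+m$ sum reorganises into a sum over $\vec{\bar\lambda}$ and over the three cosets, i.e. over $\vec{\bar\lambda} + (\text{orbit point of }\vec\omega_2)$. The plan is to carry out this reindexing carefully: for each of the three orbit representatives, $w(\vec{\bar\lambda}+\vec\rho) - \vec\omega_2 = w(\vec{\bar\lambda} + \vec\rho - w^{-1}\vec\omega_2)$, and $\vec\rho - w^{-1}\vec\omega_2$ is one of $\vec\rho-\vec\omega_2=\vec\omega_1$, $\vec\rho-\vec\omega_1+\vec\omega_2$, $\vec\rho+\vec\omega_1$. After using the Weyl-antisymmetry of the inner sum $\sum_w(-1)^{l(w)}(\dots)$ to shift these back toward $\vec\rho$-translates, one lands on combinations of the shifted-$\vec{\tilde\lambda}$ characters on the right side, with the combinatorial factor $3$ on the $\chi^0_{1,0;s_1,s_2}$ term arising precisely because $|W/W_{\vec\omega_2}| = 3$, and the two correction terms arising from boundary/overlap contributions where $\vec{\bar\lambda} + \vec\rho - w^{-1}\vec\omega_2$ fails to be strictly dominant (lies on a Weyl-chamber wall) and must be compensated using the vanishing identity below \eqref{eqn:char}.

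Concretely, I would: (i) expand all four sides using \eqref{eqn:chiMin}; (ii) in the left side substitute the minuscule decomposition above and reindex the lattice sum; (iii) match the principal piece to $3\chi^0_{1,0;s_1,s_2}$ using the multiplicity relation $\dim(V_{\vec{\bar\lambda}}(0))$ vs. $\dim(V_{\vec{\bar\lambda}}(\vec\omega_1))$ — here I expect to need the explicit $SU(3)$ multiplicity formula \eqref{dim0}, ${\rm min}(n_1,n_2)+1$, together with the fact that passing from weight $0$ to weight $\vec\omega_1$ inside a highest weight module shifts $\min(n_1,n_2)$ by a controlled amount; (iv) identify the two leftover terms, coming from $\vec{\bar\lambda}$ near the walls, as $\pm\chi^0_{0,0;m-s_1-s_2,s_2}$ and $\pm\chi^0_{0,0;s_1+s_2,m-s_2}$ — the labels $m-s_1-s_2$ and $s_1+s_2$ are exactly what one gets from reflecting $\vec s = (s_1,s_2)$ in the two simple-root hyperplanes after the $m\vec\omega_2$ shift, so this is a bookkeeping check on the reflection formulae for the $\tilde\mu$-type labels. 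Finally, since the claimed identity is an identity of $q$-series with $\ZZ$ coefficients and all manipulations are term-by-term reindexing, the identity follows.

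\textbf{Main obstacle.} The hard part will be step (iv): correctly tracking the wall/boundary contributions in the reindexed sum. The naive reorganisation of the left-hand side's lattice sum produces extra terms whenever $w(\vec{\bar\lambda}+\vec\rho) - \vec\omega_2$, after shifting, corresponds to a weight on the boundary of the dominant chamber, where $\dim(V_{\vec{\bar\lambda}}(\cdot))$ and the Weyl-antisymmetrised numerator both behave subtly. Getting the exact integer coefficients ($+3$, $-1$, $+1$) right requires a careful inclusion–exclusion over the three minuscule cosets combined with the vanishing of $\sum_w(-1)^{l(w)}e^{\langle\vec\xi, w(\vec\rho+\vec\lambda)\rangle}$ on chamber walls; I would verify the combinatorics against a low-order $q$-expansion (e.g. using the explicit $SU(3)$ data as in the examples section) as a sanity check before committing to the general argument.
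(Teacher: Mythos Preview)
Your approach is conceptually appealing and genuinely different from the paper's, but as written it has a real gap at step~(iii), not just step~(iv).

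You claim the coefficient $3$ on $\chi^0_{1,0;s_1,s_2}$ arises ``precisely because $|W/W_{\vec\omega_2}|=3$'', but the reindexing you describe does not produce three copies of a full Weyl-antisymmetrised sum. After writing $w(\vec{\bar\lambda}+\vec\rho)-\vec\omega_2=w(\vec{\bar\lambda}+\vec\rho-w^{-1}\vec\omega_2)$ and substituting $\vec{\bar\lambda}'=\vec{\bar\lambda}-w^{-1}\vec\omega_2$, the outer multiplicity becomes $\dim\big(V_{\vec{\bar\lambda}'+w^{-1}\vec\omega_2}(0)\big)$, which depends on $w$ through the orbit point, and the dominance constraint $\vec{\bar\lambda}'+w^{-1}\vec\omega_2+\vec\rho\in P^+$ is likewise $w$-dependent. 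The six $w$'s split into three cosets of $W_{\vec\omega_2}$, each pinned to one orbit element $\vec\gamma$, so what emerges is a sum of three \emph{partial} (two-term) Weyl sums, each weighted by $\dim(V_{\vec{\bar\lambda}'+\vec\gamma}(0))$ --- not $3\sum_{\vec{\bar\lambda}'}\dim(V_{\vec{\bar\lambda}'}(\vec\omega_1))\sum_{w\in W}(-1)^{l(w)}(\cdots)$. To extract $3\chi^0_{1,0}$ you would need both a multiplicity identity relating $\sum_{\vec\gamma\in W\cdot\vec\omega_2}\dim(V_{\vec{\bar\lambda}'+\vec\gamma}(0))$ to $3\dim(V_{\vec{\bar\lambda}'}(\vec\omega_1))$ and a mechanism to recombine the three partial Weyl sums into full ones; you supply neither, and in fact the wall/boundary issues you defer to step~(iv) are entangled with this recombination rather than a separate correction layer. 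A minuscule-weight proof along these lines may well exist, but it requires substantially more than your sketch provides.

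The paper takes a completely different, purely computational route. It introduces auxiliary triple sums
\[
((a,b)):=\sum_{m_1,m_2,m_3\ge 0} q^{m(m_1^2+m_2^2+m_3^2-m_1m_2+m_2m_3+m_1m_3)+am_1+bm_2+(a+b)m_3+(s_1^2+s_1s_2+s_2^2)/3m}
\]
together with their $m_2=0$ restriction $((a,b))_0$, records the elementary shift relations $((a\mp m,b\pm 2m))=q^{\mp b-m}((a,b))\mp(\cdots)_0$ and the symmetry $((a,b))_0=((a+b,-b))_0$, and expresses each of the four characters via \eqref{eqn:chiMin} as a signed sum of six $((\cdot,\cdot))$'s with explicit $q$-prefactors. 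It then rewrites \eqref{eqn:charID_B} symmetrically as
\[
\big(\chi^0_{0,0;s_1,s_2+m}-\chi^0_{1,0;s_1,s_2}\big)=\big(-\chi^0_{0,0;m-s_1-s_2,s_2}+\chi^0_{1,0;s_1,s_2}\big)+\big(\chi^0_{0,0;s_1+s_2,m-s_2}+\chi^0_{1,0;s_1,s_2}\big),
\]
and within each bracket the shift relations cancel all $((a,b))$ terms pairwise, leaving only $((a,b))_0$'s; the surviving $((\cdot,\cdot))_0$ terms on both sides then match termwise using $((a,b))_0=((a+b,-b))_0$. There is no representation-theoretic input; the coefficient $3$ is simply what remains after the algebraic cancellations.
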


Note that the identity in Lemma~\ref{lem:charID} holds for all $s_1,s_2\in \ZZ$ and is not subjected to the condition $s_1, s_2 \in \{ 1,2, \cdots, m \}$, which is natural from the point of view of representation theory of the singlet algebras.  
Does this imply the existence of some interesting isomorphism among modules of log VOA? We will leave the question for future work and simply prove the algebraic identity here.

\subsection{Notation}
For sake of brevity, let us fix $(s_1,s_2)$ and define:
\begin{multline}
(( a,b )) := \sum_{m_1,m_2,m_3 \geq 0} q^{m(m_1^2 + m_2^2 + m_3^2 - m_1 m_2 + m_2 m_3 + m_1 m_3)} \\
\times q^{a m_1+b m_2+ (a+b) m_3} q^{(s_1^2+s_2^2+s_1s_2)/3m},
\end{multline}
and its restriction to $m_2 =0$:
\begin{equation}
((a,b))_0 := \sum_{m_1,m_3 \geq 0} q^{m(m_1^2 + m_3^2 + m_1 m_3)} q^{a m_1 + (a+b) m_3} q^{(s_1^2+s_2^2+s_1s_2)/3m}.
\end{equation}

The above $q$-series have several useful properties, which follow from the definitions of $((a,b))$ and $((a,b))_0$:
\begin{align}
((a,b)) &= ((b,a)) \label{eqn:parenSym} \\
((a,b))_0 &= ((a+b,-b))_0 \label{eqn:paren0} \\
((a-m,b+2m)) &= q^{-b-m}((a,b)) - q^{-b-m} ((a,b))_{0} \label{eqn:parenShift1} \\
((a+m,b-2m)) &= q^{b-m}((a,b)) + ((a+m,b-2m))_{0} \label{eqn:parenShift2}
\end{align}

In terms of $((\cdot,\cdot))$, they are simplified into:
\begin{equation}\label{eqn:chi0simple}
\begin{aligned}
\eta^2 \chi^{0}_{0,0;s_1,s_2}(\tau) &= q^{m+s_2}(( m-s_1, m+s_1+s_2 )) + q^{m+s_1}((m+s_1+s_2,m-s_2)) \\
&\quad + q^{m-s_1-s_2}((m-s_2,m-s_1)) - q^{m+s_1+s_2}((m+s_1,m+s_2)) \\
&\quad - q^{m-s_2}((m-s_1-s_2,m+s_1)) - q^{m-s_1}((m+s_2,m-s_1-s_2)),
\end{aligned}
\end{equation}
\begin{equation}
\begin{aligned}
\eta^2 \chi^{0}_{1,0;s_1,s_2}(\tau) &= q^{\frac{m-2s_1-s_2}{3}} ((-s_2,m-s_1)) + q^{\frac{m+ s_1 - s_2}{3}} ((s_1+s_2,m-s_2)) \\
&\quad +q^{\frac{m+s_1+2s_2}{3}}((-s_1,m+s_1+s_2))-q^{\frac{m-2s_1-s_2}{3}}((s_2,m-s_1-s_2)) \\
&\quad -q^{\frac{m+s_1-s_2}{3}}((-s_1-s_2,m+s_1))-q^{\frac{m+s_1+2s_2}{3}}((s_1,m+s_2)).
\end{aligned}
\end{equation}

\subsection{Proof of the character identity}
Let us first rewrite \eqref{eqn:charID_B} in a more symmetric way,
\begin{multline}\label{eqn:charIDsym}
\Big( \eta^2 \chi^{0}_{0,0;s_1,s_2+m}(\tau) - \eta^2 \chi^{0}_{1,0;s_1,s_2}(\tau) \Big) = \Big(-\eta^2 \chi^{0}_{0,0;m-s_1-s_2,s_2}(\tau) +  \eta^2 \chi^{0}_{1,0;s_1,s_2}(\tau) \Big)  \\
+ \Big(  \eta^2 \chi^{0}_{0,0;s_1+s_2,m-s_2}(\tau) + \eta^2 \chi^{0}_{1,0;s_1,s_2}(\tau) \Big).
\end{multline}
We will simplify the differences term-by-term.

The LHS can be written in $((a,b))$ notation as follows. By doing so, we observe that $\eta^2 \chi^{0}_{0,0;s_1,s_2+m}(\tau)$ and $\eta^2 \chi^{0}_{1,0;s_1,s_2}(\tau)$ have the following term in common
\begin{equation}
q^{\frac{m-2s_1-s_2}{3}} ((-s_2,m-s_1))-q^{\frac{m+s_1-s_2}{3}}((-s_1-s_2,m+s_1))
\end{equation}
which thus cancel out. As a result, the LHS is written as:
\begin{equation}
\begin{aligned}
\eta^2 \chi^{0}_{0,0;s_1,s_2+m}(\tau) &-\eta^2 \chi^{0}_{1,0;s_1,s_2}(\tau) = \\
&q^{\frac{7m + s_1 + 5 s_2}{3}} (( m-s_1, 2m+s_1+s_2)) + q^{\frac{4m +4s_1 + 2 s_2}{3}}((2m+s_1+s_2,-s_2)) \\
&+q^{\frac{m-2s_1-s_2}{3}}((s_2, m-s_1-s_2)) + q^{\frac{m+s_1+2s_2}{3}}((s_1, m+s_2)) \\
&- q^{\frac{7m + 4s_1+5s_2}{3}}((m+s_1, 2m+s_2 )) - q^{\frac{4m-2s_1+2s_2}{3}}((2m+s_2,-s_1-s_2)) \\
&-q^{\frac{m+s_1+2s_2}{3}}((-s_1, m+s_1+s_2)) - q^{\frac{m+s_1-s_2}{3}}((s_1+s_2, m-s_2)).
\end{aligned}
\end{equation}

Note that all $((a,b))$-terms appear pairwise under the shifting properties, \eqref{eqn:parenShift1} and \eqref{eqn:parenShift2}; for instance,  the terms $((m-s_1,2m+s_1+s_2))$ and $((-s_1,m+s_1+s_2))$ can be simplified using the shift property (\ref{eqn:parenShift2}):
\begin{equation}
\begin{aligned}
q^{\frac{7m + s_1 + 5 s_2}{3}}((m - s_1,2m+s_1+s_2)) &= q^{\frac{4m -2s_1 + 2 s_2}{3}}((2m - s_1,s_1+s_2)) \\
&\qquad -q^{\frac{4m -2s_1 + 2 s_2}{3}}((2m - s_1,s_1+s_2))_0 \\
- q^{\frac{m + s_1 + 2 s_2}{3}}((-s_1,m+s_1+s_2)) &= - q^{\frac{4m -2 s_1 + 2 s_2}{3}}((2m-s_1,s_1+s_2)) \\
&\qquad - q^{\frac{m + s_1 + 2 s_2}{3}} ((m+s_1+s_2,-s_1))_0
\end{aligned}
\end{equation}
where we have used symmetricity of $((a,b))$ when necessary. 
Note  that the common $((2m-s_1,s_1+s_2))$ term cancels out. After accounting for all such cancellations under the shifting properties, we are left with:
\begin{equation}\label{eqn:charID_LHS_simple}
\begin{aligned}
\eta^2 (\chi^{0}_{0,0;s_1,s_2+m}(\tau) &- \chi^{0}_{1,0;s_1,s_2}(\tau) )  \\
=&- q^{\frac{4m -2 s_1 + 2 s_2}{3}}((2m-s_1,s_1+s_2))_0
- q^{\frac{m + s_1 + 2 s_2}{3}} ((m+s_1+s_2,-s_1))_0 \\
&- q^{\frac{m +s_1 - s_2}{3}}((s_1+s_2,m-s_2))_0 + q^{\frac{m-2s_1-s_2}{3}}((s_2,m-s_1-s_2))_0 \\
&+q^{\frac{4m + 4s_1+2s_2}{3}}((2m+s_1, s_2 ))_0 +q^{\frac{m+s_1+2s_2}{3}}((m+s_2,s_1))_0  \\
=~ & q^{\frac{4m + 4s_1+2s_2}{3}}((2m+s_1, s_2 ))_0  + q^{\frac{m-2s_1-s_2}{3}}((s_2,m-s_1-s_2))_0 \\
&- q^{\frac{m +s_1 - s_2}{3}}((s_1+s_2,m-s_2))_0 - q^{\frac{4m -2 s_1 + 2 s_2}{3}}((2m-s_1,s_1+s_2))_0 , 
\end{aligned}
\end{equation}
where in the last equality we have used 
$((a,b))_0 = ((a+b,-b))_0$ to write 
$((a,b))_0 = ((a+b,-b))_0$:
\begin{equation}
- q^{\frac{m + s_1 + 2 s_2}{3}} ((m+s_1+s_2,-s_1))_0+q^{\frac{m+s_1+2s_2}{3}}((m+s_2,s_1))_0 = 0.
\end{equation}

We will repeat the above analysis for the RHS of \eqref{eqn:charIDsym}. Following the same procedure as in LHS, the first term of \eqref{eqn:charIDsym} becomes:
\begin{equation}
\begin{aligned}
\eta^2 (- \chi^{0}_{0,0;m-s_1-s_2,s_2}(\tau) &+ \chi^{0}_{1,0;s_1,s_2}(\tau)) = \\
&- q^{\frac{4m-2s_1+2s_2}{3}}((2m+s_2,-s_1-s_2))_0 - q^{\frac{m + s_1- s_2}{3}}((m+s_1,-s_1-s_2))_0  \\
&+ q^{\frac{4m-2s_1-4s_2}{3}}((2m-s_2,-s_1))_0 + q^{\frac{m+s_1+2s_2}{3}}((m+s_1+s_2,-s_1))_0
\end{aligned}
\end{equation}
and the second term of \eqref{eqn:charIDsym} becomes:
\begin{equation}
\begin{aligned}
\eta^2 ( \chi^{0}_{0,0;s_1+s_2,m-s_2}(\tau)&+ \chi^{0}_{1,0;s_1,s_2}(\tau)) = \\
& q^{\frac{m-2s_1- s_2}{3}} (( m-s_1, -s_2))_0 + q^{\frac{4m+4s_1+2 s_2}{3}} (( 2m+s_1+s_2, -s_2))_0 \\
&-q^{\frac{4m-2s_1-4 s_2}{3}} (( 2m-s_1-s_2, s_1))_0 -q^{\frac{m+s_1+2 s_2}{3}} (( m + s_2, s_1))_0 \end{aligned}
\end{equation}

Now we are ready to complete the proof. First, collect both terms on the RHS, we get 
\begin{equation}
\begin{aligned}
RHS &= q^{\frac{m-2s_1- s_2}{3}} (( m-s_1, -s_2))_0 + q^{\frac{4m+4s_1+2 s_2}{3}} (( 2m+s_1+s_2, -s_2))_0 \\ 
&+ q^{\frac{4m-2s_1-4s_2}{3}}((2m-s_2,-s_1))_0 + q^{\frac{m+s_1+2s_2}{3}}((m+s_1+s_2,-s_1))_0 \\
&- q^{\frac{4m-2s_1+2s_2}{3}}((2m+s_2,-s_1-s_2))_0 - q^{\frac{m + s_1- s_2}{3}}((m+s_1,-s_1-s_2))_0 \\
&-q^{\frac{4m-2s_1-4 s_2}{3}} (( 2m-s_1-s_2, s_1))_0 -q^{\frac{m+s_1+2 s_2}{3}} (( m + s_2, s_1))_0\\
&= q^{\frac{m-2s_1- s_2}{3}} (( m-s_1, -s_2))_0 + q^{\frac{4m+4s_1+2 s_2}{3}} (( 2m+s_1+s_2, -s_2))_0 \\ 
&- q^{\frac{4m-2s_1+2s_2}{3}}((2m+s_2,-s_1-s_2))_0 - q^{\frac{m + s_1- s_2}{3}}((m+s_1,-s_1-s_2))_0 
\end{aligned}
\end{equation}
where we have again used 
 $((a,b))_0 = ((a+b,-b))_0$
 in the last equality. Comparing with \eqref{eqn:charID_LHS_simple}, we find that LHS = RHS in \eqref{eqn:charIDsym}. This completes the proof.

\bibliography{higherrank_bib.bib}
\bibliographystyle{ytphys.bst}

\end{document}